\theoremstyle{plain}
\newtheorem{theorem}{Theorem}[section]
\newtheorem{proposition}[theorem]{Proposition}
\newtheorem{lemma}[theorem]{Lemma}
\newtheorem{corollary}[theorem]{Corollary}
\theoremstyle{definition}
\theoremstyle{remark}
\newtheorem{remark}[theorem]{Remark}
\def\eqref#1{(\ref{#1})}
\def\1{\bm{1}}
\def\eps{{\epsilon}}
\def\vbeta{{\bm{\beta}}}
\def\vp{{\bm{p}}}
\def\vu{{\bm{u}}}
\def\vx{{\bm{x}}}
\def\vy{{\bm{y}}}
\def\vz{{\bm{z}}}
\def\mF{{\bm{F}}}
\def\mM{{\bm{M}}}
\DeclareMathAlphabet{\mathsfit}{\encodingdefault}{\sfdefault}{m}{sl}
\SetMathAlphabet{\mathsfit}{bold}{\encodingdefault}{\sfdefault}{bx}{n}
\def\gN{{\mathcal{N}}}
\def\sP{{\mathbb{P}}}
\def\sR{{\mathbb{R}}}
\newcommand{\R}{\mathbb{R}}
\DeclareMathOperator*{\argmax}{arg\,max}
\patchcmd{\BR@backref}{\newblock}{\newblock(Cited in page~}{}{}
\patchcmd{\BR@backref}{\par}{)\par}{}{}
\newcommand*{\Scale}[2][4]{\scalebox{#1}{$#2$}}
\title{Competition and Collusion in Two-Sided Markets with an Outside Option\thanks{This work was partially supported by 
NSF awards DMS 2124913 and 2427955. { We sincerely thank Prof.~Özlem Bedre Defolie and the anonymous reviewers for their invaluable feedback and insights, which have significantly enhanced the quality of our work.}}}
\author{Cristian Chica, Yinglong Guo, and Gilad Lerman\thanks{School of Mathematics, University of Minnesota. Email addresses: chica013@umn.edu, guo00413@umn.edu, lerman@umn.edu.}}
\begin{document}

\maketitle
\begin{abstract}

We introduce pricing formulas for competition and collusion models of two-sided markets with an outside option. For the competition model, we find conditions under which prices and consumer surplus may increase or decrease if the outside option utility increases. Therefore, neglecting the outside option can lead to either overestimation or
underestimation of these equilibrium outputs. Comparing collusion to competition, we find that in cases of small cross-side externalities, collusion results in decreased normalized net deterministic utilities, reduced market participation and increased price, on both sides of the market. Additionally, we observe that as the number of platforms increases in the competition model, market participation rises. Profits, however, decrease when the net normalized deterministic utility is sufficiently low but increase when it is high. Furthermore, we identify specific conditions that quantify the change of price and consumer surplus when the competition increases.

{\textbf{Keywords:} Competition, Collusion, Outside Option, Two-sided Markets, Externalities} 


\end{abstract}
\begin{center}
    
\end{center}

\section{Introduction}
\label{sect:introduction}
Platform businesses have immensely grown in the last several decades due to the widespread adoption of communication technologies.\footnote{According to the United States Census Bureau, the percentage of US citizens reporting owning a computer has grown from 8\% in 1984 to 89\% in 2016 (see, e.g., \cite{RyanCamilleCensus}).} For example, the sales of Amazon, which is a platform, have grown from \$148 millions in 1997 to \$386 billions in 2020  \citep{wells2018amazon}. 
Platforms facilitate the interaction between different types of users, such as buyers and sellers (Amazon and eBay), drivers and riders (Uber and Lyft) and content creators and consumers (YouTube, Twitch and Spotify). Their business model has become very popular, but its careful study is still in an early stage, where the first research works are from the beginning of this century (see, e.g., \cite{rochet2003platform} and \cite{caillaud2003chicken}). There are still many open questions and, in particular, a complete model of platform competition is still far from reach. 

A common yet limiting assumption in platform modeling is full market coverage, meaning that in equilibrium all users join at least one platform. While this assumption is strong, most models incorporate it because it leads to explicit equilibrium pricing formulas (see, e.g., \cite{tan2021effects}). In this work, we relax this assumption by considering a model in which $N$ horizontally differentiated platforms compete across two market sides—buyers and sellers (collectively referred to as users)—who can either join one of the platforms (single-homing) or choose not to participate, a choice referred to as the outside option.

Dating apps offer a clear example of a market with a significant outside option and platform competition. Many users still prefer traditional, non-priced methods of meeting partners---such as through friends or at school---underscoring the importance of the outside option. This market also includes numerous competing platforms. 
While users typically multi-home across apps (see, e.g., the last table in \cite{datingreport24}\footnote{The table shows the percentage of people in different age groups who have ever used specific dating sites or apps among those who have used any dating site or app. It indicates that, on average, individuals have used about two different apps, with this average decreasing with age. However, since people may use different apps at different times, the number of apps used simultaneously is likely lower.}), some platforms encourage behavior closer to single-homing.\footnote{For example, Hinge markets itself as ``designed to be deleted,'' while loyalty-encouraging subscription models, like Bumble Boost, and curated apps, like The League or JDate, may encourage users to stick with a single app.}
For tractability, some economic models assume single-homing in this context (see, e.g., \cite{halaburda2018competing,gal2020market}). We thus use this market to ground some of our theoretical results.
Ride-sharing services (e.g., Uber and Lyft) offer another example of a market with a strong outside option—public transportation, scooters, or e-bikes—but with pronounced multi-homing, as users switch between platforms depending on availability, pricing, or convenience.

We develop pricing formulas for our model and express them in terms of the equilibrium normalized net deterministic utility that platforms provide to users, i.e., the difference between the deterministic utility of users joining one platform and the deterministic utility of the outside option. This allows us to transition from a space of prices to a space of utilities in the spirit of \cite{armstrong2001competitive}.

We utilize these pricing formulas to study competition and collusion between platforms with an outside option. We first establish sufficient conditions for the existence and uniqueness of a symmetric Nash equilibrium and a collusive equilibrium in this setting. We further show that under small cross-side externalities,\footnote{Cross-side externalities capture 
the benefits that users on one side of the market derive from interaction with users on the other sides of the market. When these externalities are positive, platforms are confronted with the ``chicken \& egg'' problem: to attract buyers, the platform must have a large base of sellers, who will join the platform if and only if there are many buyers in the platform (see \cite{caillaud2003chicken}).} 
the normalized net deterministic utilities and market participation are smaller in collusion than in competition; furthermore, the prices on both sides are bigger in collusion than in competition. 
We also study the effect of increasing competition on the outputs of the competitive Nash equilibrium. In this case of increasing competition, we specify regimes for the decrease or increase of both prices and consumer surplus (these regimes depend on the size of the user's heterogeneity in tastes, the number of platforms and the size of network externalities). We demonstrate how these results can shed light on the pricing strategies observed in dating apps and how they change depending on the heterogeneity of the population and the size of network externalities.\footnote{Dating markets with heterogeneous populations are dominated by general apps like Tinder and Bumble, while more homogeneous populations tend to use niche apps like The League and JDate.}  We further show in this case that market participation always increases, and profits decrease if the net normalized deterministic utility is sufficiently small and increase if the net normalized deterministic utility is sufficiently large.

The size of the outside option determines the sign of the net deterministic utility in a nonlinear fashion. Indeed, we show that there exists a critical threshold for the deterministic outside option utility such that above this threshold users only receive negative deterministic utility, and below this threshold the sign of the net deterministic utility depends on the relative size of the heterogeneity in user's tastes versus the within-side externalities. 

Moreover, we show that when the outside option increases, prices and consumer surplus may increase or decrease, based on the relative size of the heterogeneity in user's tastes versus the within-side externalities. In particular, we show that a model of platform competition that omits the outside option may overestimate or underestimate the true equilibrium price.

Our pricing formulas imply the following standard results for platforms, accounting for an outside option (see further references and details in 
Section \ref{sect:thm_result_gumbel}): (i) platforms hold market power and charge users in a way that is directly proportional to user's heterogeneity in tastes; (ii) if the within-side externalities\footnote{Within-side externalities capture negative within-side competition effects and positive collaboration effects between users on the same side of the market. For example, competition between content creators in the same platform (e.g., \textit{YouTube}, \textit{Twitch} and others) and collaboration between open source programmers (e.g., \textit{C++}, \textit{Python}, \textit{Linux} and others). } 
are positive, platforms subsidize users on one side of the market by an amount that is proportional to the joining population on this side of the market; (iii) if the cross-side externalities are positive, platforms subsidize users on one side of the market with an amount proportional to the joining population on the other side of the market. The alignment of our results with existing ones suggests that the standard platform pricing strategy can be more general than previously understood.

\textbf{Related Literature.} The study of two-sided markets has emerged in the last few decades. The earlier works of \cite{rochet2003platform,rochet2006two}, \cite{caillaud2003chicken} and \cite{armstrong2006competition} laid out the modeling foundations of two-sided markets with network externalities. These works shed some light on how equilibrium outputs of platform competition and platform monopoly depend on: (i) the size of the network externalities relative to user's heterogeneity in tastes; (ii) users being able to join either one or two platforms (i.e., having a single-home or multi-home option). \cite{weyl2010price} and \cite{white2016insulated} placed an emphasis on platform competition with insulated tariffs, which allow platforms to choose participation rates rather than prices. 

\cite{tan2021effects} modeled competition between $N\geq2$ platforms serving multiple sides of a market. In this setting, customers are heterogeneous and modeled through random idiosyncratic preferences. Under general conditions for the probability distribution of idiosyncratic preferences, they characterized a symmetric subgame perfect equilibrium. Our model generalizes their approach by incorporating an outside option and conducting an extensive analysis for specific probability distributions. This generalization enables us to study the effects of the outside option on equilibrium outputs and collusion in two-sided markets. 
We remark that \cite{tan2021effects} pricing formulas are similar to the ones in \cite{armstrong2006competition}, \cite{white2016insulated}, \cite{jullien2019information} and \cite{chica2021exclusive} in the sense that they explicitly depend on the parameters. 
Due to the challenges of the outside option in our model, our pricing formulas are implicitly determined by the equilibrium net deterministic utility and cannot be explicit like the previous formulas. Similar to \cite{tan2021effects}, but under the assumption of an outside option, we find that increasing competition can either raise or lower equilibrium prices and consumer surplus—depending on the relative strength of network externalities versus user heterogeneity in tastes—and may also increase or decrease platform profits, depending on a condition involving the normalized net deterministic utility. However, we show that market participation always increases, whereas \cite{tan2021effects} assumes $100\%$ market participation. Moreover, our collusion analysis is novel and cannot be addressed in markets with full participation.

\cite{cohen2022competition} focused on the particular setting and idiosyncrasies of ride-sharing services (e.g., \textit{Uber} and \textit{Lyft}).
One of their interesting results is that under collusion riders pay a larger price and drivers receive a lower wage than under competition. This result is similar to one of our results (to see this one should note that the wage in their model is a negative price in our model). Nevertheless, the modeling choice for customer demand is different in both works. In their model, network externalities are endogenous, whereas ours are exogenous and thus follow the traditional setting that stems from \cite{armstrong2006competition}. As such, we can examine the effects of the network externalities on the equilibrium outputs. Another major difference between these works is the type of solution obtained. They inductively solve a sequence of problems, which approximates the Nash equilibrium, and their final solution is a limit of the former solutions. We characterize the best-response of a platform that deviates from the symmetric Nash equilibrium and directly study properties of this equilibrium. The major advantage of our approach is that it allows us to study the effects of competition on the equilibrium outputs, because we can differentiate these outputs. 

Our work also pertains to the literature on platform collusion. On the theoretical side, \cite{dewenter2011semi} compared between competition, semi-collusion (i.e., collusion in only one side of the market) and full collusion in a special model for the newspapers market, where the two sides of the market were represented by advertisers and readers. 
Comparing full collusion to competition, they found out that for the advertisers market, participation is lower and prices are higher under full collusion. This result is similar to one of ours, as we show that for both sides of the markets, all sizes of within-side externalities and relatively small cross-side externalities, collusion always leads to less market participation and higher prices.\footnote{Note that we focus on the study of full collusion, whereas \cite{dewenter2011semi} and \cite{lefouili2020collusion} also study semi-collusion in two-sided markets and show that if the cross-side externalities are positive and sufficiently large, then semi-collusion may benefit users on the collusive side and harm users on the competitive side. Furthermore, \cite{dewenter2011semi} also show that on the readers side, the collusion price may be lower than the competitive price if the competition in the advertising market is high and the newspaper market is large. We remark that our demand specification is different and, in particular, we do not incorporate a parameter for the relative size of the markets.} 
 
Other relevant works that include an outside option are: \cite{jeitschko2020platform}, which studied how consumers and firms endogenously choose between different homing options or the outside option; \cite{correia2019horizontal}, which examined the welfare effects of horizontal mergers between multi-sided platforms while incorporating an outside option for consumers; \cite{tremblay2023cournot}, which analyzed Cournot platform competition in two-sided markets with indirect network effects, where both consumers and sellers have an outside option; \cite{peitz2023asymmetric} studied a model of asymmetric platform oligopoly while allowing partial user participation, i.e., outside options; and \cite{teh2023multihoming} study the effects of allowing multi-homing for both sides of the market while also incorporating outside options.

\textbf{Organization of the Article:}
Section \ref{sect:platform_competition} formulates our platform models of competition and collusion. Section \ref{sect:thm_result_gumbel} solves the models via backward induction. Section \ref{sect:collusionvscompetition}  
compares the outputs of colluding and competing market models. Section \ref{sect:effectsCompetition} examines the effects of increasing competition on the equilibrium quantities of the competition model. Section \ref{sect:econ_poli_discussion} examines the economic implications of our main results.
Section \ref{sect:conclusions} concludes this work.

\section{The Platform Model with an Outside Option}
\label{sect:platform_competition}

We formulate a platform competition model by following previous works, such as \cite{white2016insulated} and  \cite{tan2021effects}. Let $N\geq2$ be the number of horizontally differentiated platforms in the market. Each index $i\in\mathcal{N} := \{1,\dots, N\}$ represents a platform competing in two different sides of a market. We index each side by $k\in \{b,s\}$, where $b$ and $s$ represent buyers and sellers, respectively. Each platform $i$ sets prices for each side of the market, denoted by $\boldsymbol{p}^i = (p_b^i,p_s^i)$. The endogenous mass of users on each side of the market subscribed to platform $i$ is denoted by $\vx^i=(x_b^i,x_s^i)\in [0,1]^2$. For $i=0$, $\vx^0=(x_b^0,x_s^0)\in [0,1]^2$ denotes the mass of users not participating in the market. 

Users on side $k$ of the market have idiosyncratic preferences for platforms and for the outside option. These preferences are captured by the i.i.d.~random variables $\varepsilon_k^i\sim F_k(\cdot)$, where $k\in\{b,s\}$, $i\in\gN\cup\{0\}$ ($i\in\gN$ for the platforms and $i=0$ for the outside option) and $F_k(\cdot)$ is a differentiable probability distribution. 

The game consists of two stages. In stage 1, platforms strategically choose prices to maximize profits. In this article, we study two scenarios in stage 1: (i) The competition scenario, where firms compete and maximize individual profits; (ii) The collusion scenario, where firms collude and jointly maximize profits. In stage 2, given the prices determined by the platforms, users on each side of the market choose whether to participate or not and if they participate they also choose which platform to join. The game is solved using backward induction and we thus first describe the details of the second stage and then the first one.

$(i)$ \textbf{Stage 2 (users' interactions):} 
Any user on side $k\in\{b,s\}$ of the market who joined platform $i\in\mathcal{N}$ receives the following utility:
\begin{equation}\label{uki}
    \hat{u}_k^i := \varepsilon_k^i - p_k^i + \phi_k(\boldsymbol{x}^i),
\end{equation}
where $\varepsilon_k^i$ represents the idiosyncratic utility the user enjoys; $p_k^i$ is the price paid by the user to access services provided by the platform (it was determined in stage 1); and $\phi_k:[0,1]^2\longrightarrow\R$ is a Lipschitz differentiable function so that $\phi_k(\boldsymbol{x}^i)$ captures the network benefits that the user receives from all other users who are also joining platform $i$. We further denote the deterministic component of the utility by $$u_k^i := - p_k^i + \phi_k(\boldsymbol{x}^i).$$ If a user does not join any platform, it receives the utility 
\begin{equation}
    \label{uk0}
    \hat{u}_k^0 = \varepsilon_k^{0} + u_k^{0},
\end{equation}
where $u_k^{0}\in\mathbb{R}$ is a constant representing the deterministic outside utility.\footnote{Most models of platform competition leave out the analysis of the outside utility option that users have. By doing so, they cut out from the profit maximization process the trade-off between market participation and competition. In this article, we study this trade-off.}  Note that users will choose the platform that maximizes their utility, i.e., they will join $j\in \argmax_{i\in \mathcal{N}\cup\{0\}} \{\hat{u}_k^i\}$. It follows that the mass of users from side $k$ joining platform $i$ solves the equation
\begin{equation}\label{xki}
    x_k^i = \mathbb{P}\left(\hat{u}_k^i>\max_{j\in \mathcal{N}\cup\{0\}\setminus \{i\}} \{\hat{u}_k^j\}\right) \quad \forall k\in\{b,s\},\  i\in\mathcal{N}\cup\{0\}.
\end{equation}
Proposition \ref{prop:existence_xki} below implies that \eqref{xki} has at least one solution for any set of prices $\{(p_b^i,p_s^i)\}_{i=1}^N$, and also establishes a sufficient condition for a unique solution of  (\ref{xki}). 

$(ii)$ \textbf{Stage 1 (platforms' optimization):} We consider the following two scenarios: a competitive market and a collusive market.

(a) A competing market: Platform $i$, for each $i\in\mathcal{N}$, sets prices $\{p^i_b,p_s^i\}$ that maximize individual profits, i.e., platform $i$ solves
\begin{equation}\label{pi}
\max_{\{p^i_b,p_s^i\}}\text{  } \pi^i\left(p_b^i,p^i_s\right) , \text{ where }  \pi^i\left(p_b^i,p^i_s\right) :=  x^i_bp^i_b+x^i_sp^i_s,\end{equation}
and $x_k^i$ is implicitly defined by (\ref{xki}). We adopt the standard assumption of zero marginal cost for serving users on sides $b$ and $s$. A Nash equilibrium corresponding to \eqref{pi} is referred to as a \textit{competitive Nash equilibrium} (CNE).

(b) A colluding market: The colluding platforms act as a single platform trying to maximize joint profits across all sides of the market by charging one price on every side of the market; i.e., they solve 
\begin{equation}
    \max_{\substack{p_b,p_s}} \ 
    \Pi_{\text{tot}}(p_b,p_s), \text{ where }
    \Pi_{\text{tot}}(p_b,p_s) 
 := \sum_{i=1}^N\left(x^i_bp_b+x^i_s p_s\right).\label{pim}
\end{equation}
As in the competitive case, we assumed that the marginal costs of serving sellers and buyers are zero. We refer to any maximizer of (\ref{pim}) as \textit{collusive equilibrium} (CE).\footnote{While, in general, collusion can be any situation where two or more platforms jointly make decisions, in this article, we focus on the worst-case-scenario, where all platforms collude.}

In order to fully quantify the equilibrium outcomes, we make two additional assumptions: \\
(a) Assumption I: The idiosyncratic preferences, $\{\varepsilon_k^i\}_{k\in\{b,s\},  i\in\gN\cup\{0\}}$, are i.i.d.~Gumbel distributed with parameters $(\mu_k,\beta_k)$. That is, for $k\in\{b,s\}$, $\beta_k>0$ and $\mu_k\in \R$, 
the distribution $F_k(\cdot)$ is
\begin{equation}\label{eqn:cdf_gumbel}
    F_k(z) = e^{-e^{\frac{\mu_k -z}{\beta_k }}}.
\end{equation}
We claim that this assumption is natural since it gives rise to the classical Logit model (see \cite{werden1996use}, \cite{anderson1992logit}), which describes the demand of heterogeneous consumers for a set of differentiated goods (see \cite{berry1994estimating}, \cite{conlon2020best}, \cite{besanko1998logit}). To support this claim, we note that the central equation in this work is \eqref{xki}, which can be rewritten using \eqref{uk0} and the alternative variables \(\theta_{k}^i = \varepsilon_k^i - \varepsilon_k^0\), \(i \in \gN \cup \{0\}\), \(k \in \{b,s\}\) as follows:
$$
x_k^i = \mathbb{P}\left(\theta_{k}^{i} + u_{k}^{i} \geq \max_{j=0,1,\ldots,N,j\neq i} \left\{ \theta_{k}^{j} + u_{k}^{j} \right\} \right), \quad \forall k \in \{b,s\},\  i \in \gN \cup \{0\}.
$$
We further note that \(\theta_k^i \sim \textnormal{Logistic}(0,\beta_k)\) for \(i \in \gN\) and thus conclude the claim.

(b) Assumption II: The function $\phi_k(\vx)$ is linear, i.e., it can be represented by multiplying with the real-valued matrix $\Phi \in \mathbb{R}^{2 \times 2}$
\begin{equation}\label{def:Phi}
    \left[\begin{array}{c}
   \phi_b(x_b,x_s)  \\
     \phi_s(x_b,x_s)
    \end{array}\right] = \underbrace{\left[\begin{array}{cc}
    \phi_{bb}  & \phi_{bs}  \\
     \phi_{sb}    & \phi_{ss}
    \end{array}\right]}_{ \Phi}\left[\begin{array}{c}
    x_b   \\
     x_s
    \end{array}\right].
\end{equation}

We remark that the following results do not require these additional assumptions:  
The existence and uniqueness of the solution of (\ref{xki}) (see Proposition \ref{prop:existence_xki} in Section \ref{sect:thm_result_gumbel}) and the derivation of the first- and second-order conditions for both (\ref{pi}) and (\ref{pim}) (see Lemma \ref{lemma:foc_cneA}, \ref{lemma:soc_cneA}, \ref{lemma:foc_ce} and \ref{lemma:soc_ce} in Appendix \ref{appendixA}). We note that many of the results presented in this paper can be extended to other probability distributions of economic interest. In the Online Appendix, we demonstrate this for the exponential distribution with two platforms. In particular, we explicitly derive the first-order condition for (\ref{pi}) using Mathematica.  
Additionally, we present numerical simulations supporting results similar to those shown in Propositions \ref{prop:dpkduk0}, \ref{prop:dpikduk0}, and \ref{prop:dCSkuk0}, but using the exponential distribution. It is important to recognize that each probability distribution requires special treatment, and the analysis of the Gumbel distribution is already quite lengthy and complex.

\section{Equilibrium}
\label{sect:thm_result_gumbel}
We solve our model using backward induction. We first 
study the solution to (\ref{xki}) and show that 
for any set of prices, $\{(p_b^i,p_s^i)\}_{i=1}^N$, there is a well-defined set of market shares, $\{(x_b^i,x_s^i)\}_{i=0}^N$, that solve (\ref{xki}) and under a certain condition they are unique.
Next, we
characterize the symmetric CNE of \eqref{pi} (i.e., the CNE such that $p_k^ i=p_k^*$ for each $i\in\gN$) and the CE of \eqref{pim}. At last, 
we interpret the resulting equilibrium pricing and market share formulas.

\subsection*{Stage 2 Solution: Users' Maximization}\label{subsection:users_maximization}
We establish sufficient conditions for the existence and uniqueness for \eqref{xki}.
We recall that \eqref{xki} captures the users' dynamics when prices change. 
Let $\vu=(u^0,u^1,\dots,u^N)\in \R^{N+1}$ and  for $k\in\{b, s\}$ and $i\in\gN\cup\{0\}$ define 
\begin{equation}\label{def:Qki}
    T_k^i(\vu) := \sP(\epsilon_k^i > \max_{j\neq i}(\epsilon_k^j +u^j-u^i)).
\end{equation}
In view of (\ref{def:Qki}), (\ref{xki}) can be rewritten as
\begin{equation}
    \label{xki_Tki}
    x_k^i= T^i_k(u_k^0,\phi_k(\vx^1)-p_k^1,\dots ,\phi_k(\vx^N)-p_k^N). 
\end{equation}
It follows that a vector $\vx = (x_b^0,x_s^0,\dots ,x_b^N,x_s^N)$ solves (\ref{xki}) if and only if it is a fixed point of the map $\Sigma:[0,1]^{2(N+1)}\longrightarrow [0,1]^{2(N+1)}$ given by 
\begin{equation}\label{def:sigma}
    \Sigma(\vx) := (T_b^0(\vu_b),T_s^0(\vu_s),\dots,T_b^N(\vu_b),T_s^N(\vu_s)),
\end{equation}
where $\vu_k = (u_k^0,\phi_k(\vx^1)-p_k^1,\dots ,\phi_k(\vx^N)-p_k^N)$ and $\vx^i = (x_b^i,x_s^i)$. Proposition \ref{prop:existence_xki} below, shows that (\ref{def:sigma}) always has at least one fixed point and thus (\ref{xki}) always has a solution. This proposition also provides sufficient conditions for the uniqueness of this fixed point and the solution of (\ref{xki}).  Its formulation requires the following Lipschitz-type constants:
\begin{equation}\label{def_MT_Mphi}
    \begin{split}
        M_T &:=  \max_{k\in\{b, s\},i\in\gN\cup\{0\}}\sup_{\vu\in\R^{N+1}}\sum_{l=0}^N\left|\frac{\partial T_k^i(\vu)}{\partial u^l}\right|, \textnormal{ and}\\
        M_\phi &:= \max_{k\in\{b, s\}} \sup_{(x_b,x_s)\in[0,1]^2}\sum_{l\in \{b,s\}}\left|\frac{\partial \phi_k(x_b,x_s)}{\partial x_l}\right|.
    \end{split}
\end{equation}
We remark that ${\partial T_k^i(\vu)}/{\partial u^l}$ captures the user's sensitivity to changes in utility levels. Similarly, ${\partial \phi_k(x_b,x_s)}/{\partial x_l}$ measures how externalities change when more people join one specific platform. 

\begin{proposition}[Existence and Uniqueness of Market Shares]\label{prop:existence_xki}
For any prices $\{(p_b^i,p_s^i)\}_{i=1}^N\subset\mathbb{R}^2$, there exists a solution to (\ref{xki}), $\vx = (x_b^0,x_s^0,x_b^1,x_s^1,\cdots,x_b^N,x_s^N)$, such that for each $k\in\{b,s\}$, $\sum_{i=0}^Nx_k^i=1$. Moreover, if $M_T M_\phi<1$, where $M_T$ and $M_{\phi}$ are given by (\ref{def_MT_Mphi}), then the solution of (\ref{xki}) is unique. 
\end{proposition}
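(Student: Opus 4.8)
The plan is to treat existence and uniqueness separately, since existence holds unconditionally whereas uniqueness uses the contraction hypothesis $M_T M_\phi<1$. For existence I would invoke Brouwer's fixed point theorem for the self-map $\Sigma$ of the compact convex cube $[0,1]^{2(N+1)}$ defined in \eqref{def:sigma}. Three things must be verified: that $\Sigma$ maps the cube into itself, that it is continuous, and that any fixed point satisfies the normalization $\sum_{i=0}^N x_k^i=1$. The first is immediate, because each coordinate $T_k^i$ is a probability and hence lies in $[0,1]$.

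For continuity, I would use the integral representation obtained by conditioning on $\epsilon_k^i=z$,
\[
T_k^i(\vu) = \int_{-\infty}^{\infty} \prod_{j\neq i} F_k(z+u^i-u^j)\, f_k(z)\, dz,
\]
where $f_k=F_k'$ exists since $F_k$ is differentiable. The integrand is continuous in $\vu$ and dominated by $f_k(z)$, so dominated convergence gives continuity of each $T_k^i$; composing with the continuous (indeed Lipschitz) maps $\vx\mapsto \phi_k(\vx^j)-p_k^j$ shows that $\Sigma$ is continuous. Brouwer then yields a fixed point $\vx^\ast$. To obtain the normalization, I would note that since $F_k$ is continuous the idiosyncratic variables exhibit no ties almost surely, so for each side $k$ the events ``option $i$ attains the maximal total utility'' partition the sample space up to a null set; summing the corresponding probabilities gives $\sum_{i=0}^N T_k^i(\vu_k)=1$, and evaluating at the fixed point gives $\sum_{i=0}^N x_k^{\ast,i}=1$.

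For uniqueness under $M_T M_\phi<1$, I would show that $\Sigma$ is a contraction on $\bigl([0,1]^{2(N+1)},\norm{\cdot}_\infty\bigr)$ and apply the Banach fixed point theorem. The estimate factors through the two Lipschitz constants. First, the quantity $\sum_{l=0}^N \lvert \partial T_k^i/\partial u^l\rvert$ is exactly the $\ell^1$ norm of $\nabla T_k^i$, so integrating the gradient along a segment and using $\ell^1$--$\ell^\infty$ duality gives $\lvert T_k^i(\vu)-T_k^i(\vu')\rvert \le M_T \norm{\vu-\vu'}_\infty$. Second, given two share vectors $\vx,\vy$, the associated utility vectors $\vu_k,\vu_k'$ agree in their $0$-th coordinate (the constant $u_k^0$), so their $\ell^\infty$ distance equals $\max_{j\ge 1}\lvert \phi_k(\vx^j)-\phi_k(\vy^j)\rvert \le M_\phi \norm{\vx-\vy}_\infty$ by the definition of $M_\phi$. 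Chaining these bounds and taking the maximum over $k$ and $i$ yields $\norm{\Sigma(\vx)-\Sigma(\vy)}_\infty \le M_T M_\phi \norm{\vx-\vy}_\infty$, which is a contraction when $M_T M_\phi<1$ and hence forces a unique fixed point.

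I expect the main technical care to lie in the two Lipschitz reductions: correctly applying the $\ell^1$--$\ell^\infty$ duality so that the coordinatewise derivative sums defining $M_T$ and $M_\phi$ translate into genuine $\ell^\infty$ Lipschitz bounds on $T_k^i$ and $\phi_k$, and in observing that the outside-option coordinate $u_k^0$ is constant and therefore drops out of $\norm{\vu_k-\vu_k'}_\infty$, so that only the platform coordinates (which carry the $\phi_k$-dependence) contribute to the estimate. By contrast, the existence half is routine once continuity and the no-ties normalization are in place.
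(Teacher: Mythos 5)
Your proposal is correct, and its overall architecture coincides with the paper's: Brouwer's fixed point theorem applied to $\Sigma$ on the cube $[0,1]^{2(N+1)}$ for existence, and the Banach fixed point theorem with exactly the same chained estimate $\lvert \sigma_k^i(\vx)-\sigma_k^i(\vy)\rvert \le M_T M_\phi \norm{\vx-\vy}_\infty$ for uniqueness, including the observation that the constant outside-option coordinate $u_k^0$ drops out of the utility comparison. The one place where you genuinely diverge is the normalization $\sum_{i=0}^N T_k^i(\vu)=1$: the paper proves that $\mathbb{P}\bigl(\cap_{i=0}^N (E_k^i)^c\bigr)=0$ by a fairly lengthy inductive construction of a chain of indices $\theta(0),\theta^2(0),\dots$ and a cycle contradiction, whereas you dispatch the same step by noting that, $F_k$ being continuous (hence atomless) and the $\varepsilon_k^i$ independent, pairwise ties $\varepsilon_k^i+u^i=\varepsilon_k^j+u^j$ occur with probability zero, so the maximum is almost surely attained at a unique index and the events $E_k^i$ partition the sample space up to a null set. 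Your no-ties argument is strictly simpler and covers the same hypotheses (the paper's own null-set reasoning already relies on absolute continuity of $\mathbb{P}$), so nothing is lost; it is arguably the cleaner way to present this step. You also make explicit something the paper merely asserts, namely continuity of $\Sigma$, via the representation $T_k^i(\vu)=\int\prod_{j\neq i}F_k(z+u^i-u^j)\,dF_k(z)$ and dominated convergence (writing the integral against $dF_k$ rather than $f_k(z)\,dz$ sidesteps any fuss about whether the everywhere-differentiable CDF is absolutely continuous, though that also holds here). In short: same skeleton as the paper, with a more elementary treatment of the normalization and a more careful continuity argument.
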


This Proposition provides sufficient conditions for the mapping $\{(p_b^i,p_s^i)\}_{i=1}^N\mapsto  \{(x_b^i,x_s^i)\}_{i=0}^N$ to be well-defined. Its proof is in Appendix \ref{appendixA}.\footnote{We remark that while  Proposition \ref{prop:existence_xki} is used to identify symmetric Nash equilibria, it should not be restricted to symmetric market shares. 
When proving the existence of symmetric Nash equilibria (see Proposition \ref{prop:existence_gumbel}), it is necessary to consider all possible deviations from the equilibrium path, including those that lie off the symmetric path.}

\subsection*{Stage 1 Solution: Platforms' Optimization}\label{subsection:platforms_maximization}


We establish sufficient conditions for the existence and uniqueness of symmetric solutions of (\ref{pi}) and (\ref{pim}).
We first focus on symmetric solutions for (\ref{pi}).
For this purpose, we use the following transformation: 
\begin{equation}
    \label{def:zk}
    z_k := \frac{u_k-u_k^0}{\beta_k}, \
    \text{ for } k\in\{b,s\}. 
\end{equation}
We note that $u_k-u_k^0 \equiv -p_k+\phi_k(\vx)-u_k^0$ captures the difference between the deterministic utility of users (sellers or buyers) joining one platform and the deterministic utility of the outside option. We remark that in the symmetric case any platform charges the price $p_k$ for $k \in\{b,s\}$ and the market shares are given by $\vx = (x_b,x_s)$. The Gumbel distribution parameter $\beta_k$ is a measure of the standard deviation of the idiosyncratic preference $\varepsilon_k^i$ and it captures the degree of heterogeneity in users' tastes.\footnote{Note that in general, if $\varepsilon_k^i\sim G(\mu_k,\beta_k)$, then 
$\textnormal{Var}[\varepsilon_k^i]=\frac{\pi^2}{6}\beta_k^2$ and thus the standard deviation of $\varepsilon_k^i$ is $\frac{\pi}{\sqrt{6}}\beta_k$.}  Throughout the article, we will refer to $z_k$ as the \textit{normalized net deterministic utility}
of users on side $k$ of the market.

We can write the first-order condition (FOC) of (\ref{pi}) as a function of $z_k$.\footnote{It is a known fact that attempting to solve (\ref{pi}) by means of an FOC with respect to prices $\{(p_b^i,p_s^i)\}_{i=1}^N$ produces a non-tractable system of equations (see, e.g., \cite{tan2021effects} and \cite{chica2021exclusive}). By contrast, our proofs in the appendix take derivatives with respect to $\{x_b^i, x_s^i\}_{i=1}^N$. By Proposition \ref{prop:existence_xki} and the implicit function theorem, there is a well-defined locally 1-1 mapping from  $\{(x_b^i,x_s^i)\}_{i=1}^N$ to $\{(p_b^i,p_s^i)\}_{i=1}^N$.}  

\begin{proposition}[FOC of (\ref{pi})]\label{prop:FOC_gumbel}
    Suppose there is a symmetric equilibrium $(p_b^*,p_s^*)$ solution of (\ref{pi}) with market shares $(x_b^*,x_s^*)$. If one platform unilaterally deviates from this symmetric CNE, the FOC that characterizes its best-response is given by
    \begin{equation}
     \vbeta \vz=\left(\Phi -H(\vz)\right)\Omega(\vz) - \vu_0 ,\label{FOCs_z}
    \end{equation}
    where $\vbeta=\textnormal{diag}(\beta_b,\beta_s)$, $\vz=(z_b,z_s)$, $\vu_0=(u_b^0,u_s^0)$, $\Phi$ is the externalities network matrix defined in \eqref{def:Phi}, $\Omega(\vz)=(\omega(z_b),\omega(z_s))^T$ with $\omega:\R\longrightarrow(0,\frac{1}{N})$  such that $\omega(z):= \frac{1}{e^{-z}+N}$, and $H(\vz)$ is a $2x2$ matrix defined as 
\begin{equation}
    \label{def:H_matrix}
  H(\vz):= \left[\begin{matrix}
      L_bd_b K_s + h_b-\phi_{bb}  & -\phi_{sb} (d_s L_b + 1) \\
      -\phi_{bs} (d_b L_s + 1) &L_sd_s K_b + h_s -\phi_{ss} 
  \end{matrix}\right],
\end{equation}
where $L_k$, $d_k$ and $h_k$ for each $k\in\{b,s\}$ are given by
\begin{equation}
\begin{split}
L_k & = \frac{(N-1)\beta_k}{J_\phi}(1 + Ne^{z_k}),\\
d_k & = \beta_k(1 + N e^{z_k}),\\
h_k &= \beta_k(1 + e^{z_k}) (e^{-z_k} + N), \\
K_k & = \phi_{kk} - \beta_k (1+Ne^{z_k})(e^{-z_k} + N - 1),\\
J_\phi &= K_bK_s - \phi_{sb}\phi_{bs}. 
\end{split}\label{eqn_z_Lk}
\end{equation}
\end{proposition}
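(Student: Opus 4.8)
The plan is to exploit Assumption~I to replace the abstract choice probabilities $T_k^i$ by explicit multinomial logit shares, and then to characterize platform $1$'s unilateral best response by differentiating its profit with respect to its \emph{own} market shares rather than its prices, as the footnote suggests. Under the Gumbel assumption the solution of \eqref{xki} is the logit map: on side $k$, a platform carrying deterministic utility $v$ attracts the share $e^{v/\beta_k}/\big(\sum_j e^{u_k^j/\beta_k}\big)$. Evaluating this along the symmetric path, where every platform carries $u_k$ and the outside option carries $u_k^0$, gives the symmetric share $x_k=1/(e^{-z_k}+N)=\omega(z_k)$ and the outside share $x_k^0=e^{-z_k}\omega(z_k)$, with $z_k$ as in \eqref{def:zk}; this already produces the vector $\Omega(\vz)$ in \eqref{FOCs_z}. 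Now let platform $1$ deviate while platforms $2,\dots,N$ hold their prices fixed; by symmetry and uniqueness of the fixed point (Proposition~\ref{prop:existence_xki}) these rivals share a common response $\tilde{x}_k$, so that $x_k^0 = 1 - x_k^1 - (N-1)\tilde{x}_k$ on each side.

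Next, invoking Proposition~\ref{prop:existence_xki} and the implicit function theorem, I pass from the prices $(p_b^1,p_s^1)$ to platform $1$'s own shares $(x_b^1,x_s^1)$ as decision variables, so that $\pi^1=p_b^1x_b^1+p_s^1x_s^1$ with $p_k^1=\phi_k(\vx^1)-u_k^1$ and, inverting the logit relation between platform $1$ and the outside option, $u_k^1=u_k^0+\beta_k\ln(x_k^1/x_k^0)$. The stationarity condition $\nabla_{\vx^1}\pi^1=0$ then reads $\vp^{1}=-\big(\partial\vp^{1}/\partial\vx^{1}\big)^{\!\T}\vx^{1}$, and since $\phi_k$ is linear (Assumption~II) the Jacobian splits as $\partial p_k^1/\partial x_l^1=\phi_{kl}-\partial u_k^1/\partial x_l^1$. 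The derivative of $u_k^1$ decomposes into a direct part $\beta_k\big(\delta_{kl}/x_k^1+\delta_{kl}/x_k^0\big)$, collapsing at the symmetric point to $h_k$ on the diagonal, and a feedback part $\beta_k(N-1)(1/x_k^0)\,\partial\tilde{x}_k/\partial x_l^1$, in which $\beta_k/x_k^0=d_k$ at symmetry.

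The crux is computing the feedback response $\partial\tilde{x}_k/\partial x_l^1$. For this I differentiate implicitly the subsystem governing the rivals, namely their logit identity $-p_k+\phi_k(\tilde{x}_b,\tilde{x}_s)-u_k^0=\beta_k\ln(\tilde{x}_k/x_k^0)$ together with the adding-up identity, holding $p_k$ fixed. Because the cross-side externalities $\phi_{bs},\phi_{sb}$ couple the two sides, this is a genuine $2\times2$ linear system whose coefficient matrix has diagonal entries $K_k$ and off-diagonal entries $\phi_{bs},\phi_{sb}$, hence determinant $J_\phi=K_bK_s-\phi_{bs}\phi_{sb}$. Solving by Cramer's rule introduces the factor $1/J_\phi$ and the constants $L_k=(N-1)\beta_k(1+Ne^{z_k})/J_\phi=(N-1)d_k/J_\phi$. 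Assembling the diagonal and off-diagonal pieces into $-\big(\partial\vp^{1}/\partial\vx^{1}\big)^{\!\T}$ reproduces $H(\vz)$ as in \eqref{def:H_matrix}, and rewriting $\vp^{*}=H(\vz)\Omega(\vz)$ through $u_k=-p_k+\phi_k(\vx)$ and $\beta_k z_k=u_k-u_k^0$ yields the stated form $\vbeta\vz=(\Phi-H(\vz))\Omega(\vz)-\vu_0$.

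The main obstacle I anticipate is precisely this implicit-differentiation/feedback step: one must track how the $N-1$ rivals and the outside option re-equilibrate when platform $1$ perturbs a single share, and then organize the resulting expressions—evaluated at $x_k^1=\tilde{x}_k=\omega(z_k)$, $x_k^0=e^{-z_k}\omega(z_k)$, with $1/x_k^1=e^{-z_k}+N$ and $1/x_k^0=1+Ne^{z_k}$—into the compact $L_k,d_k,h_k,K_k,J_\phi$ notation. The two-sided coupling through $\phi_{bs},\phi_{sb}$ is exactly what prevents the system from decoupling and forces the determinant $J_\phi$ to appear throughout.
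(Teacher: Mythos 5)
Your proposal is correct and reaches exactly the stated formulas, but by a genuinely more direct route than the paper's. The paper first proves a distribution-free FOC (Lemma \ref{lemma:foc_cneA}) by differentiating the full system of share identities $T_k^i=x_k^i$ for platform $1$ \emph{and} all $N-1$ rivals---a $4N$-unknown linear system, reduced by symmetry to eight equations in the abstract derivatives $S_k,R_k,J_k$ of the choice map---and only afterwards substitutes the Gumbel expressions to land on $H(\vz)$. You instead exploit the logit structure from the outset: the inversion $u_k^1=u_k^0+\beta_k\ln(x_k^1/x_k^0)$ and the rival ratio identity $\tilde{x}_k/x_k^0=e^{(u_k-u_k^0)/\beta_k}$ collapse all rival feedback into one implicitly differentiated $2\times2$ system, bypassing the $S_k,R_k,J_k$ apparatus entirely. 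Your bookkeeping checks out: at symmetry $\beta_k\big(1/x_k^1+1/x_k^0\big)=\beta_k\big((e^{-z_k}+N)+(1+Ne^{z_k})\big)=h_k$ and $\beta_k/x_k^0=d_k$; the rival-system diagonal is $\phi_{kk}-\beta_k\big(1/\tilde{x}_k+(N-1)/x_k^0\big)=\phi_{kk}-\beta_k(1+Ne^{z_k})(e^{-z_k}+N-1)=K_k$, so Cramer's rule yields $\partial\tilde{x}_b/\partial x_b^1=K_sd_b/J_\phi$ and $\partial\tilde{x}_s/\partial x_b^1=-\phi_{sb}d_b/J_\phi$, which assemble precisely into $H_{bb}=L_bd_bK_s+h_b-\phi_{bb}$ and $H_{bs}=-\phi_{sb}(d_sL_b+1)$, and the final substitution $\vbeta\vz=-\vp+\Phi\Omega(\vz)-\vu_0$ is the same as the paper's. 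What your route buys is brevity and interpretability---$h_k$, $d_k$, $K_k$, $L_k$, $J_\phi$ each acquire a transparent meaning (own log-share slope plus outside-share slope, inverse outside share, rival-system diagonal, feedback gain, feedback determinant)---at the cost of generality: it is Gumbel-specific, whereas the paper's intermediate lemma holds for arbitrary differentiable $F_k$ and nonlinear $\phi_k$ and is reused later for the collusion FOC and the second-order analysis. Two small points to tighten: Proposition \ref{prop:existence_xki} gives uniqueness of stage-2 shares only under $M_TM_\phi<1$, which is not assumed here, so your claim that the rivals share a common response $\tilde{x}_k$ should instead be justified, as the paper effectively does, by the implicit function theorem around the symmetric point (the nonvanishing Jacobian $Q_k\neq0$ selects the locally unique, rival-symmetric branch); and your Cramer step needs $J_\phi\neq0$, with interior shares for the log-inversion, mirroring the paper's explicit caveat $J_k\neq0$, $J_\phi\neq0$.
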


Let us assume $\vz^\ast = (z^\ast_b, z^\ast_s)^T$ is the unique solution of (\ref{FOCs_z}) and it satisfies a corresponding second order condition. 
We discuss below (see Proposition \ref{prop:existence_gumbel}) sufficient conditions for this assumption.
We use $\vz^\ast$ to characterize the symmetric equilibrium solution of (\ref{pi}),  $\vp^*=(p_b^*,p_s^*)^T$, with market shares $\vx^*=(x_b^*,x_s^*)^T$. 
By applying \eqref{def:Qki} and \eqref{xki_Tki} evaluated at 
$u_k^i = u_k^* = -p_k^* + \phi _k(\vx^*)$, where $i\in \mathcal{N}$ and $k\in\{b,s\}$, one can show (see \eqref{eqn:proof_xki_omega} in Appendix \ref{appendixA}) that 
\begin{equation}
\label{eq:x=Omegaz}
x_k^* = \omega(z_k^*) \equiv \frac{1}{e^{-z_k^*}+N} \ \textnormal{ and thus } \ \vx^* = \Omega(\vz^*).
\end{equation} 
We further note that (\ref{def:zk}) implies that $\vbeta \vz^\ast = -\vp^*+\Phi\vx^*-\vu_0$. Combining the latter equation, (\ref{FOCs_z})
and \eqref{eq:x=Omegaz}, the symmetric CNE of (\ref{pi})  is given by 
\begin{equation}\label{eqn_pz}
\begin{split}
\vp^*
&= H(\vz^*)\Omega(\vz^*) \textnormal{ and }\\
\vx^* &= \Omega(\vz^*).
\end{split}
\end{equation}

In order to ensure that \eqref{eqn_pz} yields the symmetric CNE, we next establish a sufficient condition for (\ref{FOCs_z}) to have a unique solution that satisfies a corresponding second order condition.
It uses the following function
\begin{equation}\label{def:RNphi_kk_proof}
    f(N):= \frac{2(N-1)}{N^{2}},
\end{equation}
where we note that $f$ approaches 0 as $N\to\infty$. It also uses the notation $B_\epsilon(0)$ for the ball in $\R^2$ of radius $\eps>0$ around the origin.
\begin{proposition}[Existence and uniqueness of the symmetric CNE]
    \label{prop:existence_gumbel}
Suppose that $N\geq2$ and for each $k\in\left\{ b,s\right\}$, $\left(\phi_{kk},\beta_{k}\right)$ satisfies 
\begin{equation}\label{condition_existence_beta}
\textnormal{either }
\left(\phi_{kk}\leq0\textnormal{ and }\beta_{k}>0\right)\textnormal{ or }\left(\phi_{kk}>0\textnormal{ and }\beta_{k}>f\left(N\right)\phi_{kk}\right).
\end{equation}
Then, there exists $\epsilon>0$ such that for any $(\phi_{bs},\phi_{sb}) \in B_\epsilon(0)$ there is a unique solution of (\ref{FOCs_z}) and this solution satisfies a second order condition.\footnote{We clarify that the $\epsilon$ in Proposition \ref{prop:existence_gumbel} depends on $(\phi_{bb},\phi_{ss},\beta_b,\beta_s,N,u_b^0,u_s^0)$, but for simplicity we denote it by $\epsilon$. We use the same convention in other places in this article where a similar condition with $\epsilon$ appears.} Furthermore, (\ref{eqn_pz}) yields the unique symmetric CNE of (\ref{pi}).
\end{proposition}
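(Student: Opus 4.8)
The plan is to prove Proposition \ref{prop:existence_gumbel} by a perturbation argument around the case of zero cross-side externalities $(\phi_{bs},\phi_{sb})=(0,0)$, leveraging the implicit function theorem to extend uniqueness and the second-order condition to a small ball $B_\epsilon(0)$. First I would set $(\phi_{bs},\phi_{sb})=(0,0)$ in the FOC \eqref{FOCs_z}. With the cross-terms vanishing, $\Phi$ becomes diagonal, the quantity $J_\phi=K_bK_s$ factors, and $H(\vz)$ becomes diagonal as well, so the system \eqref{FOCs_z} decouples into two scalar equations, one for $z_b$ and one for $z_s$, of the form
\begin{equation*}
\beta_k z_k = \bigl(\phi_{kk}-\widetilde h_k(z_k)\bigr)\,\omega(z_k) - u_k^0,
\end{equation*}
where $\widetilde h_k$ is the diagonal entry of $H$ at zero cross-externalities. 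The core of the argument is then a one-dimensional analysis: I would define $g_k(z_k):=\beta_k z_k - (\phi_{kk}-\widetilde h_k(z_k))\omega(z_k)+u_k^0$ and show, using condition \eqref{condition_existence_beta}, that $g_k$ is strictly monotonic (hence has a unique root) and that the associated second-order condition holds strictly at that root.

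The key step is verifying strict monotonicity of $g_k$. Here the two cases in \eqref{condition_existence_beta} correspond to the sign of $\phi_{kk}$. When $\phi_{kk}\le 0$, the within-side externality term works in our favor and positivity of $\beta_k$ suffices; when $\phi_{kk}>0$, the hypothesis $\beta_k>f(N)\phi_{kk}=\tfrac{2(N-1)}{N^2}\phi_{kk}$ is precisely what is needed to dominate the destabilizing contribution of $\phi_{kk}$, most critically at the point where $\omega'$ and the curvature of $\widetilde h_k$ are largest (one expects the binding estimate to occur near $z_k=0$, where $\omega(0)=1/(N+1)$ and $\omega'$ attains its extreme behavior relative to $N$). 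I would compute $g_k'(z_k)$ explicitly, collect the terms multiplying $\phi_{kk}$, and bound them by $f(N)$ times elementary factors so that $\beta_k>f(N)\phi_{kk}$ forces $g_k'>0$ uniformly in $z_k$; this simultaneously delivers the second-order (profit-concavity) condition, since the SOC for the symmetric problem is governed by the same sign of the derivative of the best-response map.

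Once the decoupled ($\phi_{bs}=\phi_{sb}=0$) system is shown to have a unique root $\vz^{*}_0$ satisfying a strict SOC, I would invoke the implicit function theorem on the full map $\Psi(\vz;\phi_{bs},\phi_{sb}):=\beta\vz-(\Phi-H(\vz))\Omega(\vz)+\vu_0$. The strict monotonicity established above shows that the Jacobian $D_{\vz}\Psi$ is nonsingular at $(\vz_0^*;0,0)$, so there is a neighborhood in which $\vz^*(\phi_{bs},\phi_{sb})$ is a unique continuously differentiable solution; shrinking to a ball $B_\epsilon(0)$ and using continuity of the strict SOC inequality preserves both uniqueness and the second-order condition. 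Finally, the identity $\vbeta\vz^*=-\vp^*+\Phi\vx^*-\vu_0$ together with \eqref{eq:x=Omegaz} gives \eqref{eqn_pz}, so the CNE is exactly $\vp^*=H(\vz^*)\Omega(\vz^*)$, $\vx^*=\Omega(\vz^*)$. The main obstacle I anticipate is the explicit monotonicity estimate in the $\phi_{kk}>0$ case: the entries $L_k,d_k,h_k,K_k$ in \eqref{eqn_z_Lk} grow like $e^{z_k}$, so one must track cancellations carefully to show that the sharp constant is exactly $f(N)=2(N-1)/N^2$ and not merely some larger bound, and to confirm that the estimate is uniform over all $z_k\in\R$ rather than only local.
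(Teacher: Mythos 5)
Your overall architecture---decouple at $(\phi_{bs},\phi_{sb})=(0,0)$, prove strict monotonicity of the resulting scalar FOC maps under \eqref{condition_existence_beta}, then extend to a ball $B_\epsilon(0)$ via the implicit function theorem plus continuity---is essentially the paper's Step~(i): the paper likewise observes that at $\varphi_1=0$ the map $M_k$ in \eqref{Mk_phi10} depends on $z_k$ alone, establishes the limits at $\pm\infty$, verifies nonvanishing of the denominator (this is where the constant enters: the auxiliary function $e^{-z_k}\left(1+(N-1)e^{z_k}\right)\left(1+Ne^{z_k}\right)$ attains its minimum $2\sqrt{N(N-1)}+2N-1$ at $z_k^0=-\tfrac12\ln\left(N(N-1)\right)$, not near $z_k=0$ as you anticipated, and $f(N)\phi_{kk}$ dominates the corresponding bound), and obtains monotonicity by showing that all coefficients $a_m$ in \eqref{eqn:am_coefficients} of the numerator polynomial in $e^{z_k}$ are positive under \eqref{condition_existence_beta}. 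Your IFT step and the continuity argument for small cross-side externalities also match the paper.

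However, there is a genuine gap in your treatment of the second-order condition. You assert that strict monotonicity of $g_k$ ``simultaneously delivers'' the SOC because it is ``governed by the same sign of the derivative of the best-response map.'' That is not so. The relevant SOC is negative definiteness of the Hessian $D^2_{(x_b^1,x_s^1)}\pi^1$ of the \emph{deviating} platform's profit in its own two market-share variables, with rivals held at the symmetric candidate. This object involves the second derivatives $\partial^2 p_k^1/\partial x_m^1\partial x_l^1$ of the implicitly defined price maps and hence the second derivatives $\partial^2 T_k^i/\partial u^r\partial u^j$ of the Gumbel choice probabilities (Lemma~\ref{lemma:soc_cneA} and \eqref{d2Tiuruj}); none of these quantities appear in $\partial M_k/\partial z_k$, which is a derivative of the FOC along the symmetric slice. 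Concretely, the paper's monotonicity check controls the degree-$6$ polynomial with coefficients $a_m$, while its SOC check controls a \emph{different} degree-$7$ polynomial with coefficients $s_m$ in \eqref{eqn:sm_coefficients}, after first verifying (again by explicit computation, not merely by decoupling of the FOC) that the Hessian is diagonal at $\varphi_1=0$. As written, your argument establishes existence and uniqueness of the solution of \eqref{FOCs_z} but leaves unproven that this critical point is a local maximum of the deviator's problem---i.e., that \eqref{eqn_pz} is actually the symmetric CNE. Closing the gap requires the separate Hessian computation and sign analysis at $\varphi_1=0$, followed by the same continuity extension you use for the FOC.
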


Proposition \ref{prop:existence_gumbel} guarantees the existence and uniqueness of a symmetric CNE for a large family of the parameters  $\{\phi_{kk},\beta_k\}_{k\in\{b,s\}}$. In particular, if the within-side externalities (i.e., those that reflect interactions of the same sides of the market), $\phi_{kk}$, are negative, then existence and uniqueness of a solution for \eqref{FOCs_z} is guaranteed for any size of heterogeneity in user's tastes, $\beta_k$. On the other hand, if the within-side externalities are positive, then existence and uniqueness is only ensured for relatively large sizes of heterogeneity in user's tastes (i.e.,  $\beta_k>f(N)\phi_{kk}$). Recall that as the number of platforms $N$ grows to infinity, $f(N)$ approaches 0. Thus, even for positive within-side externalities, existence and uniqueness of a solution for \eqref{FOCs_z} is guaranteed for any size of $\beta_k$, provided that the number of platforms in the market is large enough. Some form of the latter condition appears in many studies of platform competition (see, e.g., \cite{anderson1992discrete}, \cite{armstrong2006competition}, and \cite{tan2021effects}). This condition ensures that network effects do not always dominate idiosyncratic preferences when users are charged non-zero prices (see, e.g., \cite{chica2021exclusive}). Figure \ref{fig:zk_existence} below shows the region described by (\ref{condition_existence_beta}) when $N=4$. 

\begin{figure}[H]
    \centering
    \caption{Region of $(\phi_{kk},\beta_k)$ that guarantees a unique symmetric CNE when $N=4$ according to Proposition \ref{prop:existence_gumbel}.}
     \label{fig:zk_existence}
    \includegraphics[scale=0.35]{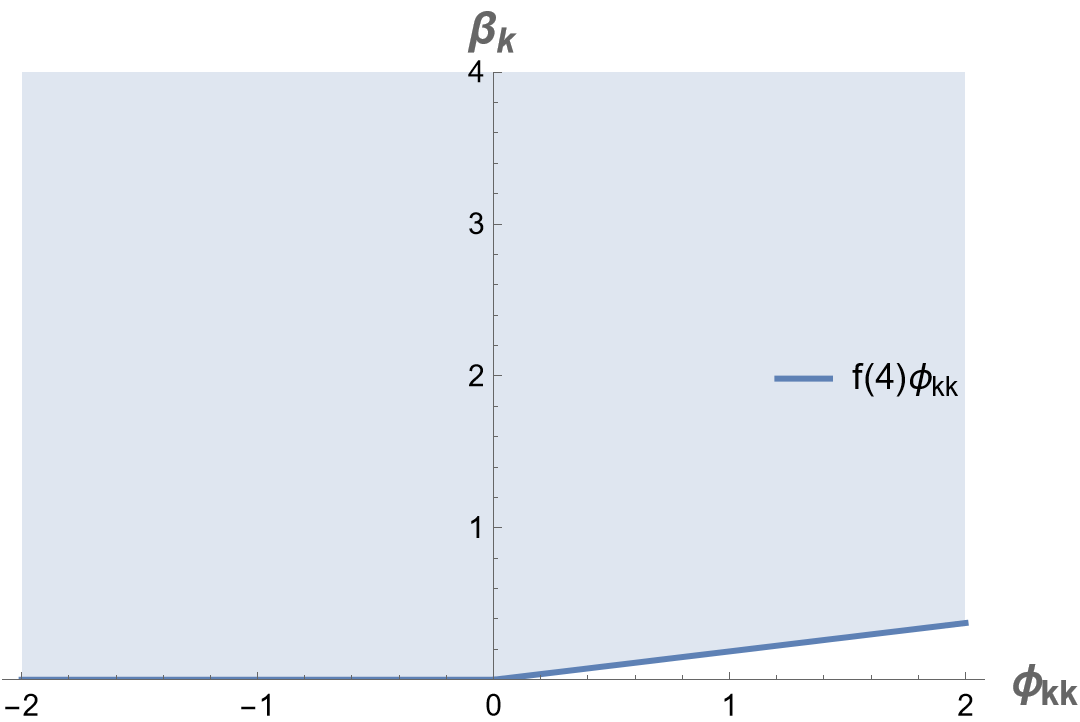}
\end{figure}

Next, we focus on the solution of (\ref{pim}). We first establish the FOC of (\ref{pim}) as a function of $z_k$. 
\begin{proposition}[FOC of (\ref{pim})]\label{prop:FOC_gumbel_collusion}
    The FOC of \eqref{pim} is given by
    \begin{equation}
    \vbeta \vz=(\Phi -H^\text{C}(\vz))\Omega(\vz) - \vu_0 ,\label{eqn:gumbel_mono_b}
\end{equation}
    where $\vbeta$, $\vz$, $\vu_0$, $\Phi$, $\Omega(\vz)$ were defined in Proposition \ref{prop:FOC_gumbel}, and $H^\text{C}(\vz)$ is a $2x2$ matrix defined by 
\begin{equation}
    \label{def:Hm_matrix}
  H^\text{C}(\vz):= \left[\begin{array}{cc}
\frac{\beta_{b}(1+Ne^{z_{b}})^{2}}{e^{z_{b}}}-\phi_{bb} & -\phi_{sb}\\
-\phi_{bs} & \frac{\beta_{s}(1+Ne^{z_{s}})^{2}}{e^{z_{s}}}-\phi_{ss}
\end{array}\right].
\end{equation}
\end{proposition}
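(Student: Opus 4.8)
The plan is to avoid differentiating the profit directly in the price variables $(p_b,p_s)$ — which is intractable, as noted for the competitive case — and instead to optimize in the market-share variables. Because the colluding entity posts a single price $p_k$ on each side, all $N$ platforms are interchangeable, so the fixed point of \eqref{xki} is symmetric, with a common share $x_k^i = x_k$ for $i\in\gN$ and outside share $x_k^0 = 1-Nx_k$ on each side $k$. By Proposition \ref{prop:existence_xki} together with the implicit function theorem, the map $(p_b,p_s)\mapsto(x_b,x_s)$ is a local diffeomorphism; hence stationarity in $\vp$ is equivalent to stationarity in $\vx$, and I may treat $\vx=(x_b,x_s)$ as the decision variable and maximize the per-platform profit $W(\vx):= x_bp_b + x_sp_s = \Pi_{\text{tot}}/N$.

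First I would invert the Logit map. Under the Gumbel assumption \eqref{eqn:cdf_gumbel}, equation \eqref{xki} reduces to the multinomial-logit share $x_k = e^{u_k/\beta_k}/(e^{u_k^0/\beta_k}+Ne^{u_k/\beta_k})$, which with \eqref{def:zk} recovers $x_k = \omega(z_k)$ as in \eqref{eq:x=Omegaz} and, crucially, gives the explicit inverse $z_k = \ln\!\big(x_k/(1-Nx_k)\big)$. Using $u_k = -p_k + \phi_k(\vx)$ and the definition of $z_k$, this yields the inverse-demand expression $p_k = \phi_k(\vx) - u_k^0 - \beta_k\ln\!\big(x_k/(1-Nx_k)\big)$, making $W$ an explicit function of $\vx$ alone.

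Next I would substitute the linear externalities \eqref{def:Phi}, $\phi_k(\vx)=\sum_l \phi_{kl}x_l$, to obtain $W(\vx) = \sum_{k,l}\phi_{kl}x_kx_l - \sum_k u_k^0 x_k - \sum_k \beta_k x_k\ln\!\big(x_k/(1-Nx_k)\big)$ and set $\partial W/\partial x_m = 0$. The quadratic term differentiates to $((\Phi+\Phi^\top)\vx)_m$; the entropy-type term is the one computation to do with care, and using the inverse map and the identity $1-Nx_m = 1/(1+Ne^{z_m})$ one finds $\tfrac{d}{dx_m}\big[x_m\ln(x_m/(1-Nx_m))\big] = z_m + 1 + Ne^{z_m}$. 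Collecting terms gives the scalar conditions $\beta_m z_m = ((\Phi+\Phi^\top)\vx)_m - u_m^0 - \beta_m(1+Ne^{z_m})$, that is, in matrix form $\vbeta\vz = (\Phi+\Phi^\top)\Omega(\vz) - \vu_0 - \vd(\vz)$ with $\vd(\vz) = (\beta_b(1+Ne^{z_b}),\,\beta_s(1+Ne^{z_s}))^\top$.

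Finally I would reconcile this with the claimed form \eqref{eqn:gumbel_mono_b}. Writing $H^\text{C}(\vz) = D'(\vz) - \Phi^\top$, where $D'(\vz) = \diag\!\big(\tfrac{\beta_b(1+Ne^{z_b})^2}{e^{z_b}},\,\tfrac{\beta_s(1+Ne^{z_s})^2}{e^{z_s}}\big)$ is exactly the diagonal part of \eqref{def:Hm_matrix}, the single algebraic identity $\tfrac{\beta_k(1+Ne^{z_k})^2}{e^{z_k}}\,\omega(z_k) = \beta_k(1+Ne^{z_k})$ gives $D'(\vz)\Omega(\vz) = \vd(\vz)$, hence $H^\text{C}(\vz)\Omega(\vz) = \vd(\vz) - \Phi^\top\Omega(\vz)$, and therefore $(\Phi - H^\text{C}(\vz))\Omega(\vz) = (\Phi+\Phi^\top)\Omega(\vz) - \vd(\vz)$, which is precisely \eqref{eqn:gumbel_mono_b}. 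The only genuine obstacle is justifying the change of variables — that the common collusive price forces symmetric shares and that the price-to-share map is locally invertible; once this is in hand, the derivation is a short, self-contained computation, markedly simpler than the competitive first-order condition of Proposition \ref{prop:FOC_gumbel}, since the colluding entity internalizes all platforms and thus faces none of the competitor-feedback terms that populate the off-diagonal of $H(\vz)$.
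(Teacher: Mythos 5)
Your proof is correct, and it takes a genuinely different route from the paper's. The paper proceeds through Lemma~\ref{lemma:foc_ce}, which is distribution-free: it keeps the forward demand map \eqref{xki}, differentiates the share equations implicitly to obtain $\partial p_k/\partial x_l = \partial\phi_k/\partial x_l - \delta_{kl}/(S_k+(N-1)R_k)$, and only afterwards substitutes the Gumbel-specific expressions for $S_k$ and $R_k$. You instead exploit the logit structure from the outset: the explicit inverse $z_k=\ln\bigl(x_k/(1-Nx_k)\bigr)$ turns the per-platform profit into a closed-form entropy-penalized quadratic in $\vx$, which you differentiate directly — no implicit function theorem machinery beyond justifying the change of variables. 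The two routes meet at the same quantity, since at a symmetric allocation $1/(S_k+(N-1)R_k)=\beta_k(1+Ne^{z_k})^2/e^{z_k}$, exactly the diagonal of $H^\text{C}$; and your final form $\vbeta\vz=(\Phi+\Phi^\top)\Omega(\vz)-\vu_0-\vd(\vz)$ is algebraically identical to \eqref{eqn:gumbel_mono_b} and to the decoupled-plus-cross form \eqref{eqn:gumbel_mono_proof} the paper uses later in the uniqueness proof, so your identification $H^\text{C}(\vz)=D'(\vz)-\Phi^\top$ checks out. What each approach buys: the paper's lemma extends verbatim to other idiosyncratic distributions and nonlinear $\phi_k$ (and the paper uses this generality in the Online Appendix for the exponential case), while yours is shorter, self-contained, and makes transparent why collusion is so much simpler than competition — with uniform prices and full internalization there is no competitor-feedback system to solve, only a single-agent concave-looking program. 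On the one obstacle you flag, symmetry of shares given uniform prices: you are no worse off than the paper, whose proof of Lemma~\ref{lemma:foc_ce} likewise imposes the symmetric profile in \eqref{eqn:T_func_M} without comment; and under the uniqueness condition $M_TM_\phi<1$ of Proposition~\ref{prop:existence_xki} the symmetric solution is forced, since permuting platform indices in any fixed point of \eqref{xki} with equal prices yields another fixed point. Absent that condition (e.g., strong positive within-side externalities admitting tipping equilibria), the restriction to symmetric shares is a modeling selection in both your argument and the paper's, so it is worth stating explicitly but is not a gap relative to the paper.
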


Let us assume $\vz^\text{C} = (z^\text{C}_b, z^\text{C}_s)^T$ is the unique solution of \eqref{eqn:gumbel_mono_b} and it satisfies a corresponding second order condition (we provide sufficient conditions for these assumptions in Proposition \ref{coro:mono_uniqueness} below). 
Following the same derivation of \eqref{eqn_pz} (see the proof of Proposition~\ref{prop:FOC_gumbel_collusion} in Appendix \ref{appendixA}), one can show that the CE solution of \eqref{pim}, $\vp^\text{C}=(p_b^\text{C},p_s^\text{C})^T$ and the corresponding market shares, $\vx^\text{C}=(x_b^\text{C},x_s^\text{C})^T$,  satisfy
\begin{equation}\label{eqn_pm}
\begin{split}
\vp^\text{C} &= H^\text{C}(\vz^\text{C})\Omega(\vz^\text{C}) \textnormal{ and }\\
\vx^\text{C} &= \Omega(\vz^\text{C}).
\end{split}
\end{equation}
In order to ensure that \eqref{eqn_pm} yields the CE, we establish sufficient conditions for (\ref{eqn:gumbel_mono_b}) to have a unique solution that satisfies a corresponding second order condition. 



%

\begin{proposition}[Existence and uniqueness of the CE]\label{coro:mono_uniqueness} For any $u_b^0,u_s^0\in \sR$, $\Phi\in \mathbb{R}^{2x2}$, $\beta_b, \beta_s>0$ and $N\geq 2$,
there exists a solution for (\ref{eqn:gumbel_mono_b}). Moreover, if for each $k\in\{b,s\}$, $(\phi_{kk}, \beta_k)$ satisfies
\begin{equation}
    \textnormal{either } (\phi_{kk} \leq 0 \textnormal{ and } \beta_k>0) \ \textnormal{ or } \  (\phi_{kk} >0 \textnormal{ and } \beta_k > \frac{8\phi_{kk}}{27N}), \label{eqn:condition_unique_mono}
\end{equation}
then there exists $\varepsilon > 0$ such that for any $(\phi_{bs}, \phi_{sb}) \in B_{\varepsilon}(0)\subset \sR^2$, the solution for (\ref{eqn:gumbel_mono_b}) is unique, it satisfies a corresponding second order condition and (\ref{eqn_pm}) yields the unique CE of \eqref{pim}. 
\end{proposition}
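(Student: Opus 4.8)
The plan is to rewrite the first-order condition \eqref{eqn:gumbel_mono_b} as the vanishing of the map $G:\R^2\to\R^2$ defined by $G(\vz):=\vbeta\vz-(\Phi-H^\text{C}(\vz))\Omega(\vz)+\vu_0$, and to handle the two assertions separately: existence for all parameters, and uniqueness together with the second-order condition under \eqref{eqn:condition_unique_mono} and small cross terms. Using the algebraic identity $\frac{\beta_k(1+Ne^{z_k})^2}{e^{z_k}}\,\omega(z_k)=\beta_k(1+Ne^{z_k})$ coming from $\omega(z)=1/(e^{-z}+N)$, the scalar components read, for $\{k,l\}=\{b,s\}$,
\[
G_k(\vz)=\beta_k z_k+\beta_k(1+Ne^{z_k})-2\phi_{kk}\,\omega(z_k)-(\phi_{bs}+\phi_{sb})\,\omega(z_l)+u_k^0.
\]
The only coupling between the two equations is the term $-(\phi_{bs}+\phi_{sb})\,\omega(z_l)$, which is bounded (since $\omega\in(0,1/N)$) and vanishes when $\phi_{bs}+\phi_{sb}=0$.

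For existence I would apply the Poincaré--Miranda theorem on a box $[-M,M]^2$. Because $\omega$ is bounded, the terms $\beta_k z_k+\beta_kNe^{z_k}$ dominate uniformly in the other coordinate, so for $M$ large one has $G_k<0$ on the face $z_k=-M$ and $G_k>0$ on the face $z_k=M$. The sign hypotheses of Poincaré--Miranda are thus met and $G$ has a zero in the box; this proves existence of a solution to \eqref{eqn:gumbel_mono_b} for arbitrary $u_b^0,u_s^0,\Phi,\beta_b,\beta_s$ and $N\ge2$.

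For uniqueness I would first treat the decoupled system $\phi_{bs}=\phi_{sb}=0$, where $G_k(\vz)=F_k(z_k)$ is a function of $z_k$ alone. With $\omega'(z)=e^{z}/(1+Ne^{z})^{2}$ one gets $F_k'(z)=\beta_k(1+Ne^{z})-2\phi_{kk}\,\omega'(z)$, and, writing $t=e^{z}>0$, positivity of $F_k'$ is equivalent to $\beta_k>2\phi_{kk}\,t/(1+Nt)^{3}$ for all $t>0$. The function $t\mapsto t/(1+Nt)^{3}$ attains its maximum $4/(27N)$ at $t=1/(2N)$, so $F_k'>0$ on $\R$ precisely when $\phi_{kk}\le0$, or $\phi_{kk}>0$ and $\beta_k>\tfrac{8\phi_{kk}}{27N}$, which is exactly \eqref{eqn:condition_unique_mono}. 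Each $F_k$ is then strictly increasing with $F_k(z)\to\pm\infty$ as $z\to\pm\infty$, so the decoupled system has a unique solution.

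To upgrade this to the coupled case I would use Hadamard's global inverse function theorem. The Jacobian is
\[
DG(\vz)=\begin{pmatrix}F_b'(z_b) & -(\phi_{bs}+\phi_{sb})\,\omega'(z_s)\\ -(\phi_{bs}+\phi_{sb})\,\omega'(z_b) & F_s'(z_s)\end{pmatrix},
\]
whose diagonal entries are bounded below by the positive constants $c_k:=\inf_{z\in\R}F_k'(z)>0$ (positive under \eqref{eqn:condition_unique_mono}), while $\sup_z\omega'(z)=1/(4N)$ bounds the off-diagonal entries by $|\phi_{bs}+\phi_{sb}|/(4N)$. Hence for $(\phi_{bs},\phi_{sb})\in B_\varepsilon(0)$ with $\varepsilon$ small enough, $\det DG\ge c_bc_s-(\phi_{bs}+\phi_{sb})^2/(16N^2)>0$ everywhere, so $DG$ is nonsingular on all of $\R^2$; moreover $G$ is proper because the same domination shows $\|G(\vz)\|\to\infty$ as $\|\vz\|\to\infty$. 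Hadamard's theorem then makes $G$ a global diffeomorphism, giving a unique zero. The second-order condition holds at the decoupled solution (there the joint profit splits into two strictly concave one-variable problems under \eqref{eqn:condition_unique_mono}) and persists for small $(\phi_{bs},\phi_{sb})$ by continuity; \eqref{eqn_pm} then yields the unique CE exactly as in its derivation. The main obstacle is precisely the passage from local to global uniqueness over $\R^2$, which the nonsingular-Jacobian-plus-properness argument resolves; the sharp threshold $8\phi_{kk}/(27N)$ is forced by the maximization of $t/(1+Nt)^3$.
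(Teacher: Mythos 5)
Your proof is correct, and its skeleton matches the paper's: existence via a topological argument on a large box (the paper applies Brouwer's fixed-point theorem to a self-map of an explicit rectangle $[w_b,v_b]\times[w_s,v_s]$; your Poincar\'e--Miranda argument is the same idea in an equivalent package), then uniqueness by decoupling at $\phi_{bs}=\phi_{sb}=0$ and proving strict monotonicity of the scalar map, with the identical threshold computation --- the paper maximizes $2e^{z}/(Ne^{z}+1)^{3}$, you maximize $t/(1+Nt)^{3}$ at $t=1/(2N)$, both yielding the cutoff $8\phi_{kk}/(27N)$. The genuine divergence is in the passage to small cross-side externalities. The paper invokes the Implicit Function Theorem at the decoupled solution, which produces a unique continuous branch $\left(z_b^{\text{C}}(\varphi_1),z_s^{\text{C}}(\varphi_1)\right)$ but, read strictly, certifies only local uniqueness of the solution of \eqref{eqn:gumbel_mono_b} for $\varphi_1\in B_{\varepsilon}(0)$. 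Your route --- the uniform bounds $\inf_{z}F_k'(z)>0$ (indeed $F_k'(z)\geq(1+Ne^{z})\bigl(\beta_k-\tfrac{8\phi_{kk}}{27N}\bigr)$ when $\phi_{kk}>0$, and $F_k'\geq\beta_k$ when $\phi_{kk}\leq0$) together with $\sup_{z}\omega'(z)=\tfrac{1}{4N}$, hence an everywhere nonsingular Jacobian, combined with properness and Hadamard's global inverse function theorem --- delivers \emph{global} uniqueness in $\mathbb{R}^{2}$ for each fixed small $\varphi_1$, which is what the proposition actually asserts; in this respect your argument is more complete than the paper's written proof, at the cost of a heavier theorem. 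On the second-order condition your parenthetical is accurate but compressed: the paper computes the Hessian explicitly, with diagonal entries $-Ne^{-z_k}\bigl(\beta_k(Ne^{z_k}+1)^{3}-2e^{z_k}\phi_{kk}\bigr)$ and off-diagonal entries $N(\phi_{bs}+\phi_{sb})$, and negativity of the diagonal is equivalent to $\beta_k>2\phi_{kk}e^{z_k}/(1+Ne^{z_k})^{3}$ --- the very same inequality as $F_k'>0$ --- so the decoupled problem does split into two strictly concave one-variable problems under \eqref{eqn:condition_unique_mono}, and your continuity step for small $\varphi_1$ is exactly the paper's; to make this step airtight you should display that Hessian (or the scalar second derivative) rather than assert concavity, and note that global negative definiteness for small $\varphi_1$ (via $\min_{t>0}(Nt+1)^{3}/t=27N/4$) is what guarantees that the unique critical point is indeed the global maximizer of \eqref{pim}, so that \eqref{eqn_pm} yields the unique CE.
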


The proof of Proposition \ref{coro:mono_uniqueness} implies that $f(N)\phi_{kk}$, which was used in Proposition \ref{prop:existence_gumbel}, is strictly bigger than ${8\phi_{kk}}/{(27N)}$ for all $\phi_{kk}>0$. It follows that, if $(\phi_{kk},\beta_k)$ satisfies (\ref{condition_existence_beta}) for each $k\in\{b,s\}$, then it also satisfies (\ref{eqn:condition_unique_mono}) and consequently there exists unique solutions $\vz^*$ and $z^\text{C}$ to (\ref{FOCs_z}) and (\ref{eqn:gumbel_mono_b}), respectively.  Section \ref{sect:collusionvscompetition} will compare these two solutions assuming \eqref{condition_existence_beta}  is satisfied.

\subsection*{Interpretation and Implications of the Resulting Pricing Formulas}\label{subsection_discussionCNE}
We discuss the pricing formulas \eqref{eqn_pz} and \eqref{eqn_pm} 
for the  competition and collusion models. We first relate them to common pricing competition models.
Both formulas are expressed in terms of  the equilibrium normalized net deterministic utility, $z_k$, that platforms provide to users on both sides of the market. They thus remind the formulation in \cite{armstrong2001competitive},  where multiple firms compete in a utility space, instead of a space of prices. When solving (\ref{pi}), firms internalize competition for users in terms of the utility they can provide w.r.t.~(with respect to) the outside utility. The optimal vector utility, $\vz^*$, provided by the competing platforms is determined so that some users are always excluded from the market.\footnote{Note that the equilibrium market share satisfies $x_k^* = \omega(z_k^*)<1/N$ (see \eqref{eqn_pz} and the definition of $\omega(\cdot)$ in Proposition \ref{prop:FOC_gumbel}). This condition excludes the participation of some users.} A similar result is obtained for the colluding case, while excluding a larger portion of participants, as shown below in Proposition \ref{prop:CNEvsCE}. Therefore, our models also imply the standard result that the output is not optimally distributed among users when there is price competition or collusion (see, e.g., \cite{varian1989price}, \cite{armstrong1996multiproduct} and \cite{rochet1998ironing}). 

Our pricing formulas \eqref{eqn_pz} and \eqref{eqn_pm} generalize many of the standard results in the platform's literature 
for the case of an outside option.
We emphasize some of these generalizations: (i) For CE,  the 
term ${\beta_{k}(1+Ne^{z_{k}})^{2}}/{e^{z_{k}}}$,
which appears in the 
diagonal of the matrix (\ref{def:Hm_matrix}), captures the platform's market power (see \cite{perloff1985equilibrium}). It implies that in equilibrium platforms charge users on side $k$ of the market proportionally to the platform's differentiation parameter $\beta_k$ (see \cite{tan2021effects} and  \cite{chica2021exclusive}). 
(ii) For CNE and CE, assume that the within-side externalities are positive (i.e., $\phi_{kk}\geq 0$). Then, from the diagonal of  (\ref{def:H_matrix}) and (\ref{def:Hm_matrix}), platforms subsidize users on side $k$ by an amount that is proportional to the joining population on this side of the market (i.e., they subsidize users on side $k$ with $\phi_{kk}x_k^*$ and $\phi_{kk}x_k^\text{C}$, respectively, for the competing and colluding models). If these externalities are negative (i.e., $\phi_{kk}<0$), the opposite result is true (see \cite{bardey2014competition}). (iii) For CNE and CE, assume positive cross-side externalities, that is, $\phi_{lk}\geq 0$ for each $l,k\in \{b,s\}$, $l\neq k$. Then, the off-diagonal terms of (\ref{def:H_matrix}) and (\ref{def:Hm_matrix}) imply that platforms subsidize users on side $k$ with an amount directly affected by the joining population on the other side of the market (i.e., platforms subsidize an amount $\phi_{lk}x_l^*$ to users on side $k$). 

\section{Competition and Collusion in Two-sided Markets with an Outside Option}\label{sect:collusionvscompetition}

We 
compare the colluding and competing market models
by studying the main properties of and differences between the pricing formulas (\ref{eqn_pz}) and (\ref{eqn_pm}). 
We first assume competition and
characterize the markets in which users receive positive and negative normalized net deterministic utility, $z_k^*$ (see Proposition \ref{prop:sign_zk} and Corollary \ref{coro:sign_zk}). 
We also characterize the sign of $z_k^*$ under perfect competition (i.e., as $N\to\infty$) and show that platforms charge a price that is equal to user's heterogeneity in tastes while covering the entire market (see Corollary \ref{coro:sign_zk_N}). We then study the effects of the outside option on the change of prices, profits and consumer surplus. In particular, we show that when the outside option increases: (i) prices on side $k$ may increase or decrease (see Proposition \ref{prop:dpkduk0}); (ii) profits decrease (see Proposition \ref{prop:dpikduk0}); and (iii) consumer surplus may increase or decrease (see Proposition \ref{prop:dCSkuk0}).
Next,  
we assume collusion and characterize markets in which $z_k^\text{C}$ is positive or negative (see Proposition \ref{prop:sign_zkm} and Corollaries \ref{coro:gammavsgammaC} and \ref{coro:sign_zkm}). Finally,  
we compare the equilibrium quantities of competition and collusion (see Proposition \ref{prop:CNEvsCE}). 

\subsection*{The Sign of the Net Deterministic Utility Under Competition}
\label{sec:sign_z*}
In CNE, a positive (negative) $z_k^*$ implies that the deterministic utility that users enjoy in equilibrium from joining a given platform is larger (smaller) than the deterministic utility of the outside option.  
For this reason, we first study the sign of $z_k^*$ as given by the solution of (\ref{FOCs_z}). 
The following proposition shows sufficient conditions to partition the region described by (\ref{condition_existence_beta}) into two regions: $\{z_k^*<0\}$ and $\{z_k^*>0\}$, which we demonstrate in Figure \ref{fig:zk_sign} for two different values of $u_k^0$. The indifference region $\{z_k^*=0\}$ is described by a curve $\beta_k = \gamma(N,\phi_{kk},u_k^0)$ in the plane $(\phi_{kk},\beta_k)$, where $\gamma$ is defined as follows:
\begin{equation}\label{def:gamma}
    \gamma(N,\phi_{kk},u_k^0):= \frac{\left(2\phi_{kk}-Nu_{k}^{0}\right)+\sqrt{\left(2\phi_{kk}-Nu_{k}^{0}\right)^{2}+4\phi_{kk}\left(u_{k}^{0}-\frac{2\phi_{kk}}{N+1}\right)}}{2\left(N+1\right)}.
\end{equation}
We remark that the clustering of the sign of $z_k^*$ according to this proposition requires a local bound on the cross-side externalities.

\begin{figure}[H]
    \caption{Classification of the sign of $z_k^*$ based on $(\phi_{kk},\beta_k)$, $k\in\{b,s\}$, according to Proposition \ref{prop:sign_zk}, where $N=4$ and $u_k^0 = -1$ (left) or $u_k^0 = 0.5$ (right). The red and blue regions correspond to negative and positive $z_k^*$, respectively.}
     \label{fig:zk_sign}
     \centering
        \includegraphics[width=0.35\linewidth]{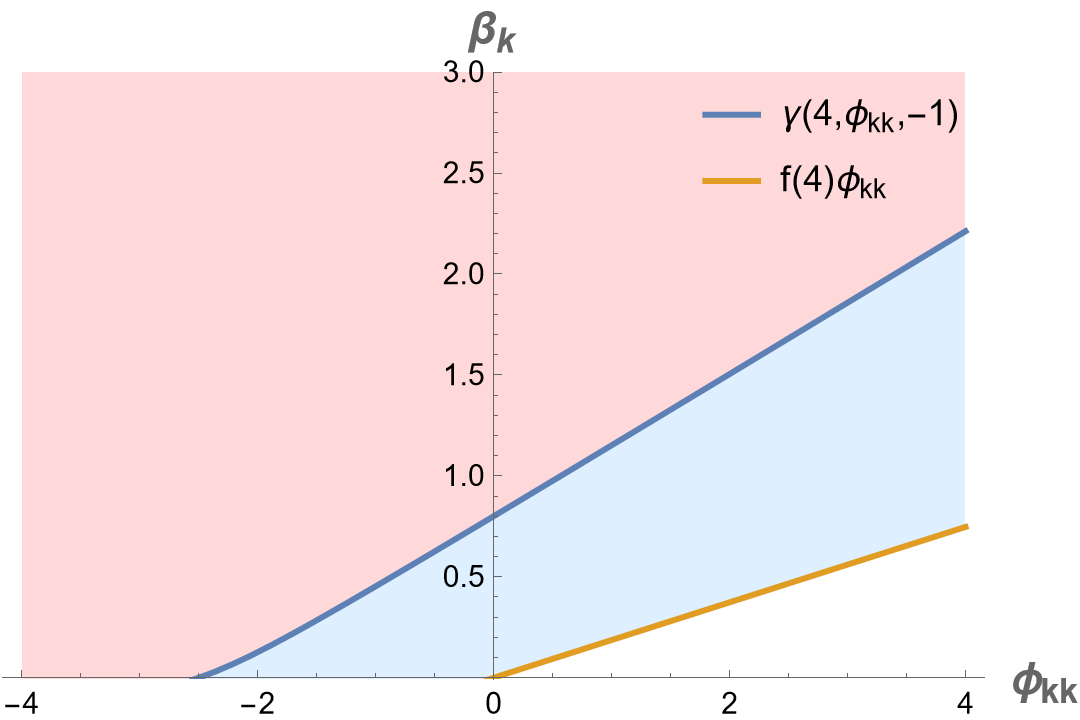}
        \includegraphics[width=0.35\linewidth]{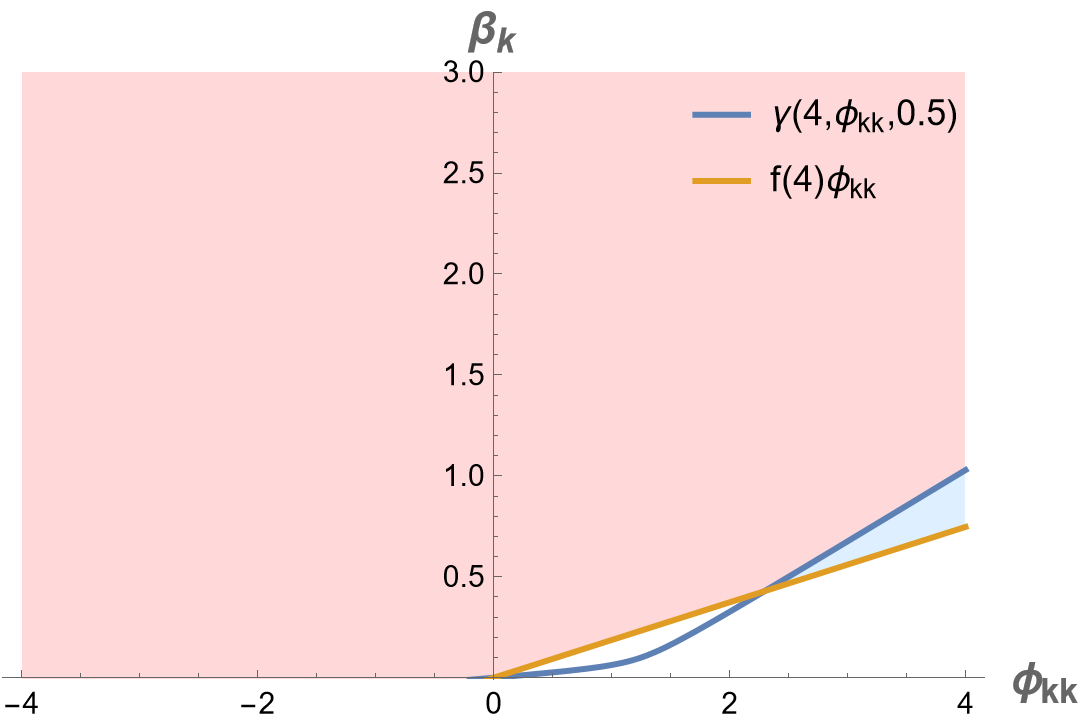}
 \end{figure}

\begin{proposition}[The sign of $z_k^*$]\label{prop:sign_zk} 
Suppose that $N\geq2$ and for each $k\in\left\{ b,s\right\}$, $\left(\phi_{kk},\beta_{k}\right)$ satisfies (\ref{condition_existence_beta}). 
Then we can further partition the domain specified in  (\ref{condition_existence_beta}) into two regions, which cluster the sign of $z_k^*$ as long as a local condition on the cross-side externalities hold:\begin{itemize}
    \item[(i)] If $\beta_{k}>\gamma(N,\phi_{kk},u_k^0)$,  
    then there exists $\varepsilon>0$ such that for any $(\phi_{bs},\phi_{sb})\in B_\varepsilon(0)\subset \mathbb{R}^2$, 
    $z_{k}^{*}<0$. 
    \item[(ii)] $\beta_{k}<\gamma\left(N,\phi_{kk},u_{k}^{0}\right)$, 
    then there exists $\varepsilon>0$ such that for any $(\phi_{bs},\phi_{sb})\in B_\varepsilon(0)\subset \mathbb{R}^2$,  $z_{k}^{*}>0$.
\end{itemize}
\end{proposition}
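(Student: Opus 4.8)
The plan is to treat the case of vanishing cross-side externalities first and then perturb. When $\phi_{bs}=\phi_{sb}=0$ the off-diagonal entries of $H(\vz)$ in \eqref{def:H_matrix} vanish and $J_\phi=K_bK_s$ in \eqref{eqn_z_Lk}, so the diagonal entry on side $k$ reduces to a function of $z_k$ alone, and the $k$-th row of \eqref{FOCs_z} becomes the single scalar equation $\beta_k z_k=(\phi_{kk}-H_{kk}(z_k))\,\omega(z_k)-u_k^0$, with $H_{kk}(z_k)=\tfrac{(N-1)\beta_k^2(1+Ne^{z_k})^2}{K_k}+h_k-\phi_{kk}$. I would analyze this one-variable equation side by side, obtain the strict sign of $z_k^*$ there, and then recover the statement for small $(\phi_{bs},\phi_{sb})$ by continuity of the equilibrium in the parameters.

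The heart of the argument is to locate the indifference locus $\{z_k^*=0\}$. Substituting $z_k=0$ (so that $\omega(0)=1/(N+1)$, $h_k=2\beta_k(N+1)$ and $K_k=\phi_{kk}-\beta_k N(N+1)$) into the scalar equation and clearing the single denominator $K_k$ reduces the indifference condition to a quadratic in $\beta_k$, \[ (N+1)^2\beta_k^2-(N+1)(2\phi_{kk}-Nu_k^0)\beta_k-\phi_{kk}\Big(u_k^0-\tfrac{2\phi_{kk}}{N+1}\Big)=0, \] whose positive root is exactly $\gamma(N,\phi_{kk},u_k^0)$ of \eqref{def:gamma}. Thus, in the decoupled problem, $z_k^*=0$ if and only if $\beta_k=\gamma$. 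A sign-bookkeeping check is needed here: under \eqref{condition_existence_beta} one has $K_k<0$ at $z_k=0$ (for $\phi_{kk}>0$ this uses $\beta_k>f(N)\phi_{kk}$ together with $N\ge2$, with $f$ as in \eqref{def:RNphi_kk_proof}), so clearing $K_k$ introduces no spurious sign reversal; one must also verify that $\gamma$ lies inside the admissible region \eqref{condition_existence_beta}.

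To fix the sign of $z_k^*$ off the indifference locus, I would use that Proposition \ref{prop:existence_gumbel} yields a unique solution satisfying a second-order condition, so the implicit function theorem makes $\beta_k\mapsto z_k^*(\beta_k)$ a $C^1$ map on the admissible interval that vanishes only at the single point $\beta_k=\gamma$. Hence $z_k^*$ keeps a constant sign on each of $\{\beta_k>\gamma\}$ and $\{\beta_k<\gamma\}$, and it suffices to determine $\mathrm{sign}(dz_k^*/d\beta_k)$ at the crossing. By the implicit function theorem this sign equals $-\mathrm{sign}(\partial_{\beta_k}\Psi/\partial_{z}\Psi)$ evaluated at $(z_k,\beta_k)=(0,\gamma)$, where $\Psi$ denotes the scalar residual: the denominator is signed by the second-order condition, and the numerator reduces to $\tfrac{1}{N+1}\,\partial_{\beta_k}H_{kk}(0)$, a short explicit computation. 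This is expected to give $dz_k^*/d\beta_k<0$ at $\gamma$, hence $z_k^*<0$ for $\beta_k>\gamma$ (claim (i)) and $z_k^*>0$ for $\beta_k<\gamma$ (claim (ii)); the direction matches the intuition that larger taste heterogeneity $\beta_k$ confers more market power and pushes the net deterministic utility below the outside option.

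Finally, I would restore the cross-side externalities. Proposition \ref{prop:existence_gumbel} already provides a unique, nondegenerate solution $\vz^*$ of \eqref{FOCs_z} for every $(\phi_{bs},\phi_{sb})$ in a neighborhood of the origin, and its nondegeneracy lets the implicit function theorem express $\vz^*$ as a continuous function of $(\phi_{bs},\phi_{sb})$. Since the decoupled value $z_k^*$ is strictly negative in case (i) (strictly positive in case (ii)) and a strict sign is an open condition, there exists $\varepsilon>0$ such that the sign persists on all of $B_\varepsilon(0)$; this is precisely the ``local condition on the cross-side externalities'' in the statement. I expect the main obstacle to be the algebra of the second step—checking that the $z_k=0$ substitution collapses exactly to the quadratic with root \eqref{def:gamma}—together with the careful tracking of $\mathrm{sign}(K_k)$ that is reused in the monotonicity step; by contrast, the perturbation argument is routine once the decoupled signs are known to be strict.
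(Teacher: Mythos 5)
Your skeleton matches the paper's: decouple at $\varphi_1=(\phi_{bs},\phi_{sb})=0$, evaluate the scalar residual at $z_k=0$, clear the (negative) factor $K_k$ to arrive at a quadratic in $\beta_k$ --- your quadratic is exactly $A_k$ in \eqref{def:A} up to the factor $(N+1)$ --- and restore small cross-side externalities via the implicit function theorem plus openness of strict inequalities, which is precisely the paper's closing continuity argument. But there is a genuine gap at the center of your argument: you assert that the quadratic has ``the positive root \ldots exactly $\gamma$,'' and your sign-constancy step presumes that $z_k^*(\beta_k)$ vanishes \emph{only} at $\beta_k=\gamma$ within the admissible region \eqref{condition_existence_beta}. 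In general $A_k$ has two real roots $\gamma_-\le\gamma$, and both can be positive: for instance $\phi_{kk}>0$ and $u_k^0\le 0$ make both the sum $(2\phi_{kk}-Nu_k^0)/(N+1)$ and the product $\phi_{kk}\bigl(2\phi_{kk}-(N+1)u_k^0\bigr)/(N+1)^3$ of the roots positive. The paper therefore must, and does, verify that the smaller root lies strictly below the admissible region --- $\gamma_-\le 0$ when $\phi_{kk}\le 0$, and $f(N)\phi_{kk}>\gamma_-$ when $\phi_{kk}>0$, with $f$ as in \eqref{def:RNphi_kk_proof} (checked symbolically in the supplementary file). Without this verification, your conclusion that $z_k^*>0$ on all of $\{f(N)\phi_{kk}<\beta_k<\gamma\}$ could fail: a second zero at $\gamma_-$ inside the region would flip the sign.

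Your derivative-at-the-crossing step is also both incomplete and avoidable. It is left at ``this is expected to give $dz_k^*/d\beta_k<0$'' without the computation, and it says nothing when $\gamma$ falls outside the admissible region: then there is no crossing, your argument yields a constant sign but not \emph{which} sign, yet case (i) must still hold throughout. The paper's route is more direct, and you already possess its ingredients: the proof of Proposition \ref{prop:existence_gumbel} establishes that the decoupled residual $M_k$ in \eqref{Mk_phi10} is strictly decreasing in $z_k$ with $M_k\to+\infty$ as $z_k\to-\infty$, so for each \emph{fixed} $\beta_k$ the sign of the unique root $z_k^*$ is simply the sign of $M_k$ at $z_k=0$, which by \eqref{Mk_phi100} is the sign of $-A_k$ (the denominator there being positive under \eqref{condition_existence_beta}). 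This reads off the sign pointwise in $\beta_k$, requires no differentiation in $\beta_k$, and handles every placement of $\gamma$ uniformly. Replace your third step with this monotonicity-in-$z_k$ reading and add the $\gamma_-$ exclusion, and your proof coincides with the paper's.
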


We first clarify the economic meaning of this proposition. If user's heterogeneity in tastes is large enough (i.e., $\beta_{k}>\gamma(N,\phi_{kk},u_k^0)$), then it is a standard result 
that platforms extract consumer surplus by charging a price that leads to a negative normalized net deterministic utility, i.e., $z_k^*<0$ (see \cite{anderson1992logit}, \cite{tan2021pricing},  \cite{chica2021exclusive} and others).\footnote{This result is due to the fact that highly heterogeneous users are less responsive to price and demand effects.} On the other hand, if the user's heterogeneity in tastes, $\beta_k$, is small enough (i.e., $\beta_{k}<\gamma\left(N,\phi_{kk},u_{k}^{0}\right)$), then users receive positive normalized net deterministic utility, i.e., $z_k^*>0$. 

We identify a critical threshold for the deterministic outside utility so that above this threshold, (ii) in Proposition~\ref{prop:sign_zk} is not feasible. 
\begin{corollary}[The sign of $z_k^*$ for large values of $u_k^0$]\label{coro:sign_zk}
Case (ii) in Proposition~\ref{prop:sign_zk} is not feasible if $u_k^0 \geq \tilde{u}_k^0(N, \phi_{kk})$, where $\tilde{u}_k^0(N, \phi_{kk})$ is the critical threshold for the deterministic outside utility and it is defined in (\ref{def:tilde_uk0}) in the Appendix~\ref{appendixA}.
\end{corollary}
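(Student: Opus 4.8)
The plan is to recast the feasibility of case (ii) as a comparison between the indifference value $\gamma$ and the lower boundary of the existence region \eqref{condition_existence_beta}, and then to exploit the monotonicity of $\gamma$ in $u_k^0$. By Proposition~\ref{prop:sign_zk}, within the region \eqref{condition_existence_beta} the set on which $z_k^*>0$ (case (ii)) is exactly $\{\beta_k<\gamma(N,\phi_{kk},u_k^0)\}$, up to the local bound on the cross-side externalities. Since \eqref{condition_existence_beta} forces $\beta_k>0$ when $\phi_{kk}\le0$ and $\beta_k>f(N)\phi_{kk}$ when $\phi_{kk}>0$, case (ii) is feasible if and only if there is an admissible $\beta_k$ strictly below $\gamma$, i.e.\ if and only if $\gamma$ strictly exceeds the lower boundary $\underline{\beta}_k:=0$ (for $\phi_{kk}\le0$) or $\underline{\beta}_k:=f(N)\phi_{kk}$ (for $\phi_{kk}>0$). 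Thus the corollary reduces to showing that, as $u_k^0$ grows, $\gamma$ crosses $\underline{\beta}_k$ at a single threshold and never re-crosses it.

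The key step is to show that $u_k^0\mapsto\gamma(N,\phi_{kk},u_k^0)$ is strictly decreasing. I would differentiate the closed form \eqref{def:gamma} directly. Writing $D:=(2\phi_{kk}-Nu_k^0)^2+4\phi_{kk}\bigl(u_k^0-\tfrac{2\phi_{kk}}{N+1}\bigr)$ for the radicand, one obtains
$$\frac{\partial\gamma}{\partial u_k^0}=\frac{1}{2(N+1)}\left(-N+\frac{N^2u_k^0+2\phi_{kk}(1-N)}{\sqrt{D}}\right),$$
so the claim is equivalent to $N\sqrt{D}>N^2u_k^0-2\phi_{kk}(N-1)$. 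Squaring (the case of a nonpositive right-hand side being immediate), the terms linear in $u_k^0$ cancel and one is left with $N^2D-\bigl(N^2u_k^0-2\phi_{kk}(N-1)\bigr)^2=4\phi_{kk}^2\tfrac{N-1}{N+1}>0$ for $N\ge2$ and $\phi_{kk}\neq0$. Hence $\gamma$ is strictly decreasing in $u_k^0$, with the degenerate case $\phi_{kk}=0$ handled separately and yielding the same conclusion.

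Given this monotonicity, I would define $\tilde{u}_k^0(N,\phi_{kk})$ as the unique value of $u_k^0$ at which $\gamma$ meets the boundary $\underline{\beta}_k$, which is precisely \eqref{def:tilde_uk0}. Concretely, I evaluate the indifference relation $z_k^*=0$ at $\beta_k=\underline{\beta}_k$ and solve for $u_k^0$. When $\phi_{kk}\le0$, setting $\gamma=0$ in \eqref{def:gamma} forces $4\phi_{kk}\bigl(u_k^0-\tfrac{2\phi_{kk}}{N+1}\bigr)=0$, so $\tilde{u}_k^0=\tfrac{2\phi_{kk}}{N+1}$; when $\phi_{kk}>0$, substituting $\beta_k=f(N)\phi_{kk}$ into the quadratic underlying \eqref{def:gamma} yields the corresponding explicit rational expression in $(N,\phi_{kk})$. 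Strict monotonicity then closes the argument: for $u_k^0\ge\tilde{u}_k^0$ we have $\gamma(N,\phi_{kk},u_k^0)\le\underline{\beta}_k$, so no admissible $\beta_k<\gamma$ exists and case (ii) is infeasible.

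The main obstacle is the monotonicity computation of the second paragraph: the cancellation of the $u_k^0$-linear terms leaves the clean positive residual $4\phi_{kk}^2(N-1)/(N+1)$, and it is exactly this sign that makes $\gamma$ genuinely monotone and $\tilde{u}_k^0$ uniquely determined. The remaining care is to treat the regimes $\phi_{kk}\le0$ and $\phi_{kk}>0$ consistently — matching the two pieces of the boundary $\underline{\beta}_k$ with the two pieces of \eqref{def:tilde_uk0} — and to keep $(\phi_{bs},\phi_{sb})$ inside the ball $B_\varepsilon(0)$ demanded by Proposition~\ref{prop:sign_zk}, so that the sign clustering of $z_k^*$ underlying the whole reduction remains valid.
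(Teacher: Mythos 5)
Your proposal is correct, and it reaches the paper's conclusion by a genuinely different engine. The reduction itself is the same as the paper's: case (ii) is feasible iff $\gamma(N,\phi_{kk},u_k^0)$ strictly exceeds the lower boundary of \eqref{condition_existence_beta} ($\underline{\beta}_k=0$ for $\phi_{kk}\le0$, $\underline{\beta}_k=f(N)\phi_{kk}$ for $\phi_{kk}>0$), and the threshold $\tilde u_k^0$ in \eqref{def:tilde_uk0} is exactly where $\gamma$ meets that boundary. Where you diverge is the mechanism for the single crossing. The paper handles $\phi_{kk}\le0$ via the sign identity \eqref{sign_ofgamma} and, for $\phi_{kk}>0$, delegates the two-sided comparison \eqref{uk02_condition} (with the explicit $u_{k,2}^0$ of \eqref{prop41_uk02}) to a Mathematica verification. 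You instead prove analytically that $u_k^0\mapsto\gamma$ is strictly decreasing: your derivative computation is right, the $u_k^0$-linear terms do cancel upon squaring, and the residual $N^2D-\bigl(N^2u_k^0-2\phi_{kk}(N-1)\bigr)^2=\tfrac{4(N-1)\phi_{kk}^2}{N+1}>0$ is correct (it also shows $D>0$, so $\gamma$ is always real). Monotonicity then delivers the single-crossing property that the paper's \eqref{uk02_condition} asserts computationally, which makes your argument fully human-checkable and yields extra structure for free (e.g., $\gamma\to\phi_{kk}/(N(N+1))$ as $u_k^0\to\infty$ and $\gamma\to\infty$ as $u_k^0\to-\infty$, so existence of the crossing is automatic). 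One small point to make explicit in the $\phi_{kk}>0$ branch: solving $A_k\bigl(f(N)\phi_{kk},u_k^0\bigr)=0$ (linear in $u_k^0$, with nonzero coefficient $(N+1)\bigl[f(N)\phi_{kk}N(N+1)-\phi_{kk}\bigr]$ for $N\ge2$) only tells you $f(N)\phi_{kk}$ is \emph{some} root of $A_k$; to identify it as the larger root $\gamma$ rather than $\gamma_-$ you should either invoke $f(N)\phi_{kk}>\gamma_-$ (established in the proof of Proposition~\ref{prop:sign_zk}) or note that your monotonicity-plus-limits argument forces the unique crossing of the level $f(N)\phi_{kk}$ to coincide with the unique solution of the linear equation. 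With that sentence added, your proof is complete and matches the paper's piecewise definition of $\tilde u_k^0$; you correctly retain the $(\phi_{bs},\phi_{sb})\in B_\varepsilon(0)$ caveat, which enters only through Proposition~\ref{prop:sign_zk} and not through any step of your own.
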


This corollary implies 
that if the deterministic outside utility is sufficiently large (i.e., $u_k^0\geq\tilde{u}_k^0(N,\phi_{kk})$), the CNE leads to a negative net deterministic utility for any size of heterogeneity in user's tastes satisfying (\ref{condition_existence_beta}). In other words, only if the deterministic outside utility is relatively small (i.e., $u_k^0<\tilde{u}_k^0(N,\phi_{kk})$), users with relatively weak preferences (i.e., $\beta_k<\gamma(N,\phi_{kk},u_k^0)$) receive positive net deterministic utility.

Next, we show that in the case of \textit{perfect competition} (i.e., the limiting case $N\to\infty$), the sign of $z_k^*$ can be characterized by the sign of $u_k^0$ and the size of $\beta_k$.

\begin{corollary}[CNE under perfect competition]\label{coro:sign_zk_N}
For each $k\in\{b,s\}$, any $u_k^0\in \sR$, $\Phi\in \mathbb{R}^{2x2}$ and $\beta_k>0$, under perfect competition (i.e., when $N\to \infty$), 
\begin{equation}
    \label{gamma_limit}
    \lim_{N\to \infty}z_k^* \begin{cases}
        
        > 0, & \textnormal{ if } u_k^0< 0 \textnormal{ and } \ \beta_k < -u_k^0;\\
        < 0, & \textnormal{ if } (u_k^0< 0 \textnormal{ and } \ \beta_k > -u_k^0) \textnormal{ or }\  u_k^0\geq 0.
\end{cases}
\end{equation}

Moreover, as $N\to\infty$, $p_k^*\to \beta_k$, $x_k^*\to 0$ and $Nx_k^*\to 1$.
\end{corollary}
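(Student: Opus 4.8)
The plan is to analyze the fixed-point equation (\ref{FOCs_z}) in the limit $N \to \infty$. My starting point is the scalar equation obtained by isolating the sign of $z_k^*$. By Corollary~\ref{coro:sign_zk} and the surrounding analysis, the sign of $z_k^*$ is governed by comparing $\beta_k$ against the threshold curve $\gamma(N,\phi_{kk},u_k^0)$ from (\ref{def:gamma}). Thus my first step is to compute $\lim_{N\to\infty}\gamma(N,\phi_{kk},u_k^0)$ directly from the closed form in (\ref{def:gamma}). Inspecting the formula, the numerator has a term $2\phi_{kk}-Nu_k^0$ together with a square root of $(2\phi_{kk}-Nu_k^0)^2 + 4\phi_{kk}(u_k^0 - \tfrac{2\phi_{kk}}{N+1})$, all divided by $2(N+1)$. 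As $N\to\infty$, the dominant balance inside the root is $(Nu_k^0)^2$, so the asymptotics of the square root depend on the sign of $u_k^0$. The crux of this first step is the careful large-$N$ expansion: when $u_k^0<0$ one finds $\lim_\gamma = -u_k^0$, whereas when $u_k^0 \geq 0$ the leading terms cancel and $\gamma \to 0$ (or the threshold degenerates so that $z_k^*<0$ always holds). This directly yields the case split in (\ref{gamma_limit}): $z_k^*>0$ iff $\beta_k$ lies below the limiting threshold, i.e.\ iff $u_k^0<0$ and $\beta_k<-u_k^0$.

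Next I would verify the limit statements $p_k^*\to\beta_k$, $x_k^*\to0$ and $Nx_k^*\to1$. The cleanest route avoids re-deriving $\gamma$ from scratch and instead works with the equilibrium relation $\vx^* = \Omega(\vz^*)$ from (\ref{eqn_pz}), where $\omega(z)=\tfrac{1}{e^{-z}+N}$. Since $z_k^*$ stays bounded (it converges to a finite limit determined by the FOC), we immediately get $x_k^* = \omega(z_k^*) = \tfrac{1}{e^{-z_k^*}+N}\to 0$ and $Nx_k^* = \tfrac{N}{e^{-z_k^*}+N}\to 1$, because $e^{-z_k^*}$ is bounded. The key subsidiary fact is therefore that $z_k^*$ converges to a finite limit; I would establish this by showing the FOC (\ref{FOCs_z}) forces $\vbeta\vz^*$ to remain bounded, which follows because $\Omega(\vz^*)\to\vzero$ and $H(\vz^*)$ does not blow up faster—so in the limit $\vbeta\vz^* \to -\vu_0$ modulo cross-side terms, giving $z_k^* \to -u_k^0/\beta_k$ when $u_k^0<0,\ \beta_k>-u_k^0$, consistent with the negative-sign regime.

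For the price limit, I would take the diagonal-dominant $N\to\infty$ behavior of $H(\vz^*)$ in (\ref{def:H_matrix}) together with $\vp^* = H(\vz^*)\Omega(\vz^*)$. The off-diagonal cross-side terms are controlled by the local bound $(\phi_{bs},\phi_{sb})\in B_\epsilon(0)$, so the dominant contribution to $p_k^*$ comes from the diagonal entry times $\omega(z_k^*)$. Substituting the expressions for $L_k$, $d_k$, $h_k$, $K_k$ from (\ref{eqn_z_Lk}) and tracking which terms scale with $N$, the product $[\text{diagonal of }H]\cdot\omega(z_k^*)$ should collapse to $\beta_k$ in the limit. This is the main computational obstacle: the diagonal entry of $H$ contains several $N$-dependent pieces (notably $h_k = \beta_k(1+e^{z_k})(e^{-z_k}+N)$ which grows like $N$, and the $L_k d_k K_s$ term), and I must show these combine with the $\omega(z_k^*)\sim 1/N$ factor so that the net limit is exactly $\beta_k$ rather than something else. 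I would handle this by substituting $x_k^* = \omega(z_k^*)$ back into (\ref{def:H_matrix}) to re-express $H$ in terms of market shares, check that the $O(N)$ growth in $h_k$ pairs with the $O(1/N)$ decay of $\omega$ to leave an $O(1)$ remainder, and confirm the surviving term is $\beta_k$. The expected obstacle throughout is precisely this bookkeeping of competing powers of $N$ inside $H(\vz^*)\Omega(\vz^*)$; once that cancellation is pinned down, the price limit follows and the proof is complete.
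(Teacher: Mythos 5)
Your overall architecture (sign split via the limit of the threshold $\gamma$, market-share limits via $x_k^*=\omega(z_k^*)$, price limit via the diagonal of $H(\vz^*)\Omega(\vz^*)$) is workable, but it contains a genuine quantitative error that contradicts your own sign conclusion. You claim that since $\Omega(\vz^*)\to\vzero$, the FOC forces $\vbeta\vz^*\to-\vu_0$, i.e.\ $z_k^*\to-u_k^0/\beta_k$. This is false: the product $\left(\Phi-H(\vz^*)\right)\Omega(\vz^*)$ does \emph{not} vanish in the limit. From \eqref{eqn_z_Lk}, $h_k\,\omega(z_k)=\beta_k(1+e^{z_k})$ \emph{exactly} (the $O(N)$ growth of $h_k$ cancels the $O(1/N)$ decay of $\omega$), and the term $L_kd_kK_j$ on the diagonal of $H$ scales like $-\beta_kNe^{z_k}$, so the $k$-th diagonal entry of $(\Phi-H)\Omega$ tends to $-\beta_k$, not $0$. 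The correct limiting FOC is therefore $\beta_kz_k=-u_k^0-\beta_k$, i.e.\ $\lim_{N\to\infty}z_k^*=-\left(u_k^0/\beta_k+1\right)$, which is exactly what the paper obtains by taking $N\to\infty$ in \eqref{FOCs_z2} (equivalently in $M_k$ of \eqref{Mk_phi10}). Note that your claimed limit $-u_k^0/\beta_k$ is \emph{positive} whenever $u_k^0<0$, so in the regime $u_k^0<0$, $\beta_k>-u_k^0$ it would contradict the negative sign you correctly extracted from the $\gamma$ threshold; the ``$+1$'' you dropped is precisely what places the sign boundary at $\beta_k=-u_k^0$ rather than at $u_k^0=0$. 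Ironically, the $O(N)\times O(1/N)$ cancellation you carefully plan to track in the price computation ($H_{kk}\,\omega(z_k^*)\to\beta_k$, which is correct and matches $p_k^*\to\beta_k$) is the very term you discarded in the $z$-limit, since $H\Omega$ and $(\Phi-H)\Omega$ share the same diagonal piece.

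Two smaller gaps. First, your route through $\lim_{N\to\infty}\gamma(N,\phi_{kk},u_k^0)$ (which indeed equals $-u_k^0$ for $u_k^0<0$ and $0$ for $u_k^0\geq0$) combined with Proposition \ref{prop:sign_zk} only yields $z_k^*(N)<0$ for each large $N$, hence $\lim z_k^*\leq0$; the strict inequality in \eqref{gamma_limit} requires the actual limit value, which is $-\left(u_k^0/\beta_k+1\right)\leq-1$ when $u_k^0\geq0$. Second, Proposition \ref{prop:sign_zk} applies only for $(\phi_{bs},\phi_{sb})$ in an $N$-dependent ball $B_\varepsilon(0)$, whereas Corollary \ref{coro:sign_zk_N} is stated for \emph{arbitrary} $\Phi$; the paper sidesteps this by taking the limit in the full FOC, where the cross-side terms ($d_jL_k\sim1/N$ multiplied by $\omega(z_j)\sim1/N$) vanish for any fixed $\Phi$, so the limiting system decouples. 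With the corrected limit of $z_k^*$ (which in particular confirms $z_k^*$ stays bounded), your remaining steps ($x_k^*\to0$, $Nx_k^*\to1$, and the price bookkeeping) do go through and essentially coincide with the paper's argument.
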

Under perfect competition, platforms charge a price that is equal to the user's heterogeneity in tastes for that side of the market, i.e., $p_k^* = \beta_k$. Moreover, the equilibrium market participation on side $k$ of the market is complete, i.e., $Nx_k^*= 1$. When the deterministic outside option utility is positive, users receive negative normalized net deterministic utility for any size of $\beta_k$ under perfect competition. When the deterministic outside option utility is strictly negative, users receive positive normalized net deterministic utility if and only if the heterogeneity of users' taste is small, i.e. $\beta_k < -u_k^0$.
We demonstrate (\ref{gamma_limit}) in Figure \ref{fig:zk_sign2} for two different values (negative and positive) of $u_k^0$ and a sufficiently large $N$.

 \begin{figure}[H]
    \caption{Classification of the sign of $z_k^*$ based on $(\phi_{kk},\beta_k)$, $k\in\{b,s\}$, for large $N$ according to Corollary \ref{coro:sign_zk_N}, where $N=200$ and $u_k^0 = -1$ (left) or $u_k^0 = 1$ (right). The red and blue regions correspond to negative and positive $z_k^*$, respectively.}
     \label{fig:zk_sign2}
                \centering
        \includegraphics[width=0.35\linewidth]{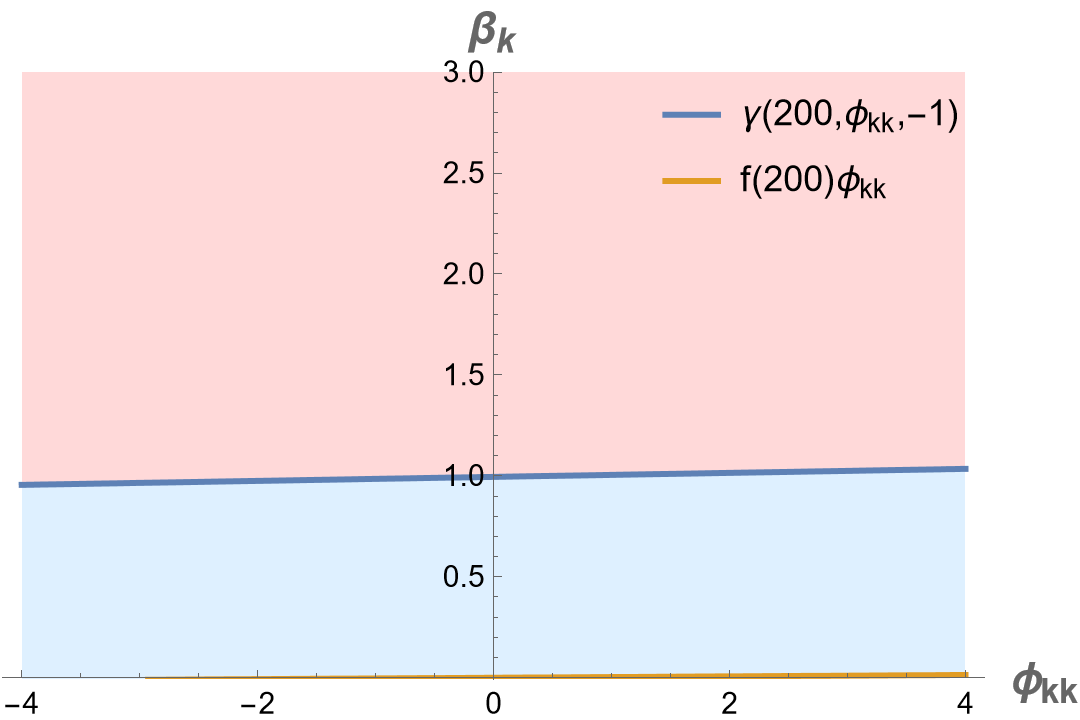}
\includegraphics[width=0.35\linewidth]{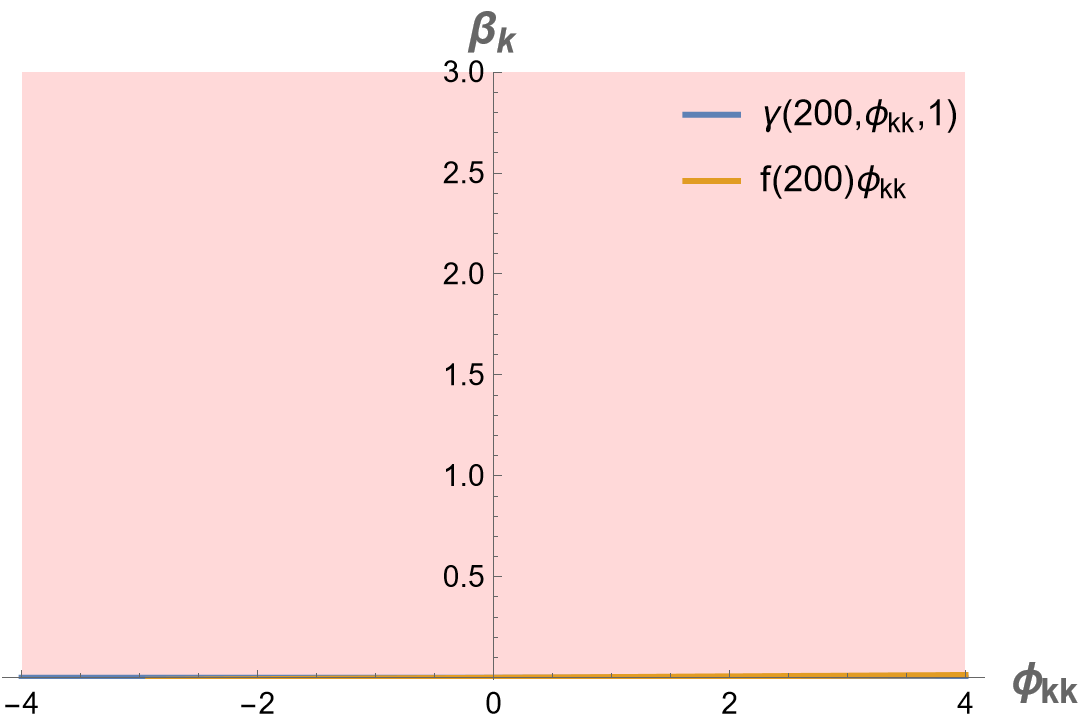}
 \end{figure}

\subsection*{The Effects of the Outside Option on the CNE}

The following proposition provides sufficient conditions to characterize the sign of $\partial p_k^*/\partial u_k^0$. It shows that the effect of the outside option on $p_k^*$ is nonlinear. It uses the following quantities:
\begin{equation}\label{def:fuk0}
\begin{split}
    g_{p,u}(N)&:=\frac{\left(N+\sqrt{\left(N-1\right)\left(N+3\right)}+1\right)}{2N}\ \textnormal{ and }\\
     f_{p,u}(N)&:=\frac{1}{2} \left(\sqrt{\frac{N-2}{N}}+1\right).
\end{split}
\end{equation}

\begin{proposition}(The sign of $\partial p_k^*/\partial u_k^0$)\label{prop:dpkduk0}
Suppose that $N\geq2$ and for each $k\in\left\{ b,s\right\}$, $\left(\phi_{kk},\beta_{k}\right)$ satisfies (\ref{condition_existence_beta}). 
Then we can further partition the domain specified in  (\ref{condition_existence_beta}) into two regions, which cluster the sign of $\partial p_k^*/\partial u_k^0$ as long as a local condition on the cross-side externalities hold:\begin{itemize}
    \item[(i)] If either $(\phi_{kk}\leq0$ and $\beta_{k}>0)$ or $(\phi_{kk}>0$ and $\beta_{k}>g_{p,u}\left(N\right)\phi_{kk})$, 
    then there exists $\varepsilon>0$ such that for any $(\phi_{bs},\phi_{sb})\in B_\varepsilon(0)\subset \mathbb{R}^2$, 
    $\partial p_k^*/\partial u_k^0<0$. 
    \item[(ii)] If $\phi_{kk}>0$, $N\geq 3$ and $f\left(N\right)\phi_{kk}<\beta_{k}<f_{p,u}\left(N\right)\phi_{kk}$, 
    then there exists $\varepsilon>0$ such that for any $(\phi_{bs},\phi_{sb})\in B_\varepsilon(0)\subset \mathbb{R}^2$, 
    $\partial p_k^*/\partial u_k^0>0$.
\end{itemize}
Moreover, if $(\phi_{bs},\phi_{sb})=0$, then
\begin{equation}\label{limitspkuk0}
    \begin{split}
\lim_{u_{k}^{0}\to-\infty}p_{k}^{*}&=\frac{N}{N-1}\beta_{k}-\frac{\phi_{kk}}{N-1}=: p_{k,u}, \textnormal{ and}\\
\lim_{u_{k}^{0}\to\infty}p_{k}^{*}&=\beta_{k}=: p_{k,E}.
\end{split}
\end{equation}
Consequently, $p_k^*\in (p_{k,E} , p_{k,u})$ in case (i), and $p_k^*\in (p_{k,u}, p_{k,E})$ in case (ii). 
\end{proposition}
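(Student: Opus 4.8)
The plan is to collapse the coupled two-side system into two independent scalar problems, solve the sign question there, and then restore the small cross-side externalities by a continuity argument. First I would set $(\phi_{bs},\phi_{sb})=(0,0)$. Then $\Phi$ is diagonal and, in \eqref{def:H_matrix}, the off-diagonal entries $-\phi_{sb}(d_sL_b+1)$ and $-\phi_{bs}(d_bL_s+1)$ vanish while $J_\phi=K_bK_s$; since $L_k$ carries the factor $1/J_\phi$, the $K_s$ in $L_bd_bK_s$ cancels, so each diagonal entry $H_{kk}$ depends on $z_k$ alone and the FOC \eqref{FOCs_z} splits into two scalar equations. By Proposition \ref{prop:existence_gumbel} the equilibrium solves a nondegenerate FOC meeting the second-order condition, so the implicit function theorem makes $z_k^*$, hence $p_k^*=H_{kk}(z_k^*)\omega(z_k^*)$ and $\partial p_k^*/\partial u_k^0$, depend continuously on $(\phi_{bs},\phi_{sb})$ near the origin. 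It therefore suffices to fix the \emph{strict} sign of $\partial p_k^*/\partial u_k^0$ at the decoupled point; continuity then delivers the statement on a ball $B_\eps(0)$ whose radius may depend on all remaining parameters, including $u_k^0$, as the footnote permits.

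Next I would fix a side $k$ and drop subscripts. In the decoupled case \eqref{eqn_pz} gives $p_k=P(z):=H_{kk}(z)\omega(z)$, while the FOC rearranges (using $x=\omega(z)$ and $\beta_k z=-p_k+\phi_{kk}x-u_k^0$) to $u_k^0=\phi_{kk}\omega(z)-P(z)-\beta_k z=:G(z)$. Implicit differentiation then yields $\partial p_k^*/\partial u_k^0=P'(z^*)/G'(z^*)$. The second-order condition from Proposition \ref{prop:existence_gumbel} is what guarantees $G'(z^*)<0$ (equivalently, the equilibrium net utility strictly decreases in the outside option), so $\sign(\partial p_k^*/\partial u_k^0)=-\sign(P'(z^*))$. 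Sending $z\to+\infty$ and $z\to-\infty$ in $P$ (where $\omega\to 1/N$ and $\omega\to 0$) produces the limits $p_{k,u}=\frac{N}{N-1}\beta_k-\frac{\phi_{kk}}{N-1}$ and $p_{k,E}=\beta_k$ of \eqref{limitspkuk0}, attained as $u_k^0\to-\infty$ and $u_k^0\to+\infty$ by the strict monotonicity of $G$.

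The hard part is to show that $P'(z)$ keeps a constant sign over all of $\R$, and to pin down the thresholds. I would substitute $x=\omega(z)\in(0,1/N)$, turning $P$ into an explicit rational function of $x$; since $\omega'>0$, $\sign(P'(z))=\sign(dp_k/dx)$, and the latter is the sign of a low-degree polynomial numerator in $x$ with coefficients built from $\beta_k,\phi_{kk},N$. For $\phi_{kk}\le 0$ this numerator is manifestly one-signed on $(0,1/N)$; for $\phi_{kk}>0$ the borderline cases are exactly where it acquires a double root, which forces the discriminant conditions $N^2\beta_k^2-N(N+1)\phi_{kk}\beta_k+\phi_{kk}^2=0$ and $2N\beta_k^2-2N\phi_{kk}\beta_k+\phi_{kk}^2=0$, whose larger roots are precisely $g_{p,u}(N)\phi_{kk}$ and $f_{p,u}(N)\phi_{kk}$ in \eqref{def:fuk0}. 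One then checks $P'>0$ (hence $\partial p_k^*/\partial u_k^0<0$) when $\beta_k>g_{p,u}(N)\phi_{kk}$, which is case (i), and $P'<0$ (hence $\partial p_k^*/\partial u_k^0>0$) on $f(N)\phi_{kk}<\beta_k<f_{p,u}(N)\phi_{kk}$, which is case (ii); the second quadratic has distinct real roots bracketing a nonempty interval above $f(N)\phi_{kk}$ only for $N\ge 3$ (at $N=2$ one has $f_{p,u}(2)=f(2)=\tfrac12$), which is the source of that hypothesis. This polynomial sign verification on the open interval, together with locating the thresholds through the two discriminants, is the main computational obstacle.

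Finally I would assemble the conclusions. Combining $\sign(\partial p_k^*/\partial u_k^0)=-\sign(P')$ with the endpoints gives the interval claims: in case (i), $P'>0$ makes $p_k^*$ decrease from $p_{k,u}$ to $p_{k,E}$, so $p_k^*\in(p_{k,E},p_{k,u})$; in case (ii), $P'<0$ makes $p_k^*$ increase from $p_{k,u}$ to $p_{k,E}$, so $p_k^*\in(p_{k,u},p_{k,E})$. As a consistency check, $g_{p,u}(N)>1$ forces $p_{k,u}>p_{k,E}$ and $f_{p,u}(N)<1$ forces $p_{k,u}<p_{k,E}$, matching the two orderings. The transfer of all these conclusions from $(\phi_{bs},\phi_{sb})=(0,0)$ to $(\phi_{bs},\phi_{sb})\in B_\eps(0)$ is the continuity argument set up in the first paragraph.
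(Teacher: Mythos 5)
Your proposal is correct and is essentially the paper's own proof: you decouple at $(\phi_{bs},\phi_{sb})=(0,0)$, reduce $\partial p_k^*/\partial u_k^0$ to the sign of a quintic polynomial in $e^{z_k^*}$ divided by the (positive) numerator of $-\partial M_k/\partial z_k$, obtain the thresholds as the larger roots of exactly the two quadratics the paper uses — $N^2\beta_k^2-N(N+1)\phi_{kk}\beta_k+\phi_{kk}^2$ and $2N\beta_k^2-2N\phi_{kk}\beta_k+\phi_{kk}^2$, which are the paper's coefficients $n_{p,u,5}$ and $n_{p,u,2}$ up to positive factors — derive the limits \eqref{limitspkuk0} from the strict monotonicity of the FOC map and the explicit price formula, and extend to $B_\varepsilon(0)$ by continuity, including the correct observation that the case (ii) interval is empty at $N=2$. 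Two immaterial slips: the thresholds arise as sign conditions on the extreme coefficients of the numerator polynomial (with the remaining coefficientwise verification done in the paper via Mathematica, the computational step you rightly flag as the obstacle), not as double-root discriminant conditions; and $G'(z^*)<0$ follows from the strict monotonicity of $M_k$ established in the proof of Proposition \ref{prop:existence_gumbel} under \eqref{condition_existence_beta}, not from the second-order condition itself.
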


We clarify the economic meaning of this proposition. We first note it implies $p_k^*\to p_{k,u}$ when $u_{k}^{0}\to-\infty$, which coincides with the equilibrium price in a platform competition model with no outside option. It also implies $p_k^*\to p_{k,E}=\beta_k$ when  $u_{k}^{0}\to\infty$, which represents the efficient price, that is, the price, $\beta_k$, under perfect competition, expressed in Corollary \ref{coro:sign_zk_N}. 
In part (i), the incorporation of an outside option into the platform competition model decreases the equilibrium price w.r.t.~the no outside option model, which is an expected result. Thus, if the cross-side externalities are sufficiently small and the within-side externalities are either negative or positive with relatively large user's heterogeneity in tastes, then users are compensated by an amount equal to $p_{k,u}-p_k^*$. Moreover, in this case, users always pay a price that is bigger than the efficient price, $p_{k,E}$.   
On the other hand, in part (ii), incorporating an outside option increases the equilibrium price w.r.t~the no outside option model, which is non-trivial. Therefore, under sufficiently small cross-side externalities and positive within-side externalities, users with relatively small heterogeneity in tastes pay a premium w.r.t~the model with no outside option, which is quantified by $p_k^*-p_{k,u}$. Moreover, users always pay a price that is smaller than the efficient price $p_{k,E}$. 

\begin{remark}[Price overestimation vs.~underestimation]
If a given population can be parameterized using the region of parameters described by either (i) or (ii) of Proposition~\ref{prop:dpkduk0}, then a model of platform competition that omits the outside option will either overestimate or underestimate, respectively, the true equilibrium price. 
\end{remark}

The following proposition shows sufficient conditions to determine the sign of $\partial \pi_k^*/\partial u_k^0$. It uses the following quantity:
\begin{equation}
    \label{def:g_pi_u}
    g_{\pi,u}(N):=\sqrt{\frac{N-1}{N^3}}+\frac{1}{N}. 
\end{equation}

\begin{proposition}(The sign of $\partial \pi_k^*/\partial u_k^0$)\label{prop:dpikduk0}
If $N\geq 2$ and 
$$\textnormal{either } (\phi_{kk}\leq0 \textnormal{ and } \beta_{k}>0) \textnormal{ or } (\phi_{kk}>0 \textnormal{ and } \beta_{k}>g_{\pi,u}\left(N\right)\phi_{kk}),$$ 
then there exists $\varepsilon>0$ such that for any $(\phi_{bs},\phi_{sb})\in B_\varepsilon(0)\subset \mathbb{R}^2$,  $\partial \pi_k^*/\partial u_k^0<0$. Moreover, if $(\phi_{bs},\phi_{sb})=0$, then
\begin{equation}\label{limitspikuk0}
    \begin{split}
\lim_{u_{k}^{0}\to-\infty}\pi_{k}^{*}&=\frac{\beta_{k}}{N-1}-\frac{\phi_{kk}}{\left(N-1\right)N}=: \pi_{k,u}, \textnormal{ and}\\
\lim_{u_{k}^{0}\to\infty}\pi_{k}^{*}&=0=: \pi_{k,E}. 
\end{split}
\end{equation}
\end{proposition}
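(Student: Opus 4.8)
The plan is to reduce the statement to the decoupled case $(\phi_{bs},\phi_{sb})=0$ and then to a one–variable monotonicity question. Because the symmetric equilibrium $\vz^*$ (hence $p_k^*$, $x_k^*$ and $\pi_k^*=x_k^* p_k^*$) depends smoothly on $(\phi_{bs},\phi_{sb})$ through the FOC \eqref{FOCs_z} and the implicit function theorem—using the nondegeneracy supplied by the second–order condition in Proposition \ref{prop:existence_gumbel}—the derivative $\partial\pi_k^*/\partial u_k^0$ is continuous in $(\phi_{bs},\phi_{sb})$. Hence it suffices to prove the strict inequality $\partial\pi_k^*/\partial u_k^0<0$ at $(\phi_{bs},\phi_{sb})=0$, after which the neighborhood $B_\varepsilon(0)$ is produced by continuity, exactly as elsewhere in this section.

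At $(\phi_{bs},\phi_{sb})=0$ the matrix $H(\vz)$ of \eqref{def:H_matrix} is diagonal and the FOC splits into one scalar equation per side. Using $p_k^*=H_{kk}(z_k^*)\,\omega(z_k^*)$ from \eqref{eqn_pz}, the profit $\pi_k^*=x_k^* p_k^*$ becomes a function of the single scalar $x:=x_k^*=\omega(z_k^*)\in(0,1/N)$. The key simplification I would carry out is that, after substituting $e^{z_k}=x/(1-Nx)$, the numerator of $p_k^*$ factors, yielding
\[
p_k^*(x)=\frac{\bigl(\beta_k-\phi_{kk}x(1-Nx)\bigr)\bigl(\beta_k-\phi_{kk}x\bigr)}{\beta_k(1-x)-\phi_{kk}x(1-Nx)},\qquad \pi_k^*(x)=x\,p_k^*(x).
\]
(As a check, $p_k^*(0)=\beta_k$ and $p_k^*(1/N)=(N\beta_k-\phi_{kk})/(N-1)$ reproduce $p_{k,E}$ and $p_{k,u}$ of \eqref{limitspkuk0}.) Since the second–order condition makes $u_k^0\mapsto z_k^*$, hence $u_k^0\mapsto x$, strictly decreasing, we get $\sign(\partial\pi_k^*/\partial u_k^0)=-\sign(d\pi_k^*/dx)$, and the whole claim reduces to showing $d\pi_k^*/dx>0$ on $(0,1/N)$.

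Write $\pi_k^*=x f_1 f_2/Q$ with $f_1=\beta_k-\phi_{kk}x(1-Nx)$, $f_2=\beta_k-\phi_{kk}x$, $Q=\beta_k(1-x)-\phi_{kk}x(1-Nx)$, and note the identity $f_1=Q+\beta_k x$. For $\phi_{kk}\le0$ no condition is needed: the logarithmic derivative simplifies to
\[
\frac{1}{\pi_k^*}\frac{d\pi_k^*}{dx}=\frac{\beta_k-2\phi_{kk}x}{x\,f_2}+\frac{\beta_k\bigl(\beta_k-N\phi_{kk}x^2\bigr)}{f_1\,Q},
\]
and both summands are positive on $(0,1/N)$ since every numerator is positive and $f_1,f_2,Q>0$ there. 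For $\phi_{kk}>0$ I would instead analyze $R(x):=P'Q-PQ'$, the numerator of $d\pi_k^*/dx$ (here $\pi_k^*=P/Q$ with $P$ quartic and $Q$ quadratic). One has $R(0)=\beta_k^3>0$, and the binding case is the full–participation endpoint $x\to1/N$, where a direct evaluation gives
\[
R(1/N)=\frac{\beta_k}{N}\Bigl(N\beta_k^2-2\beta_k\phi_{kk}+\tfrac{\phi_{kk}^2}{N^2}\Bigr);
\]
this quadratic in $\beta_k$ is positive exactly when $\beta_k$ exceeds its larger root $g_{\pi,u}(N)\phi_{kk}$, which is precisely how the threshold \eqref{def:g_pi_u} enters. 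The main obstacle is the interior estimate—showing that $R>0$ on all of $(0,1/N)$ once $R(1/N)>0$, i.e.\ that the minimum of the quintic $R$ over the interval sits at $x=1/N$. I expect to settle this by a direct but tedious analysis of $R$ (showing it is decreasing on $(0,1/N)$, or has no interior zero) under the standing hypothesis \eqref{condition_existence_beta}.

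Finally, the limits \eqref{limitspikuk0} follow with essentially no extra work. Since $x_k^*=\omega(z_k^*)\to1/N$ as $u_k^0\to-\infty$ and $x_k^*\to0$ as $u_k^0\to\infty$, combining with the price limits \eqref{limitspkuk0} of Proposition \ref{prop:dpkduk0} gives $\pi_k^*=p_k^* x_k^*\to p_{k,u}/N=\beta_k/(N-1)-\phi_{kk}/\bigl((N-1)N\bigr)=\pi_{k,u}$ and $\pi_k^*\to p_{k,E}\cdot0=0=\pi_{k,E}$.
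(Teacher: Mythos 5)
Your overall architecture is sound, and the pieces you do prove are correct. The reduction to $(\phi_{bs},\phi_{sb})=(0,0)$ by continuity is exactly the paper's closing step; the change of variables $x=\omega(z_k^*)\in(0,1/N)$ via $e^{z_k^*}=x/(1-Nx)$ is valid, and your factored form of $p_k^*(x)$ agrees with the paper's \eqref{def:pk_gumbel_simplified} (I verified the algebra, including the endpoint checks against \eqref{limitspkuk0}). Your hypothesis implies \eqref{condition_existence_beta}, since $g_{\pi,u}(N)>f(N)$ reduces to $\sqrt{N(N-1)}\geq N-2$; this is what legitimizes the strict decrease of $u_k^0\mapsto z_k^*$ (which, to be precise, comes from the strict monotonicity of $M_k$ in $z_k$ established in the uniqueness argument, not literally from the second-order condition) and the positivity of $Q$ on $(0,1/N)$. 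For $\phi_{kk}\leq 0$, your two-term logarithmic-derivative decomposition, exploiting $f_1=Q+\beta_k x$, is complete and is genuinely different from the paper, which instead verifies in Mathematica that all coefficients $n_{\pi,u,m}$ and $d_{\pi,u,m}$ of the expansion \eqref{eqn:npi_u_dpi_u} in powers of $e^{z_k^*}$ are positive; on that half of the statement your route is cleaner and hand-checkable. Your endpoint computations are also right and reproduce the paper's threshold exactly: $R(0)=\beta_k^3$ matches $n_{\pi,u,1}=\beta_k^3$, and $R(1/N)=\frac{\beta_k}{N}\bigl(N\beta_k^2-2\beta_k\phi_{kk}+\phi_{kk}^2/N^2\bigr)$ matches $n_{\pi,u,6}$, whose larger root is $g_{\pi,u}(N)\phi_{kk}$ as in \eqref{def:g_pi_u}. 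The limits \eqref{limitspikuk0} follow as you say.

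The genuine gap is the case $\phi_{kk}>0$: you establish $R>0$ only at the two endpoints $x=0$ and $x=1/N$, and defer the interior estimate—that the quintic $R$ has no zero in $(0,1/N)$—to an unspecified ``direct but tedious analysis.'' Endpoint positivity of a quintic does not preclude an interior sign change, and your proposed fallback (showing $R$ is decreasing on $(0,1/N)$) cannot be the mechanism: at $\phi_{kk}=0$ one computes $R\equiv\beta_k^3$, identically constant, so monotonicity is not a robust property of $R$ in the relevant regime, and you offer no argument for it. This interior estimate is precisely where the paper's proof invests its effort: it verifies symbolically (supplementary file \emph{Gumbel\_N.nb}) that \emph{every} coefficient $n_{\pi,u,m}$, $m=1,\dots,6$, in \eqref{npi_um_coefficients} is positive under $\beta_k>g_{\pi,u}(N)\phi_{kk}$—the nontrivial content being the middle coefficients, e.g.\ $n_{\pi,u,4}=10\beta_{k}^{3}N^{3}+2N\beta_{k}^{2}(2-9N)\phi_{kk}+\beta_{k}(9N-2)\phi_{kk}^{2}-2\phi_{kk}^{3}$—which yields positivity of the numerator for all $z_k^*\in\mathbb{R}$, equivalently $R>0$ on all of $(0,1/N)$. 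In your variable this corresponds to expanding $R$ in the basis $x^{j}(1-Nx)^{5-j}$, $j=0,\dots,5$ (the substitution $e^{z}=x/(1-Nx)$ transports the paper's coefficients onto this Bernstein-type basis), and showing all six coefficients are positive; your two endpoint checks are exactly the $j=0$ and $j=5$ coefficients, leaving four unproved. Until that interior positivity is supplied, the positive-externality half of the proposition is not established in your write-up.
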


Note that even though, by part (ii) of Proposition \ref{prop:dpkduk0}, prices may increase with the outside option utility, Proposition \ref{prop:dpikduk0} shows that profits are always decreasing w.r.t.~$u_k^0$. This happens because market participation is always decreasing w.r.t.~$u_k^0$. Therefore, it is not a surprise that profits are decreasing as a function of $u_k^0$.

The following proposition provides sufficient conditions to determine the sign of the derivative of the consumer surplus w.r.t.~the outside option utility. More specifically, it uses the 
equilibrium consumer surplus on side $k$ of the market, $CS_k^*$, which is 
defined as follows (see \cite{tan2021effects} for the case without an outside option):
\begin{equation}
\label{eqn:equilibrium_CS}
CS_{k}^{*}:=\mathbb{E}\left[\max_{i=0,...,N}\epsilon_{k}^{i}\right]-p_{k}^{*}+\phi_{k}\left(x_{b}^{*},x_{s}^{*}\right),
\end{equation}
where  $\mathbb{E}\left[\max_{i=0,...,N}\epsilon_{k}^{i}\right]$ is the expected maximum idiosyncratic utility.\footnote{Note that $\max_{i=0,...,N}\{ \epsilon_{k}^{i}\} \sim G(\mu_{k}+\beta_{k}\ln\left(N+1\right),\beta_{k})$ and thus  
$\mathbb{E}[\max_{i=0,...,N}\epsilon_{k}^{i}]=[\mu_{k}+\beta_{k}\ln\left(N+1\right)]+\beta_{k}\gamma$,
where $\gamma$ denotes the Euler-Mascheroni constant. This quantity captures the product variety of the market.} It also uses the 
function $f_{CS,u}(N)$ given in Appendix \ref{appendixA} (see \eqref{def:fCSu}).

\begin{proposition}[The sign of $\partial (CS_{k}^{*})/\partial u_k^0$]\label{prop:dCSkuk0}
The effect of the outside option on the change of consumer surplus can be clustered into the following two regions: 
\begin{itemize}
    \item[(i)] If $N\geq2$ and either $(\phi_{kk}\leq 0 \textnormal{ and } \beta_k>0) \textnormal{ or } (\phi_{kk}> 0 \textnormal{ and } \beta_k>2\phi_{kk})$, then there exists $\epsilon>0$ such that for any $(\phi_{bs},\phi_{sb})\in B_{\epsilon}(0)$, $\partial (CS_{k}^{*})/\partial u_k^0>0$.
\item[(ii)] If $N\geq 2$, $\phi_{kk}>0$, $f\left(N\right)\phi_{kk}<\beta_{k}< f_{CS,u}\left(N\right)\phi_{kk}$, then there exists $\epsilon>0$ such that for any $(\phi_{bs},\phi_{sb})\in B_{\epsilon}(0)$, $\partial (CS_{k}^{*})/\partial u_k^0<0$.
\end{itemize}
\end{proposition}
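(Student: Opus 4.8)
The plan is to collapse the statement to the sign of one derivative. Starting from \eqref{eqn:equilibrium_CS}, the term $\mathbb{E}[\max_{i=0,\dots,N}\epsilon_k^i]$ is a Gumbel expectation that is independent of $u_k^0$ (see the footnote after \eqref{eqn:equilibrium_CS}), while $-p_k^*+\phi_k(x_b^*,x_s^*)=u_k^*$, which by \eqref{def:zk} equals $u_k^0+\beta_k z_k^*$. Hence $CS_k^*$ is a constant plus $u_k^0+\beta_k z_k^*$, so
\begin{equation*}
\frac{\partial CS_k^*}{\partial u_k^0}=1+\beta_k\,\frac{\partial z_k^*}{\partial u_k^0}.
\end{equation*}
Everything then hinges on $\partial z_k^*/\partial u_k^0$, the same derivative that underlies Propositions~\ref{prop:dpkduk0} and \ref{prop:dpikduk0}.

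\textbf{Evaluation at the decoupled center.}
First I would work at $(\phi_{bs},\phi_{sb})=(0,0)$, where \eqref{FOCs_z} decouples across sides. Writing $\psi_k(z):=(\phi_{kk}-H_{kk}(z))\,\omega(z)$ for the $k$-th diagonal entry of the matrix $H$ in \eqref{def:H_matrix} at vanishing cross terms, the $k$-th equation of \eqref{FOCs_z} is the scalar identity $u_k^0=\psi_k(z_k^*)-\beta_k z_k^*$; moreover \eqref{eqn_pz}--\eqref{eq:x=Omegaz} give $\psi_k(z_k^*)=-p_k^*+\phi_k(\vx^*)=u_k^*$. Differentiating this identity in $u_k^0$ yields $\partial z_k^*/\partial u_k^0=1/(\psi_k'(z_k^*)-\beta_k)$, and the second-order condition from Proposition~\ref{prop:existence_gumbel} (equivalently, the strict decrease of participation $x_k^*=\omega(z_k^*)$ in $u_k^0$ noted after Proposition~\ref{prop:dpikduk0}) forces $\psi_k'(z_k^*)-\beta_k<0$. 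Substituting back,
\begin{equation*}
\frac{\partial CS_k^*}{\partial u_k^0}=\frac{\psi_k'(z_k^*)}{\psi_k'(z_k^*)-\beta_k},
\qquad\text{so}\qquad
\sign\frac{\partial CS_k^*}{\partial u_k^0}=-\sign\psi_k'(z_k^*).
\end{equation*}
Thus case (i) amounts to $\psi_k'<0$ and case (ii) to $\psi_k'>0$, in each case for every admissible equilibrium $z_k^*$.

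\textbf{Sign of $\psi_k'$ over the admissible range (the main obstacle).}
Next I would determine $\sign\psi_k'(z)$ for all $z\in\mathbb{R}$, equivalently for $x=\omega(z)\in(0,1/N)$ since $\omega'>0$. Using $e^{z}=x/(1-Nx)$ one rewrites $u_k^*=\psi_k(\omega^{-1}(x))$ as an explicit rational function of $x$ and studies the sign of its $x$-derivative, whose numerator is a polynomial in $x$ with coefficients depending on $(\beta_k,\phi_{kk},N)$. The two endpoints pin the thresholds: a short expansion at $x\to0^+$ gives $\tfrac{d}{dx}u_k^*\to 2\phi_{kk}-\beta_k$, the source of the boundary $\beta_k=2\phi_{kk}$ in (i), while the full-participation endpoint $x\to(1/N)^-$ (i.e. $u_k^0\to-\infty$, cf. \eqref{limitspkuk0} and Corollary~\ref{coro:sign_zk_N}) produces the threshold $f_{CS,u}(N)\phi_{kk}$ in (ii). I expect this to be the hard part: one must show that under the stated parameter conditions the derivative never vanishes inside $(0,1/N)$, so that the endpoint signs propagate to the whole interval, which requires locating the roots of the numerator relative to $(0,1/N)$ rather than merely checking boundary values.

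\textbf{Extension to a neighborhood.}
Finally I would upgrade from the center to $B_\epsilon(0)$. For each fixed $u_k^0$ the equilibrium $z_k^*$, and hence $\partial CS_k^*/\partial u_k^0$, depends smoothly on $(\phi_{bs},\phi_{sb})$ near the origin by the implicit function theorem together with the uniqueness of Proposition~\ref{prop:existence_gumbel}. Since the value at $(\phi_{bs},\phi_{sb})=(0,0)$ is strictly signed, continuity yields an $\epsilon>0$ (allowed to depend on $u_k^0$ and the remaining parameters, per the convention in the footnote to Proposition~\ref{prop:existence_gumbel}) on which the sign is preserved, giving the claim for all $(\phi_{bs},\phi_{sb})\in B_\epsilon(0)$.
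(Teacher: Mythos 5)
Your reduction is sound and in fact coincides with the paper's computation in disguise: the paper evaluates $\partial CS_k^*/\partial u_k^0\big|_{\varphi_1=0}$ as $\frac{\partial CS_k^*}{\partial z_k^*}\frac{\partial z_k^*}{\partial u_k^0}$ with $\partial z_k^*/\partial u_k^0=[\partial M_k/\partial z_k]^{-1}$ (see \eqref{eqn:dzk_uk0}), and since $M_k=\psi_k(z_k)-\beta_k z_k-u_k^0$ in your notation, the ratio $n_{CS,u}/d_{CS,u}$ in \eqref{eqn:dCS_duk0} is exactly your $\psi_k'/(\psi_k'-\beta_k)$ with denominators cleared. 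Your denominator-sign argument is also correct: the strict monotonicity $\partial M_k/\partial z_k<0$ for all $z_k$ (established in step (i-b) of the proof of Proposition~\ref{prop:existence_gumbel}, valid under \eqref{condition_existence_beta}, which both of your hypotheses imply since $2>f(N)$) gives $\psi_k'-\beta_k<0$ globally, so the sign question is indeed $-\sign\psi_k'(z)$ over all of $\mathbb{R}$. Your $x\to0^+$ expansion is likewise verifiable: it reproduces the paper's lowest-order coefficient $n_{CS,u,1}=\beta_k^2(\beta_k-2\phi_{kk})$ in \eqref{nCS_um_coefficients}, so the $\beta_k=2\phi_{kk}$ boundary in part (i) is correctly explained. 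The final continuity step matches the paper's as well.

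The genuine gap is that the step you yourself flag as "the hard part" is the entire mathematical content of the proposition, and the one concrete route you propose for it would fail for part (ii). The paper resolves the sign question by expanding the numerator as $\sum_{m=1}^{5}n_{CS,u,m}e^{mz_k^*}$ and verifying, coefficient by coefficient (via Mathematica), that all five coefficients are positive under (i) and all five negative under (ii); constant sign of every coefficient trivially forces constant sign of the polynomial on all of $\mathbb{R}$, sidestepping any root-location analysis inside $(0,1/N)$. Crucially, the threshold $f_{CS,u}(N)\phi_{kk}$ is \emph{not} produced by the full-participation endpoint $x\to(1/N)^-$ as you conjecture: that endpoint is governed by the top coefficient $n_{CS,u,5}=\beta_k N^2\left(\beta_k^2N^2+\beta_k(-2N^2+N-1)\phi_{kk}+\phi_{kk}^2\right)$, whose relevant root is roughly $2\phi_{kk}$ and is not the binding constraint. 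Instead, $f_{CS,u}(N)\phi_{kk}$ is defined as the unique real root (via Cardano's formula, after checking the discriminant $\Delta_k>0$ in \eqref{eqn:cardano_nCSu3}) of the \emph{middle} cubic coefficient $n_{CS,u,3}$, viewed as a polynomial in $\beta_k$; negativity of the remaining coefficients on $(f(N)\phi_{kk},f_{CS,u}(N)\phi_{kk})$ must then be checked separately. So an endpoint-matching argument would derive the wrong threshold for (ii), and without the coefficient-level verification neither (i) nor (ii) is actually proved in your proposal.
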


Part (i) of this proposition implies that the incorporation of an outside option into the platform competition model may increase the consumer surplus, or equivalently, the consumer welfare, w.r.t.~the no outside option model. On the other hand, 
Part (ii) implies that incorporating an outside option may decrease the equilibrium consumer surplus w.r.t~the no outside option model. While part (i) is standard, part (ii) is surprising.

\subsection*{The Sign of the Net Deterministic Utility Under Collusion}
\label{sec:sign_zC}

The following proposition quantifies the sign of $z_k^\text{C}$ in the collusion case of (\ref{pim}). In particular, it claims that the indifference region $\{z_k^\text{C}=0\}$ is described by a curve $\beta_k = \gamma^\text{C}(N,\phi_{kk},u_k^0)$ in the plane $(\phi_{kk},\beta_k)$, where $\gamma^\text{C}$ is defined as follows:
\begin{equation}\label{def:gammaC}
    \gamma^\text{C}(N,\phi_{kk},u_k^0):=\frac{2\phi_{kk}-u_{k}^{0}\left(N+1\right)}{\left(N+1\right)^{2}}.
\end{equation}

\begin{proposition}[The sign of $z_k^\text{C}$]\label{prop:sign_zkm} 
Suppose that $N\geq2$ and for each $k\in\left\{ b,s\right\}$, $\left(\phi_{kk},\beta_{k}\right)$ satisfies \eqref{eqn:condition_unique_mono}. 
Then we can further partition the domain specified in  \eqref{eqn:condition_unique_mono} into two regions, which cluster the sign of $z_k^\text{C}$ as long as a local condition on the cross-side externalities hold:\begin{itemize}
    \item[(i)] If $\beta_{k}>\gamma^\text{C}(N,\phi_{kk},u_k^0)$,
    then there exists $\varepsilon>0$ such that for any $(\phi_{bs},\phi_{sb})\in B_\varepsilon(0)\subset \mathbb{R}^2$, 
    $z_{k}^\text{C}<0$. 
    \item[(ii)] $\beta_{k}<\gamma^\text{C}\left(N,\phi_{kk},u_{k}^{0}\right)$, 
    then there exists $\varepsilon>0$ such that for any $(\phi_{bs},\phi_{sb})\in B_\varepsilon(0)\subset \mathbb{R}^2$,  $z_{k}^\text{C}>0$.
\end{itemize}
\end{proposition}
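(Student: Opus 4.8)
The plan is to mirror the strategy used for the competitive analogue (Proposition~\ref{prop:sign_zk}): first prove the statement in the decoupled case $(\phi_{bs},\phi_{sb})=(0,0)$, where the FOC~\eqref{eqn:gumbel_mono_b} splits into two independent scalar equations, one per side $k\in\{b,s\}$, and then upgrade to small cross-side externalities by a continuity/implicit-function argument. The unique solvability and the relevant invertibility of the Jacobian are supplied by Proposition~\ref{coro:mono_uniqueness}, whose hypothesis~\eqref{eqn:condition_unique_mono} turns out to be exactly the monotonicity threshold I will need below.

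First I would specialize~\eqref{eqn:gumbel_mono_b} to $\phi_{bs}=\phi_{sb}=0$. Since both $\Phi$ and $H^\text{C}(\vz)$ then have vanishing off-diagonal entries, the system decouples and the $k$-th equation reads
\begin{equation*}
\beta_k z_k = \left(2\phi_{kk}-\frac{\beta_k(1+Ne^{z_k})^2}{e^{z_k}}\right)\omega(z_k) - u_k^0.
\end{equation*}
The key simplification is the identity $\frac{(1+Ne^{z})^2}{e^{z}}\,\omega(z)=1+Ne^{z}$, immediate from $\omega(z)=e^{z}/(1+Ne^{z})$. Substituting it, the equation reduces to $G_k(z_k)=0$ for
\begin{equation*}
G_k(z):=\beta_k z + \beta_k(1+Ne^{z}) - 2\phi_{kk}\,\omega(z) + u_k^0 .
\end{equation*}
A direct evaluation gives $G_k(0)=\beta_k(N+1)-\frac{2\phi_{kk}}{N+1}+u_k^0=(N+1)\bigl(\beta_k-\gamma^\text{C}(N,\phi_{kk},u_k^0)\bigr)$, so the sign of $G_k(0)$ is precisely the sign of $\beta_k-\gamma^\text{C}$, with $\gamma^\text{C}$ as in~\eqref{def:gammaC}. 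It then remains only to read off the sign of the root of $G_k$.

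The crux is to show $G_k$ is strictly increasing under~\eqref{eqn:condition_unique_mono}. Differentiating, with $\omega'(z)=e^{z}/(1+Ne^{z})^2$, gives $G_k'(z)=\beta_k(1+Ne^{z})-2\phi_{kk}\,e^{z}/(1+Ne^{z})^2$. For $\phi_{kk}\le 0$ both contributions are nonnegative, so $G_k'>0$ trivially; for $\phi_{kk}>0$ one has $G_k'(z)>0$ for all $z$ if and only if $\beta_k>2\phi_{kk}\,\max_{z}\,e^{z}/(1+Ne^{z})^3$. Setting $t=e^{z}$ and maximizing $t/(1+Nt)^3$ yields the maximizer $t=1/(2N)$ and maximal value $4/(27N)$, so the requirement is $\beta_k>8\phi_{kk}/(27N)$, which is exactly~\eqref{eqn:condition_unique_mono}. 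Since $G_k(z)\to\pm\infty$ as $z\to\pm\infty$ and $G_k$ is strictly increasing, its root $z_k^\text{C}$ is unique and has sign opposite to $G_k(0)$: thus $\beta_k>\gamma^\text{C}\Rightarrow G_k(0)>0\Rightarrow z_k^\text{C}<0$, and $\beta_k<\gamma^\text{C}\Rightarrow z_k^\text{C}>0$, establishing both cases at $(\phi_{bs},\phi_{sb})=(0,0)$.

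Finally I would extend to $(\phi_{bs},\phi_{sb})\in B_\varepsilon(0)$. Viewing~\eqref{eqn:gumbel_mono_b} as $F(\vz;\phi_{bs},\phi_{sb})=0$, at the decoupled point the Jacobian $\partial F/\partial\vz$ is diagonal with nonzero entries $G_k'(z_k^\text{C})>0$, hence invertible; the implicit function theorem produces a continuous branch $\vz^\text{C}(\phi_{bs},\phi_{sb})$, which by uniqueness coincides with the CE of Proposition~\ref{coro:mono_uniqueness}. Whenever $\beta_k\neq\gamma^\text{C}$ the decoupled root satisfies $z_k^\text{C}\neq 0$, so by continuity its sign persists for all sufficiently small $(\phi_{bs},\phi_{sb})$, yielding the local $\varepsilon$ in each case (take the minimum of the two radii). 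The main obstacle is the monotonicity step: it rests on the exact extremal computation $\max_z e^{z}/(1+Ne^{z})^3=4/(27N)$, which is what ties the sign classification to the existence threshold~\eqref{eqn:condition_unique_mono}; the rest is continuity bookkeeping identical in spirit to the competitive case.
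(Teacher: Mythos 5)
Your proof is correct and follows essentially the same route as the paper's: your $G_k$ is exactly $-M_{k}^\text{C}$ from the paper's proof, your extremal computation $\max_z e^{z}/(1+Ne^{z})^3=4/(27N)$ reproduces the paper's monotonicity step (where $\partial M_k^\text{C}/\partial z_k<0$ under \eqref{eqn:condition_unique_mono}), the evaluation $G_k(0)=(N+1)\bigl(\beta_k-\gamma^\text{C}\bigr)$ matches the paper's identification of $\gamma^\text{C}$ as the root of $M_k^\text{C}\big|_{z_k=0}$ in $\beta_k$, and the concluding implicit-function/continuity argument for $(\phi_{bs},\phi_{sb})\in B_\varepsilon(0)$ is the same as the paper's. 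No gaps.
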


The interpretation of Proposition \ref{prop:sign_zkm} is very similar to that of Proposition \ref{prop:sign_zk}.
When the user's heterogeneity in tastes is small (i.e., $\beta_k < \gamma^\text{C}(N, \phi_{kk}, u_k^0)$), then users receive $z_k^\text{C}> 0$. On the other hand, if $\beta_k$ is large (i.e., $\beta_{k}>\gamma^\text{C}(N,\phi_{kk},u_k^0)$), users receive $z_k^\text{C}<0$. Figure \ref{fig:zk_signC} demonstrates the regions described in Proposition \ref{prop:sign_zkm} for two different values of $u_k^0$.

 \begin{figure}[H]
    \caption{Classification of the sign of $z_k^\text{C}$ based on $(\phi_{kk},\beta_k)$, $k\in\{b,s\}$, according to Proposition \ref{prop:sign_zkm}, where $N=4$ and $u_k^0 = -1$ (left) or $u_k^0 = 0.5$ (right). The red and blue regions correspond to negative and positive $z_k^\text{C}$, respectively.}
     \label{fig:zk_signC}
                \centering
        \includegraphics[width=0.35\linewidth]{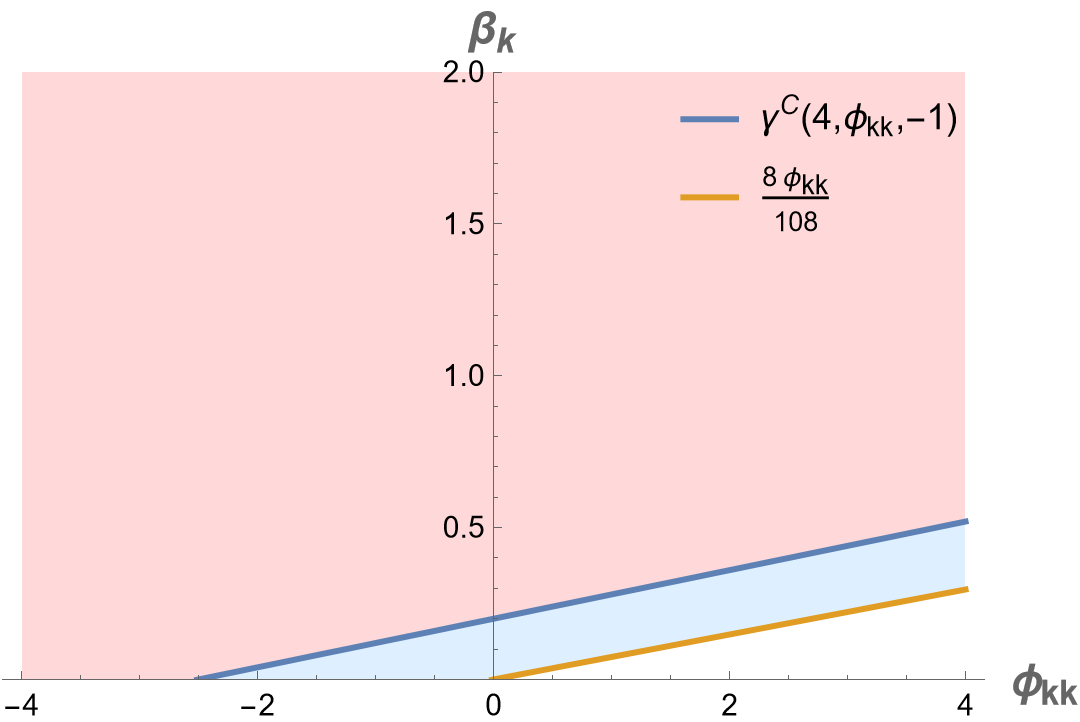}
\includegraphics[width=0.35\linewidth]{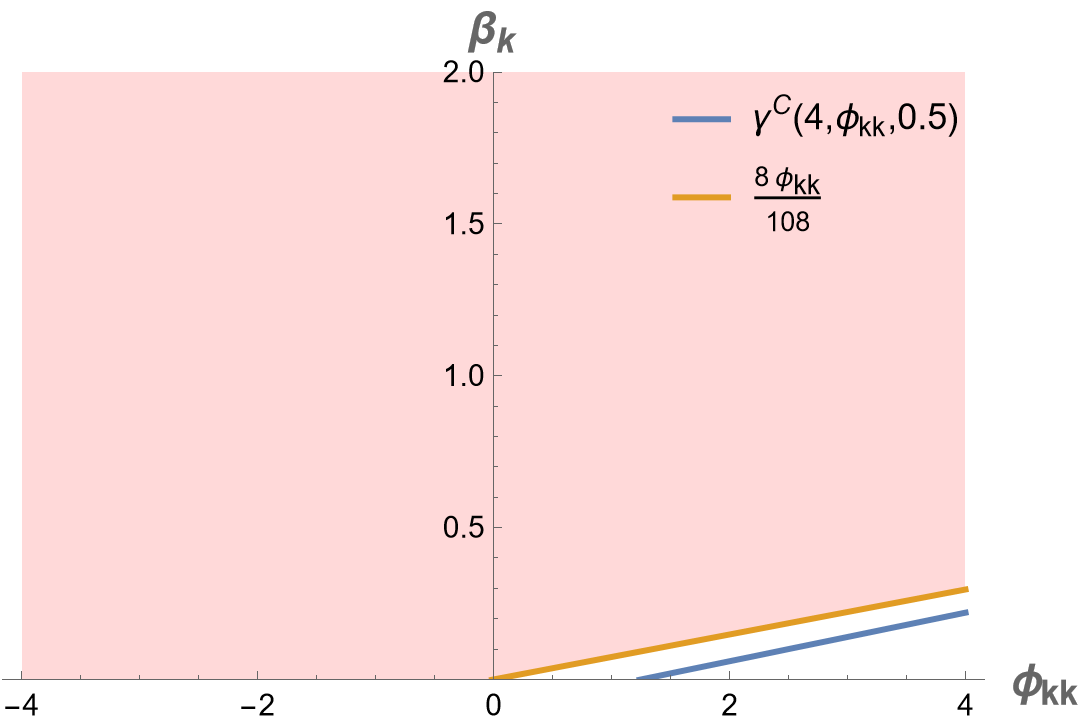}
 \end{figure}

\begin{corollary}[$\gamma^C(N, \phi_{kk}, u_k^0)$ vs $\gamma(N, \phi_{kk}, u_k^0)$]\label{coro:gammavsgammaC}
   If $N\geq 2$ and $\gamma(N, \phi_{kk}, u_k^0)\geq 0$, then $\gamma(N, \phi_{kk}, u_k^0)\geq \gamma^C(N, \phi_{kk}, u_k^0)$.
\end{corollary}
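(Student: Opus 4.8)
The plan is to prove the equivalent inequality $\gamma-\gamma^\text{C}\ge 0$ by clearing denominators and isolating the square root. Writing $\phi:=\phi_{kk}$, $u:=u_k^0$ and $v:=(N+1)u$ for brevity, I put the two expressions in \eqref{def:gamma} and \eqref{def:gammaC} over the common denominator $2(N+1)^2$; a direct simplification collapses the rational part and reduces the claim to
\begin{equation*}
(N+1)\sqrt{D}\ \ge\ R,\qquad R:=(N-2)(N+1)u-2(N-1)\phi,
\end{equation*}
where $D:=(2\phi-Nu)^2+4\phi\bigl(u-\tfrac{2\phi}{N+1}\bigr)$ is the quantity under the root in \eqref{def:gamma}. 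As a preliminary I would record that $D\ge 0$ for all parameters: viewed as a quadratic in $u$ with positive leading coefficient $N^2$, its discriminant equals $-16\phi^2(N-1)/(N+1)\le 0$ when $N\ge2$, so $\gamma$ is always real and $\sqrt D$ is well defined.

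Next I would split on the sign of $R$. If $R\le 0$ the inequality is immediate, since the left-hand side is nonnegative. If $R>0$ I would square, which is where the only substantial computation lives: one finds the identity
\begin{equation*}
(N+1)^2 D-R^2\ =\ 4(N-1)\bigl(v-\phi\bigr)\bigl(v-2\phi\bigr),
\end{equation*}
so that, because $N\ge2$, the squared inequality is equivalent to the sign condition $(v-\phi)(v-2\phi)\ge0$. Verifying this factorization — expanding $(N+1)^2D$ and $R^2$, substituting $A=2\phi-Nu$, and collecting the $\phi^2$, $\phi u$ and $u^2$ terms — is the main bookkeeping obstacle, but the clean factored form makes the rest transparent.

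The final step is to establish $(v-\phi)(v-2\phi)\ge0$ in the case $R>0$, organized by the sign of $\phi$. When $\phi=0$ it reads $v^2\ge0$, trivially true. When $\phi>0$, the condition $R>0$ forces $N\ge3$ (for $N=2$ one has $R=-2\phi<0$) and then $v>\tfrac{2(N-1)}{N-2}\phi>2\phi$, so both factors are positive. The delicate case is $\phi<0$, and this is exactly where the hypothesis $\gamma\ge0$ is indispensable: using $\gamma=\frac{(2\phi-Nu)+\sqrt D}{2(N+1)}\ge0$ together with the auxiliary identity $D-(2\phi-Nu)^2=\frac{4\phi(v-2\phi)}{N+1}$, I would argue that if $2\phi-Nu\ge0$ then $v\le\tfrac{2(N+1)}{N}\phi<2\phi$ directly, while if $2\phi-Nu<0$ then $\gamma\ge0$ forces $D\ge(2\phi-Nu)^2$, hence $\phi(v-2\phi)\ge0$ and therefore $v\le2\phi$. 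In either case $v\le2\phi<\phi$, so both factors are nonpositive and their product is nonnegative, completing the argument.

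I expect the hardest part to be the $\phi<0$ analysis, because the hypothesis $\gamma\ge0$ must be exploited at precisely the step that controls whether $v$ can fall into the ``bad'' interval $(2\phi,\phi)$, on which the product is negative. A short check confirms this necessity: dropping $\gamma\ge0$ genuinely breaks the conclusion (for instance $\phi=-1$, $N=2$, $u=-\tfrac12$ gives $v=-\tfrac32\in(2\phi,\phi)$ and $\gamma<\gamma^\text{C}$). Beyond that, the only real risk is an arithmetic slip in the factorization identity, which I would guard against by spot-checking it at a couple of numerical triples $(N,\phi,u)$.
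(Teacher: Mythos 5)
Your proof is correct --- I checked the pieces: the discriminant of $D$ as a quadratic in $u_k^0$ is indeed $-16\phi_{kk}^2(N-1)/(N+1)\leq 0$, the factorization $(N+1)^2D-R^2=4(N-1)(v-\phi_{kk})(v-2\phi_{kk})$ with $v=(N+1)u_k^0$ expands correctly, and each branch of your case analysis closes (for $N=2$, $\phi_{kk}>0$ one has $R=-2\phi_{kk}<0$, so the squaring step is never invoked there, as you note). Your route shares its first step with the paper's: both form $\gamma-\gamma^{\text{C}}$ over the common denominator $2(N+1)^2$, arriving at the paper's identity \eqref{eqn:gammaCvsgamma}, which is exactly your reduction to $(N+1)\sqrt{D}\geq R$. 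After that the two arguments diverge. The paper invokes its earlier sign characterization \eqref{sign_ofgamma} (established in the proof of Corollary \ref{coro:sign_zk}) to conclude that $\gamma\geq 0$ forces either $\phi_{kk}\geq 0$ or ($\phi_{kk}<0$ and $u_k^0\leq 2\phi_{kk}/(N+1)$), and then asserts the inequality in each regime directly ``by \eqref{eqn:gammaCvsgamma}'' without displaying the remaining algebra, which is nontrivial: in the regime $\phi_{kk}<0$, $u_k^0<0$ the terms $2(N-1)\phi_{kk}$ and $-(N-2)(N+1)u_k^0$ in the numerator have opposite signs, so the comparison of $\sqrt{Q}$ against $R$ still requires the square-and-factor step you carry out. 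Your proof supplies precisely this missing bookkeeping via the clean factorization into $(v-\phi_{kk})(v-2\phi_{kk})$, and in the case $\phi_{kk}<0$ you re-derive $v\leq 2\phi_{kk}$ directly from $\gamma\geq 0$ (via the identity $D-(2\phi_{kk}-Nu_k^0)^2=4\phi_{kk}(v-2\phi_{kk})/(N+1)$), which is the inline equivalent of citing \eqref{sign_ofgamma}. What the paper's version buys is brevity through reuse of established machinery; what yours buys is a self-contained, fully elementary argument that additionally makes the role of the hypothesis transparent --- your counterexample $(\phi_{kk},N,u_k^0)=(-1,2,-\tfrac12)$, where $v\in(2\phi_{kk},\phi_{kk})$ and indeed $\gamma\approx-0.071<\gamma^{\text{C}}\approx-0.056$, shows the condition $\gamma\geq 0$ is not merely convenient but necessary, something the paper's proof does not record.
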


By Corollary \ref{coro:gammavsgammaC}, in order to have a positive normalized net deterministic utility $z_k$ in CE, the size of the user's heterogeneity in tastes must be smaller than in CNE. Moreover, in CE, we also identify a critical threshold for the outside utility such that above this threshold, the condition of (ii) in Proposition~\ref{prop:sign_zkm} is not feasible. 

\begin{corollary}[The sign of $z_k^\text{C}$ for large values of $u_k^0$]\label{coro:sign_zkm}
Case (ii) in Proposition~\ref{prop:sign_zkm} is not feasible if $u_k^0 \geq \tilde{u}_k^\text{C}(N, \phi_{kk})$, where $\tilde{u}_k^\text{C}(N, \phi_{kk})$ is the critical threshold for the outside utility and it is defined in (\ref{def:tilde_uk0m}) in the Appendix~\ref{appendixA}.
\end{corollary}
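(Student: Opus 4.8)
The plan is to read Corollary \ref{coro:sign_zkm} as a feasibility statement: case (ii) of Proposition \ref{prop:sign_zkm} carves out the region $\{\beta_k < \gamma^\text{C}(N,\phi_{kk},u_k^0)\}$ inside the existence domain \eqref{eqn:condition_unique_mono}, and I want to show this region becomes empty once $u_k^0$ crosses the threshold $\tilde u_k^\text{C}(N,\phi_{kk})$. The structural fact driving everything is that $\gamma^\text{C}$, as defined in \eqref{def:gammaC}, is affine in $u_k^0$ with strictly negative slope $-1/(N+1)$; hence feasibility is monotone in $u_k^0$, failing for all sufficiently large $u_k^0$.

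First I would fix $N$ and $\phi_{kk}$ and make ``feasible'' precise: case (ii) applies precisely when there exists $\beta_k$ satisfying both the existence condition \eqref{eqn:condition_unique_mono} and the strict inequality $\beta_k<\gamma^\text{C}(N,\phi_{kk},u_k^0)$. The existence condition imposes a sharp lower bound $\underline{\beta}_k$ on $\beta_k$: this bound equals $0$ when $\phi_{kk}\leq 0$ and equals $8\phi_{kk}/(27N)$ when $\phi_{kk}>0$. Since $\beta_k$ ranges over the open set $(\underline{\beta}_k,\infty)$, case (ii) is feasible if and only if $\gamma^\text{C}(N,\phi_{kk},u_k^0)>\underline{\beta}_k$, because then $(\underline{\beta}_k,\gamma^\text{C})$ is a nonempty admissible interval for $\beta_k$, whereas $\gamma^\text{C}\leq\underline{\beta}_k$ leaves no room.

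Next I would solve the boundary equation $\gamma^\text{C}(N,\phi_{kk},u_k^0)=\underline{\beta}_k$ for $u_k^0$. Because $\gamma^\text{C}$ is linear in $u_k^0$, this is a single linear equation with a closed-form root, which is exactly the threshold $\tilde u_k^\text{C}(N,\phi_{kk})$ appearing in \eqref{def:tilde_uk0m}: setting $\gamma^\text{C}=0$ when $\phi_{kk}\leq 0$ gives $u_k^0=2\phi_{kk}/(N+1)$, and setting $\gamma^\text{C}=8\phi_{kk}/(27N)$ when $\phi_{kk}>0$ gives $u_k^0=2\phi_{kk}/(N+1)-8\phi_{kk}(N+1)/(27N)$. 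By the strict monotonicity of $\gamma^\text{C}$ in $u_k^0$, any $u_k^0\geq\tilde u_k^\text{C}$ forces $\gamma^\text{C}\leq\underline{\beta}_k$, so by the feasibility criterion above case (ii) is infeasible, which is the claim. This argument parallels the one behind Corollary \ref{coro:sign_zk} for competition but is cleaner, since $\gamma^\text{C}$ is affine whereas the competitive $\gamma$ in \eqref{def:gamma} carries a square root.

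The main (and only modest) obstacle is correct bookkeeping of the lower bound $\underline{\beta}_k$ across the sign of $\phi_{kk}$, so that the two branches of $\tilde u_k^\text{C}$ are assembled consistently and match \eqref{def:tilde_uk0m}; reassuringly, both branches agree at $\phi_{kk}=0$ (both yield $0$), so $\tilde u_k^\text{C}$ is continuous in $\phi_{kk}$. No additional handling of the cross-side externalities $(\phi_{bs},\phi_{sb})$ is required here, as their role is already confined to the local $\varepsilon$-ball furnished by Proposition \ref{prop:sign_zkm}.
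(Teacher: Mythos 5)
Your proposal is correct and takes essentially the same route as the paper's proof: both compare $\gamma^\text{C}(N,\phi_{kk},u_k^0)$ against the lower bound on $\beta_k$ imposed by \eqref{eqn:condition_unique_mono} ($0$ when $\phi_{kk}\leq 0$, $8\phi_{kk}/(27N)$ when $\phi_{kk}>0$), solve the boundary equation for $u_k^0$, and invoke the strict decrease of $\gamma^\text{C}$ in $u_k^0$. Your two thresholds reduce exactly to the piecewise formula \eqref{def:tilde_uk0m}, since $2\phi_{kk}/(N+1)-8\phi_{kk}(N+1)/(27N)=-2\bigl(N(4N-19)+4\bigr)\phi_{kk}/\bigl(27N(N+1)\bigr)$.
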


\subsection*{Economic Outputs in Competitive vs. Collusive Markets}
\label{sec:compare_z*_zC}

The following proposition compares the normalized net deterministic utility, market participation and prices in competitive and collusive markets.

\begin{proposition}[Competition vs Collusion Outputs]\label{prop:CNEvsCE}
Suppose that $N\geq 2$ and for each $k\in\{b,s\}$, $(\phi_{kk},\beta_k)$ satisfies \eqref{condition_existence_beta}. Then, there exists $\epsilon>0$ such that for any $\varphi_{1}=\left(\phi_{bs},\phi_{sb}\right)\in B_{\epsilon}\left(0\right)\subset\mathbb{R}^{2}$, in equilibrium:
\begin{itemize}
    \item[(i)] the normalized net deterministic utility for users on side $k$ is bigger under competition than under collusion (i.e., $z_{k}^{*}>z_{k}^\text{C}$);
    \item[(ii)] the market participation is bigger under competition than under collusion (i.e., $Nx_{k}^{*}>Nx_{k}^\text{C}$);
    \item[(iii)] the price charged on side $k$ of the market is smaller under competition than under collusion (i.e., $p_{k}^{*}<p_{k}^\text{C}$).
\end{itemize}
\end{proposition}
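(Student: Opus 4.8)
The plan is to establish all three inequalities first in the \emph{decoupled} case $\varphi_{1}=(\phi_{bs},\phi_{sb})=0$, where the two market sides separate into independent scalar problems, and then to propagate the strict inequalities to a neighborhood $B_\epsilon(0)$ by continuity of the equilibrium quantities in $\varphi_{1}$ (the equilibria depend smoothly on $\varphi_{1}$ via the implicit function theorem, since the second-order conditions in Propositions~\ref{prop:existence_gumbel} and~\ref{coro:mono_uniqueness} make the relevant Jacobians nonsingular). What makes this efficient is that (ii) and (iii) reduce to (i). Indeed, \eqref{eqn_pz} and \eqref{eqn_pm} give $x_k=\omega(z_k)$ in both scenarios, and $\omega$ is strictly increasing, so $z_k^*>z_k^\text{C}$ immediately yields $Nx_k^*>Nx_k^\text{C}$; moreover, combining \eqref{def:zk} with $x_k=\omega(z_k)$ gives the scenario-independent identity $p_k=\phi_{kk}\,\omega(z_k)+\phi_{k\bar k}\,\omega(z_{\bar k})-\beta_k z_k-u_k^0$ (with $\bar k$ the opposite side), which at $\varphi_1=0$ reduces the price comparison to bounding a function of $z_k$ alone. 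Thus the heart of the argument is (i).

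To prove (i) at $\varphi_{1}=0$, fix a side $k$. When $\varphi_1=0$ one has $J_\phi=K_bK_s$ in \eqref{eqn_z_Lk}, the factor $K_{\bar k}$ cancels, and both $[H(\vz)]_{kk}$ and $[H^\text{C}(\vz)]_{kk}$ become functions of $z_k$ alone. The $k$-th components of the competitive FOC \eqref{FOCs_z} and collusive FOC \eqref{eqn:gumbel_mono_b} then read $U^*(z_k)=u_k^0$ and $U^\text{C}(z_k)=u_k^0$, where
\begin{equation*}
U^*(z):=\bigl(\phi_{kk}-[H(\vz)]_{kk}\bigr)\omega(z)-\beta_k z,\qquad
U^\text{C}(z):=\bigl(\phi_{kk}-[H^\text{C}(\vz)]_{kk}\bigr)\omega(z)-\beta_k z.
\end{equation*}
I would establish two facts. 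First, $U^\text{C}$ is strictly decreasing: using \eqref{def:Hm_matrix} it simplifies to $U^\text{C}(z)=2\phi_{kk}\omega(z)-\beta_k(1+Ne^{z}+z)$, whose derivative is negative for all $z$ exactly when $\beta_k>8\phi_{kk}/(27N)$, i.e.\ condition \eqref{eqn:condition_unique_mono}, which is implied by \eqref{condition_existence_beta}. Second, $U^*(z)>U^\text{C}(z)$ for all $z$, i.e.\ $[H^\text{C}(\vz)]_{kk}>[H(\vz)]_{kk}$. Granting these, since $z_k^*$ and $z_k^\text{C}$ solve $U^*(z)=u_k^0$ and $U^\text{C}(z)=u_k^0$, we get $U^\text{C}(z_k^*)<U^*(z_k^*)=u_k^0=U^\text{C}(z_k^\text{C})$, and monotonicity of $U^\text{C}$ forces $z_k^*>z_k^\text{C}$.

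The hard part will be the coefficient comparison $[H^\text{C}]_{kk}>[H]_{kk}$, since the competitive diagonal entry in \eqref{def:H_matrix} has the form $L_kd_kK_{\bar k}+h_k-\phi_{kk}$ with $K_k$ in a denominator. My plan is to reduce it to a single sign check: writing $h_k=\beta_k(1+e^{z})(1+Ne^{z})/e^{z}$ and using \eqref{eqn_z_Lk}, the $h_k$ term cancels against part of the market-power term and the difference collapses to
\begin{equation*}
[H^\text{C}(\vz)]_{kk}-[H(\vz)]_{kk}=-\,\frac{(N-1)\beta_k(1+Ne^{z})\,[H^\text{C}(\vz)]_{kk}}{K_k}.
\end{equation*}
It then suffices to verify two signs under \eqref{condition_existence_beta}: that $[H^\text{C}]_{kk}\ge 4N\beta_k-\phi_{kk}>0$ (since $\min_z(1+Ne^z)^2/e^z=4N$ and $4Nf(N)>1$ for $N\ge2$), and that $K_k<0$ (since $\min_z\beta_k(1+Ne^{z})(e^{-z}+N-1)=\beta_k\bigl(2\sqrt{N(N-1)}+2N-1\bigr)>\phi_{kk}$, using $\beta_k>f(N)\phi_{kk}$ together with $f(N)>1/(2\sqrt{N(N-1)}+2N-1)$). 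With a positive numerator and a negative denominator the right-hand side is positive, proving the claim.

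Finally I would assemble the conclusions at $\varphi_1=0$. Inequality (i) is done; (ii) follows for free from monotonicity of $\omega$; and for (iii), substituting $z_k^*>z_k^\text{C}$ into the price identity gives $p_k^*-p_k^\text{C}=\phi_{kk}\bigl(\omega(z_k^*)-\omega(z_k^\text{C})\bigr)-\beta_k\bigl(z_k^*-z_k^\text{C}\bigr)$, which is negative when $\phi_{kk}\le0$, while for $\phi_{kk}>0$ the mean value theorem with the bound $\omega'\le 1/(4N)$ and $\beta_k>f(N)\phi_{kk}>\phi_{kk}/(4N)$ again yields $p_k^*<p_k^\text{C}$. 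All three inequalities being strict at $\varphi_1=0$, continuity in $\varphi_1$ extends them to some $B_\epsilon(0)$, which completes the proof.
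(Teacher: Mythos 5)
Your proposal is correct and takes essentially the same route as the paper: your hand-derived gap $\bigl([H^\text{C}(\vz)]_{kk}-[H(\vz)]_{kk}\bigr)\omega(z)=\frac{(N-1)\beta_k e^{z}\left(\beta_k(1+Ne^{z})^2-e^{z}\phi_{kk}\right)}{\beta_k(1+(N-1)e^{z})(1+Ne^{z})-e^{z}\phi_{kk}}$ is exactly the paper's expression \eqref{eqn:Mk_Mkm} for $M_k-M_k^\text{C}$ at $\varphi_1=0$, and your remaining steps (strict monotonicity of the collusive FOC function, $\omega'>0$ for part (ii), $\phi_{kk}\omega'(z)-\beta_k<0$ via $\omega'\le 1/(4N)$ and $f(N)>1/(4N)$ for part (iii), then extension to $B_\epsilon(0)$ by the implicit-function-theorem continuity) coincide with the paper's proof. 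The only difference is presentational: you verify the key identity and the sign checks ($[H^\text{C}]_{kk}>0$, $K_k<0$) by explicit hand computation, whereas the paper records the same facts with computer-algebra support.
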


Part (i) of the above proposition agrees with the standard collusion literature (see, e.g., \cite{bishop1960duopoly}, \cite{varian1989price}, \cite{brander1985tacit} among others) in which users receive the lowest normalized net deterministic utility under collusion. 
Part (ii) is a direct corollary of part (i). 
Indeed, Proposition \ref{prop:FOC_gumbel} implies that $\omega(\cdot)$ is monotonically increasing.  Thus, combining \eqref{eqn_pz}, \eqref{eqn_pm} and part (i) of Proposition \ref{prop:CNEvsCE} leads to part (ii) as follows: $Nx_{k}^{*} = N\omega(z_k^*)>N\omega(z_k^\text{C})=Nx_{k}^\text{C}$.  
To explain part (iii), we use (\ref{def:zk}) and decompose the difference between collusion and competition prices as follows: 
\begin{equation}
\label{price_decomposition}
       \vp^\text{C}-\vp^* = \Phi(\vx^\text{C} - \vx^*) + \vbeta (\vz^\ast -\vz^\text{C} ).
\end{equation}
competition. 
A careful combination of this formula with parts (i) and (ii) of Proposition \ref{prop:CNEvsCE}, the assumption $\left(\phi_{bs},\phi_{sb}\right)\in B_{\epsilon}\left(0\right)$ (or for simplicity $(\phi_{bs}, \phi_{sb})=0$) and the observation that if $\phi_{kk}\geq 0$ for $k\in \{b,s\}$ then $ \beta_k > f(N)\phi_{kk}$ (see \eqref{condition_existence_beta}) leads to part (iii). We thus note that the above detailed analysis regarding the deterministic net utility is valuable for deriving broader economic implications.

Our results can be compared to other ones on platform collusion. \cite{dewenter2011semi} study collusion and competition following the idiosyncrasies of a newspapers market with two firms. They find that for small cross-side network externalities the collusive price is higher than the competitive price. We generalize this result by incorporating the outside option utility, $u_k^0$, the within-side network externalities, $\phi_{kk}$, and by having $N$ horizontally differentiated platforms.  
Part (iii) of our result also has the same conclusion as \cite{cohen2022competition}, who in the context of ride-sharing services (e.g., \textit{Uber} and \textit{Lyft}), show that under collusion, riders pay a larger price and workers receive a lower wage than under competition (note that the wage in their model is a negative price in our model). 
Nevertheless, \cite{cohen2022competition} assume a different model for the user's utility function, which is tailored for their specific setting of prices and wages.

\section{The Effects of Increasing Competition on the CNE}\label{sect:effectsCompetition}
We study how increasing competition (i.e., increasing $N$) affects four CNE quantities: 
price, market participation, consumer surplus, and profit. We first establish sufficient conditions for the derivative $\partial p_{k}^{*}/\partial N$ to be either positive or negative (see Proposition \ref{prop:dpk_dN}). We thus specify regions where competition can lead to increasing or decreasing prices. We also establish sufficient conditions for $\partial (N x_{k}^{*})/\partial N$ to be positive and consequently for increasing market participation under competition (see Proposition \ref{prop:dNxk_dN}). We further formulate sufficient conditions to have increasing and decreasing consumer surplus, i.e., to have positive and negative $\partial (CS_{k}^{*})/\partial N$ (see Proposition \ref{prop:dCSk_dN_positive}). Finally, we establish sufficient conditions for the derivative $\partial \pi_{k}^{*}/\partial N$ to be either positive or negative  (see Proposition \ref{prop:dpik_dN_negative}). That is, we specify regions where  competition can lead to increasing or decreasing profits on side $k$ of the market.


\textbf{The effect of competition on prices.} We study the sign of $\partial p_{k}^{*}/\partial N$. We first clarify the difficulty in estimating the latter derivative.
In view of (\ref{eqn_pz}), the equilibrium vector price is $\boldsymbol{p}^*=H(\vz^*)\Omega(\vz^*)$. As shown in \eqref{def:H_matrix} and \eqref{eq:x=Omegaz}, the matrix $H$ and the vector $\Omega$ directly depend on $N$. However,  \eqref{FOCs_z} and the definitions of $H$ and $\Omega$ imply that $\vz^*$ is an implicit function of $N$. 
It is thus hard to determine the sign of $\partial p_{k}^{*}/\partial N$. Nevertheless, when the cross-side externalities are sufficiently small, the following proposition establishes sufficient conditions to determine the sign of $\partial p_k^*/\partial N$. It uses the functions $g_{p}(N)$ and $f_{p}\left(N\right)$ defined in \eqref{def:rp1} and \eqref{def:rp2}, respectively, of Appendix \ref{appendixA}. We note that $g_{p}(N)$ and $f_{p}\left(N\right)$ approach 0 and 1, respectively, as $N\to\infty$.

\begin{proposition}[Regions where competition decreases/increases prices]\label{prop:dpk_dN}
The effect of competition on the change of prices can be clustered into the following two regions: 
\begin{itemize}
    \item[(i)]
        Assume that  $N \geq 2$ and either $(\phi_{kk}\leq 0$ and  $\beta_k>g_{p}(N)\phi_{kk})$  or  $(\phi_{kk}>0$  and  $\beta_k>\phi_{kk})$. Then, there exists $\epsilon>0$ such that for any $(\phi_{bs},\phi_{sb})\in B_{\epsilon}(0)$, $\partial p_{k}^{*}/\partial N<0$.
    
    \item[(ii)] Assume that $N\geq3$, $\phi_{kk}>0$ and $f\left(N\right)\phi_{kk}<\beta_{k}<f_p\left(N\right)\phi_{kk}$. Then, there exists $\epsilon>0$ such that for any $(\phi_{bs},\phi_{sb})\in B_{\epsilon}(0)$, $\partial p_{k}^{*}/\partial N>0$.

\end{itemize}

\end{proposition}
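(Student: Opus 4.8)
The plan is to reduce to the decoupled case $(\phi_{bs},\phi_{sb})=(0,0)$ by a continuity argument, and then analyze a single scalar equation. By Proposition~\ref{prop:existence_gumbel}, under \eqref{condition_existence_beta} the solution $\vz^*$ of \eqref{FOCs_z} is unique and satisfies the second-order condition, so the implicit function theorem makes $\vz^*$ — and hence $p_k^*=[H(\vz^*)\Omega(\vz^*)]_k$ together with its $N$-derivative — jointly smooth in $(\phi_{bs},\phi_{sb})$ near the origin (here $N$ is treated as a continuous parameter $\geq 2$, which is legitimate since $\omega$, $H$ and $\Omega$ are smooth in $N$). It therefore suffices to prove a \emph{strict} inequality $\partial p_k^*/\partial N<0$ in case (i), resp.\ $>0$ in case (ii), at $(\phi_{bs},\phi_{sb})=(0,0)$; the strict sign then persists on a ball $B_\epsilon(0)$. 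One first checks the hypotheses keep us inside the existence region: since $f(N)=2(N-1)/N^2\leq 1/2$ for $N\geq 2$, the bound $\beta_k>\phi_{kk}$ in (i) (and $f(N)\phi_{kk}<\beta_k$ in (ii)) implies \eqref{condition_existence_beta}.

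At $(\phi_{bs},\phi_{sb})=(0,0)$ the matrices $\Phi$ and $H$ are diagonal and \eqref{FOCs_z} splits into two independent scalar problems, so I fix $k$ and work with the scalar FOC $\beta_k z_k=(\phi_{kk}-H_{kk}(z_k,N))\,\omega(z_k,N)-u_k^0$ together with $p_k^*=H_{kk}(z_k^*,N)\,\omega(z_k^*,N)$. The natural variable is the equilibrium market share $x:=\omega(z_k^*)=1/(e^{-z_k^*}+N)\in(0,1/N)$; using $e^{z_k^*}=x/(1-Nx)$ and $1+Ne^{z_k^*}=1/(1-Nx)$ turns every quantity in \eqref{eqn_z_Lk} into an explicit rational function of $x$. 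Combining \eqref{def:zk} with the pricing identity gives the clean relation
\begin{equation*}
p_k^* = \phi_{kk}\,x - u_k^0 - \beta_k z_k^*,
\end{equation*}
whose $N$-derivative is $\partial p_k^*/\partial N=\bigl(\phi_{kk}e^{-z_k^*}\omega^2-\beta_k\bigr)\,\partial z_k^*/\partial N-\phi_{kk}\omega^2$, where $e^{-z_k^*}\omega^2=x(1-Nx)$.

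The remaining input is $\partial z_k^*/\partial N$, obtained by implicit differentiation of the scalar FOC: writing it as $G(z_k^*,N)=0$ gives $\partial z_k^*/\partial N=-G_N/G_z$, and the sign of $G_z$ is pinned down by the second-order condition from Proposition~\ref{prop:existence_gumbel}. Substituting the explicit rational forms of $H_{kk}$, $G_N$ and $G_z$ and clearing denominators, $\partial p_k^*/\partial N$ becomes a rational function of $x$ whose sign must be determined on the \emph{entire} interval $x\in(0,1/N)$: the claim is uniform in $u_k^0$, and letting $u_k^0$ range over $\mathbb{R}$ sweeps $x$ across all of $(0,1/N)$. The sign then reduces to the positivity or negativity of a low-degree polynomial in $x$ on that interval, and the threshold functions $g_p(N)$ and $f_p(N)$ are precisely the values of $\beta_k/\phi_{kk}$ at which this polynomial loses its definite sign on $(0,1/N)$; the limits $g_p(N)\to 0$ and $f_p(N)\to 1$ match the intuition that the decrease/increase dichotomy is ultimately governed by $\beta_k$ versus $\phi_{kk}$ as $N\to\infty$.

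The main obstacle is exactly this final sign analysis. The expression for $\partial p_k^*/\partial N$ is a messy rational function of $x$ with parameters $(\phi_{kk},\beta_k,N)$, and the real work is to factor it, extract the correct definite-sign statement that holds uniformly for $x\in(0,1/N)$, and show that the endpoints of the $\beta_k$-windows coincide with $g_p(N)\phi_{kk}$ and $f_p(N)\phi_{kk}$ — that is, to reverse-engineer the threshold functions from the requirement that the governing polynomial have no root in $(0,1/N)$. Correctly tracking the sign of $G_z$ from the second-order condition, so that the sign of $\partial z_k^*/\partial N$ is not inadvertently inverted, is the other delicate point.
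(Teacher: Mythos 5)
Your skeleton reproduces the paper's strategy almost step for step: reduce to $(\phi_{bs},\phi_{sb})=(0,0)$ and recover the ball $B_\epsilon(0)$ by continuity of $\partial p_k^*/\partial N$ at the origin; exploit the decoupling of \eqref{FOCs_z} at $\varphi_1=0$; get $\partial z_k^*/\partial N$ by implicit differentiation of the scalar FOC; and reduce everything to the sign of a rational function that must be controlled uniformly over the whole range of the equilibrium variable, precisely because $u_k^0$ is unrestricted (your substitution $x=\omega(z_k^*)\in(0,1/N)$ is just a reparametrization of the paper's polynomials in $e^{z_k^*}\in(0,\infty)$, via $e^{z_k^*}=x/(1-Nx)$). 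But the proposal stops exactly where the proposition lives. The entire content of the statement is the pair of threshold functions $g_p(N)$ and $f_p(N)$, and you explicitly defer their derivation as ``the main obstacle.'' Without executing that step --- in the paper, writing $\partial p_k^*/\partial N|_{\varphi_1=0}=n_p/d$ with $n_p=\sum_{m=2}^{6}n_{m,p}e^{mz_k^*}$, $d=\sum_{m=0}^{6}d_m e^{mz_k^*}$, and verifying coefficient-by-coefficient signs under each hypothesis --- nothing in the statement is actually established. This is a genuine gap, not a routine computation left to the reader.

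Two further points in your plan would derail the execution. First, you characterize $g_p(N)$ and $f_p(N)$ as ``precisely the values of $\beta_k/\phi_{kk}$ at which this polynomial loses its definite sign'' on the interval. That is not how the paper obtains them, and reverse-engineering sharp loss-of-sign thresholds for a degree-six polynomial on $(0,1/N)$ is a much harder (and different) problem whose answer need not coincide with the paper's functions. The paper instead imposes the \emph{sufficient} condition that every coefficient $n_{m,p}$ (and $d_m$) has the required sign: $g_p(N)\phi_{kk}$ is the largest real root, in $\beta_k$, of the single coefficient $n_{5,p}/\beta_k$ (three real roots since the Cardano discriminant in \eqref{eqn:cardano_n5p} is negative), and $f_p(N)\phi_{kk}$ is the unique real root of $n_{4,p}/\beta_k$ --- where the positivity of the discriminant $\tilde\Delta_k$ needs $N\geq 4$, so the case $N=3$ in part (ii) must be handled separately with the explicit bound $\beta_k<\tfrac{2}{3}\phi_{kk}$ (see \eqref{def:rp2}); your plan misses this case split entirely. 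Second, the sign of your $G_z$ is not supplied by the second-order condition of Proposition~\ref{prop:existence_gumbel}: the SOC concerns negative definiteness of $D^2\pi^1$, whereas what you need is strict monotonicity of the FOC map $M_k$ in $z_k$, which the paper proves separately in \eqref{Mk_derivative}--\eqref{eqn:ineDMk_neg} by showing the coefficients $a_m$ are positive under \eqref{condition_existence_beta}. The needed fact is true under your hypotheses, but citing the SOC for it is a misattribution that, taken literally, leaves the sign of $\partial z_k^*/\partial N$ unjustified.
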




The first part of Proposition  \ref{prop:dpk_dN} 
agrees with traditional results, where 
a sufficiently large user's heterogeneity in tastes implies the 
decrease of the equilibrium prices with the increase of competition, i.e., $\partial p_{k}^{*}/\partial N<0$ (see \cite{anderson1992logit}).
On the other hand, the second part of Proposition  \ref{prop:dpk_dN} agrees with a recent and less conventional result, where positivity of  the within-side externalities, $\phi_{kk}$, and sufficiently small user's heterogeneity in tastes, $\beta_k$ imply the increase of prices with the increase of competition, i.e.,  $\partial p_{k}^{*}/\partial N>0$ 
(see, \cite{tan2021effects}). 
The proposition carefully quantifies the thresholds on the user's heterogeneity in tastes that yield different signs of $\partial p_{k}^{*}/\partial N$. The resulting regions are demonstrated below in Figure \ref{fig:dpkdN} for $N=4$. The two regions are subsets of the region specified in \eqref{condition_existence_beta} and when $N \to \infty$ the union of the former two regions approaches the latter region (because $g_p(N) \to 0$ and $f_{p}(N) \to 1$ as $N\to\infty$).


 \begin{figure}[H]
    \centering
    \caption{Classification of the sign of $\partial p_{k}^{*}/\partial N$ based on $(\phi_{kk},\beta_k)$, $k\in\{b,s\}$, according to Proposition \ref{prop:dpk_dN}, where $N=4$. The red and blue regions correspond to negative and positive $\partial p_{k}^{*}/\partial N$, respectively.}
     \label{fig:dpkdN}
    \includegraphics[scale=0.35]{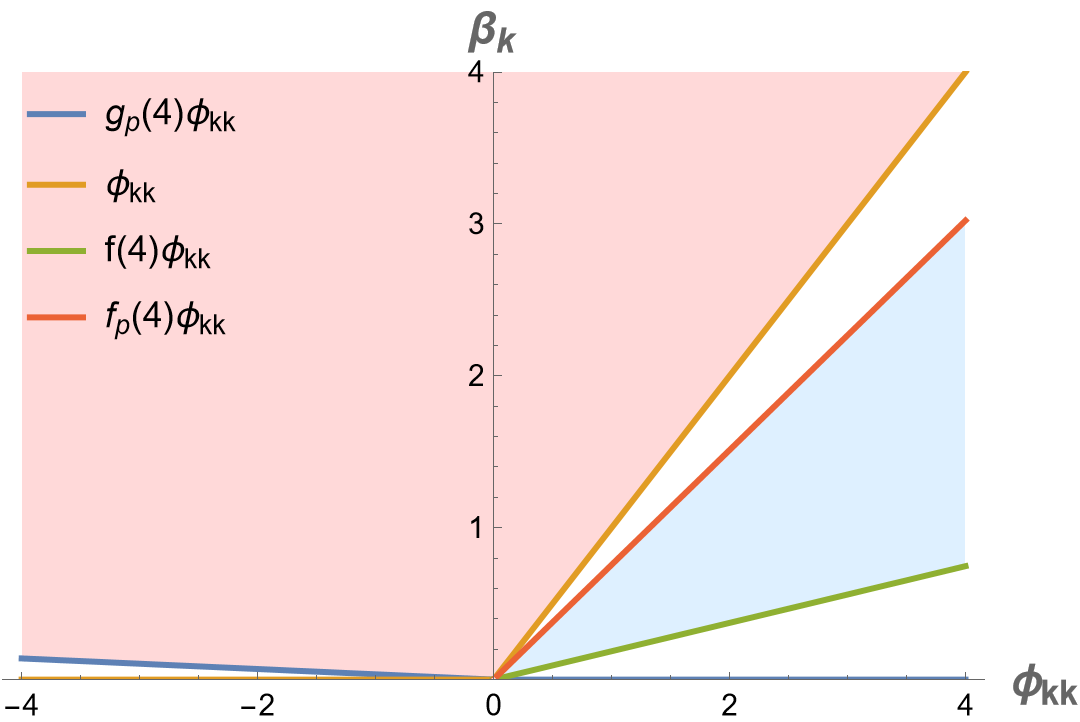}
\end{figure}

\textbf{The effect of competition on market participation.} The equilibrium market participation on side $k$ of the market is given by $Nx_k^*$, where $x_k^*$ is given by \eqref{eq:x=Omegaz}. The following proposition provides sufficient conditions for positive  $\partial (Nx_{k}^{*})/\partial N$. It uses the following quantity:
\begin{equation}\label{def:rNx}
    g_{x}(N):=\frac{\left(2N^{2}-2N+1\right)}{N\left(N^{2}-N+1\right)}.
\end{equation}

\begin{proposition}[Competition increases market participation]\label{prop:dNxk_dN}
If $N\geq 2$ and 
$$
\textnormal{either } (\phi_{kk}\leq 0 \textnormal{ and } \beta_k>0) \textnormal{ or } (\phi_{kk}>0 \textnormal{ and } \beta_k>g_{x}(N)\phi_{kk}),$$ 
 then there exists  $\epsilon>0$  such that for any $(\phi_{bs},\phi_{sb})\in B_{\epsilon}(0)$, $\partial (Nx_{k}^{*})/\partial N>0$.
\end{proposition}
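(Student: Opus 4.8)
The plan is to reduce the sign of $\partial(Nx_k^*)/\partial N$ to the sign of a single scalar quantity, compute that quantity by implicit differentiation of the first-order condition \eqref{FOCs_z} at vanishing cross-side externalities, and then extend the strict inequality to a neighborhood of the origin by continuity.

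First I would use \eqref{eq:x=Omegaz}, which gives $Nx_k^* = N/(e^{-z_k^*}+N)$ with $z_k^*$ an implicit function of $N$. Differentiating totally in $N$ (accounting for the explicit $N$, the $N$ appearing inside $\omega$, and the dependence of $z_k^*$ on $N$) yields
\begin{equation*}
\frac{\partial (Nx_k^*)}{\partial N} = \frac{e^{-z_k^*}}{\left(e^{-z_k^*}+N\right)^2}\left(1 + N\,\frac{\partial z_k^*}{\partial N}\right).
\end{equation*}
Since the prefactor is strictly positive, the sign of $\partial(Nx_k^*)/\partial N$ equals the sign of $1 + N\,\partial z_k^*/\partial N$, so it suffices to prove $\partial z_k^*/\partial N > -1/N$.

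Next I would specialize to $(\phi_{bs},\phi_{sb})=0$. There the off-diagonal entries of $H(\vz)$ in \eqref{def:H_matrix} vanish, and using \eqref{eqn_z_Lk} the factor $K_s$ cancels in the product $L_bd_bK_s$ (since $J_\phi=K_bK_s$), so \eqref{FOCs_z} decouples into a scalar equation $\Psi_k(z_k^*,N)=0$, where $\Psi_k(z,N) := (\phi_{kk}-H_{kk}(z,N))\,\omega(z) - u_k^0 - \beta_k z$ and $H_{kk}$ depends on $z$ and $N$ only. The second-order condition supplied by Proposition \ref{prop:existence_gumbel} guarantees $\partial_z\Psi_k \neq 0$ at the equilibrium, so the implicit function theorem gives $\partial z_k^*/\partial N = -(\partial_N\Psi_k)/(\partial_z\Psi_k)$ and hence
\begin{equation*}
1 + N\,\frac{\partial z_k^*}{\partial N} = \frac{\partial_z\Psi_k - N\,\partial_N\Psi_k}{\partial_z\Psi_k}.
\end{equation*}
The second-order condition also fixes the sign of $\partial_z\Psi_k$ (negative at the equilibrium), so the problem reduces to showing $N\,\partial_N\Psi_k - \partial_z\Psi_k > 0$.

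I would then compute $\partial_z\Psi_k$ and $\partial_N\Psi_k$ explicitly from \eqref{eqn_z_Lk}, using $H_{kk}=(N-1)\beta_k^2(1+Ne^{z_k})^2/K_k + h_k - \phi_{kk}$, clear the common denominator $K_k$, and reduce $N\,\partial_N\Psi_k - \partial_z\Psi_k > 0$ to positivity of a rational function of $e^{z_k^*}$ with coefficients depending on $N$, $\beta_k$, $\phi_{kk}$. The condition $\beta_k > g_x(N)\phi_{kk}$, with $g_x$ as in \eqref{def:rNx}, is exactly the threshold under which this numerator stays positive for every admissible $e^{z_k^*}>0$; when $\phi_{kk}\leq 0$ the inequality holds trivially since then $g_x(N)\phi_{kk}\leq 0<\beta_k$. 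The main obstacle is precisely this algebraic reduction: extracting the sharp threshold $g_x(N)$ from the unwieldy expressions for $H_{kk}$, $K_k$ and $h_k$, and verifying that the resulting polynomial inequality holds uniformly in $z_k^*$ (equivalently, identifying and controlling the worst-case value of $e^{z_k^*}$). Finally, since $z_k^*$ and $\partial z_k^*/\partial N$ depend continuously on $(\phi_{bs},\phi_{sb})$ near the origin—again by the implicit function theorem applied to the full coupled system \eqref{FOCs_z}—and the inequality $1 + N\,\partial z_k^*/\partial N>0$ is strict and hence open, it persists for all $(\phi_{bs},\phi_{sb})\in B_\epsilon(0)$ for some $\epsilon>0$, completing the argument.
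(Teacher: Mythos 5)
Your proposal is correct and follows essentially the same route as the paper's proof: both reduce the sign of $\partial (Nx_k^*)/\partial N$ to the sign of $1+N\,\partial z_k^*/\partial N$ via \eqref{eq:x=Omegaz}, obtain $\partial z_k^*/\partial N$ by implicit differentiation of the FOC, which decouples at $(\phi_{bs},\phi_{sb})=(0,0)$, reduce everything to positivity of a ratio of polynomials in $e^{z_k^*}$, and extend the strict inequality to a ball $B_\epsilon(0)$ by continuity. The algebraic verification you flag as the main obstacle is exactly the step the paper delegates to its supplementary Mathematica file, where all coefficients of the numerator and denominator are shown positive under the stated conditions, $g_x(N)\phi_{kk}$ emerges as the unique nonzero root of the leading coefficient $n_{6,Nx}$ (viewed as a polynomial in $\beta_k$), and it is also checked that $g_x(N)\geq f(N)$, so the existence and monotonicity hypotheses of Proposition \ref{prop:existence_gumbel} that your argument invokes indeed hold under your condition.
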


Most models of platform competition leave out the analysis of the outside utility option. By doing so, they assume full market coverage,\footnote{Corollary \ref{coro:sign_zk_N} shows that full market coverage occurs when $N \to \infty$, however, when the number of platforms is finite, we find this assumption unrealistic.} and thus cannot study the effect of competition on market participation. Proposition  \ref{prop:dNxk_dN} fills this gap and its region 
of positive $\partial (Nx_{k}^{*})/\partial N$ is demonstrated below in Figure \ref{fig:dNxkdN} when $N=4$.
We note that the region described by Proposition \ref{prop:dpk_dN} part (ii) intersects with the region described by Proposition  \ref{prop:dNxk_dN}. Thus, when the within-side externalities are sufficiently large (relative to the user’s heterogeneity in tastes) 
then both prices and market participation increase with competition. At last, we note that the region described by Proposition  \ref{prop:dNxk_dN} is a subset of the region described by \eqref{condition_existence_beta} and they coincide as $N \to \infty$.


\begin{figure}[H]
    \centering
    \caption{Demonstration of the region of positive $\partial (Nx_{k}^{*})/\partial N$  based on $(\phi_{kk},\beta_k)$, $k\in\{b,s\}$ (in blue), according to Proposition \ref{prop:dNxk_dN}, where $N=4$. }
     \label{fig:dNxkdN}
    \includegraphics[scale=0.35]{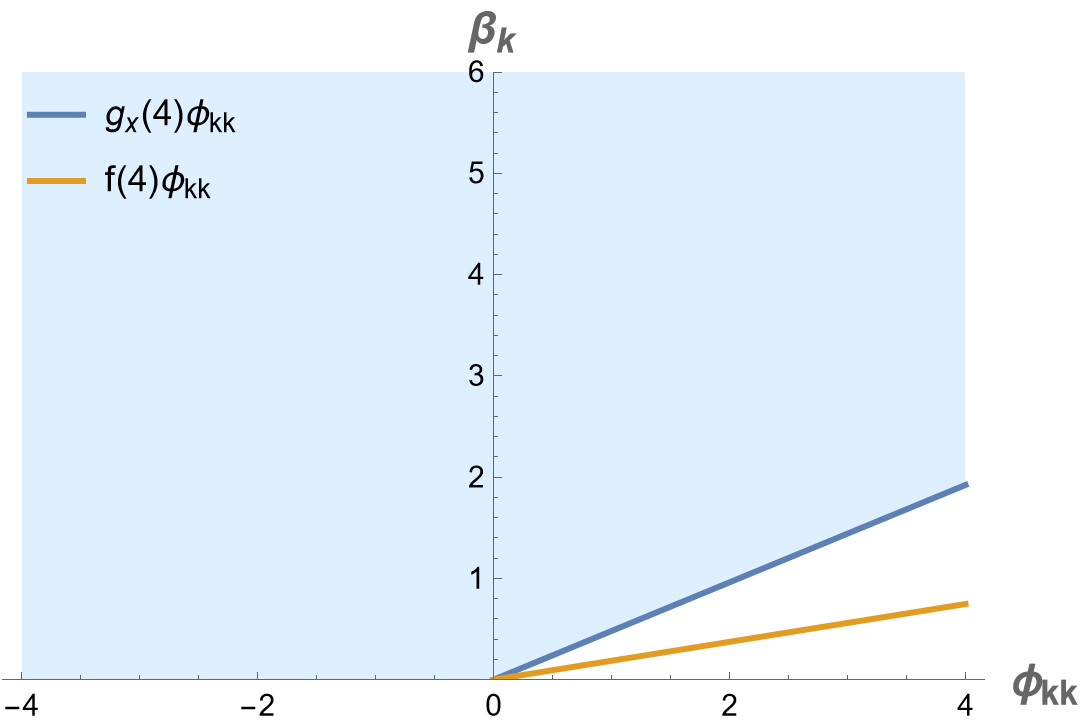}
\end{figure}

\textbf{The effect of competition on consumer surplus.} 
The equilibrium consumer surplus on side $k$ of the market, $CS_k^*$, is defined above in \eqref{eqn:equilibrium_CS}.  The following proposition provides sufficient conditions to determine the sign of $\partial (CS_{k}^{*})/\partial N$. It uses the following quantities:
    \begin{equation}
       \begin{split}\label{def:rcsk1_rcsk2}
           g_{CS}\left(N\right)&:=\frac{2N^{3}-N+1}{N^{2}\left(N^{2}-N+2\right)}
       \end{split}
    \end{equation}
and $f_{CS}\left(N\right)$ which is given by \eqref{def:f_CS} in Appendix \ref{appendixA}.

\begin{proposition}[Regions where competition decreases/increases consumer surplus]\label{prop:dCSk_dN_positive}
The effect of competition on the change of consumer surplus can be clustered into the following two regions: 
\begin{itemize}
    \item[(i)] If $N\geq2$ and either $(\phi_{kk}\leq 0 \textnormal{ and } \beta_k>0) \textnormal{ or } (\phi_{kk}> 0 \textnormal{ and } \beta_k>g_{CS}\left(N\right)\phi_{kk})$, then there exists $\epsilon>0$ such that for any $(\phi_{bs},\phi_{sb})\in B_{\epsilon}(0)$, $\partial (CS_{k}^{*})/\partial N>0$.
\item[(ii)] If $N\geq 7$, $\phi_{kk}>0$, $f\left(N\right)\phi_{kk}<\beta_{k}<\min\left\{ f_{CS}\left(N\right)\phi_{kk},\gamma\left(N,\phi_{kk},u_{k}^{0}\right)\right\}$ and $z_{k}^{*}<\frac{1}{5}\ln 2$, then there exists $\epsilon>0$ such that for any $(\phi_{bs},\phi_{sb})\in B_{\epsilon}(0)$, $\partial (CS_{k}^{*})/\partial N<0$.
\end{itemize}

\end{proposition}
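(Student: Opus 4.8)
The plan is to first collapse $CS_k^*$ to an expression whose only nontrivial $N$-dependence sits in $z_k^*$. Using the transformation (\ref{def:zk}) together with the equilibrium identity $\vbeta\vz^* = -\vp^* + \Phi\vx^* - \vu_0$, one has $-p_k^* + \phi_k(\vx^*) = \beta_k z_k^* + u_k^0$, and the footnote identity $\mathbb{E}[\max_{i=0,\dots,N}\epsilon_k^i] = \mu_k + \beta_k\ln(N+1) + \beta_k\gamma$ turns (\ref{eqn:equilibrium_CS}) into
\[
CS_k^* = (\mu_k + \beta_k\gamma + u_k^0) + \beta_k\bigl[\ln(N+1) + z_k^*\bigr].
\]
Since the parenthesized term is constant in $N$ and $\beta_k>0$, the entire proposition reduces to determining the sign of $\tfrac{1}{N+1} + \partial z_k^*/\partial N$, i.e.\ the competition between the product-variety gain $\tfrac{1}{N+1}$ and the change in net deterministic utility.

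Second, I would argue first at $(\phi_{bs},\phi_{sb}) = 0$, where (\ref{FOCs_z}) decouples: the factor $K_{\bar k}$ appearing in the diagonal entry of $H$ cancels the $J_\phi=K_bK_s$ denominator, leaving a single scalar equation
\[
G(z_k,N) := \bigl(\phi_{kk} - H_{kk}(z_k,N)\bigr)\omega(z_k) - \beta_k z_k - u_k^0 = 0,
\qquad
H_{kk} = \frac{(N-1)\beta_k^2(1+Ne^{z_k})^2}{K_k} + h_k - \phi_{kk},
\]
in $z_k$ alone. Treating $N$ as a continuous variable and differentiating implicitly gives $\partial z_k^*/\partial N = -\,\partial_N G/\partial_z G$, where the second-order condition guaranteed by Proposition~\ref{prop:existence_gumbel} fixes the sign of $\partial_z G$. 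Hence
\[
\frac{1}{N+1} + \frac{\partial z_k^*}{\partial N} = \frac{1}{\partial_z G}\left(\frac{\partial_z G}{N+1} - \partial_N G\right),
\]
so the problem becomes an explicit inequality in $z_k^*$, $N$, $\beta_k$ and $\phi_{kk}$, with a known overall sign coming from the denominator.

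Third, I would analyze that inequality. When $\phi_{kk}\le 0$ the product-variety term dominates and positivity of the numerator is automatic, which yields part (i) in that regime; for $\phi_{kk}>0$ the thresholds $g_{CS}(N)$ from (\ref{def:rcsk1_rcsk2}) and $f_{CS}(N)$ from (\ref{def:f_CS}) should emerge precisely as the critical values of the ratio $\beta_k/\phi_{kk}$ at which the numerator $\tfrac{\partial_z G}{N+1} - \partial_N G$ changes sign. For part (ii), the constraint $\beta_k < \gamma(N,\phi_{kk},u_k^0)$ forces $z_k^*>0$ by Proposition~\ref{prop:sign_zk}, while the extra hypotheses $N\ge 7$ and $z_k^* < \tfrac{1}{5}\ln 2$ (so that $e^{z_k^*}\in(1,2^{1/5})$) are what let the negative contribution of $\partial z_k^*/\partial N$ outweigh $\tfrac{1}{N+1}$, keeping the elementary bounds on the rational-in-$e^{z_k^*}$ terms tight enough to close the estimate. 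Finally, since all quantities depend continuously on $(\phi_{bs},\phi_{sb})$ and the inequalities established at the origin are strict, they persist on a ball $B_\epsilon(0)$ by the same implicit-function and continuity argument used throughout the paper.

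The hard part will be the third step: carrying out the implicit differentiation of $G$ and reducing $\tfrac{\partial_z G}{N+1} - \partial_N G$ to a form whose sign can be read off against $g_{CS}(N)$ and $f_{CS}(N)$. Because $H_{kk}$ and $K_k$ are rational functions of $e^{z_k}$, both the $N$- and $z$-derivatives generate several competing terms, and isolating the exact cancellations that produce the clean thresholds — and confirming that $N\ge 7$ and $z_k^* < \tfrac{1}{5}\ln 2$ are exactly the conditions needed in part (ii) — is the delicate computation on which the whole proof hinges.
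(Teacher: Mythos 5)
Your plan follows the paper's own proof essentially step for step: the paper likewise reduces the problem to the sign of $\frac{1}{N+1}+\partial z_{k}^{*}/\partial N$ (its equation \eqref{eqn:dCSdN_proof}), implicitly differentiates the first-order condition in $N$ at $\varphi_1=(\phi_{bs},\phi_{sb})=0$ where the system decouples, writes the result as a ratio of explicit sixth-degree polynomials in $e^{z_k^*}$ with coefficients polynomial in $(\beta_k,\phi_{kk},N)$, and extends to a ball $B_\epsilon(0)$ by the same continuity argument. The ``hard part'' you defer is closed there exactly along the lines you anticipate, with two refinements worth noting: $g_{CS}(N)\phi_{kk}$ arises as the unique nonzero root of the leading numerator coefficient $n_{6,CSk}$ (so that in case (i) \emph{all} coefficients of numerator and denominator are positive --- a sufficient-condition threshold, not an exact sign-change point), while in case (ii) the hypotheses $\beta_k<\gamma(N,\phi_{kk},u_k^0)$ (forcing $z_k^*>0$ via Proposition \ref{prop:sign_zk}) and $z_k^*<\frac{1}{5}\ln 2$ (so $e^{mz_k^*}<2$ for $m\le 5$ and $e^{6z_k^*}>1$) let the paper majorize the numerator by a cubic in $\beta_k$, whose largest real root, extracted by Cardano's formula, is precisely $f_{CS}(N)\phi_{kk}$, with $N\ge 7$ needed for the requisite coefficient signs.
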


Part (i) of Proposition  \ref{prop:dCSk_dN_positive} 
agrees with traditional results, where consumer surplus increases with increased competition. For example, \cite{hsu2005welfare} consider the Bertrand competition model with substitute goods and show that competition increases consumer surplus. The region in part (i) of Proposition  \ref{prop:dCSk_dN_positive} has small cross-side externalities and its within-side externalities are either negative or positive and small with respect to the user's taste heterogeneity. Part (ii), on the other hand, shows sufficient conditions for decreasing consumer surplus with increased competition. This result agrees with a result from \cite{tan2021effects}, where in markets that are relatively concentrated with a few platforms, consumer surplus decreases as competition increases. Moreover, in the asymptotic regime as $N$ goes to infinity, the region in part (ii) disappears (because $g_{CS}(N)\to 0$ and $f_{CS}(N)\to 0$ as $N\to \infty$) and such behavior is also observed in \cite{tan2021effects}. Note that the region in  Part (ii) of Proposition \ref{prop:dCSk_dN_positive} has positive within-side externalities, small user's heterogeneity in tastes relative to the within-side externalities, positive but small normalized net deterministic utility relative to the number of platforms, and small cross-side externalities. Figure \ref{fig:dCSkdN} demonstrates the resulting regions (i) and (ii) when $N=4$, while excluding the condition involving $z_k^*$.


\begin{figure}[H]
    \centering
    \caption{ Classification of the sign of $\partial (CS_{k}^{*})/\partial N$ based on $(\phi_{kk},\beta_k)$, $k\in\{b,s\}$, for $N=4$ and $u_k^0 = 0$. The blue and red regions correspond to positive and negative $\partial (CS_{k}^{*})/\partial N$, respectively. For the red region we did not include the bound on $z_k^*$, but we still demonstrate a restricted region.}
     \label{fig:dCSkdN}
    \includegraphics[scale=0.35]{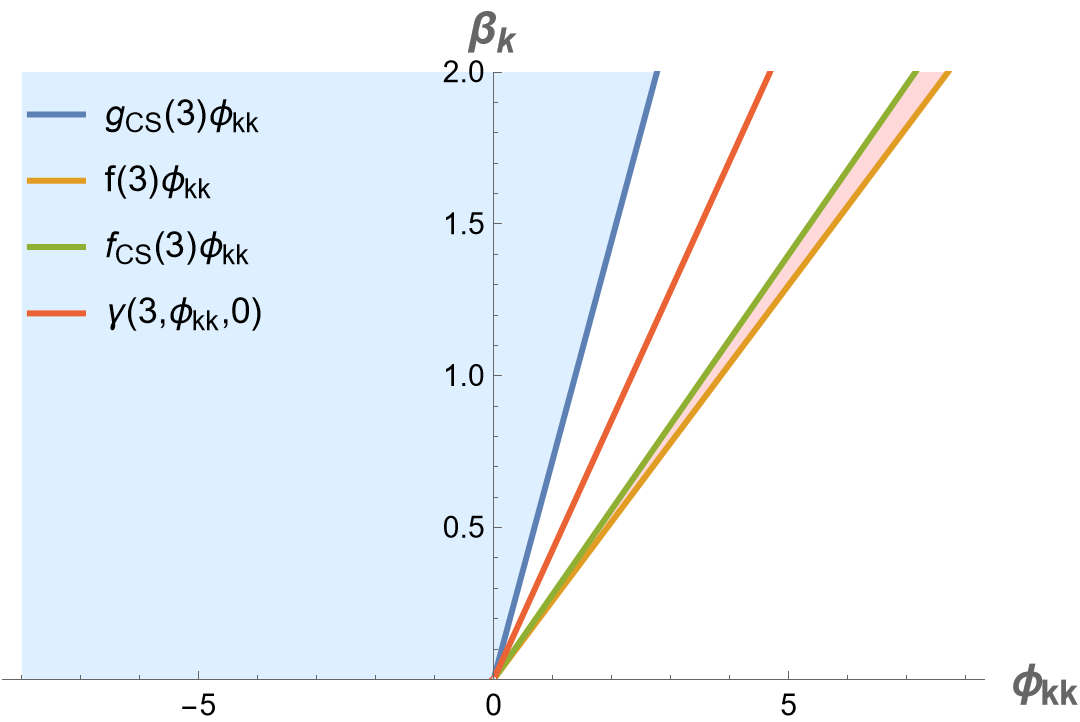}
\end{figure}

\textbf{The effect of competition on profits.} The equilibrium profit quantity, $\pi^*$, is given by
\begin{equation}
\label{eqn:equilibrium_pi}
\pi^{*}:=\sum_{k\in\left\{ b,s\right\} }p_{k}^{*}x_{k}^{*}.
\end{equation}
For each $k\in\left\{ b,s\right\}$, let $\pi_{k}^{*}:= p_{k}^{*}x_{k}^{*}$, the profits on side $k$ of the market. 
The following proposition provides sufficient conditions to determine the sign of $\partial \pi_k^*/\partial N$. It uses the following condition:
\begin{equation}\label{condition_pi_increasing}
\textnormal{either }
\left(\phi_{kk}\leq0\textnormal{ and }\beta_{k}>0\right)\textnormal{ or }\left(\phi_{kk}>0\textnormal{ and }\beta_{k}>g_{\pi}\left(N\right)\phi_{kk}\right),
\end{equation}
where $g_{\pi}\left(N\right)>f\left(N\right)$, $f(N)$ is given by \eqref{def:RNphi_kk_proof} and $g_{\pi}(N)$ is given by \eqref{def:rpi} in Appendix \ref{appendixA}. It also uses the functions $g_{\pi,z}(N,\phi_{kk},u_k^0,\beta_k)$ and $f_{\pi,z}(N,\phi_{kk},u_k^0,\beta_k)$   defined in \eqref{def:R1piz} and \eqref{def:R2piz} of Appendix \ref{appendixA}, respectively.

\begin{proposition}[Regions where competition decreases/increases profits on side $k$]\label{prop:dpik_dN_negative} Assume that $N\geq 2$ and for each $k\in\{b,s\}$, $(\phi_{kk},\beta_k)$ satisfies \eqref{condition_existence_beta}. The effect of competition on the change of profits on side $k$ of the market can be clustered into the following two regions of $z_k^*$:
\begin{itemize}   
 \item[(i)]  If $z_k^*<g_{\pi,z}(N,\phi_{kk},u_k^0,\beta_k)$, then there exists $\epsilon>0$ such that for any $(\phi_{bs},\phi_{sb})\in B_{\epsilon}(0)$, $\partial \pi_{k}^{*}/\partial N<0$. 
\item[(ii)]   If $z_k^*>f_{\pi,z}\left(N,\phi_{kk},u_{k}^{0},\beta_{k}\right)$ and (\ref{condition_pi_increasing}) is satisfied, then there exists $\epsilon>0$ such that for any $(\phi_{bs},\phi_{sb})\in B_{\epsilon}(0)$, $\partial \pi_{k}^{*}/\partial N>0$. 
\end{itemize}
\end{proposition}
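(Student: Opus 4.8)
The plan is to reduce first to the decoupled case $(\phi_{bs},\phi_{sb})=0$ and then run a one–dimensional implicit–differentiation argument. Because the symmetric CNE solves the FOC \eqref{FOCs_z} which, together with the second–order condition from Proposition \ref{prop:existence_gumbel}, makes $\vz^*$ a continuously differentiable function of the parameters via the implicit function theorem, the quantity $\partial \pi_k^*/\partial N$ is continuous in $(\phi_{bs},\phi_{sb})$. Hence it suffices to establish a strict sign at the origin; the strict inequality then persists on a ball $B_\epsilon(0)$. At the origin the two sides decouple: $\Phi$ and $H(\vz)$ are diagonal, so each $z_k^*$ satisfies the scalar equation obtained from the $k$-th row of \eqref{FOCs_z}, which I denote $\Psi(z_k^*,N)=0$, and $z_k^*$ becomes a genuine implicit function of $N$ alone.

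Next I would exploit the clean profit representation available at the origin. Combining the definition \eqref{def:zk}, namely $\beta_k z_k^* = -p_k^* + \phi_{kk}\,\omega(z_k^*) - u_k^0$, with $x_k^*=\omega(z_k^*)$ from \eqref{eq:x=Omegaz}, the side-$k$ profit becomes
\[
\pi_k^* = p_k^*\,x_k^* = \omega(z_k^*)\bigl[\phi_{kk}\,\omega(z_k^*) - \beta_k z_k^* - u_k^0\bigr],
\]
a function of $N$ only through $\omega(z)=1/(e^{-z}+N)$ and through $z_k^*(N)$. Using $\partial\omega/\partial N=-\omega^2$ and $\partial\omega/\partial z=e^{-z}\omega^2$, I would compute the total derivative
\[
\frac{\partial\pi_k^*}{\partial N} = \left.\frac{\partial\pi_k^*}{\partial N}\right|_{z} + \left.\frac{\partial\pi_k^*}{\partial z}\right|_{N}\frac{dz_k^*}{dN},
\]
where the implicit derivative $dz_k^*/dN=-\Psi_N/\Psi_z$ comes from differentiating $\Psi(z_k^*,N)=0$. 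The second–order condition fixes the sign of the denominator $\Psi_z$, so after substitution the sign of $\partial\pi_k^*/\partial N$ reduces to the sign of a single explicit (if lengthy) expression in $z_k^*$ and the parameters $(N,\phi_{kk},u_k^0,\beta_k)$.

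The final step is the sign analysis of that expression. The thresholds $g_{\pi,z}$ and $f_{\pi,z}$ of \eqref{def:R1piz}--\eqref{def:R2piz} are precisely the values of $z_k^*$ at which the governing factor changes sign: below $g_{\pi,z}$ it is negative (giving $\partial\pi_k^*/\partial N<0$), while above $f_{\pi,z}$ it is positive (giving $\partial\pi_k^*/\partial N>0$). The supplementary hypothesis \eqref{condition_pi_increasing}, with $g_{\pi}(N)>f(N)$, is what secures positivity of an auxiliary factor (tied to the sign of $p_k^*$ and of $K_k$, hence to $\Psi_z$) in the increasing regime, so that $z_k^*>f_{\pi,z}$ indeed forces a positive derivative. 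Intuitively the result balances two competing effects already quantified in Propositions \ref{prop:dpk_dN} and \ref{prop:dNxk_dN}: the per-user price may rise or fall with $N$, while the individual share $x_k^*$ falls even though participation $Nx_k^*$ rises, and whether their product increases is decided by where $z_k^*$ sits relative to the thresholds.

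The main obstacle I anticipate is purely computational: differentiating the scalar FOC implicitly and simplifying $-\Psi_N/\Psi_z$ together with the partials of $\pi_k^*$ into a form where the two thresholds emerge cleanly. Because the intermediate regime $g_{\pi,z}\le z_k^*\le f_{\pi,z}$ is left indeterminate, one should not expect a single monotone sign change; instead one bounds the governing expression from each side separately, which is exactly why the statement is phrased with two distinct thresholds rather than one.
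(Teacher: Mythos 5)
Your proposal is correct and follows essentially the same route as the paper's proof: reduce to $(\phi_{bs},\phi_{sb})=(0,0)$, where the FOC decouples into a scalar equation in $z_k^*$, implicitly differentiate it in $N$, write $\partial\pi_{k}^{*}/\partial N\big|_{\varphi_1=0}$ as an explicit rational function of $e^{z_k^*}$ whose sign analysis (carried out in the paper via a multi-case study and Cardano's formula, delegated to Mathematica) yields the thresholds $g_{\pi,z}$ and $f_{\pi,z}$, and finally extend the strict sign to a ball $B_\epsilon(0)$ by continuity of $\partial\pi_k^*/\partial N$ in $(\phi_{bs},\phi_{sb})$. One minor correction: the negativity of your denominator $\Psi_z$ is not the second-order condition of the platform's maximization but the separately established strict monotonicity of $M_k$ in $z_k$ (see \eqref{Mk_derivative} with the coefficients \eqref{eqn:am_coefficients} positive under \eqref{condition_existence_beta})—both facts hold under the standing assumption, so the argument is unaffected.
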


Part (i) of Proposition \ref{prop:dpik_dN_negative} shows that in markets where the normalized net deterministic utility from joining the market is relatively small, the increased competition decreases profits. In other words, when the incentive to join the market in equilibrium, $z_k^*$, is small enough, more platforms joining the market reduce the pie for all of the competing platforms. A more interesting result appears in part (ii) when the incentives to join the market are high enough (relatively large $z_k^*$) and thus the increased competition increases profits. This observation aligns with traditional results in the platform's literature (see \cite{tan2021effects}) where the effect of network externalities can reverse the usual link between competition and firm profit (i.e., profits can increase with competition). 


\section{Economic and Policy discussion}
\label{sect:econ_poli_discussion}

We examine the economic implications of some of the results presented in Sections \ref{sect:collusionvscompetition} and \ref{sect:effectsCompetition}. We first discuss how increases in outside option utility and competition influence equilibrium prices and consumer surplus. These findings may inform policy discussions aimed at improving consumer outcomes and market efficiency. We also interpret our mathematical result comparing collusion and competition under small cross-side externalities. For concreteness, we focus on the dating app market, as motivated in the introduction.

\textbf{Scenarios in which an increased outside option or greater competition leads to lower prices and higher consumer surplus:} 
Both parts (i) of Propositions \ref{prop:dpkduk0} and \ref{prop:dCSkuk0} suggest that, under some conditions, such as relatively high heterogeneity, increasing the value of the outside option decreases prices and increases consumer surplus. In the setting of popular dating apps that attract a heterogeneous population---such as Tinder, Bumble, and Hinge---increased preference for traditional partner-finding methods leads to the reduction of dating app prices and the increase in consumer surplus. 
This suggests that apps should adjust pricing strategies, possibly by reducing prices or enhancing sign-up services. Similarly, both parts (i) of Propositions \ref{prop:dpk_dN} and \ref{prop:dCSk_dN_positive} imply that, under some conditions, such as relatively high heterogeneity, increasing competition decreases prices and increases consumer surplus. This finding is well-known in the traditional single-sided competition literature (see, e.g., \cite{tirole1988theory} and \cite{anderson1992logit}). 
In summary, under certain conditions—particularly high heterogeneity—our findings suggest two regulatory mechanisms to decrease prices and increase consumer surplus: enhancing the value of outside options or incentivizing competition.

\textbf{Scenarios in which an increased outside option or greater competition leads to higher prices and lower consumer surplus:} 
Both parts (ii) of Propositions \ref{prop:dpkduk0} and \ref{prop:dCSkuk0} suggest that under different conditions, such as relatively low heterogeneity, an 
increased outside option raises prices and reduces consumer surplus. These results may be exemplified by dating apps that target specific demographics or niches where users are often  homogeneous in their preferences. For example, apps like The League and JDate target more homogeneous segments of the population, and consequently, they can charge higher prices. Therefore, in these apps, users are less sensitive to outside options. Moreover, if subscribers are loyal at a sufficiently high outside option utility, there is no incentive to reduce prices even when this utility increases. 
Similarly, both parts (ii) of Propositions \ref{prop:dpk_dN} and \ref{prop:dCSk_dN_positive} imply that, under some conditions, such as relatively low heterogeneity, increasing competition leads to higher prices and lower consumer surplus. The intuition follows from  \eqref{eqn:equilibrium_CS}. We first note from this equation that consumer surplus is inversely related to price, so we focus on the former. Additionally, we observe that, for a fixed price, consumer surplus  increases with (i) the expected maximum user idiosyncrasy, and (ii) the size of the network externalities. In homogeneous populations, the expected maximum user idiosyncrasy is relatively small, which makes network externalities more pronounced. In this scenario, fewer platforms can amplify network effects more effectively (e.g., instead of having many alternatives to The League or JDate), making them more attractive. As a result, even with increased competition or a higher outside option, users may gravitate toward a smaller number of large platforms to maximize the benefits of network externalities. This dynamic allows these platforms to maintain or increase prices, while consumer surplus remain stagnant or decreases.

\textbf{Population heterogeneity matters for policy:}  The discussion above suggests that in markets like those described in this paper, regulators should carefully assess the level of population heterogeneity when aiming to improve consumer surplus and reduce equilibrium prices. This is because the same policy can have varying effects depending on the degree of heterogeneity.  Specifically, when population heterogeneity is sufficiently high, policies that either promote competition (e.g., reducing entry barriers and enforcing antitrust laws) or improve the outside option (e.g., enhancing public spaces like parks, libraries, and cultural centers) tend to lower equilibrium prices and increase consumer surplus. Conversely, when population heterogeneity is sufficiently low, policies that limit competition (e.g., supporting a dominant platform) or restrict the outside option (e.g., subsidizing part of the cost for some consumers) can help maintain or reduce prices while preserving or increasing consumer surplus. 

\textbf{Collusion under small cross-side externalities:} 
Proposition \ref{prop:CNEvsCE} shows that in cases of small cross-side
externalities, collusion (in comparison to competition) results in decreased normalized net deterministic utilities, reduced market participation and increased price, on both sides of the market. This is intuitive since when the cross-side externalities are sufficiently small---meaning users derive limited benefit from the presence of users on the opposite side---competing platforms have strong incentives to lower prices and attract users. In contrast, colluding platforms internalize each other's pricing decisions and reduce competition, enabling them to raise prices on both sides. This further results in higher net deterministic utilities and greater overall participation for the competition case versus the colluding one. The collusive outcome resembles classic monopoly pricing: platforms extract more surplus at the expense of user welfare, resulting in higher prices and lower market participation compared to the competitive case. In the dating app market, cross-side externalities capture the value one side (e.g., men) derives from a larger presence of the other side (e.g., women) on a given platform. These externalities 
are typically lower in large-scale casual apps like Tinder, Badoo, and Facebook Dating, where user pools are already extensive and the marginal value of new users is diminished. While Proposition~\ref{prop:CNEvsCE} is difficult to verify empirically, we illustrate its logic with a speculative example. During the 2013-2017 period of increasing competition among casual dating apps like Tinder, OkCupid, and Plenty of Fish (POF), prices were lower, user utility was higher, and market participation was larger. In contrast, we hypothesize that the dating app market has shifted in recent years toward reduced competition and arguably increased collusion. Match Group has gained a dominant position through acquisitions of major platforms such as Tinder, POF, OkCupid and Hinge~\citep{gilbert2019antitrust}. In parallel, the adoption of AI-based pricing strategies raises questions about the potential for tacit coordination~\citep{chica2024artificial}. During this period, rising prices have become evident. Moreover, features that were once free are increasingly placed behind paywalls. This results in lower utility for price-sensitive users and may limit participation, despite overall market growth.

Lastly, we note that the impact of increasing or decreasing competition appears both in Proposition \ref{prop:CNEvsCE}, where competition is compared to collusion in an extreme case, and in Propositions \ref{prop:dpk_dN} and \ref{prop:dCSk_dN_positive}, where competition changes by either increasing or decreasing the number of platforms in the market.

\section{Conclusions}
\label{sect:conclusions}

We provided a realistic framework for platform competition and collusion with an outside option. Among our many results, we highlight the following key findings:
\begin{enumerate}
    \item 
When the cross-side externalities are sufficiently small, the normalized net deterministic utilities and market participation are smaller in collusion than competition, and the prices on both sides of the market are higher in collusion than competition.
\item Depending on the size of the user's heterogeneity in tastes, incorporating an outside option may increase or decrease the equilibrium price and consumer surplus w.r.t.~the no outside option model. In particular, a model of platform competition that omits the outside option will either overestimate or underestimate the true equilibrium price. 
\item Depending on the size of the user's heterogeneity in tastes, the number of platforms and the size of network externalities, we also demonstrated when different quantities either decrease or increase with increased competition.\footnote{In particular, when the number of platforms increases, prices decrease if the user's heterogeneity is relatively large compared to the within-side externalities, and increase if there are at least three platforms and the user's heterogeneity is relatively small compared to the within-side externalities; market participation always increases; consumer surplus increases if the user's heterogeneity is relatively large compared to the within-side externalities, and decreases if there are at least three platforms, the user's heterogeneity is relatively small compared to the within-side externalities, and the net deterministic utility is small relative to the number of platforms; and profits decrease if the net normalized deterministic utility is small enough and increase if the net normalized deterministic utility is large enough.}
\end{enumerate}

While the paper uses lengthy mathematical derivation, a basic and fundamental idea is demonstrated in \eqref{price_decomposition}. This equation  decomposes the price gap between collusion and competition into two forces: reduced network benefits from lower participation, and lower user utility under collusion. Together, these explain why prices are higher in the collusive regime. 

There are many open directions for future research. In particular, it would be interesting to extend our model to incorporate the following features: (i) a multi-homing option, i.e., allowing users to join more than one platform; and (ii) platform asymmetries, i.e., allowing for different marginal costs of serving users. Incorporating multi-homing would require introducing an additional decision margin for users, potentially following the frameworks in \cite{chica2021exclusive} or \cite{teh2023multihoming}. For the case of platform asymmetries, one could modify problems \eqref{pi} and \eqref{pim} by introducing a marginal cost $c_i > 0$ for each platform $i \in \{1, \dots, n\}$. Exploring these extensions would likely require a combination of numerical methods and further simplifying assumptions. We view these as promising directions for future work that can build on the foundation laid by the present analysis. Another direction we are currently exploring is the use of our models as an economic framework for analyzing how reinforcement learning algorithms for platform pricing affect equilibrium outcomes. Our models help us assess whether network externalities mitigate or exacerbate the degree of collusion that AI-driven platforms may achieve \citep{chica2024artificial}.

\newpage
\appendix
\section{Appendix}\label{appendixA}
We prove all the stated results in the following order: Proposition \ref{prop:existence_xki}, Proposition~\ref{prop:FOC_gumbel} (for which we first provide various definitions and establish Lemma~\ref{lemma:foc_cneA}), Proposition \ref{prop:existence_gumbel} (for which we first prove Lemma~\ref{lemma:soc_cneA}), Proposition~\ref{prop:FOC_gumbel_collusion} (for which we first prove Lemma~\ref{lemma:foc_ce}), Proposition~\ref{coro:mono_uniqueness} (for which we first prove Lemma~\ref{lemma:soc_ce}), Proposition \ref{prop:sign_zk}, Corollary \ref{coro:sign_zk}, Corollary \ref{coro:sign_zk_N}, Proposition \ref{prop:dpkduk0}, Proposition \ref{prop:dpikduk0}, Proposition \ref{prop:dCSkuk0}, Proposition \ref{prop:sign_zkm}, Corollary \ref{coro:gammavsgammaC},  Corollary \ref{coro:sign_zkm},  Proposition \ref{prop:CNEvsCE}, Proposition \ref{prop:dpk_dN}, Proposition \ref{prop:dNxk_dN}, Proposition \ref{prop:dCSk_dN_positive} and Proposition \ref{prop:dpik_dN_negative}.
In order to save space, we leave some of the lengthy calculations to Mathematica and report them in the supplementary file \emph{Gumbel\_N.nb}.

\begin{proof}[{\bf Proof of Proposition \ref{prop:existence_xki}}]  Let $\{(p_b^i,p_s^i)\}_{i=1}^N\subset\R^2$ be a set of prices. For $i\in\gN\cup\{0\}$ and $k\in\{b,s\}$, set $v_k^i := \hat{u}_k^i- \varepsilon_k^i$. From (\ref{uki}) and (\ref{uk0}), for $i,j\in\mathcal{N}\cup\{0\}$, $i\neq j$, (\ref{xki}) can be rewritten as
\begin{align}
    x_k^i & =  \sP(\hat{u}_k^i > \max_{j\neq i}(\hat{u}_k^j))\notag \\ 
    & = \sP(\epsilon_k^i > \max_{j\neq i}(\epsilon_k^j +v_k^j-v_k^i)),
    \quad k\in\{b,s\}.\label{eqn:xki_outside}
\end{align}
For $i\in\gN\cup\{0\}$ and $k\in\{b,s\}$, we define $T_k^i:\R^{N+1}\longrightarrow [0,1]$ such that 
\begin{equation*}
    \vu=(u^0,u^1,\cdots,u^N) \mapsto T_k^i(\vu) := \sP(\underbrace{\epsilon_k^i > \max_{j\neq i}(\epsilon_k^j +u^j-u^i)}_{:= E^i_k(\vu)}).
\end{equation*}
Note that $E^i_k(\vu)\subset \Omega$ (where $\Omega$ is the domain of the random variables $\{\varepsilon_k^i\}_{i\in\gN\cup\{0\},k\in\{b,s\}}$). In two steps, we show that for any $\vu\in\R^{N+1}$ and $k\in\{b,s\}$, $\sum_{i=0}^NT_k^i(\vu) = 1$. 

Step (i): For any $i\neq j$, the events $E^i_k(\vu)$ and $E^j_k(\vu)$ are disjoint because either $\epsilon_k^i > \epsilon_k^j +u^j-u^i$ or $\epsilon_k^j > \epsilon_k^i +u^i-u^j$, but not both. Then, $
    \sum_{i=0}^NT_k^i(\vu) = \mathbb{P}\left(\cup_{i=0}^NE^i_k\right).$

Step (ii): We show that $\mathbb{P}\left(\cap_{i=0}^N(E^i_k)^c\right)=0$. First, note that the sets $\{\overline{E_k^i}\cap (E_k^i)^c\}_{i\in\gN\cup\{0\},k\in\{b,s\}}$ have $\mathbb{P}$-zero probability, because $\mathbb{P}$ is absolutely continuous with respect to the Lebesgue measure and each of the sets $\overline{E_k^i}\cap (E_k^i)^c$ is contained inside an $N$-dimensional set of $\mathbb{R}^{N+1}$. We claim that $\cap_{i=0}^N(E^i_k)^c\subseteq \cup_{i=0}^N(\overline{E_k^i}\cap (E_k^i)^c)$. 
Let $\omega\in \cap_{i=0}^N(E^i_k)^c$. Then, for all $i\in\gN\cup\{0\}$,
\begin{equation}\label{Ekic}
    \epsilon_k^i \leq \max_{j\neq i}(\epsilon_k^j +u^j-u^i).
\end{equation}
If there exists $i\in\gN\cup\{0\}$ such that \eqref{Ekic} holds with equality, then $\omega\in\overline{E_k^i}\cap (E_k^i)^c$ and the claim holds true. Now we prove that if for all $i\in\gN\cup\{0\}$, \eqref{Ekic} is satisfied with strict inequality, we get a contradiction. By (\ref{Ekic}) with strict inequality, there exists $\theta(0)\in\gN\cup\{0\}$, $\theta(0)\neq 0$ such that $ \epsilon_k^0 < \epsilon_k^{\theta(0)} +u^{\theta(0)}-u^0.$ Similarly, there exists $\theta^2(0)\neq \theta(0)$ such that $ \epsilon_k^{\theta(0)}  <\epsilon_k^{\theta^2(0)}  +u^{\theta^2(0)} -u^{\theta(0)}$. Note that $\theta^2(0)\neq 0$, otherwise, $ \epsilon_k^{\theta(0)}  <\epsilon_k^{0}  +u^{0} -u^{\theta(0)}$ which contradicts the definition of $\theta(0)$. By induction, suppose that for $n\in \mathbb{N}$ and all $0\leq m \leq n $, there exists $\theta^m(0)\in\gN\cup\{0\}$  such that 
\begin{equation}\label{Ekic2}
    \theta^m(0)\notin\{0,\theta(0),\cdots, \theta^{m-1}(0)\} \ \textnormal{ and }\  \epsilon_k^{\theta^{m-1}(0)}  <\epsilon_k^{\theta^m(0)}  +u^{\theta^m(0)} -u^{\theta^{m-1}(0)}.
\end{equation}
By (\ref{Ekic}) with strict inequality, there exists $\theta^{n+1}(0)\in\gN\cup\{0\}$, $\theta^{n+1}(0)\neq \theta^{n}(0)$ such that $ \epsilon_k^{\theta^{n}(0)}  <\epsilon_k^{\theta^{n+1}(0)}  +u^{\theta^{n+1}(0)} -u^{\theta^{n}(0)}$. We claim that $\theta^{n+1}(0)\notin\{0,\theta(0),\cdots, \theta^{n}(0)\}$, otherwise $\theta^{n+1}(0) = \theta^m(0)$ for some $0\leq m \leq n-1$. In this case, by \eqref{Ekic2}
\begin{equation}\begin{split}\label{Ekic3}
     \epsilon_k^{\theta^{n+1}(0)}=\epsilon_k^{\theta^m(0)}  &<\epsilon_k^{\theta^{m+1}(0)}  +u^{\theta^{m+1}(0)} -u^{\theta^m(0)}\\
    &<\epsilon_k^{\theta^{m+2}(0)}  +u^{\theta^{m+2}(0)} -\cancel{u^{\theta^{m+1}(0)}}+\cancel{u^{\theta^{m+1}(0)}} -u^{\theta^m(0)}\\
    &\vdots\\
    &<\epsilon_k^{\theta^{n}(0)}  +u^{\theta^{n}(0)} - u^{\theta^m(0)}.
\end{split}
\end{equation}
Note that (\ref{Ekic3}) contradicts the definition of $\theta^{n+1}(0)$. Then
$\theta^{n+1}(0)\notin\{0,\theta(0),\cdots, \theta^{n}(0)\}$ and \eqref{Ekic2} is satisfied for the next index $n+1$. It follows that  \eqref{Ekic2} holds for any $n\in \mathbb{N}$. The latter is impossible  because there are only $N+1$ different indices inside $\gN\cup\{0\}$. Thus, $\cap_{i=0}^N(E^i_k)^c\subseteq \cup_{i=0}^N(\overline{E_k^i}\cap (E_k^i)^c)$ and  $\mathbb{P}\left(\cap_{i=0}^N(E^i_k)^c\right)=0$. 

Combining steps (i) and (ii), we get that for any $\vu\in\R^{N+1}$ and $k\in\{b,s\}$, $ \sum_{i=0}^2T_k^i(\vu) = 1$.  Now, for $\vx = (x_{0b},x_{0s},x_{1b},x_{1s},\dots,x_{Nb},x_{Ns})\in[0,1]^{2(N+1)}$ and each $ i\in\gN\cup\{0\}$, we introduce the auxiliary functions $\phi_k^i(\vx)$ defined as 
\begin{equation*}
   \phi_k^i(\vx) =  \begin{cases}
    \phi_k^0  & \textnormal{ if } i=0 \\
       \phi_k(x_{ib},x_{is})  & \textnormal{ if } i\geq 1 
    \end{cases}.
\end{equation*}

Similarly, we define
$$\sigma_k^i(\vx) := T_k^i(v_k^0(\vx),\cdots,v_k^N(\vx)), $$
where $v_k^j(\vx)=\phi_k^j(\vx)-p_k^j$ ($p_k^0=0$).
If $\Sigma:[0,1]^{2(N+1)}\longrightarrow[0,1]^{2(N+1)}$ is defined by $$\Sigma(\vx)=(\sigma_b^0(\vx),\sigma_s^0(\vx)\cdots,\sigma_b^N(\vx),\sigma_s^N(\vx)),$$ then solving system (\ref{xki}) is equivalent to finding a fixed point of $\Sigma$, i.e., $\Sigma(\vx)=\vx$. Existence of such a fixed-point is guaranteed by Brouwer's Fixed Point Theorem, as $\Sigma$ is continuous on $[0,1]^{2(N+1)}$. For such a fixed point: $\sum_{i=0}^Nx_k^i = \sum_{i=0}^N\sigma_k^i(\vx) = \sum_{i=0}^NT_k^i(v_k^0(\vx),\cdots,v_k^N(\vx))=1 $.  

To show the uniqueness of the solution of (\ref{xki}), we use the Banach Fixed Point Theorem. Let $\vx,\vy\in [0,1]^{2(N+1)}$, then 
\begin{equation*}
    \begin{split}
        |\sigma_k^i(\vx)-\sigma_k^i(\vy)| &= |T_k^i(v_k^0(\vx),\cdots,v_k^N(\vx))-T_k^i(v_k^0(\vy),\cdots,v_k^N(\vy))|\\
        &\leq \max_j|v^j_k(\vx)-v^j_k(\vy)| \cdot \left( \sup_{\vu\in\R^{N+1}}\sum_{l=0}^N\left|\frac{\partial T_k^i(\vu)}{\partial u^l}\right| \right)\\
        & \leq \max_j|\phi^j_k(\vx)-\phi^j_k(\vy)|\cdot M_T \\
        & \leq M_\phi M_T |\vx-\vy|_\infty, \\
    \end{split}
\end{equation*}
where 
\begin{equation*}
    \begin{split}
        M_T &:=  \max_{k\in\{b,s\},i\in\gN\cup\{0\}}\sup_{\vu\in\R^{N+1}}\sum_{l=0}^N\left|\frac{\partial T_k^i(\vu)}{\partial u^l}\right|, \textnormal{ and}\\
         M_\phi &:= \max_{k\in\{b,s\}} \sup_{(x_b,x_s)\in[0,1]^2}\sum_{l\in \{b,s\}}\left|\frac{\partial \phi_k(x_b,x_s)}{\partial x_l}\right|.
    \end{split}
\end{equation*}

It follows that $\Sigma(\cdot)$ is a (strict) contracting mapping whenever $M_T M_\phi<1$, and uniqueness follows.

\end{proof}


{\bf Preliminary results for the proof of Proposition~\ref{prop:FOC_gumbel}.}
We introduce notation and definitions and establish a useful lemma. Let $\boldsymbol{X}=\left(x_{b}^{1},\cdots,x_{b}^{N},x_{s}^{1},\cdots,x_{s}^{N}\right)$ and   $\boldsymbol{P}=\left(p_{b}^{1},\cdots,p_{b}^{N},p_{s}^{1},\cdots,p_{s}^{N}\right)$ be two vectors in $\mathbb{R}^{2N}$. For $k\in\{b,s\}$, let $\tilde{\vu}_k := (u_{k}^{0}, u_k^1,\dots,u_{k}^{N}) $, where $u_{k}^{i}=\phi_{k}\left(x^i_b,x_s^i\right)-p_{k}^{i}$. Using \eqref{xki_Tki}, we can define a mapping from $\mathbb{R}^{4N}$ to $\mathbb{R}^{2N}$ as 
\begin{equation}
    (\boldsymbol{X}, \boldsymbol{P}) \mapsto \mathcal{T}\left(\boldsymbol{X},\boldsymbol{P}\right):=\left(T_{b}^{1}\left(\tilde{\vu}_b\right)-x_{b}^{1},\cdots,T_{b}^{N}\left(\tilde{\vu}_b\right)-x_{b}^{N},T_{s}^{1}\left(\tilde{\vu}_s\right)-x_{s}^{1},\cdots,T_{s}^{N}\left(\tilde{\vu}_s\right)-x_{s}^{N}\right).
\end{equation}
The Jacobian of \eqref{xki_Tki} w.r.t.~$\boldsymbol{P}$ is defined as
\begin{equation}\label{notation_prop32_jacobian}
    \begin{split}
    \textnormal{det}\frac{\partial\mathcal{T}}{\partial\boldsymbol{P}}\left(\boldsymbol{X},\boldsymbol{P}\right)&:=Q_{b}\left(\boldsymbol{X},\boldsymbol{P}\right)Q_{s}\left(\boldsymbol{X},\boldsymbol{P}\right),\ \textnormal{ where }\\
        Q_{k}\left(\boldsymbol{X},\boldsymbol{P}\right)&:=\left|\begin{array}{ccc}
-\frac{\partial T_{k}^{1}}{\partial u^{1}}(\tilde{\vu}_k) & \dots & -\frac{\partial T_{k}^{1}}{\partial u^{N}}(\tilde{\vu}_k)\\
\vdots & \ddots & \vdots\\
-\frac{\partial T_{k}^{N}}{\partial u^{1}}(\tilde{\vu}_k) & \dots & -\frac{\partial T_{k}^{N}}{\partial u^{N}}(\tilde{\vu}_k)
\end{array}\right|.
    \end{split}
\end{equation}

Under symmetry, for any $i\in\mathcal{N}$ and $k\in\{b, s\}$, we write $p_k^i = p_k$, $x_k^i = x_k$, and $u_k^i = u_k := \phi_k(x_b, x_s) - p_k$. Let $\vu_k := (u_k^0, \phi_k(x_b, x_s) - p_k, \cdots, \phi_k(x_b, x_s) - p_k)^T\in\sR^{N+1}$. For $i,j\in \gN$, $i\neq j$, $k\in\{b,s\}$, we define the functions
\begin{equation}\label{notation_prop32}
    \begin{split}
        & S_k(\vu_k) := \frac{\partial T_k^i}{\partial u^i_k}(\vu_k),\\
    & R_k(\vu_k):= \frac{\partial T_k^i}{\partial u^j_k}(\vu_k),\\
    & J_k(\vu_k) := S_k(\vu_k)\left(S_k(\vu_k) + (N-2)R_k(\vu_k)\right) - (N-1)R_k(\vu_k)^2, \ \text{ and }\\
    & J_\phi(\vu_b, \vu_s) := \left( \frac{\partial \phi_s}{\partial x_s} - \frac{1}{J_s(\vu_s)}S_s(\vu_s)\right)\left(\frac{\partial \phi_b}{\partial x_b} - \frac{1}{J_b(\vu_b)}S_b(\vu_b)\right) - \frac{\partial \phi_s}{\partial x_b}\frac{\partial \phi_b}{\partial x_s}.
    \end{split}
\end{equation}
Whenever there is no room for confusion, we simplify the notations by neglecting the explicit mention of the input $\vu_k$. For example, $T_k^i(\vu_k)$, $S_k(\vu_k)$, $R_k(\vu_k)$, $J_k(\vu_k)$ and $J_\phi(\vu_b, \vu_s)$ are simplified to $T_k^i$, $S_k$, $R_k$, $J_k$ and $J_\phi$ respectively. The following Lemma shows the first-order condition of (\ref{pi}) as a function of $x_k$.

\begin{lemma}[FOC of CNE]\label{lemma:foc_cneA}
If $\textnormal{det}\frac{\partial\mathcal{T}}{\partial\boldsymbol{P}}\left(\vx^\ast,\vp^\ast\right)\neq 0$, then the symmetric Nash equilibrium outputs $\vp^\ast$ and $\vx^\ast$ are solutions of (\ref{xki}) and of the following two equations
\begin{equation}\label{eqn:s_FOC_b}
\begin{split}
    &p_k + \frac{\partial \phi_k}{\partial x_k}x_k + \frac{\partial \phi_l}{\partial x_k}x_l - \frac{1}{J_k} \left(S_k + (N-2)R_k\right) x_k + \frac{1}{J_k^2J_l J_\phi}(N-1)R_k^2S_l x_k  \\
    &+\frac{N-1}{J_kJ_\phi}R_k\left(\frac{1}{J_l}R_l\frac{\partial \phi_l}{\partial x_k}x_l - \frac{1}{J_k}R_k\frac{\partial \phi_l}{\partial x_l}x_k\right)  = 0, \ \textnormal{ for }k,l\in\{b,s\},\ k\neq l.
    \end{split}
\end{equation}
\end{lemma}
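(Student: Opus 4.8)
The plan is to characterize the best response of a single deviating platform $i$ and then impose symmetry. Because the hypothesis $\det\frac{\partial\mathcal{T}}{\partial\boldsymbol{P}} = Q_b Q_s \neq 0$ makes the local price-to-share map invertible, the implicit function theorem lets me treat platform $i$ as choosing its own shares $(x_b^i, x_s^i)$ rather than its prices, with the induced price recovered from $p_k^i = \phi_k(\vx^i) - u_k^i$ and the induced utility $u_k^i$ pinned down by requiring the whole configuration to remain a fixed point of (\ref{xki_Tki}) with the competitors' prices held fixed. This is exactly the locally 1-1 correspondence between $\{(x_b^i,x_s^i)\}$ and $\{(p_b^i,p_s^i)\}$ mentioned in the footnote. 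Writing $\pi^i = x_b^i p_b^i + x_s^i p_s^i$ and imposing stationarity $\partial \pi^i/\partial x_k^i = 0$ gives, for $k \in \{b,s\}$,
\begin{equation*}
p_k^i + x_b^i \frac{\partial p_b^i}{\partial x_k^i} + x_s^i \frac{\partial p_s^i}{\partial x_k^i} = 0.
\end{equation*}

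Next I would split each sensitivity as $\partial p_m^i/\partial x_k^i = \partial \phi_m/\partial x_k - \partial u_m^i/\partial x_k^i$. The explicit externality part contributes the leading terms $p_k + \frac{\partial \phi_k}{\partial x_k}x_k + \frac{\partial \phi_l}{\partial x_k}x_l$ of (\ref{eqn:s_FOC_b}) after evaluation at the symmetric point (using $x_b \frac{\partial\phi_b}{\partial x_k} + x_s \frac{\partial\phi_s}{\partial x_k}$). The remaining, implicit part $\partial u_m^i/\partial x_k^i$ must be obtained by differentiating the full market-share system and inverting its Jacobian; this is what produces all of the terms carrying $S_k$, $R_k$, $J_k$ and $J_\phi$.

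The heart of the argument is this inversion. Differentiating $x_k^j = T_k^j(\tilde{\vu}_k)$ over all $j \in \mathcal{N}$ and $k \in \{b,s\}$, and using that at the symmetric equilibrium $\partial T_k^j/\partial u^l$ equals $S_k$ when $l=j$ and $R_k$ when $l \neq j$, the platform-index block on each side takes the rank-one-plus-scalar form $(S_k - R_k) I + R_k \mathbf{1}\mathbf{1}^\T$. I would invert it by the Sherman--Morrison identity, which introduces the factor $J_k = (S_k - R_k)(S_k + (N-1)R_k) = S_k(S_k + (N-2)R_k) - (N-1)R_k^2$ in the denominators. The two sides are coupled through the externality derivatives $\partial \phi_k/\partial x_l$; eliminating that coupling is a $2 \times 2$ Schur-complement step that produces $J_\phi$. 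Substituting the inverted sensitivities back into the stationarity condition and collecting terms yields (\ref{eqn:s_FOC_b}).

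The main obstacle is precisely this structured inversion: one must carry out the Sherman--Morrison reduction in the platform index and the Schur-complement reduction across the two market sides simultaneously, tracking which contributions survive under symmetry so that the coefficients organize exactly into the combinations $\frac{1}{J_k}(S_k + (N-2)R_k)$, $\frac{(N-1)R_k^2 S_l}{J_k^2 J_l J_\phi}$ and the cross terms $\frac{N-1}{J_k J_\phi}R_k$ of (\ref{eqn:s_FOC_b}). The reparametrization, the verification that the implicit function theorem applies, and the final term collection are all routine once this Jacobian has been inverted in closed form.
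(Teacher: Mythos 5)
Your proposal follows essentially the same route as the paper's proof: a unilateral deviation in shares justified by the implicit function theorem under $\det\frac{\partial\mathcal{T}}{\partial\boldsymbol{P}}\neq 0$, differentiation of the fixed-point system $x_k^j=T_k^j(\tilde{\vu}_k)$, symmetric reduction via $S_k$, $R_k$, and substitution of the inverted sensitivities into the stationarity condition; your Sherman--Morrison/Schur-complement packaging is just a cleaner organization of the linear system the paper solves directly after reducing it to eight equations, and your factorization $J_k=(S_k-R_k)\left(S_k+(N-1)R_k\right)=S_k\left(S_k+(N-2)R_k\right)-(N-1)R_k^2$ is algebraically identical to the paper's $J_k$.
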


The proof of this Lemma does not require assumptions I and II of Section \ref{sect:platform_competition}. Thus, the FOC given by (\ref{eqn:s_FOC_b}) is applicable to idiosyncratic preferences other than Gumbel distribution and to more general externality functions $\phi_k(\vx)$.

\begin{proof}[Proof of Lemma~\ref{lemma:foc_cneA}]



Assume that all platforms follow a symmetric equilibrium where $\vp^i =\vp^\ast= (p_b^\ast, p_s^\ast)$ and $\vx^i =\vx^\ast= (x_b^\ast, x_s^\ast)$. We show that unilateral deviations from this strategy lead to zero gain. Without loss of generality, we assume that the first platform deviates from the symmetric setting. This platform can deviate by either choosing prices $p_k^1\neq p_k^\ast$ or market shares $x_k^1\neq x_k^\ast$. Suppose that $\textnormal{det}\frac{\partial\mathcal{T}}{\partial\boldsymbol{P}}\left(\boldsymbol{X}^\ast,\boldsymbol{P}^\ast\right)\neq 0$, where $$\boldsymbol{X}^\ast = (x_{b}^\ast,\cdots,x_{b}^\ast,x_{s}^\ast,\cdots,x_{s}^\ast)\ \textnormal{ and } \boldsymbol{P}^\ast = (p_{b}^\ast,\cdots,p_{b}^\ast,p_{s}^\ast,\cdots,p_{s}^\ast)$$ belong to $\mathbb{R}^{2N}$. Then, by the Implicit Function Theorem, there exists a neighborhood $B$ of $\boldsymbol{X}^\ast$ in $\mathbb{R}^{2N}$ and a unique differentiable function $\mathcal{P}:B\longrightarrow\mathbb{R}^{2N}$ such that $\mathcal{P}\left(\boldsymbol{X}^{*}\right)=\boldsymbol{P}^{*}$ and 
\begin{equation}\label{eqn:IFT}
    \mathcal{T}\left(\boldsymbol{X},\mathcal{P}\left(\boldsymbol{X}\right)\right)=0\textnormal{ for all }\boldsymbol{X}\in B.
\end{equation} 
From \eqref{pi} and \eqref{eqn:IFT}, we can compute the FOC for this platform w.r.t.~$x_k^1$ as
\begin{align}
    \frac{\partial \pi^1}{\partial x^1_k} \Big|_{\vp^i = \vp^\ast, \vx^i = \vx^\ast, \textnormal{ for } i\neq 1} & = p^1_k + x^1_k\frac{\partial p^1_k}{\partial x_k^1} + x^1_l\frac{\partial p^1_l}{\partial x_k^1} = 0,  \ \textnormal{ for each }k,l \in\{b,s\},\ k\neq l. \label{FOC_k}
\end{align}

To solve \eqref{FOC_k}, we need to compute the following derivatives
\begin{equation}
\frac{\partial p^1_k}{\partial x^1_l},\ \textnormal{ for each }k,l \in\{b,s\}.
\label{dpdx}
\end{equation}
We determine those four partial derivatives $\frac{\partial\vp^1}{\partial \vx^1}$ in \eqref{dpdx} using the definition of $T_k$ in \eqref{def:Qki}. By \eqref{xki_Tki}, for $k\in\{b,s\}$, the vectors of market shares and prices, $(x_k^1,x_k^\ast,\dots,x_k^\ast)$ and $(p_k^1,p_k^\ast,\dots,p_k^\ast)$ satisfy all the following
\begin{align}
    T^1_k(u^0_k, u^1_k, u^2_k, \dots, u^N_k) &= x_k^1 \  \textnormal{ and} \label{eqn:proof_T_functions_1}\\
    T^i_k(u^0_k, u^1_k, u^2_k, \dots, u^N_k) &= x_k^i, \ \text{ for } i\in\{2, 3, \cdots, N\}.
    \label{eqn:proof_T_functions_i}
\end{align}
Taking derivatives w.r.t. $x_b^1$ and $x_s^1$ in \eqref{eqn:proof_T_functions_1} and \eqref{eqn:proof_T_functions_i}, respectively, gives us
\begin{align}    
\frac{\partial T_k^1(u^0_k, u^1_k, u^2_k, \dots, u^N_k)}{\partial x_l^1} &= \delta_{kl} \ \textnormal{ and } \label{T1_eq1}\\
\frac{\partial T_k^i(u^0_k, u^1_k, u^2_k, \dots, u^N_k)}{\partial x_l^1} &= \frac{\partial x_k^i}{\partial x_l^1}, \text{ for } i\in\{2, 3, \cdots, N\},\ k,l \in\{b,s\}, \label{T2_eq1}
\end{align}
where $\delta_{kl}=1$ if $k=l$ and $\delta_{kl}=0$ if $k\neq l$. Note that the system of equations in\eqref{T1_eq1} and \eqref{T2_eq1} includes $4+4(N-1)=4N$ equations. The unknowns are $\frac{\partial p_k^1}{\partial x_l^1}$ and $\frac{\partial x_k^i}{\partial x_l^1}$, for $k, l \in \{ b, s\}$ and $i\in \{2, 3,\dots, N\}$, adding up to $4N$ unknowns. 




Note that $u_k^0$ is a constant and it does not depend on $x_k^1$. By the chain rule, the left hand side of \eqref{T1_eq1} and \eqref{T2_eq1} can be rewritten as
\begin{equation}
\frac{\partial T_k^j(u^0_k, u^1_k, u^2_k, \dots, u^N_k)}{\partial u_1} \frac{\partial u_k^1}{\partial x_l^1} + \sum_{i=2}^N\frac{\partial T_k^j(u^0_k, u^1_k, u^2_k,\dots, u^N_k)}{\partial u_i} \frac{\partial u_k^i}{\partial x_l^1}, \quad \textnormal{ for } k,l \in \{b,s\},\ j\in\{1,2,\dots, N\}.
   \label{pTpx}
\end{equation}
Recall that 
$
u_k^1 = \phi_k(\vx^1) - p^1_k,
$
then, we can explicitly write 
\begin{align}    
\frac{\partial u_k^1}{\partial x_l^1} & = -\frac{\partial p_k^1}{\partial x_l^1} + \frac{\partial \phi_k}{\partial x_l} , \quad \textnormal{ for } k,l \in \{b,s\}.\label{u1_eq1}
\end{align}
On the other hand, for $i\in \{2, 3, \cdots, N\}$, 
$
u_k^i = \phi_k(\vx^i) - p_k^\ast
$. Thus, 
\begin{equation}    
\frac{\partial u_k^i}{\partial x_l^1} = \frac{\partial \phi_k}{\partial x_b}\frac{\partial x_b^i}{\partial x_l^1} + 
\frac{\partial \phi_k}{\partial x_s}\frac{\partial x_s^i}{\partial x_l^1}, \quad \textnormal{ for } k,l \in \{b,s\},\ i\in\{2,\dots, N\}.
\label{pu2px}
\end{equation}
Plugging \eqref{pTpx},  \eqref{u1_eq1} and \eqref{pu2px} into \eqref{T1_eq1} and \eqref{T2_eq1}, we obtain for $k,l\in\{b,s\}$ and $j\in\{2,3,\cdots, N\}$,
\begin{align}    
\frac{\partial T_k^1}{\partial u_1} \left(-\frac{\partial p_k^1}{\partial x_l^1} + \frac{\partial \phi_k}{\partial x_l}\right) + \sum_{i=2}^N\frac{\partial T_k^1}{\partial u_i} \left(\frac{\partial \phi_k}{\partial x_b}\frac{\partial x_b^i}{\partial x_l^1} + 
\frac{\partial \phi_k}{\partial x_s}\frac{\partial x_s^i}{\partial x_l^1}\right) &= \delta_{kl}, \label{pT1bpxb}\\
\frac{\partial T_k^j}{\partial u_1} \left(-\frac{\partial p_k^1}{\partial x_l^1} + \frac{\partial \phi_k}{\partial x_l}\right) + \sum_{i=2}^N\frac{\partial T_k^j}{\partial u_i} \left(\frac{\partial \phi_k}{\partial x_b}\frac{\partial x_b^i}{\partial x_l^1} + 
\frac{\partial \phi_k}{\partial x_s}\frac{\partial x_s^i}{\partial x_l^1}\right) &= \frac{\partial x^j_k}{\partial x_l^1}, \label{pT2bpxb}
\end{align}

There are $4N$ equations with $4N$ variables in the above system. Before solving the system, we want to apply the property of symmetry at equilibrium, where we denote
\begin{equation}
\begin{split}
    \frac{\partial x^j_k}{\partial x^1_l} &=: y_{k,l}, \ \text{ for } j \in \{2, 3,\cdots, N\} \textnormal{ and} \\
    \frac{\partial T_k^i}{\partial u_j} &=: \left\{ 
\begin{array}{cc}
S_k & i = j\\
R_k & i\neq j
\end{array}
\right. , \ \text{ for } i, j \in \{1,\dots,N\}. \label{eqn_y_RS}
\end{split}
\end{equation}
Incorporating \eqref{eqn_y_RS}, we can further reduce the system described by (\ref{pT1bpxb}) and (\ref{pT2bpxb}) into 8 equations with unknowns: $\frac{\partial p_k^1}{\partial x_l^1}$ and $y_{k,l}$, for $k, l \in\{ b, s\}$. The new system is, for $k,l\in\{b,s\}$, 

\begin{align}    
S_k \left(-\frac{\partial p_k^1}{\partial x_l^1} + \frac{\partial \phi_k}{\partial x_l}\right) + \left(N-1\right)R_k \left(\frac{\partial \phi_k}{\partial x_b}y_{b,l} + 
\frac{\partial \phi_k}{\partial x_s}y_{s,l}\right) &= \delta_{kl} \ \textnormal{ and} \label{A_1}\\
R_k \left(-\frac{\partial p_k^1}{\partial x_l^1} + \frac{\partial \phi_k}{\partial x_l}\right) + \left(S_k + \left(N-2\right)R_k\right) \left(\frac{\partial \phi_k}{\partial x_b}y_{b, l} + 
\frac{\partial \phi_k}{\partial x_s}y_{s, l}\right) &= y_{k, l}. \label{A_3}
\end{align}

To solve the above equations, we notice that there are two groups of equations: (A) four equations associated to derivatives w.r.t.~$x_b^1$; (B) four equations associated to derivatives w.r.t.~$x_s^1$. Thus, we solve two linear systems, each one containing four equations and four unknowns. 


Using \eqref{notation_prop32},  we conclude that when $J_k\neq 0$ for any $k\in\{b,s\}$ and $J_\phi\neq 0$,
\begin{align}
    y_{b, b} & = -\frac{1}{J_b J_\phi}R_b\left(\frac{\partial \phi_s}{\partial x_s} -\frac{S_s}{J_s}\right),\label{sln_pxbpxb}\\
    y_{s, b} & = \frac{1}{J_b J_\phi}R_b\frac{\partial \phi_s}{\partial x_b},\label{sln_pxspxb}\\
    y_{b, s} & = \frac{1}{J_s J_\phi}R_s\frac{\partial \phi_b}{\partial x_s},\ \textnormal{ and} \label{sln_pxbpxs}\\
    y_{s, s} & = -\frac{1}{J_s J_\phi}R_s\left(\frac{\partial \phi_b}{\partial x_b} -\frac{S_b}{J_b}\right).\label{sln_pxspxs}
\end{align}
Moreover, it follows that
\begin{align}
    \frac{\partial p_b^1}{\partial x_b^1} & = \frac{\partial \phi_b}{\partial x_b} - \frac{1}{J_b}(S_b + (N-2)R_b) + \frac{N-1}{J_b}R_b\left(-\frac{1}{J_b J_\phi}R_b\left(\frac{\partial \phi_s}{\partial x_s} -\frac{1}{J_s}S_s\right)\right),\label{sln_ppbpxb}\\
    \frac{\partial p_s^1}{\partial x_b^1} & = \frac{\partial \phi_s}{\partial x_b} + \frac{N-1}{J_s}R_s\left(\frac{1}{J_b J_\phi}R_b\frac{\partial\phi_s}{\partial x_b}\right),\label{sln_ppspxb}\\
    \frac{\partial p_b^1}{\partial x_s^1} & =  \frac{\partial \phi_b}{\partial x_s} + \frac{N-1}{J_b}R_b\left(\frac{1}{J_s J_\phi}R_s\frac{\partial\phi_b}{\partial x_s}\right),\ \textnormal{ and} \label{sln_ppbpxs}\\
    \frac{\partial p_s^1}{\partial x_s^1} & = \frac{\partial \phi_s}{\partial x_s} - \frac{1}{J_s}\left(S_s + (N-2)R_s\right) + \frac{N-1}{J_s}R_s\left(-\frac{1}{J_s J_\phi}R_s\left(\frac{\partial \phi_b}{\partial x_b} -\frac{1}{J_b}S_b\right)\right).\label{sln_ppspxs}
\end{align}

Finally, we can plug in \eqref{sln_ppbpxb}, \eqref{sln_ppbpxs}, \eqref{sln_ppspxb} and \eqref{sln_ppspxs} into the FOC \eqref{FOC_k} to obtain
\begin{equation}\label{eqn:k_FOC_k}
\begin{split}
    &p_k + \frac{\partial \phi_k}{\partial x_k}x_k + \frac{\partial \phi_l}{\partial x_k}x_l - \frac{1}{J_k} \left(S_k + (N-2)R_k\right) x_k + \frac{1}{J_k^2J_l J_\phi}(N-1)R_k^2S_l x_k  \\
    &+\frac{N-1}{J_kJ_\phi}R_k\left(\frac{1}{J_l}R_l\frac{\partial \phi_l}{\partial x_k}x_l - \frac{1}{J_k}R_k\frac{\partial \phi_l}{\partial x_l}x_k\right)  = 0, \ \textnormal{ for }k,l\in\{b,s\},\ k\neq l.
    \end{split}
\end{equation}
    
\end{proof}

Lemma~\ref{lemma:foc_cneA} is general enough to accommodate idiosyncratic preferences other than Gumbel distribution and general externality functions $\phi_k(\vx)$. Next, we use Assumptions I and II from Section \ref{sect:platform_competition} and Lemma~\ref{lemma:foc_cneA} to prove Proposition~\ref{prop:FOC_gumbel}.

\begin{proof}[{\bf Proof of Proposition \ref{prop:FOC_gumbel}}]

We want to rewrite the FOC given by \eqref{eqn:s_FOC_b} using Assumptions I and II from Section \ref{sect:platform_competition}. Applying Gumbel distribution, we can derive specific forms for the functions $T_k^i$, $S_k$, and $R_k$ as defined in \eqref{notation_prop32}. By Assumption I, $\{\varepsilon_k^i\}_{k\in\{b,s\},  i\in\gN\cup\{0\}}$ are i.i.d.~Gumbel distributed with distribution
\begin{equation}\label{eqn:cdf_gumbel_proof}
    F_k(z) = e^{-e^{\frac{\mu_k -z}{\beta_k }}}.
\end{equation}
For any $i\in\gN$, the random variable $Y^i:=\epsilon_{k}^{i}+u_{k}^{i}-\max_{j=0,1,\dots,N,j\neq i}\left\{ \epsilon_{k}^{j}+u_{k}^{j}\right\}$ has a logistic distribution, 

$$Y^i \sim \textrm{Logistic}(u_{k}^{i}-\beta_{k}\ln\alpha_{i},\beta_{k}), \ \textnormal{ where } \ \alpha_i:=\sum_{j=0,1,\dots,N,j\neq i}e^{\frac{u_{k}^{j}}{\beta_{k}}}. $$
By \eqref{def:Qki}, we can explicitly write $ T^i_k $ as follows
\begin{align}
    T^i_k(u^0_k, u^1_k, \dots, u^N_k) & = \mathbb{P}\left(\epsilon_{k}^{i}+u_{k}^{i}\geq\max_{j=0,1,...N,j\neq i}\left\{ \epsilon_{k}^{j}+u_{k}^{j}\right\} \right) \notag\\
    & = 1 - F_{Y^i}(0) 
     = 1 - \frac{1}{1 + e^{\left(u^i_k - \beta_k \ln( \alpha_i ) \right)/\beta_k}}.\label{eqn:T_res_gumbel}
\end{align}
 The derivatives of $T_k^i$ can be calculated as 
\begin{align}
    \frac{\partial T^i_k(u^0_k, u^1_k, \dots, u^N_k)}{\partial u^i} & = \frac{\frac{1}{\beta_k} e^{\left(u^i_k - \beta_k \ln( \alpha_i)\right)/\beta_k}}{\left(1 + e^{\left(u^i_k - \beta_k \ln(\alpha_i)\right)/\beta_k}\right)^2} \ \textnormal{ and} \label{eqn:S_k_res_gumbel}\\
    \frac{\partial T^i_k(u^0_k, u^1_k, \dots, u^N_k)}{\partial u^j} & =  \frac{- e^{\left(u^i_k - \beta_k \ln(\alpha_i)\right)/\beta_k} \cdot \frac{\frac{1}{\beta_k}e^{u^j_k/\beta_k}}{\alpha_i} }{\left(1 + e^{\left(u^i_k - \beta_k \ln(\alpha_i)\right)/\beta_k}\right)^2}, \quad \textnormal{ for } j\neq i. \label{eqn:R_k_res_gumbel}
\end{align}

At a symmetric equilibrium,   $\vu^i = \vu = (u_b, u_s)^T, $ for any $i\in\gN$. Then, we can further simplify \eqref{eqn:T_res_gumbel}, \eqref{eqn:S_k_res_gumbel} and \eqref{eqn:R_k_res_gumbel},
\begin{align}
T^i_k & = \frac{e^{u_k/\beta_k}}{e^{u^0_k/\beta_k} + Ne^{u_k/\beta_k}},\label{eqn:T}\\
S_k & = \frac{1}{\beta_k} \frac{e^{u_k/\beta_k}(e^{u^0_k/\beta_k} + (N-1)e^{u_k/\beta_k})}{\left(e^{u^0_k/\beta_k} + Ne^{u_k/\beta_k}\right)^2} \ \textnormal{ and} \label{eqn:Sk}\\
R_k & = -\frac{1}{\beta_k} \frac{e^{2u_k/\beta_k}}{\left(e^{u^0_k/\beta_k} + N e^{u_k/\beta_k}\right)^2}. \label{eqn:Rk}
\end{align}

Using \eqref{def:zk}, we derive the dependence of $\vx$ and $\vp$ on $\vz$ under the symmetric setting as 
\begin{equation}
    x_k^i = T_k^i(\vu) = \frac{e^{u_k/\beta_k}}{e^{u^0_k/\beta_k} + Ne^{u_k/\beta_k}} 
     = \frac{1}{e^{(u_k^0 - u_k)/\beta_k} + N} 
     = \frac{1}{e^{-z_k} + N} =: \omega(z_k). \label{eqn:proof_xki_omega}
\end{equation}
Denoting $\Omega(\vz) := (\omega(z_b), \omega(z_s))^T$, we obtain $\vx = \Omega(\vz)$ in the symmetric equilibrium. Moreover, we can write $\vu - \vu_0 = \vbeta\vz$, $\vu = \Phi \vx - \vp$ and 
\begin{align}\label{eqn:relationship_xp_z}
    \vp & = \Phi \Omega(\vz) - \vbeta \vz - \vu_0.
\end{align}
At this point, we want to use Lemma~\ref{lemma:foc_cneA} to rewrite \eqref{eqn:s_FOC_b} using Assumptions I and II. First, we verify that $\textnormal{det}\frac{\partial\mathcal{T}}{\partial\boldsymbol{P}}\left(\boldsymbol{X},\boldsymbol{P}\right)\neq 0$, see \eqref{notation_prop32_jacobian}, for any pair of symmetric vectors 
$$\boldsymbol{X}=\left(x_{b},\cdots,x_{b},x_{s},\cdots,x_{s}\right) \ \textnormal{and }\ \boldsymbol{P}=\left(p_{b},\cdots,p_{b},p_{s},\cdots,p_{s}\right).$$
Using \eqref{eqn:S_k_res_gumbel}, \eqref{eqn:R_k_res_gumbel} and \eqref{def:zk} into \eqref{notation_prop32_jacobian}, we obtain 
\begin{equation}
\begin{split}
    Q_{k}\left(\boldsymbol{X},\boldsymbol{P}\right)&=\left|\begin{array}{ccc}
-\frac{\partial T_{k}^{1}}{\partial u^{1}} & \dots & -\frac{\partial T_{k}^{1}}{\partial u^{N}}\\
\vdots & \ddots & \vdots\\
-\frac{\partial T_{k}^{N}}{\partial u^{1}} & \dots & -\frac{\partial T_{k}^{N}}{\partial u^{N}}
\end{array}\right|\\
&=\frac{1}{\beta_{k}^{N}}\frac{e^{Nz_{k}}\left(1+\left(N-1\right)e^{z_{k}}\right)^{N}}{\left(1+Ne^{z_{k}}\right)^{2N}}\left|\begin{array}{cccc}
-1 & \frac{e^{z_{k}}}{1+\left(N-1\right)e^{z_{k}}} & \dots & \frac{e^{z_{k}}}{1+\left(N-1\right)e^{z_{k}}}\\
\vdots & -1 & \ddots & \vdots\\
\frac{e^{z_{k}}}{1+\left(N-1\right)e^{z_{k}}} & \frac{e^{z_{k}}}{1+\left(N-1\right)e^{z_{k}}} & \dots & -1
\end{array}\right|\\
& =\frac{1}{\beta_{k}^{N}}\frac{e^{Nz_{k}}\left(-1\right)^{N}}{\left(1+Ne^{z_{k}}\right)^{N+1}}\neq0. 
\end{split}
\end{equation}

We can now apply Lemma~\ref{lemma:foc_cneA} and equations \eqref{eqn:T} to \eqref{eqn:relationship_xp_z} into \eqref{eqn:s_FOC_b} to obtain
\begin{equation}
\vbeta \vz=\left(\Phi -H(\vz)\right)\Omega(\vz) - \vu_0 ,\label{FOCs_z_proof2}
\end{equation}
where $\vu_0=(u_b^0,u_s^0)$, $H(\vz)$ is a $2\times2$ matrix defined as 
\begin{equation}
    \label{def:H_matrix_proof}
  H(\vz):= \left[\begin{matrix}
      L_bd_b K_s + h_b-\phi_{bb}  & -\phi_{sb} (d_s L_b + 1) \\
      -\phi_{bs} (d_b L_s + 1) &L_sd_s K_b + h_s -\phi_{ss} 
  \end{matrix}\right],
\end{equation}
and, for any $k\in\{b,s\}$, $L_k$, $d_k$ and $h_k$ are functions depending on $z_k$ as
\begin{equation}
\begin{split}
L_k & = \frac{(N-1)\beta_k}{J_\phi}(1 + Ne^{z_k}),\\
d_k & = \beta_k(1 + N e^{z_k}),\\
h_k &= \beta_k(1 + e^{z_k}) (e^{-z_k} + N),  
\end{split}\label{eqn_z_Lk_proof}
\end{equation}
and $J_\phi$ is a function of $z_b$ and $z_s$ as
\begin{equation}
\begin{split}
    K_k & = \phi_{kk} - \beta_k (1+Ne^{z_k})(e^{-z_k} + N - 1), \ \text{ for } k \in\{b,s\}, \\
J_\phi &= K_bK_s - \phi_{sb}\phi_{bs}.
\end{split}\label{eqn:Jphi}
\end{equation}

Denoting $\vz^\ast$ to be the solution to \eqref{FOCs_z_proof2} and using \eqref{eqn:proof_xki_omega} and  \eqref{eqn:relationship_xp_z}, we conclude the proposition by noting that the symmetric equilibrium solution of $\eqref{pi}$ is given by $\vx^\ast = \Omega(\vz^\ast)$ and $\vp^\ast = \Phi \Omega(\vz^\ast) - \vbeta\vz^\ast -\vu^0$.
\end{proof}

{\bf Preliminary results for the proof of Proposition~\ref{prop:existence_gumbel}.}
We introduce notation and definitions and establish a useful lemma. Let $j\in\{1,\cdots,N\}$, $k,l$ and $m\in \{b,s\}$, $u_{k,l}:=\frac{\partial\phi_{k}}{\partial x_{b}}y_{b,l}+\frac{\partial\phi_{k}}{\partial x_{s}}y_{s,l}$ where $y_{k,l}$ is given by \eqref{eqn_y_RS}. Moreover, 
\begin{equation}\label{Uklmj}
    \begin{split}
        U_{k,lm}^{j}:= &\frac{\partial}{\partial x_{m}^{1}}\left[\frac{\partial T_{k}^{j}}{\partial u_{1}}\right]\left(-\frac{\partial p_{k}^{1}}{\partial x_{l}^{1}}+\frac{\partial\phi_{k}}{\partial x_{l}}\right)+\sum_{i=2}^{N}\frac{\partial}{\partial x_{m}^{1}}\left[\frac{\partial T_{k}^{j}}{\partial u_{i}}\right]u_{k,l}+
        \\ & \sum_{i=2}^{N}\frac{\partial T_{k}^{j}}{\partial u_{i}}\left(\frac{\partial^{2}\phi_{k}}{\partial x_{m}\partial x_{b}}\frac{\partial x_{b}^{i}}{\partial x_{l}^{1}}+\frac{\partial^{2}\phi_{k}}{\partial x_{m}\partial x_{s}}\frac{\partial x_{s}^{i}}{\partial x_{l}^{1}}\right), 
    \end{split}
\end{equation}
where the derivatives of $T_k^j$ are evaluated at $(u_{k}^{0},u_{k}^{1},\cdots,u_{k}^{N})$, $u_{k}^{1}:=\phi_{k}(\boldsymbol{x}^{1})-p_{k}^{1}$ and  $u_{k}^{i}:=\phi_{k}(\boldsymbol{x}^{i})-p_{k}^{*}$ for $i\geq2$. The following Lemma shows the second-order condition of \eqref{pi} as a function of $x_k$. 

\begin{lemma}[SOC of CNE]\label{lemma:soc_cneA}
The second-order condition of \eqref{pi} is given by 
\begin{equation}\label{eqn:s_SOC_b}
   D_{(x_b^1,x_s^1)}^{2}\pi^{1}=\left[\begin{array}{cc}
2\frac{\partial p_{b}^{1}}{\partial x_{b}^{1}}+x_{b}^{1}\frac{\partial^{2}p_{b}^{1}}{\partial\left(x_{b}^{1}\right)^{2}}+x_{s}^{1}\frac{\partial^{2}p_{s}^{1}}{\partial\left(x_{b}^{1}\right)^{2}} & \left(\frac{\partial p_{b}^{1}}{\partial x_{s}^{1}}+\frac{\partial p_{s}^{1}}{\partial x_{b}^{1}}\right)+x_{b}^{1}\frac{\partial^{2}p_{b}^{1}}{\partial x_{s}^{1}\partial x_{b}^{1}}+x_{s}^{1}\frac{\partial^{2}p_{s}^{1}}{\partial x_{s}^{1}\partial x_{b}^{1}}\\
\left(\frac{\partial p_{s}^{1}}{\partial x_{b}^{1}}+\frac{\partial p_{b}^{1}}{\partial x_{s}^{1}}\right)+x_{b}^{1}\frac{\partial^{2}p_{b}^{1}}{\partial x_{b}^{1}\partial x_{s}^{1}}+x_{s}^{1}\frac{\partial^{2}p_{s}^{1}}{\partial x_{b}^{1}\partial x_{s}^{1}} & 2\frac{\partial p_{s}^{1}}{\partial x_{s}^{1}}+x_{b}^{1}\frac{\partial^{2}p_{b}^{1}}{\partial\left(x_{s}^{1}\right)^{2}}+x_{s}^{1}\frac{\partial^{2}p_{s}^{1}}{\partial\left(x_{s}^{1}\right)^{2}}
\end{array}\right],
\end{equation}
where for $k,m$ and $l\in\left\{ b,s\right\}$, $\frac{\partial p_k^1}{\partial x_l^1}$ is given by \eqref{sln_ppbpxb}-\eqref{sln_ppspxs}, 
\begin{equation}\label{d2pkxm1xl1}
    \frac{\partial^{2}p_{k}^{1}}{\partial x_{m}^{1}\partial x_{l}^{1}}=\frac{\partial^{2}\phi_{k}}{\partial x_{m}\partial x_{l}}+\frac{1}{J_{k}}\left(\left(S_{k}+\left(N-2\right)R_{k}\right)U_{k,lm}^{1}+\left(x_{k,ml}-U_{k,lm}\right)\left(N-1\right)R_{k}\right),
\end{equation}
and $R_k$, $S_k$ and $J_k$ are given by \eqref{notation_prop32}. Moreover, 
\begin{equation}\label{d2xjkxm1xl1}
    \left[\begin{array}{c}
x_{b,ml}\\
x_{s,ml}
\end{array}\right]=\frac{1}{J_{\phi}}\left[\begin{array}{cc}
\left(\frac{\partial\phi_{s}}{\partial x_{s}}-\frac{1}{J_{s}}S_{s}\right) & -\frac{\partial\phi_{b}}{\partial x_{s}}\\
-\frac{\partial\phi_{s}}{\partial x_{b}} & \left(\frac{\partial\phi_{b}}{\partial x_{b}}-\frac{1}{J_{b}}S_{b}\right)
\end{array}\right]\left[\begin{array}{c}
\frac{1}{J_{b}}\left(R_{b}U_{b,lm}^{1}-S_{b}U_{b,lm}\right)\\
\frac{1}{J_{s}}\left(R_{s}U_{s,lm}^{1}-S_{s}U_{s,lm}\right)
\end{array}\right],
\end{equation}
where $U_{k,lm}^{1}$ and $U_{k,lm}:=U_{k,lm}^{j}$ for $j\geq 2$ are given by \eqref{Uklmj}.
\end{lemma}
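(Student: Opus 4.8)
The plan is to mirror the proof of Lemma~\ref{lemma:foc_cneA}, but carried out one differentiation order higher. First I would obtain the matrix \eqref{eqn:s_SOC_b} by a direct application of the product rule to the deviating platform's profit $\pi^1 = x_b^1 p_b^1 + x_s^1 p_s^1$, regarded (via the Implicit Function Theorem, exactly as in Lemma~\ref{lemma:foc_cneA}) as a function of its own market shares $(x_b^1, x_s^1)$. Differentiating $\partial \pi^1/\partial x_k^1 = p_k^1 + x_b^1\,\partial p_b^1/\partial x_k^1 + x_s^1\,\partial p_s^1/\partial x_k^1$ once more in $x_m^1$, and using equality of mixed partials, immediately produces each of the four entries of \eqref{eqn:s_SOC_b}. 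This step is purely mechanical and reduces the whole lemma to computing the price Hessians $\partial^2 p_k^1/\partial x_m^1 \partial x_l^1$.

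To compute these, I would differentiate the implicit identities \eqref{pT1bpxb}--\eqref{pT2bpxb}, which already determine the first-order quantities $\partial p_k^1/\partial x_l^1$ and $\partial x_k^i/\partial x_l^1$, a second time with respect to $x_m^1$. The right-hand sides $\delta_{kl}$ and $\partial x_k^j/\partial x_l^1$ differentiate to $0$ and to $\partial^2 x_k^j/\partial x_m^1 \partial x_l^1$, respectively. On the left, the product and chain rules split the result into (a) terms carrying the new second-order unknowns $\partial^2 p_k^1/\partial x_m^1 \partial x_l^1$ and $\partial^2 x_k^i/\partial x_m^1 \partial x_l^1$, and (b) terms built only from already-known first-order data, namely the derivatives of $T_k^j$, the first derivatives $\partial p_k^1/\partial x_l^1$ and $\partial x_k^i/\partial x_l^1$, and the Hessian of $\phi_k$. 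I would collect the latter precisely into the quantity $U_{k,lm}^j$ defined in \eqref{Uklmj}.

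Next I would impose symmetry at equilibrium. As in \eqref{eqn_y_RS}, the first derivatives $\partial T_k^i/\partial u_j$ collapse to $S_k$ (diagonal) and $R_k$ (off-diagonal), the common second-order share derivative $\partial^2 x_k^i/\partial x_m^1 \partial x_l^1$ for $i\geq 2$ becomes a single unknown $x_{k,ml}$, and $U_{k,lm}^j$ takes just the two values $U_{k,lm}^1$ ($j=1$) and $U_{k,lm}$ ($j\geq 2$). The $4N$ differentiated identities then reduce to the same constant-size linear system that appeared in \eqref{A_1}--\eqref{A_3}, with $\partial^2 \phi_k/\partial x_m \partial x_l$ now playing the role of $\partial \phi_k/\partial x_l$, with $x_{k,ml}$ in the role of $y_{k,l}$, and with the extra inhomogeneous $U$-terms on the right. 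Because the coefficient matrix is identical to the first-order one, I can reuse the inversions already performed there, i.e., the $2\times 2$ block inverse governed by $J_\phi$ and the scalar combinations involving $J_k$, $S_k$, $R_k$, to read off \eqref{d2xjkxm1xl1} for $x_{k,ml}$ and then \eqref{d2pkxm1xl1} for the price Hessian, all valid whenever $J_k \neq 0$ and $J_\phi \neq 0$, as inherited from Lemma~\ref{lemma:foc_cneA}.

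The main obstacle is the bookkeeping in the second step: keeping straight which occurrences of $\phi_k$ are evaluated at the deviating shares $\vx^1$ and which at the symmetric shares $\vx^i$ for $i\geq 2$, so that the chain rule correctly routes the genuinely second-order terms into the unknown side while everything else lands in $U_{k,lm}^j$. Once that separation is done cleanly and symmetry is applied, the solve is a verbatim repetition of the linear algebra of Lemma~\ref{lemma:foc_cneA}, and I would defer the bulkiest algebraic simplifications to the accompanying Mathematica file, as done elsewhere in this appendix.
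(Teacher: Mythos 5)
Your proposal matches the paper's proof essentially step for step: differentiating the FOC \eqref{FOC_k} in $x_m^1$ to get \eqref{eqn:s_SOC_b}, differentiating the implicit identities \eqref{pT1bpxb}--\eqref{pT2bpxb} once more while collecting all first-order data into the $U_{k,lm}^j$ of \eqref{Uklmj}, imposing symmetry to reduce to the system \eqref{pk2ml_soc_16eqns}, and exploiting that its coefficient matrix coincides with the first-order one so the inversions from Lemma~\ref{lemma:foc_cneA} can be reused to obtain \eqref{d2pkxm1xl1} and \eqref{d2xjkxm1xl1}. This is precisely the paper's argument, so no gaps to report.
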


The proof of this Lemma does not require assumptions I and II of Section \ref{sect:platform_competition}. Thus, the SOC given by (\ref{eqn:s_SOC_b}) is applicable to idiosyncratic preferences other than Gumbel distribution and to more general externality functions $\phi_k(\vx)$.

\begin{proof}[Proof of Lemma~\ref{lemma:soc_cneA}] Differentiating the left-hand side of \eqref{FOC_k} w.r.t.~$x_m^1$, for  $m\in\{b,s\}$, easily yields \eqref{eqn:s_SOC_b}. To obtain \eqref{d2pkxm1xl1} and \eqref{d2xjkxm1xl1}, we differentiate \eqref{pT1bpxb} and \eqref{pT2bpxb} w.r.t.~$x_m^1$. For $m, k, l\in\{b,s\}$ and $j\in\{2,3,\cdots, N\}$, we obtain
\begin{align}    
\frac{\partial T_{k}^{1}}{\partial u_{1}}\left(-\frac{\partial^{2}p_{k}^{1}}{\partial x_{m}^{1}\partial x_{l}^{1}}+\frac{\partial^{2}\phi_{k}}{\partial x_{m}^{1}\partial x_{l}}\right)+\sum_{i=2}^{N}\frac{\partial T_{k}^{1}}{\partial u_{i}}\left(\frac{\partial\phi_{k}}{\partial x_{b}}\frac{\partial^{2}x_{b}^{i}}{\partial x_{m}^{1}\partial x_{l}^{1}}+\frac{\partial\phi_{k}}{\partial x_{s}}\frac{\partial^{2}x_{s}^{i}}{\partial x_{m}^{1}\partial x_{l}^{1}}\right)+U_{k,lm}^{1}&=0, \label{d2T1x1m}\\
\frac{\partial T_{k}^{j}}{\partial u_{1}}\left(-\frac{\partial^{2}p_{k}^{1}}{\partial x_{m}^{1}\partial x_{l}^{1}}+\frac{\partial^{2}\phi_{k}}{\partial x_{m}^{1}\partial x_{l}}\right)+\sum_{i=2}^{N}\frac{\partial T_{k}^{j}}{\partial u_{i}}\left(\frac{\partial\phi_{k}}{\partial x_{b}}\frac{\partial^{2}x_{b}^{i}}{\partial x_{m}^{1}\partial x_{l}^{1}}+\frac{\partial\phi_{k}}{\partial x_{s}}\frac{\partial^{2}x_{s}^{i}}{\partial x_{m}^{1}\partial x_{l}^{1}}\right)+U_{k,lm}^{j}&=\frac{\partial^{2}x_{k}^{j}}{\partial x_{m}^{1}\partial x_{l}^{1}}, \label{d2Tjx1m}
\end{align}
where $U_{k,lm}^{1}$ and $U_{k,lm}^{j}$ are given by \eqref{Uklmj}. Note that the derivatives of $T_k^j$ in \eqref{d2T1x1m} and \eqref{d2Tjx1m}, are evaluated at $(u_{k}^{0},u_{k}^{1},\cdots,u_{k}^{N})$, $u_{k}^{1}:=\phi_{k}(\boldsymbol{x}^{1})-p_{k}^{1}$, where $u_{k}^{i}:=\phi_{k}(\boldsymbol{x}^{i})-p_{k}^{*}$ for $i\geq2$. 
The unknowns are $\frac{\partial^{2}p_{k}^{1}}{\partial x_{m}^{1}\partial x_{l}^{1}}$ and $\frac{\partial^{2}x_{k}^{i}}{\partial x_{m}^{1}\partial x_{l}^{1}}$ for $k,l,m\in\{b,s\}$ and $i\in\{2,\dots,N\}$, adding up to $8+8(N-1)=8N$. Similarly,  the number of equations in the system of equations described in \eqref{d2T1x1m} and \eqref{d2Tjx1m} is $8N$.

Next, we apply the property of symmetry at equilibrium, where we denote  
\begin{equation}\label{Uklm_xkml}
    \begin{split}
        U_{k,lm}^{j}&=:U_{k,lm}\ \textnormal{ and}\\
        \frac{\partial^{2}x_{k}^{j}}{\partial x_{m}^{1}\partial x_{l}^{1}}&=: x_{k,ml}, \textnormal{ for} j\in\{2,\dots,N\}.
    \end{split}
\end{equation}
Incorporating \eqref{notation_prop32}, \eqref{eqn_y_RS} and \eqref{Uklm_xkml}, we can reduce the system described by \eqref{d2T1x1m} and \eqref{d2Tjx1m} into 16 equations with unknowns: $\frac{\partial^{2}p_{k}^{1}}{\partial x_{m}^{1}\partial x_{l}^{1}}$ and $x_{k,ml}$ for $k,l,m\in\{b,s\}$.  For $k, m, l\in\{b,s\}$, the 16 equations are
\begin{equation}
    \begin{split}\label{pk2ml_soc_16eqns}
S_{k}q_k+\left(N-1\right)R_{k}t_k&=-U_{k,lm}^{1}, \\
R_{k}q_k+\left(S_{k}+\left(N-2\right)R_{k}\right)t_k&=x_{k,ml}-U_{k,lm}, 
    \end{split}
\end{equation}
where $q_k=\left(-\frac{\partial^{2}p_{k}^{1}}{\partial x_{m}^{1}\partial x_{l}^{1}}+\frac{\partial^{2}\phi_{k}}{\partial x_{m}^{1}\partial x_{l}}\right)$ and $t_k=\left(\frac{\partial\phi_{k}}{\partial x_{b}}x_{b,ml}+\frac{\partial\phi_{k}}{\partial x_{s}}x_{s,ml}\right)$. Solving the 2x2 system given by \eqref{pk2ml_soc_16eqns} easily yields \eqref{d2pkxm1xl1} and \eqref{d2xjkxm1xl1}.  
\end{proof}

Lemma~\ref{lemma:soc_cneA} is general enough to accommodate idiosyncratic preferences other than Gumbel distribution and general externality functions $\phi_k(\vx)$. Next, we use Assumptions I and II from Section \ref{sect:platform_competition}, Proposition~\ref{prop:FOC_gumbel} and Lemma~\ref{lemma:soc_cneA} to prove Proposition~\ref{prop:existence_gumbel}.

\begin{proof}[{\bf Proof of Proposition~\ref{prop:existence_gumbel}}]
    The proof has two main steps: (i) Verifying sufficient conditions for (\ref{FOCs_z}) to have a unique solution; (ii) Establishing that a second order condition is satisfied. 

Step (i):  Note that \eqref{FOCs_z} is equivalent to 
    \begin{equation}
        \label{FOCs_z2}
        \begin{split}
            \left(2\phi_{bb}-\phi_{ss}d_{b}L_{b}+h_{b}\right)\omega\left(z_{b}\right)+\left(\phi_{sb}+\phi_{bs}+\phi_{sb}d_{s}L_{b}\right)\omega\left(z_{s}\right)-u_{b}^{0}&=\beta_{b}z_{b}\ \textnormal{ and}\\
    \left(\phi_{sb}+\phi_{bs}+\phi_{bs}d_{b}L_{s}\right)\omega\left(z_{b}\right)+\left(2\phi_{ss}-\phi_{bb}d_{s}L_{s}+h_{s}\right)\omega\left(z_{s}\right)-u_{s}^{0}&=\beta_{s}z_{s},
        \end{split}
    \end{equation}
where for any $k\in\{b,s\}$, $L_k$, $d_k$ and $h_k$ are functions depending on $z_k$ given by \eqref{eqn_z_Lk_proof}. We want to find sufficient conditions for \eqref{FOCs_z2} to have a unique solution. First, we write some definitions. Set $\varphi_1=(\phi_{bs},\phi_{sb})$ and $\varphi_2=(\phi_{bb},\phi_{ss})$. For the given parameters in $\psi=\left(\beta_{b},\beta_{s},u_{b}^{0},u_{s}^{0},N\right)$, we define $M:\mathbb{R}^{6}\longrightarrow\mathbb{R}^{2}\ \textnormal{ as }$ 
\begin{equation}\label{def_M}
\begin{split}
M\left(z_{b},z_{s},\varphi_{1},\varphi_{2};\psi\right)&:=\left[\begin{array}{c}
M_{b}\left(z_{b},z_{s},\varphi_{1},\varphi_{2};\psi\right)\\
M_{s}\left(z_{b},z_{s},\varphi_{1},\varphi_{2};\psi\right)
\end{array}\right], \ \textnormal{ where for each } k,j\in \{b,s\},\ j\neq k,  \\
M_{k}\left(z_{b},z_{s},\varphi_{1},\varphi_{2};\psi\right)&:=\left(2\phi_{kk}-\phi_{jj}d_{k}L_{k}+h_{k}\right)\omega\left(z_{k}\right)+\left(\phi_{sb}+\phi_{bs}+\phi_{jk}d_{j}L_{k}\right)\omega\left(z_{j}\right)-u_{k}^{0}-\beta_k z_{k}.
\end{split}
\end{equation}  
We show that under \eqref{condition_existence_beta}, equation \eqref{FOCs_z2} has a unique solution for $\varphi_1 = 0$. From (\ref{def_M}), if we let $\varphi_1 = 0$, we obtain 
\begin{align}\label{eqn_Mk_varphi20}
M_{k}\left(z_{b},z_{s},0,\varphi_{2};\psi\right)=\left(2\phi_{kk}-\phi_{jj}d_{k}L_{k}+h_{k}\right)\omega\left(z_{k}\right)-u_{k}^{0}-\beta_k z_{k}.
\end{align}
Plugging \eqref{eqn_z_Lk_proof} into (\ref{eqn_Mk_varphi20}) gives
\begin{equation}\label{Mk_phi10}
\begin{split}
M_{k}\left(z_{b},z_{s},0,\varphi_{2};\psi\right)=&\frac{-\beta_{k}^{2}\left(1+Ne^{z_{k}}\right)^{3}+\beta_{k}\phi_{kk}e^{z_{k}}\left(\left(2N-1\right)e^{z_{k}}+3\right)\left(1+Ne^{z_{k}}\right)-2e^{2z_{k}}\phi_{kk}^{2}}{\left(1+Ne^{z_{k}}\right)\left(\beta_{k}\left(1+\left(N-1\right)e^{z_{k}}\right)\left(1+Ne^{z_{k}}\right)-e^{z_{k}}\phi_{kk}\right)}\\
&-\beta_k z_{k}-u_{k}^{0}.
\end{split}
\end{equation}
It follows that for $\varphi_1=0,$ $M_k$ does not depend on $z_j$ for $j\neq k$. Under \eqref{condition_existence_beta}, we claim that if $\varphi_1 = 0$, then the three statements below, (i-a)-(i-c), hold true:
\begin{itemize}
    \item[(i-a)] $M_{k}\left(z_{b},z_{s},0,\varphi_{2};\psi\right)$ is continuous on $z_k$ for all $z_k\in \R$.
    \item[(i-b)] $M_{k}\left(z_{b},z_{s},0,\varphi_{2};\psi\right)$ is strictly decreasing in the variable $z_k$ for all $z_k\in\R$.
    \item[(i-c)]The following limits hold true \begin{equation}
    \label{Mk_limits}
    \begin{split}
         \lim_{z_{k}\to-\infty}M_{k}\left(z_{b},z_{s},0,\varphi_{2};\psi\right)&=\infty \ \textnormal{ and } \\ \lim_{z_{k}\to\infty}M_{k}\left(z_{b},z_{s},0,\varphi_{2};\psi\right)&=-\infty.
    \end{split}
\end{equation}
\end{itemize}
Before proving the above claims, note that (i-a), (i-b) and (i-c) combined imply that there is a unique $(z_b^*,z_s^*)\in\R^2$ such that $$M\left(z_{b}^*,z_{s}^*,0,\varphi_{2};\psi\right)=\left[\begin{array}{c}
M_{b}\left(z_{b}^*,z_{s}^*,0,\varphi_{2};\psi\right)\\
M_{s}\left(z_{b}^*,z_{s}^*,0,\varphi_{2};\psi\right)
\end{array}\right]=0.$$
Thus, under \eqref{condition_existence_beta}, equation \eqref{FOCs_z2} has a unique solution for $\varphi_1 = 0$. By (i-b),
\begin{equation}
    \label{Mk_det}
\textnormal{det}\left(\frac{\partial\left(M_{b},M_{s}\right)}{\partial\left(z_{b},z_{s}\right)}\right)\Big|_{\left(z_{b}^{*},z_{s}^{*},0,\varphi_{2};\psi\right)}=\frac{\partial M_{b}\left(z_{b}^{*},z_{s}^{*},0,\varphi_{2};\psi\right)}{\partial z_{b}}\frac{\partial M_{s}\left(z_{b}^{*},z_{s}^{*},0,\varphi_{2};\psi\right)}{\partial z_{s}}>0.
\end{equation}
By \eqref{Mk_det} and the Implicit Function Theorem, there exists $\epsilon>0$ and a unique continuous function $$\left(z_{b}\left(\cdot\right),z_{s}\left(\cdot\right)\right):B_{\epsilon}\left(0,0\right)\longrightarrow\mathbb{R}^{2}$$ such that $\left(z_{b}\left(0,0\right),z_{s}\left(0,0\right)\right)=\left(z_{b}^{*},z_{s}^{*}\right)$. Moreover, for all $\varphi_{1}\in B_{\epsilon}\left(0,0\right)$, 
$$M\left(z_{b}\left(\varphi_{1}\right),z_{s}\left(\varphi_{1}\right),\varphi_{1},\varphi_{2};\psi\right)=0.$$
In particular, under \eqref{condition_existence_beta}, there exists $\epsilon>0$ such that for all $\varphi_{1}\in B_{\epsilon}\left(0,0\right)$, equation \eqref{FOCs_z2} has a unique solution. We now prove that under \eqref{condition_existence_beta}, if $\varphi_1 = 0$, then (i-a)-(i-c) hold true. 

\textit{Proof of (i-a)}. We prove that $M_{k}\left(z_{b},z_{s},0,\varphi_{2};\psi\right)$ is continuous on $z_k$ for all $z_k\in \R$. The auxiliary function $$g\left(z_{k}\right):= e^{-z_{k}}\left(1+\left(N-1\right)e^{z_{k}}\right)\left(1+Ne^{z_{k}}\right)$$ has a unique minimum at $z_k^0=-\frac{1}{2}\ln(N\left(N-1\right))$, with $g\left(z_k^0\right)=2\sqrt{N\left(N-1\right)}+2N-1$. It follows that if 
\begin{equation}\label{condition_continuity_Mk}
    \beta_{k}>\frac{\phi_{kk}}{(2\sqrt{N\left(N-1\right)}+2N-1)},
\end{equation} 
then, the denominator of the fraction in (\ref{Mk_phi10}) is never zero. We further prove (see the Mathematica file \emph{Gumbel\_N.nb}) that for all $\phi_{kk}>0$,
\begin{equation}
    f(N)\phi_{kk}\geq \frac{\phi_{kk}}{(2\sqrt{N\left(N-1\right)}+2N-1)} 
\end{equation}
(recall that $f(N)$ is stated in the proposition and given by \eqref{def:RNphi_kk_proof}). 
Thus, if $(\phi_{kk},\beta_k)$ satisfies (\ref{condition_existence_beta}), then \eqref{condition_continuity_Mk} is satisfied and $M_{k}\left(z_{b},z_{s},0,\varphi_{2};\psi\right)$ is continuous on $z_k$ for all $z_k\in \R$. 

\textit{Proof of (i-b)}. We prove that for each $k$, $M_{k}\left(z_{b},z_{s},0,\varphi_{2};\psi\right)$ is strictly decreasing in the variable $z_k$ for all $z_k\in\R$. In the supplementary file \emph{Gumbel\_N.nb}, we show that the partial derivative of $M_{k}\left(z_{b},z_{s},0,\varphi_{2};\psi\right)$ w.r.t. $z_k$ can be written as 
\begin{equation}\label{Mk_derivative}
    \frac{\partial M_{k}\left(z_{b},z_{s},0,\varphi_{2};\psi\right)}{\partial z_{k}}=-\frac{\sum_{m=0}^{6}a_{m}e^{mz_{k}}}{\left(1+Ne^{z_{k}}\right)^{2}\left(\beta_{k}\left(1+\left(N-1\right)e^{z_{k}}\right)\left(1+Ne^{z_{k}}\right)-e^{z_{k}}\phi_{kk}\right)^{2}},
\end{equation}
where the coefficients $\{a_m\}_{m=0}^6$ are polynomials on the parameters $\{\phi_{kk}, \beta_k, N\}$ and are given by 
\[\Scale[0.9]{a_{0}=\beta_{k}^{3},}\]\vspace{-22pt}
\[\Scale[0.9]{a_{1}=\beta_{k}^{2}\left(\beta_{k}\left(6N-1\right)-4\phi_{kk}\right),}\]
\[\Scale[0.9]{  a_{2}=\beta_{k}\left(\beta_{k}^{2}\left(15N^{2}-6N+1\right)+4\beta_{k}\left(1-4N\right)\phi_{kk}+5\phi_{kk}^{2}\right),}\]
\[\Scale[0.9]{a_{3}=2N\beta_{k}^{3}\left(10N^{2}-7N+2\right)+\beta_{k}^{2}\phi_{kk}\left(-24N^{2}+11N-1\right)+\beta_{k}\phi_{kk}^{2}\left(10N-3\right)-2\phi_{kk}^{3},}\]
\[\Scale[0.9]{a_{4}=\beta_{k}N\left(N\beta_{k}^{2}\left(15N^{2}-16N+6\right)+\beta_{k}\phi_{kk}\left(-16N^{2}+10N-2\right)+\left(5N-2\right)\phi_{kk}^{2}\right),}\]
\[\Scale[0.9]{a_{5}=\beta_{k}N^{2}\left(N\beta_{k}^{2}\left(6N^{2}-9N+4\right)+\beta_{k}\phi_{kk}\left(-4N^{2}+3N-1\right)+\phi_{kk}^{2}\right),}\]
\begin{equation}
    \label{eqn:am_coefficients}
    \small{a_{6}=\beta_{k}^{3}\left(N-1\right)^{2}N^{4}.}
\end{equation}
Because \eqref{condition_continuity_Mk} is satisfied, the denominator of (\ref{Mk_derivative}) is always positive. We then show (see the Mathematica file \emph{Gumbel\_N.nb}) that under \eqref{condition_existence_beta}, i.e., if $N\geq2$ and for each $k\in\left\{ b,s\right\}$, $\left(\phi_{kk},\beta_{k}\right)$ satisfies either 
\begin{equation*}
\left(\phi_{kk}\leq0\textnormal{ and }\beta_{k}>0\right)\textnormal{ or }\left(\phi_{kk}>0\textnormal{ and }\beta_{k}>f\left(N\right)\phi_{kk}\right),
\end{equation*}
then $a_{m}>0$ for all $m=0,\dots,6$, where $f(N)$ is given by \eqref{def:RNphi_kk_proof}. From (\ref{Mk_derivative}), it follows that 
\begin{equation}
    \label{eqn:ineDMk_neg}
    \frac{\partial M_{k}\left(z_{b},z_{s},0,\varphi_{2};\psi\right)}{\partial z_{k}}<0\  \textnormal{ for all } z_{k}. 
\end{equation}
It follows that $M_{k}\left(z_{b},z_{s},0,\varphi_{2};\psi\right)$ is strictly decreasing in the variable $z_k$ for all $z_k\in\R$.

\textit{Proof of (i-c)}. This claim follows from applying L'hopital rule in \eqref{Mk_phi10}.

Step (ii): We show that a second-order condition is satisfied. From \eqref{eqn:S_k_res_gumbel}, \eqref{eqn:R_k_res_gumbel} and the substitution $z_{k}=\frac{u_{k}-u_{k}^{0}}{\beta_{k}}$, we obtain for $i,r,j\in\{1,\dots,N\}$ and $k\in\{b,s\}$,
\begin{equation}
    \label{d2Tiuruj}
   \frac{\partial^{2}T^{i}_k\left(u_{k}^{0},u_{k},...,u_{k}\right)}{\partial u^{r}\partial u^{j}}=\frac{e^{z_{k}}}{\left(1+Ne^{z_{k}}\right)^{3}\beta_{k}^{2}}\cdot\begin{cases}
\left(1+\left(N-1\right)e^{z_{k}}\right)\left(1+\left(N-2\right)e^{z_{k}}\right) & i=j=r\\
-\left(1+\left(N-2\right)e^{z_{k}}\right)e^{z_{k}} & i=j,i\neq r\\
-\left(1+\left(N-2\right)e^{z_{k}}\right)e^{z_{k}} & i\neq j,j\neq r,i=r\\
2e^{2z_{k}}-\left(1+Ne^{z_{k}}\right)e^{z_{k}} & i\neq j=r\\
2e^{2z_{k}} & i\neq j,j\neq r,i\neq r
\end{cases}.
\end{equation}
In the supplementary file \emph{Gumbel\_N.nb}, we show that the matrix $D_{(x_b^1,x_s^1)}^{2}\pi^{1}|_{\varphi_1=0}$, as given by \eqref{eqn:s_SOC_b}, can be written as 
\begin{equation}\label{d2pi1xb1xs1_varphi10}
    D_{(x_b^1,x_s^1)}^{2}\pi^{1}\Big|_{\varphi_1=0}= \textnormal{diag}\left(
\frac{\partial^{2}\pi^{1}}{\partial\left(x_{b}^{1}\right)^{2}},
 \frac{\partial^{2}\pi^{1}}{\partial\left(x_{s}^{1}\right)^{2}}\right)\Big|_{\varphi_{1}=0},
\end{equation}
where
\begin{equation}\label{d2pi1d2k1}
    \frac{\partial^{2}\pi^{1}}{\partial\left(x_{k}^{1}\right)^{2}}\Big|_{\varphi_{1}=0}\text{=}\frac{\sum_{m=0}^{7}s_{m}e^{mz_{k}}}{e^{z_{k}}\left(\beta_{k}\left(\left(N-1\right)e^{z_{k}}+1\right)\left(Ne^{z_{k}}+1\right)-e^{z_{k}}\phi_{kk}\right)^{3}}.
\end{equation}
The coefficients $\{s_m\}_{m=0}^7$ are polynomials on the parameters $\{\phi_{kk}, \beta_k, N\}$ and are given by 
\[\Scale[0.85]{s_{0}=-\beta_{k}^{4},}\]\vspace{-22pt}
\[\Scale[0.85]{s_{1}=\beta_{k}^{3}\left(5\phi_{kk}+\beta_{k}\left(1-7N\right)\right),}\]
\[\Scale[0.85]{  s_{2}=-3\beta_{k}^{2}\left(\beta_{k}^{2}N\left(7N-2\right)+\beta_{k}\phi_{kk}\left(2-9N\right)+3\phi_{kk}^{2}\right),}\]
\[\Scale[0.85]{s_{3}=\beta_{k}\left(5\beta_{k}^{3}N^{2}\left(3-7N\right)+4\beta_{k}^{2}\left(N\left(15N-7\right)+1\right)\phi_{kk}+3\beta_{k}\left(3-11N\right)\phi_{kk}^{2}+7\phi_{kk}^{3}\right),}\]
\[\Scale[0.85]{s_{4}=5\beta_{k}^{4}N^{3}\left(4-7N\right)+2\beta_{k}^{3}N\left(N\left(35N-26\right)+7\right)\phi_{kk}+\beta_{k}^{2}\left(\left(26-45N\right)N-5\right)\phi_{kk}^{2}+\beta_{k}\left(13N-4\right)\phi_{kk}^{3}-2\phi_{kk}^{4},}\]
\[\Scale[0.85]{s_{5}=\beta_{k}\left(3\beta_{k}^{3}\left(5-7N\right)N^{4}+3\beta_{k}^{2}\left(N\left(15N-16\right)+6\right)N^{2}\phi_{kk}+\beta_{k}\left(\left(25-27N\right)N-10\right)N\phi_{kk}^{2}+\left(6N^{2}-4N+1\right)\phi_{kk}^{3}\right),}\]
\[\Scale[0.85]{s_{6}=\beta_{k}N\left(\beta_{k}^{3}\left(6-7N\right)N^{4}+\beta_{k}^{2}\left(N(15N-22)+10\right)N^{2}\phi_{kk}+\beta_{k}\left(-6N^{2}+8N-5\right)N\phi_{kk}^{2}+\phi_{kk}^{3}\right),}\]
\begin{equation}
    \label{eqn:sm_coefficients}
    \small{s_{7}=-\beta_{k}^{3}\left(N-1\right)N^{4}\left(\beta_{k}N^{2}+2\left(-N+1\right)\phi_{kk}\right).}
\end{equation}
Because \eqref{condition_continuity_Mk} is satisfied, the denominator of (\ref{d2pi1d2k1}) is always positive. We then show (see the Mathematica file \emph{Gumbel\_N.nb}) that under \eqref{condition_existence_beta}, i.e., if $N\geq2$ and for each $k\in\left\{ b,s\right\}$, $\left(\phi_{kk},\beta_{k}\right)$ satisfies either 
\begin{equation*}
\left(\phi_{kk}\leq0\textnormal{ and }\beta_{k}>0\right)\textnormal{ or }\left(\phi_{kk}>0\textnormal{ and }\beta_{k}>f\left(N\right)\phi_{kk}\right),
\end{equation*}
then $s_{m}<0$ for all $m=0,\dots,6$, where $f(N)$ is given by \eqref{def:RNphi_kk_proof}. It follows that $D_{(x_b^1,x_s^1)}^{2}\pi^{1}|_{\varphi_1=0}$ is negative definite. By continuity there exists $\tilde{\epsilon}>0$ such that for all $\varphi_{1}\in B_{\tilde{\epsilon}}\left(0,0\right)$, $$D_{(x_b^1,x_s^1)}^{2}\pi^{1}\left(z_{b}\left(\varphi_{1}\right),z_{s}\left(\varphi_{1}\right),\varphi_{1},\varphi_{2};\psi\right)$$ is negative definite. Therefore, a second condition for \eqref{FOCs_z_proof2} is satisfied.

\end{proof}

{\bf Preliminary result for the proof of Proposition~\ref{prop:FOC_gumbel_collusion}.} 
Using the same notation we introduced in \eqref{notation_prop32}, the following lemma provides the FOC of \eqref{pim} as a function of $x_k$.

\begin{lemma}[FOC of CE]\label{lemma:foc_ce}
The symmetric collusive equilibrium $\vp^\text{C}$ and $\vx^\text{C}$ are solutions of \eqref{xki} and of the following two equations
\begin{equation}
\begin{split}
    p_k + x_k\left( \frac{\partial \phi_k}{\partial x_k} - \frac{1}{S_k + (N-1)R_k}\right) + x_l\frac{\partial \phi_l}{\partial x_k} & = 0, \ \textnormal{ for } k,l\in\{b,s\}, k\neq l
\end{split} .   \label{M_FOC_b}
\end{equation}
\end{lemma}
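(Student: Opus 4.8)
The plan is to mirror the derivation of Lemma~\ref{lemma:foc_cneA}, while exploiting the crucial simplification that collusion \emph{preserves} symmetry across platforms. Since the colluding entity charges a single price $p_k$ on each side $k$ (see \eqref{pim}), the stage-2 response is the symmetric solution of \eqref{xki} in which every platform $i\in\gN$ earns the same share $x_k^i=x_k$. Hence $\Pi_{\text{tot}}(p_b,p_s)=\sum_{i=1}^N\left(x_b^i p_b+x_s^i p_s\right)=N\left(x_b p_b+x_s p_s\right)$, and the optimization is genuinely over the two variables $(p_b,p_s)\in\R^2$.

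First I would observe that, because all platforms share the common deterministic utility $u_k:=\phi_k(x_b,x_s)-p_k$, the symmetric side-$k$ share depends on a single scalar: by \eqref{xki_Tki}, $x_k=T_k^i(u_k^0,u_k,\dots,u_k)$ for any $i\in\gN$. This is exactly where the $4N\times4N$ system of Lemma~\ref{lemma:foc_cneA} collapses to a two-dimensional problem: a uniform price change moves all inner coordinates of $T_k^i$ together, so the relevant sensitivity is the total derivative
\[
\frac{d}{du_k}\,T_k^i(u_k^0,u_k,\dots,u_k)=\sum_{j=1}^N\frac{\partial T_k^i}{\partial u^j}=S_k+(N-1)R_k,
\]
with $S_k$ and $R_k$ as in \eqref{notation_prop32}. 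Assuming $S_k+(N-1)R_k\neq0$ (the analog of the nonvanishing-Jacobian hypothesis; for the Gumbel specification this quantity is strictly positive by \eqref{eqn:Sk}--\eqref{eqn:Rk}), the map $u_k\mapsto x_k$ is locally invertible, and inverting $p_k=\phi_k(x_b,x_s)-u_k$ yields the price-to-share sensitivities
\[
\frac{\partial p_k}{\partial x_l}=\frac{\partial\phi_k}{\partial x_l}-\frac{\delta_{kl}}{S_k+(N-1)R_k},\qquad k,l\in\{b,s\},
\]
where $\delta_{kl}$ is the Kronecker delta.

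With these in hand, I would invoke the implicit function theorem (exactly as in the proof of Lemma~\ref{lemma:foc_cneA}) to treat $(p_b,p_s)$ as functions of the symmetric shares $(x_b,x_s)$ and write the first-order condition of the reduced problem $\max_{(x_b,x_s)}N\left(x_b p_b+x_s p_s\right)$ as
\[
p_k+x_b\frac{\partial p_b}{\partial x_k}+x_s\frac{\partial p_s}{\partial x_k}=0,\qquad k\in\{b,s\}.
\]
Substituting the sensitivities above, the cross term $l\neq k$ contributes only the externality derivative $x_l\,\partial\phi_l/\partial x_k$, while the own term $l=k$ additionally picks up $-x_k/(S_k+(N-1)R_k)$; together these reproduce \eqref{M_FOC_b}. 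Finally, \eqref{xki} holds by construction of the symmetric stage-2 response.

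The main obstacle I anticipate is not the algebra---which is far lighter than in the competition case---but the careful justification of the symmetry reduction and the local invertibility: one must argue that the colluder's best response indeed lies on the symmetric branch (so that $\Pi_{\text{tot}}$ really reduces to $N\left(x_b p_b+x_s p_s\right)$) and that $S_k+(N-1)R_k\neq0$, so that the single-variable inverse $u_k(x_k)$, and hence the price-share map, is well defined. As in Lemma~\ref{lemma:foc_cneA}, this step relies only on the existence and uniqueness of market shares from Proposition~\ref{prop:existence_xki} and the implicit function theorem, so no distributional assumption is needed for the statement of the FOC itself.
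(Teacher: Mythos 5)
Your proposal is correct and follows essentially the same route as the paper: the paper likewise differentiates the symmetric identity $T_k^i(u_k^0,u_k,\dots,u_k)=x_k$ w.r.t.\ $x_b,x_s$, obtains $\frac{\partial p_k}{\partial x_l}=\frac{\partial\phi_k}{\partial x_l}-\frac{\delta_{kl}}{S_k+(N-1)R_k}$, and substitutes into the FOC $N\left(p_k+x_k\frac{\partial p_k}{\partial x_k}+x_l\frac{\partial p_l}{\partial x_k}\right)=0$ to reach \eqref{M_FOC_b}. Your additional remarks on the symmetry reduction and the sign of $S_k+(N-1)R_k$ are consistent with (and slightly more explicit than) the paper's treatment, which assumes the symmetric branch from the outset.
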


The proof of this Lemma does not require assumptions I and II in Section \ref{sect:platform_competition}. Thus, the FOC given by (\ref{M_FOC_b}) is applicable to idiosyncratic preferences other than Gumbel distribution and to more general externality functions $\phi_k(\vx)$.

\begin{proof}[Proof of Lemma~\ref{lemma:foc_ce}]


The FOC of \eqref{pim} w.r.t.~$x_k$ is 
\begin{align}
    \frac{\partial \Pi_{\text{tot}}}{\partial x_k} & = N\left(p_k + x_k\frac{\partial p_k}{\partial x_k} + x_l\frac{\partial p_l}{\partial x_k}\right) = 0, \ \textnormal{ for } k,l\in\{b,s\}, k\neq l. \label{FOC_M_b}
\end{align}

To solve \eqref{FOC_M_b}, we need the expressions of 
\begin{equation}
\frac{\partial p_k}{\partial x_l},\ \textnormal{ for each } k, l \in  \{b, s\}.
\label{dpdx_M}
\end{equation}
We determine those four partial derivatives $\frac{\partial\vp^1}{\partial \vx^1}$ in \eqref{dpdx_M} using the definition of $T_k$ in \eqref{def:Qki}. By \eqref{xki_Tki}, for $k \in \{b, s\}$, the vectors of market shares and prices, $(x_k,\dots, x_k)$ and $(p_k,\dots, p_k)$ satisfy all the following
\begin{equation}
T^i_k(u^0_k, u_k, u_k, \dots, u_k) = x_k, \ \textnormal{ for } i\in \{1, 2, \cdots, N\}.    \label{eqn:T_func_M}
\end{equation}

Using the relationship $
u_k = \phi_k(\vx) - p_k,
$
it follows that 
$$
\frac{\partial u_k}{\partial x_l} = \frac{\partial \phi_k}{\partial x_l} - \frac{\partial p_k}{\partial x_l}.
$$

Taking derivative w.r.t.~$x_b$ and $x_s$ in \eqref{eqn:T_func_M} gives us
\begin{equation}\label{eqn:pkxl_collusion}
    \left( \sum_{j=1}^N \frac{\partial T_k^i}{\partial u^j}\right) \left( \frac{\partial \phi_k}{\partial x_l} - \frac{\partial p_k}{\partial x_l}\right)  = \delta_{kl}, \text{ for } i\in\{1, \cdots, N\},\ k,l \in\{b,s\}, 
\end{equation}
where $\delta_{kl} = 1$ when $k=l$ and $\delta_{kl}=0$ when $k\neq l$. From \eqref{notation_prop32} and \eqref{eqn:pkxl_collusion}, 
\begin{equation}\label{eqn:pkxl_collusion2}
    \frac{\partial p_k}{\partial x_l} = \frac{\partial \phi_k}{\partial x_l} - \frac{\delta_{kl}}{S_k + (N-1)R_k}.
\end{equation}
Plugging \eqref{eqn:pkxl_collusion} into \eqref{FOC_M_b} gives us 
\begin{align}
    p_k + x_k\left(\frac{\partial \phi_k}{\partial x_k} - \frac{1}{S_k + (N-1)R_k}\right) + x_l\frac{\partial \phi_l}{\partial x_k}  = 0, \ \textnormal{ for } k,l\in\{b,s\}, k\neq l. \label{FOC_M_b2}
\end{align}
\end{proof}

Lemma~\ref{lemma:foc_ce} is general enough to accommodate idiosyncratic preferences other than Gumbel distribution and general externality functions $\phi_k(\vx)$. Next, we use Assumptions I and II from Section 2 and Lemma~\ref{lemma:foc_ce} to prove Proposition~\ref{prop:FOC_gumbel_collusion}.

\begin{proof}[{\bf Proof of Proposition~\ref{prop:FOC_gumbel_collusion}}]

We want to rewrite the FOC given by \eqref{M_FOC_b} using Assumptions I and II from Section \ref{sect:platform_competition}. We plugging \eqref{eqn:Sk}, \eqref{eqn:Rk}, \eqref{eqn:proof_xki_omega} and \eqref{eqn:relationship_xp_z} from the proof of Proposition~\ref{prop:FOC_gumbel} into \eqref{M_FOC_b} to obtain 

\begin{equation}
     \vbeta \vz=\left(\Phi -H^C(\vz)\right)\Omega(\vz) - \vu_0 ,\label{FOCM_z_proof2}
    \end{equation}
    where $H^C(\vz)$ is a $2\times2$ matrix defined as 
\begin{equation}
    \label{def:HC_matrix_proof}
  H(\vz):= \left[\begin{matrix}
      \frac{\beta_b(1+Ne^{z_b})^2}{e^{z_b}} - \phi_{bb} & -\phi_{sb}\\
      -\phi_{bs} & \frac{\beta_s(1+Ne^{z_s})^2}{e^{z_s}} - \phi_{ss}
  \end{matrix}\right].
\end{equation}

Denoting $\vz^C$ to be the solution to \eqref{FOCM_z_proof2} and using \eqref{eqn:relationship_xp_z}, we conclude the proposition by noting that the symmetric equilibrium solution of $\eqref{pim}$ is given by $\vx^C = \Omega(\vz^C)$ and $\vp^C = \Phi \Omega(\vz^C) - \vbeta\vz^C -\vu_0$.
\end{proof}

{\bf Preliminary results for the proof of Proposition~\ref{coro:mono_uniqueness}.}
The following Lemma shows the second-order condition of \eqref{pim} as a function of $x_k$. 

\begin{lemma}[SOC of CNE]\label{lemma:soc_ce}
The second-order condition of \eqref{pim} is given by 
\begin{equation}\label{eqn:s_SOC_b_ce}
 \frac{\partial^{2}\Pi_{tot}}{\partial x_{m}\partial x_{k}}=N\left(\frac{\partial p_{k}}{\partial x_{m}}+\delta_{km}\frac{\partial p_{k}}{\partial x_{k}}+\delta_{ml}\frac{\partial p_{l}}{\partial x_{k}}+x_{k}\frac{\partial^{2}p_{k}}{\partial x_{m}\partial x_{k}}+x_{l}\frac{\partial^{2}p_{l}}{\partial x_{m}\partial x_{k}}\right), \textnormal{ for } k,l,m\in \{b,s\}, k\neq l,
\end{equation}
where $\frac{\partial p_k}{\partial x_l}$ is given by \eqref{eqn:pkxl_collusion2},
\begin{equation}\label{d2pkxm1xl1_ce}
    \frac{\partial^{2}p_{k}}{\partial x_{m}\partial x_{l}}=\frac{\partial^{2}\phi_{k}}{\partial x_{m}\partial x_{l}}+\frac{\delta_{km}\delta_{kl}}{\left(S_{k}+\left(N-1\right)R_{k}\right)^{3}}\left(\sum_{j=1}^{N}\sum_{r=1}^{N}\frac{\partial^{2}T_{k}^{i}\left(u_{k}^{0},u_{k},\cdots,u_{k}\right)}{\partial u^{r}\partial u^{j}}\right), \textnormal{ for } k,l,m\in \{b,s\},
\end{equation}
and $S_k$, $R_k$ are given by \eqref{notation_prop32}.
\end{lemma}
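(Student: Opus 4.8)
The plan is to mirror the structure of the proof of Lemma~\ref{lemma:soc_cneA}, but to exploit the fact that the collusion problem is genuinely two-dimensional (one price per side, and by symmetry one market share per side), which makes the computation substantially shorter than in the competitive case.

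First I would obtain \eqref{eqn:s_SOC_b_ce} by directly differentiating the first-order condition. Starting from \eqref{FOC_M_b}, namely $\partial \Pi_{\text{tot}}/\partial x_k = N(p_k + x_k\,\partial p_k/\partial x_k + x_l\,\partial p_l/\partial x_k)$, I would differentiate with respect to $x_m$ using the product rule together with $\partial x_k/\partial x_m = \delta_{km}$ and $\partial x_l/\partial x_m = \delta_{lm}$. Collecting the five resulting terms reproduces \eqref{eqn:s_SOC_b_ce} immediately; this step is purely mechanical.

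The substance is in deriving the second-derivative formula \eqref{d2pkxm1xl1_ce}. Here the key structural observation is that under symmetric collusion the relation \eqref{eqn:T_func_M}, $T_k^i(u_k^0, u_k, \dots, u_k) = x_k$, defines, for each side $k$, a one-variable map $\tau_k(u_k) := T_k^i(u_k^0, u_k, \dots, u_k)$ with $x_k = \tau_k(u_k)$. Its derivative is exactly $\tau_k'(u_k) = \sum_{j=1}^N \partial T_k^i/\partial u^j = S_k + (N-1)R_k$, using the notation in \eqref{notation_prop32}. Consequently, inverting $\tau_k$ shows that $u_k$ is a function of $x_k$ alone, so $\partial u_k/\partial x_m = \delta_{km}/(S_k+(N-1)R_k)$; this decoupling of the two sides is precisely what produces the double Kronecker delta $\delta_{km}\delta_{kl}$ in the final formula.

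Then I would differentiate the first-order price formula \eqref{eqn:pkxl_collusion2}, $\partial p_k/\partial x_l = \partial \phi_k/\partial x_l - \delta_{kl}/(S_k + (N-1)R_k)$, with respect to $x_m$. The Hessian term $\partial^2\phi_k/\partial x_m\partial x_l$ passes through unchanged, while the chain rule through $u_k$ gives $\partial_{x_m}[\,(S_k + (N-1)R_k)^{-1}\,] = -\delta_{km}\,\tau_k''(u_k)/(S_k + (N-1)R_k)^3$, where $\tau_k''(u_k) = \sum_{j=1}^N\sum_{r=1}^N \partial^2 T_k^i/\partial u^r\partial u^j$. Substituting yields precisely \eqref{d2pkxm1xl1_ce}. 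The main obstacle is the bookkeeping in this step: one must justify carefully that $G_k := S_k + (N-1)R_k$ is the total derivative of the symmetric profile $\tau_k$ and that $u_k$ depends on $x_k$ only, so that $\partial G_k/\partial x_m$ is captured by a single-variable chain rule through $u_k$. This is exactly where the collusive case differs from—and is simpler than—the competitive case of Lemma~\ref{lemma:soc_cneA}, in which a single platform's deviation couples all the shares and forces the auxiliary quantities \eqref{Uklmj}.
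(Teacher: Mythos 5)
Your proposal is correct and takes essentially the same approach as the paper: the differentiation of the FOC \eqref{FOC_M_b} to get \eqref{eqn:s_SOC_b_ce} is identical, and your derivation of \eqref{d2pkxm1xl1_ce} via the one-variable symmetric map $\tau_k$ with $\tau_k'(u_k)=S_k+(N-1)R_k$ and $\partial u_k/\partial x_m=\delta_{km}/(S_k+(N-1)R_k)$ is just the solved-out form of the paper's implicit differentiation of \eqref{eqn:pkxl_collusion}, yielding the same chain-rule factors and the same double Kronecker delta. Both arguments tacitly require $S_k+(N-1)R_k\neq 0$, which the paper already assumes in writing \eqref{eqn:pkxl_collusion2}, so your treatment matches the paper's level of rigor.
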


The proof of this Lemma does not require assumptions I and II of Section \ref{sect:platform_competition}. Thus, the SOC given by (\ref{eqn:s_SOC_b_ce}) is applicable to idiosyncratic preferences other than Gumbel distribution and to more general externality functions $\phi_k(\vx)$.

\begin{proof}[Proof of Lemma~\ref{lemma:soc_ce}] Differentiating the left-hand side of \eqref{FOC_M_b} w.r.t.~$x_m$, for $m\in \{b,s\}$, easily yields \eqref{eqn:s_SOC_b_ce}. To obtain \eqref{d2pkxm1xl1_ce}, we differentiate \eqref{eqn:pkxl_collusion} w.r.t.~$x_m$. For $m,k,l\in\{b,s\}$, we obtain 
\begin{equation}\label{d2pkxmxl_ce_proof}
    \sum_{j=1}^{N}\sum_{r=1}^{N}\frac{\partial^{2}T_{k}^{i}}{\partial u^{r}\partial u^{j}}\left(\frac{\partial\phi_{k}}{\partial x_{m}}-\frac{\partial p_{k}}{\partial x_{m}}\right)\left(\frac{\partial\phi_{k}}{\partial x_{l}}-\frac{\partial p_{k}}{\partial x_{l}}\right)+\left(\sum_{j=1}^{N}\frac{\partial T_{k}^{i}}{\partial u^{j}}\right)\left(\frac{\partial^{2}\phi_{k}}{\partial x_{m}\partial x_{l}}-\frac{\partial^{2}p_{k}}{\partial x_{m}\partial x_{l}}\right)=0,
\end{equation}
where the derivatives of $T_k^i$ are evaluated at $\left(u_{k}^{0},u_{k},...,u_{k}\right)$ and $u_k=\phi_k(\vx)-p_k$. Plugging \eqref{notation_prop32} and \eqref{eqn:pkxl_collusion2} into \eqref{d2pkxmxl_ce_proof}, yields \eqref{d2pkxm1xl1_ce}.
\end{proof} 

Lemma~\ref{lemma:soc_ce} is general enough to accommodate idiosyncratic preferences other than Gumbel distribution and general externality functions $\phi_k(\vx)$. Next, we use Assumptions I and II from Section \ref{sect:platform_competition}, Proposition~\ref{prop:FOC_gumbel_collusion} and Lemma~\ref{lemma:soc_ce} to prove Proposition~\ref{coro:mono_uniqueness}

\begin{proof}[{\bf Proof of Proposition~\ref{coro:mono_uniqueness}}]
The proof has two main steps: (i) Verifying sufficient conditions for \eqref{eqn:gumbel_mono_b} to have a unique solution; (ii) Establishing that a second order condition is satisfied.

Step (i): Note that \eqref{eqn:gumbel_mono_b} is equivalent to
\begin{equation}
\begin{split}
    \frac{2\phi_{bb}}{\beta_b(N + e^{-z_b})} + \frac{\phi_{bs} + \phi_{sb}}{\beta_b(N + e^{-z_s})} - \frac{u_b^0}{\beta_b} - (1+Ne^{z_b}) &= z_b \ \textnormal{ and} \\
\frac{\phi_{bs} + \phi_{sb}}{\beta_s(N + e^{-z_b})} +    \frac{2\phi_{ss}}{\beta_s(N + e^{-z_s})} - \frac{u_s^0}{\beta_s} -(1+Ne^{z_s}) &= z_s. \label{eqn:gumbel_mono_proof}
\end{split}
\end{equation}

For each $k\in\{b,s\}$, $k\neq l$, we denote 
\begin{align}
    F_k(z_b, z_s)&:= \frac{2\phi_{kk}}{\beta_k(N + e^{-z_k})} + \frac{\phi_{bs} + \phi_{sb}}{\beta_k(N + e^{-z_l})} - \frac{u_k^0}{\beta_k} - (1+Ne^{z_k}).
\end{align}
We want to find sufficient conditions for \eqref{eqn:gumbel_mono_proof} to have a unique solution. By bounding each term of $F_k(z_b, z_s)$ independently, we can identify an upper bound for $F_k(z_b, z_s)$ in $\sR^2$, indeed,
\begin{align}
    F_k(z_b, z_s) 
    \leq \frac{|2\phi_{kk}|}{\beta_kN} + \frac{|\phi_{bs} + \phi_{sb}|}{\beta_kN} - \frac{u_k^0}{\beta_k} - 1 := v_k.
    \label{eqn:upper_bd_Fb}
\end{align}
Similarly, if $z_k \leq v_k$, a lower bound for $F_k(z_b, z_s)$ is
\begin{align}
    F_k(z_b, z_s) 
    \geq -\frac{|2\phi_{kk}|}{\beta_kN} - \frac{|\phi_{bs} + \phi_{sb}|}{\beta_kN} - \frac{u_k^0}{\beta_k} - (1 + Ne^{v_k}) := w_k.
    \label{eqn:lower_bd_Fb}
\end{align}
We denote a vector-valued function $\mF(z_b, z_s) := (F_b(z_b, z_s), F_s(z_b, z_s))^T$. By combining \eqref{eqn:upper_bd_Fb} and \eqref{eqn:lower_bd_Fb}, we conclude that $\mF(z_b, z_s)$ maps the area $[w_b, v_b]\times [w_s, v_s]$ into $[w_b, v_b]\times [w_s, v_s]$. It is clear that $\mF(z_b, z_s)$ is a continuous function. By using Brouwer's Fixed-Point Theorem, there is a fixed point for the function $\mF(z_b, z_s)$ in the area $[w_b, v_b]\times [w_s, v_s]$, and this concludes that there is a solution for \eqref{eqn:gumbel_mono_b} in the area $[w_b, v_b]\times [w_s, v_s]$.

We now prove the uniqueness of the solution. We first consider $\phi_{bs}=\phi_{sb} =0$, then \eqref{eqn:gumbel_mono_b} becomes two decoupled equations. In particular, for each $k\in \{b,s\}$, $z_k^\text{C}$ is the solution to 
\begin{equation}\label{Mkm_phi10}
M_{k}^\text{C}\left(z_{b},z_{s},0,\varphi_{2};\psi\right):= \frac{2\phi_{kk}}{N+e^{-z_{k}}}-\beta_{k}\left(1+Ne^{z_{k}}\right)-u_k^0-\beta_{k}z_{k} = 0,
\end{equation}
where we are setting $\varphi_1=(\phi_{bs},\phi_{sb})=0$, $\varphi_2=(\phi_{bb},\phi_{ss})$ and $\psi=\left(\beta_{b},\beta_{s},u_{b}^{0},u_{s}^{0},N\right)$. From (\ref{Mkm_phi10}), the function $M_{k}^\text{C}\left(z_{b},z_{s},0,\varphi_{2};\psi\right)$ is continuous for all $z_k\in \mathbb{R}$. Moreover, \begin{equation}
    \label{Mkm_limits}
    \lim_{z_{k}\to-\infty}M_{k}^\text{C}\left(z_{b},z_{s},0,\varphi_{2};\psi\right)=\infty \ \textnormal{ and } \ \lim_{z_{k}\to\infty}M_{k}^\text{C}\left(z_{b},z_{s},0,\varphi_{2};\psi\right)=-\infty.
\end{equation}
The partial derivative of $M_{k}^\text{C}\left(z_{b},z_{s},0,\varphi_{2};\psi\right)$ w.r.t. $z_k$ can be written as 
\begin{equation}\label{Mkm_derivative}
    \frac{\partial M_{k}^\text{C}\left(z_{b},z_{s},0,\varphi_{2};\psi\right)}{\partial z_{k}}=\frac{2e^{z_{k}}\phi_{kk}-\beta_{k}\left(Ne^{z_{k}}+1\right)^{3}}{\left(Ne^{z_{k}}+1\right)^{2}}.
\end{equation}
Given that $\beta_k>0$, if $\phi_{kk}\leq 0$, then $M_{k}^\text{C}\left(z_{b},z_{s},0,\varphi_{2};\psi\right)$ is strictly decreasing w.r.t. $z_k$ for all $z_k\in\mathbb{R}$. Now, suppose that $\phi_{kk}>0$. The function $z_k\mapsto 2e^{z_{k}}/\left(Ne^{z_{k}}+1\right)^{3}$
has a unique maximum over $\mathbb{R}$ at $z_k^0 = \log\frac{1}{2N}$ and such maximum is given by $\frac{8}{27N}$. Therefore, the numerator of (\ref{Mkm_derivative}) is strictly negative whenever $\beta_k>\frac{8\phi_{kk}}{27N}$. Thus, if either $\phi_{kk}\leq 0$ or ($\phi_{kk}>0$ and $\beta_k>\frac{8\phi_{kk}}{27N}$), then
\begin{equation}\label{Mkm_derivative_neg}
    \frac{\partial M_{k}^\text{C}\left(z_{b},z_{s},0,\varphi_{2};\psi\right)}{\partial z_k}<0 \ \textnormal{ for all } z_k.
\end{equation}
It follows that there is a unique solution for (\ref{Mkm_phi10}). Now, notice that 
\begin{equation}
    \label{Mkm_det}
\textnormal{det}\left(\frac{\partial (M_b^C,M_s^C)}{\partial (z_b,z_s)}\right)\Big|_{\left(z_{b}^\text{C},z_{s}^\text{C},0,\varphi_{2};\psi\right)}=\frac{\partial M_{b}^\text{C}\left(z_{b}^\text{C},z_{s}^\text{C},0,\varphi_{2};\psi\right)}{\partial z_{b}}\frac{\partial M_{s}^\text{C}\left(z_{b}^\text{C},z_{s}^\text{C},0,\varphi_{2};\psi\right)}{\partial z_{s}}>0.
\end{equation}
By the Implicit Function Theorem, there exists $\epsilon>0$ and a unique continuous function $$(z_b^\text{C} (\varphi_1) , z_s^\text{C} (\varphi_1)): B_\epsilon (0, 0) \longrightarrow \sR^2$$ such that $(z_b^\text{C} (0, 0) , z_s^\text{C} (0, 0)) = (z_b^\text{C}, z_s^\text{C})$. Moreover, for all $\varphi_1 \in B_{\epsilon}(0, 0)$,
\begin{equation*}
 M^\text{C}(z_b^C(\varphi_1), z_s^C(\varphi_1), \varphi_1, \varphi_2; \psi) :=\left[\begin{array}{c}
M_{b}^\text{C}(z_b^C(\varphi_1), z_s^C(\varphi_1), \varphi_1, \varphi_2; \psi)\\
M_{s}^\text{C}(z_b^C(\varphi_1), z_s^C(\varphi_1), \varphi_1, \varphi_2; \psi)
\end{array}\right]=0.
\end{equation*}

Step (ii): We show that a second-order condition is satisfied. Combining \eqref{d2Tiuruj}, \eqref{eqn:s_SOC_b_ce} and \eqref{d2pkxm1xl1_ce}, we show (supplementary file \emph{Gumbel\_N.nb}) that 
\begin{equation}\label{d2pi1xb1xs1_varphi10_tot}
    D_{(x_b^1,x_s^1)}^{2}\Pi_{tot}= \begin{bmatrix}
      -Ne^{-z_b} \left(\beta_b \left(N e^{z_b}+1\right)^3-2 e^{z_b} \phi_{bb}\right)  & N (\phi_{bs}+\phi_{sb})\\
        N (\phi_{bs}+\phi_{sb}) & -Ne^{-z_s} \left(\beta_s \left(N e^{z_s}+1\right)^3-2 e^{z_s} \phi_{ss}\right)
    \end{bmatrix}.
\end{equation}

Following the same argument from the paragraph above \eqref{Mkm_derivative_neg}, if $\phi_{bs}=\phi_{sb}=0$, $N\geq2$ and for each $k\in\left\{ b,s\right\}$, $\left(\phi_{kk},\beta_{k}\right)$ satisfies either 
\begin{equation*}
\left(\phi_{kk}\leq0\textnormal{ and }\beta_{k}>0\right)\textnormal{ or }\left(\phi_{kk}>0\textnormal{ and }\beta_k>\frac{8\phi_{kk}}{27N}\right),
\end{equation*}
then $D_{(x_b^1,x_s^1)}^{2}\Pi_{tot}|_{\varphi_1=0}$ is negative definite. By continuity there exists $\tilde{\epsilon}>0$ such that for all $\varphi_{1}\in B_{\tilde{\epsilon}}\left(0,0\right)$, $$D_{(x_b^1,x_s^1)}^{2}\Pi_{tot}\left(z_{b}\left(\varphi_{1}\right),z_{s}\left(\varphi_{1}\right),\varphi_{1},\varphi_{2};\psi\right)$$ is negative definite. Therefore, a second condition for \eqref{FOCM_z_proof2} is satisfied.

\end{proof}

\begin{proof}[{\bf Proof of Proposition \ref{prop:sign_zk}}]
Suppose that $N\geq 2$ and for each $k\in\{b,s\}$, $(\phi_{kk},\beta_k)$ satisfies \eqref{condition_existence_beta}. Assume that $\varphi_1 = (\phi_{bs},\phi_{sb}) = 0$. From the proof of Proposition \ref{prop:existence_gumbel}, $M_k$ does not depend on $z_j$ for $j\neq k$ (see (\ref{Mk_phi10})). Moreover,  
\begin{equation}\label{eqn:Mk_ninfty}
  \lim_{z_{k}\to-\infty}M_{k}\left(z_{b},z_{s},0,\varphi_{2};\psi\right)=\infty, 
\end{equation}
and $M_{k}\left(z_{b},z_{s},0,\varphi_{2};\psi\right)$ is strictly decreasing in $z_k$ for all $z_k\in\R$. From \eqref{Mk_phi10}, we compute  
\begin{equation}\label{Mk_phi100}
\begin{split}
M_{k}\left(z_{b},z_{s},0,\varphi_{2};\psi\right)\Big|_{z_k = 0}&=-\frac{\beta_{k}^{2}\left(N+1\right)^{3}-2\phi_{kk}\beta_{k}\left(N+1\right)^{2}+2\phi_{kk}^{2}}{\left(N+1\right)\left(\beta_{k}N\left(N+1\right)-\phi_{kk}\right)}-u_{k}^{0}\\
&=-\frac{A_k}{\left(N+1\right)\left(\beta_{k}N\left(N+1\right)-\phi_{kk}\right)},
\end{split}
\end{equation}
where $A_k$ is a polynomial of order 2 in $\beta_{k}$ given by 
\begin{equation}
    \label{def:A}
A_k:=\beta_{k}^{2}\left(N+1\right)^{3}+\beta_{k}\left(N+1\right)^{2}\left(Nu_{k}^{0}-2\phi_{kk}\right)-\phi_{kk}\left(\left(N+1\right)u_{k}^{0}-2\phi_{kk}\right).
\end{equation}
The largest root of $A_k$ is
\begin{equation}\label{eqn:gamma}
    \gamma(N,\phi_{kk},u_k^0):= \frac{\left(2\phi_{kk}-Nu_{k}^{0}\right)+\sqrt{\left(2\phi_{kk}-Nu_{k}^{0}\right)^{2}+4\phi_{kk}\left(u_{k}^{0}-\frac{2\phi_{kk}}{N+1}\right)}}{2\left(N+1\right)}.
\end{equation}
We also denote the smallest root of $A_k$ by $\gamma_{-}$. Note that by (\ref{condition_existence_beta}), the denominator of (\ref{Mk_phi100}) is always positive.

Step (i): If $\beta_k>\gamma(N,\phi_{kk},u_k^0)$, by (\ref{Mk_phi100}), \eqref{def:A} and \eqref{eqn:gamma}, then (\ref{Mk_phi100}) is strictly negative. From (\ref{eqn:Mk_ninfty}), the fact that $M_{k}\left(z_{b},z_{s},0,\varphi_{2};\psi\right)$ is strictly decreasing in $z_k$ for all $z_k\in\R$, and that $z_k^*$ is the unique solution of $M_{k}\left(z_{b},z_{s},0,\varphi_{2};\psi\right)=0$, then $z_k^*<0$.

Step (ii): If $\gamma_{-}<\beta_k<\gamma(N,\phi_{kk},u_k^0)$, then (\ref{Mk_phi100}) is strictly positive. From (\ref{eqn:Mk_ninfty}), the fact that $M_{k}\left(z_{b},z_{s},0,\varphi_{2};\psi\right)$ is strictly decreasing in $z_k$ for all $z_k\in\R$, and that $z_k^*$ is the unique solution of $M_{k}\left(z_{b},z_{s},0,\varphi_{2};\psi\right)=0$, then $z_k^*>0$.

Note that from the definition of $\gamma_{-}$ as the smallest root of the second degree polynomial $A_k$, if $\phi_{kk}\leq 0$, then $\gamma_{-}\leq 0$. If $\phi_{kk}>0$, 
then $f(N)\phi_{kk}>\gamma_{-}$ (see the supplementary file \emph{Gumbel\_N.nb}). It follows that (\ref{condition_existence_beta}) combined with $\beta_k<\gamma(N,\phi_{kk},u_k^0)$ imply that $z_k^*>0$. 

To end the proof of this proposition, note that from the Proof of Proposition \ref{prop:existence_gumbel}, there is $\epsilon>0$ and a unique continuous function $$\left(z_{b}^{*}\left(\cdot\right),z_{s}^{*}\left(\cdot\right)\right):B_{\epsilon}\left(0,0\right)\longrightarrow\mathbb{R}^{2}$$ such that for all $\varphi_{1}\in B_{\epsilon}\left(0,0\right)$, $
M\left(z_{b}^{*}\left(\varphi_{1}\right),z_{s}^{*}\left(\varphi_{1}\right),\varphi_{1},\varphi_{2};\psi\right)=0.$  By continuity, for $\hat{\varepsilon}=|z_k^*(0,0)|/2$, there exists $\delta>0$ such that for all $\varphi_{1}\in B_{\min(\delta, \varepsilon)}\left(0,0\right)$,
\begin{equation*}
    z_{k}^{*}(0,0)-\frac{\left|z_{k}^{*}(0,0)\right|}{2}<z_{k}^{*}\left(\varphi_{1}\right)<z_{k}^{*}(0,0)+\frac{\left|z_{k}^{*}(0,0)\right|}{2}.
\end{equation*}
Thus, when $z_{k}^{*}(0,0)<0$, by step (i) we obtain $z_{k}^{*}\left(\varphi_{1}\right)<\frac{z_{k}^{*}(0,0)}{2}<0$. When $z_{k}^{*}>0$, by (ii) we obtain $z_{k}^{*}\left(\varphi_{1}\right)>\frac{z_{k}^{*}(0,0)}{2}>0$. 

\end{proof}

\begin{proof}[{\bf Proof of Corollary \ref{coro:sign_zk}}]
    We show that case (ii) in Proposition \ref{prop:sign_zk} is not feasible for large values of $u_k^0$. We first assume $\phi_{kk}\leq 0$ and $\beta_k>0$.  From \eqref{eqn:gamma}, 
\begin{equation}
    \label{sign_ofgamma}
    \textnormal{sign}\left(\gamma\left(N,\phi_{kk},u_{k}^{0}\right)\right)=\textnormal{sign}\left(\frac{2\phi_{kk}}{N+1}-u_{k}^{0}\right).
\end{equation}
Set $u_{k,1}^0:= 2\phi_{kk}/(N+1)$. It follows that if $\phi_{kk}\leq 0 $ and $u_k^0\geq u_{k,1}^0$, then $\gamma\left(N,\phi_{kk},u_{k}^{0}\right)\leq 0$. Thus, (ii) in Proposition \ref{prop:sign_zk} is not feasible as $\beta_k>0$.

Next, we assume that $\phi_{kk}>0$ and $\beta_{k}>f\left(N\right)\phi_{kk}$. We verify  (see \emph{Gumbel\_N.nb}) that for all $\phi_{kk}>0$ and $N\geq 2$,
\begin{equation}\label{uk02_condition}
\begin{cases}
\gamma\left(N,\phi_{kk},u_{k}^{0}\right)\leq f\left(N\right)\phi_{kk} & \textnormal{for any }u_{k}^{0}\geq u_{k,2}^{0},\\
f\left(N\right)\phi_{kk}<\gamma\left(N,\phi_{kk},u_{k}^{0}\right) & \textnormal{for any }u_{k}^{0}< u_{k,2}^{0},
\end{cases}
\end{equation}
where 
\begin{equation}
    \label{prop41_uk02}
    u_{k,2}^{0}:=-\frac{2 \left(N^4-2 N^3-2 N^2+2 N+2\right) \phi_{kk}}{N^3 \left(2 N^3+N^2-3 N-2\right)}.
\end{equation}
In particular, if $u_{k}^{0}\geq u_{k,2}^{0}$, then case (ii) in Proposition \ref{prop:sign_zk}  is not feasible as $\beta_{k}>f\left(N\right)\phi_{kk}$.

We finish the proof with a piecewise definition for $\tilde{u}_k^0$,
\begin{equation}
    \label{def:tilde_uk0}
    \tilde{u}_{k}^{0}:= \begin{cases}
        u_{k,1}^{0}& \textnormal{if } \phi_{kk}\leq 0,\\
        u_{k,2}^{0} & \textnormal{if } \phi_{kk}>0,
    \end{cases}
\end{equation}
where $u_{k,1}^{0} = 2\phi_{kk}/(N+1)$ and $u_{k,2}^{0}$ is given by \eqref{prop41_uk02}. 

\end{proof}

\begin{proof}[{\bf Proof of Corollary \ref{coro:sign_zk_N}}]

For each $k\in\{b,s\}$, let $u_k^0\in \sR$, $\Phi\in \mathbb{R}^{2x2}$ and $\beta_k>0$. From (\ref{FOCs_z2}), as $N\to\infty$, the FOC of \eqref{pi} becomes
\begin{equation*}
    -u_{k}^{0}-\beta_{k}\left(z_{k}+1\right)= 0, \textnormal{ for each } k\in\{b,s\}.
\end{equation*}
Thus, $\lim_{N\to\infty}z_{k}^{*}=-\left(\frac{u_{k}^{0}}{\beta_{k}}+1\right)$. Moreover, $\lim_{N\to\infty}Nx_{k}^{*}=1$ and $\lim_{N\to\infty}p_{k}^{*}=\beta_{k}$.
The solution of the equation $z_k^*=0$ when $N\to\infty$ is $\beta_k = -u_k^0$ if $u_k^0<0$. If $u_k^0\geq 0$, then $\lim_{N\to\infty}z_{k}^{*}<0$.

\end{proof}

\begin{proof}[{\bf Proof of Proposition \ref{prop:dpkduk0}}]
Recall from \eqref{eqn:relationship_xp_z} in the Proof of Proposition \ref{prop:FOC_gumbel}, $$\boldsymbol{p}^*=\Phi\Omega\left(\boldsymbol{z}^*\right)-\boldsymbol{\beta}\boldsymbol{z}^*-\boldsymbol{u}_{0},$$ where $\Omega(\vz^*) = (\omega(z_b^*), \omega(z_s^*))^T$, $\omega(\cdot)$ and $\vz^*$ are given by \eqref{eqn:proof_xki_omega} and \eqref{FOCs_z2}, respectively. We want to compute the following quantity when $\varphi_1=0$, 
\begin{equation}\label{eqn:dp_duk0}
      \frac{\partial p_{k}^{*}}{\partial u_{k}^{0}}\Big|_{\varphi_1=0}=\left[\frac{\phi_{kk}e^{z_{k}^{*}}}{\left(Ne^{z_{k}^{*}}+1\right)^{2}}-\beta_{k}\right]\frac{\partial z_{k}^{*}}{\partial u_{k}^{0}}-1, \quad   k\in \{b,s\}.  
\end{equation}
By \eqref{Mk_phi10}, $z_{k}^{*}$ is uniquely characterized by $M_{k}\left(z_{b}^*,z_{s}^*,0,\varphi_{2};\psi\right)+u_{k}^{0}-u_{k}^{0}=0$. It follows that 
\begin{equation}\label{eqn:dzk_uk0}
    \begin{split}
\frac{\partial z_{k}^{*}}{\partial u_{k}^{0}}=\left[\frac{\partial\left(M_{k}\left(z_{b}^*,z_{s}^*,0,\varphi_{2};\psi\right)+u_{k}^{0}\right)}{\partial z_{k}}\right]^{-1}=\left[\frac{\partial M_{k}\left(z_{b}^*,z_{s}^*,0,\varphi_{2};\psi\right)}{\partial z_{k}}\right]^{-1},
    \end{split}
\end{equation}
where $\frac{\partial M_{k}\left(z_{b}^*,z_{s}^*,0,\varphi_{2};\psi\right)}{\partial z_{k}}$ is given by \eqref{Mk_derivative}. After plugging (\ref{eqn:dzk_uk0}) into (\ref{eqn:dp_duk0}), we show (see \emph{Gumbel\_N.nb}) that  
\begin{equation}\begin{split}\label{eqn:dp_duk02}
      \frac{\partial p_{k}^{*}}{\partial u_{k}^{0}}\Big|_{\varphi_1=0}&=-\frac{n_{p,u}\left(z_{k}^{*},\beta_{k},\phi_{kk},N\right)}{d_{p,u}\left(z_{k}^{*},\beta_{k},\phi_{kk},N\right)},
\end{split}
\end{equation}
where $n_{p,u}\left(z_{k}^{*},\beta_{k},\phi_{kk},N\right)$ and $d_{p,u}\left(z_{k}^{*},\beta_{k},\phi_{kk},N\right)$ can be written as polynomials in $e^{z_{k}^{*}}$ in the following way, 
\begin{equation}\label{eqn:npu_dpu}
\begin{split}
n_{p,u}\left(z_{k}^{*},\beta_{k},\phi_{kk},N\right)&:= \sum_{m=1}^{5}n_{p,u,m}e^{mz_{k}^{*}} ,\quad  \textnormal{ and }\\
d_{p,u}\left(z_{k}^{*},\beta_{k},\phi_{kk},N\right)&:= \sum_{m=0}^{6}d_{p,u,m}e^{m z_{k}^{*}}.
\end{split}
\end{equation}
Moreover, the coefficients $n_{p,u,m}$ are as follows:
\begin{equation}\label{npum_coefficients}
    \begin{array}{c}
n_{p,u,1}=\beta_{k}^{2}\left(\beta_{k}-\phi_{kk}\right),\\
n_{p,u,2}=2\beta_{k}\left(2\beta_{k}^{2}N-2\beta_{k}N\phi_{kk}+\phi_{kk}^{2}\right),\\
n_{p,u,3}=6\beta_{k}^{3}N^{2}-N\beta_{k}^{2}\left(6N+1\right)\phi_{kk}+\phi_{kk}^{2}\beta_{k}\left(4N-1\right)-\phi_{kk}^{3},\\
n_{p,u,4}=2\beta_{k}N^{2}\left(\beta_{k}-\phi_{kk}\right)\left(2\beta_{k}N-\phi_{kk}\right), \textnormal{ and } \\
n_{p,u,5}=\beta_{k}N^{2}\left(\beta_{k}^{2}N^{2}-\phi_{kk}N\beta_{k}\left(N+1\right)+\phi_{kk}^{2}\right).
\end{array}
\end{equation}
The coefficients $d_{p,u,m}$ are given by \[\Scale[0.85]{
d_{p,u,0}=\beta_{k}^{3},}\]
\[\Scale[0.85]{d_{p,u,1}=\beta_{k}^{2}\left(\beta_{k}\left(6N-1\right)-4\phi_{kk}\right),}\]
\[\Scale[0.85]{d_{p,u,2}=\beta_{k}\left(\beta_{k}^{2}\left(15N^{2}-6N+1\right)+4\beta_{k}\left(1-4N\right)\phi_{kk}+5\phi_{kk}^{2}\right),}\]
\[\Scale[0.85]{d_{p,u,3}=N\beta_{k}^{3}\left(20N^{2}-14N+4\right)+\beta_{k}^{2}\left(-24N^{2}+11N-1\right)\phi_{kk}+\beta_{k}\left(10N-3\right)\phi_{kk}^{2}-2\phi_{kk}^{3},}\]
\[\Scale[0.85]{d_{p,u,4}=\beta_{k}N\left(N\beta_{k}^{2}\left(15N^{2}-16N+6\right)+\beta_{k}\left(-16N^{2}+10N-2\right)\phi_{kk}+\left(5N-2\right)\phi_{kk}^{2}\right),}\]
\[\Scale[0.85]{d_{p,u,5}=\beta_{k}N^{2}\left(N\beta_{k}^{2}\left(6N^{2}-9N+4\right)+\beta_{k}\left(-4N^{2}+3N-1\right)\phi_{kk}+\phi_{kk}^{2}\right), \textnormal{ and }}\]
\begin{equation}\label{dp_um_coefficients}
    \small{d_{p,u,6}=\beta_{k}^{3}\left(N-1\right)^{2}N^{4}}.
\end{equation}

Because the expressions determining $n_{p,u}\left(z_{k}^{*},\beta_{k},\phi_{kk},N\right)$ and $d_{p,u}\left(z_{k}^{*},\beta_{k},\phi_{kk},N\right)$ are complex, we focus on finding sufficient conditions for these expressions to have a specific sign for all $z_{k}^{*}$. 

Case (i): $\frac{\partial p_{k}^{*}}{\partial u_k^0}\Big|_{\varphi_{1}=0}<0$. 
We verify in the supplementary file \emph{Gumbel\_N.nb}
that $n_{p,u}$ and $d_{p,u}$ (see \eqref{eqn:npu_dpu}) are positive, if either of the two conditions below, (i-a) or (i-b), hold: 
\begin{itemize}
    \item[(i-a)] $\phi_{kk}\leq 0$, $N\geq 2$ and $\beta_{k}>0$.
    \item[(i-b)] $\phi_{kk}>0$, $N\geq 2$ and $\beta_{k}>g_{p,u}(N, \phi_{kk})$, where $g_{p}(N,\phi_{kk})$ is the largest real root of the third degree polynomial $n_{p,u,5}$ (viewed as a polynomial in $\beta_k$). 
\end{itemize}

    Using the quadratic formula, we verify that $n_{p,u,5}$ (see \eqref{npum_coefficients}) has three real roots and that $g_{p,u}(N, \phi_{kk})$ is linear in $\phi_{kk}$ and can thus be expressed as  $g_{p,u}(N, \phi_{kk})= g_{p,u}\left(N\right)\phi_{kk}$ where
    $$g_{p,u}\left(N\right):=\frac{\left(N+\sqrt{\left(N-1\right)\left(N+3\right)}+1\right)}{2N}.$$

Case (ii): $\frac{\partial p_{k}^{*}}{\partial N}\Big|_{\varphi_{1}=0}>0$. We verify in the supplementary file \emph{Gumbel\_N.nb} that $n_{p,u}$ and $d_{p,u}$ (see \eqref{eqn:npu_dpu}) are negative and positive, respectively, if the condition below, (ii), holds:
\begin{itemize}
    \item[(ii)] $\phi_{kk}>0$, $N\geq 3$ and $f(N)\phi_{kk}<\beta_{k}<f_{p,u}\left(N,\phi_{kk}\right)$, where $f_{p,u}\left(N,\phi_{kk}\right)$ is the largest real root of the third degree polynomial $n_{p,u,2}$ (viewed as a polynomial in $\beta_k$). 
\end{itemize}

    Using the quadratic formula, we verify that $n_{p,u,2}$ (see \eqref{npum_coefficients}) has three real roots and that $f_{p,u}(N, \phi_{kk})$ is linear in $\phi_{kk}$ and can thus be expressed as  $f_{p,u}(N, \phi_{kk})= f_{p,u}\left(N\right)\phi_{kk}$ where
    $$f_{p,u}\left(N\right):=\frac{1}{2}\left(\sqrt{\frac{N-2}{N}}+1\right).$$

We now show that the limits in  \eqref{limitspkuk0} hold. From \eqref{Mk_limits} and \eqref{eqn:ineDMk_neg}, the function $M_{k}\left(z_{b},z_{s},0,\varphi_{2};\psi\right)+u_{k}^{0}$ is strictly decreasing w.r.t.~$z_{k}$ and it approaches $\pm\infty$ as $z_{k}$ approaches $\mp\infty$. Thus, if $z_{k}^{*}$ is the unique solution to $M_{k}\left(z_{b}^*,z_{s}^*,0,\varphi_{2};\psi\right)+u_{k}^{0}=u_{k}^{0}$, as $u_{k}^{0}\to\infty$, $M_{k}\left(z_{b}^*,z_{s}^*,0,\varphi_{2};\psi\right)\to\infty$ and therefore, $z_{k}^{*}\to-\infty$. It follows that $\lim_{u_{k}^{0}\to\infty}z_{k}^{*}=-\infty$. Similarly, $\lim_{u_{k}^{0}\to-\infty}z_{k}^{*}=\infty$. From \eqref{eqn:relationship_xp_z} and \eqref{Mk_phi10}, it follows that
\begin{equation}
\label{def:pk_gumbel_simplified}p_{k}^{*}=\frac{\left(\beta_{k}+e^{z_{k}^{*}}\left(\beta_{k}N-\phi_{kk}\right)\right)\left(\beta_{k}\left(Ne^{z_{k}^{*}}+1\right)^{2}-e^{z_{k}^{*}}\phi_{kk}\right)}{\left(1+Ne^{z_{k}^{*}}\right)\left(\beta_{k}\left(1+\left(N-1\right)e^{z_{k}^{*}}\right)\left(1+Ne^{z_{k}^{*}}\right)-e^{z_{k}^{*}}\phi_{kk}\right)}.
\end{equation}
Taking limits in the above expression yields, 
\begin{equation*}
    \begin{split}
        \lim_{u_{k}^{0}\to-\infty}p_{k}^{*}&=\frac{N}{N-1}\beta_{k}-\frac{\phi_{kk}}{N-1}, \textnormal{ and}\\
\lim_{u_{k}^{0}\to\infty}p_{k}^{*}&=\beta_{k}.
    \end{split}
\end{equation*}

Finally, we show that there exists $\epsilon>0$ such that for any $(\phi_{bs},\phi_{sb})\in B_\epsilon(0)$, (i) and (ii) in Proposition \ref{prop:dpkduk0} hold. From \eqref{eqn:proof_xki_omega}, \eqref{eqn:relationship_xp_z} and \eqref{FOCs_z2}, $\partial p_k^*/\partial u_k^0$ is a rational function w.r.t.~$(\phi_{bs},\phi_{sb})$. Moreover, at  $(\phi_{bs},\phi_{sb})=(0,0)$, the partial derivative $\partial p_k^*/\partial u_k^0$ is given by \eqref{eqn:dp_duk0}. Thus, $\partial p_k^*/\partial u_k^0$ is continuous w.r.t.~$(\phi_{bs},\phi_{sb})$  at  $(0,0)$. We conclude that there exists $\epsilon>0$ such that for any $(\phi_{bs},\phi_{sb})\in B_\epsilon(0)\subset \sR^2$, cases (i) and (ii) in Proposition \ref{prop:dpkduk0} hold true. 
\end{proof}

\begin{proof}[{\bf Proof of Proposition \ref{prop:dpikduk0}}]
Using \eqref{def:pk_gumbel_simplified} from the Proof of Proposition \ref{prop:dpkduk0} and \eqref{eqn:proof_xki_omega}, we can write 
\begin{equation}\label{def:pik_gumbel_simplified}
     \pi_{k}^{*}\Big |_{\varphi_1=0}=\frac{e^{z_{k}^{*}}\left(\beta_{k}+e^{z_{k}^{*}}\left(\beta_{k}N-\phi_{kk}\right)\right)\left(\beta_{k}\left(Ne^{z_{k}^{*}}+1\right)^{2}-e^{z_{k}^{*}}\phi_{kk}\right)}{\left(Ne^{z_{k}^{*}}+1\right)^{2}\left(\beta_{k}\left(\left(N-1\right)e^{z_{k}^{*}}+1\right)\left(Ne^{z_{k}^{*}}+1\right)-e^{z_{k}^{*}}\phi_{kk}\right)}, \quad   k\in \{b,s\}.  
\end{equation}
It follows that $\frac{\partial\pi_{k}^{*}}{\partial u_{k}^{0}}\Big |_{\varphi_1=0}=\frac{\partial\pi_{k}^{*}}{\partial z_{k}^{*}}\frac{\partial z_{k}^{*}}{\partial u_{k}^{0}}$, where $\frac{\partial z_{k}^{*}}{\partial u_{k}^{0}}$ is given by \eqref{eqn:dzk_uk0}. We show (see \emph{Gumbel\_N.nb}) that  
\begin{equation}\begin{split}\label{eqn:dpi_duk0}
      \frac{\partial\pi_{k}^{*}}{\partial u_{k}^{0}}\Big|_{\varphi_1=0}&=-\frac{n_{\pi,u}\left(z_{k}^{*},\beta_{k},\phi_{kk},N\right)}{d_{\pi,u}\left(z_{k}^{*},\beta_{k},\phi_{kk},N\right)},
\end{split}
\end{equation}
where $n_{\pi,u}\left(z_{k}^{*},\beta_{k},\phi_{kk},N\right)$ and $d_{\pi,u}\left(z_{k}^{*},\beta_{k},\phi_{kk},N\right)$ can be written as polynomials in $e^{z_{k}^{*}}$ in the following way: 
\begin{equation}\label{eqn:npi_u_dpi_u}
\begin{split}
n_{\pi,u}\left(z_{k}^{*},\beta_{k},\phi_{kk},N\right)&:= \sum_{m=1}^{6}n_{\pi,u,m}e^{mz_{k}^{*}} ,\quad  \textnormal{ and }\\
d_{\pi,u}\left(z_{k}^{*},\beta_{k},\phi_{kk},N\right)&:= \sum_{m=0}^{7}d_{\pi,u,m}e^{mz_{k}^{*}}.
\end{split}
\end{equation}
Moreover, the coefficients $n_{\pi,u,m}$ are as follows:
\begin{equation}\label{npi_um_coefficients}
    \begin{array}{c}
n_{\pi,u,1}=\beta_{k}^{3},\\
n_{\pi,u,2}=\beta_{k}^{2}\left(5\beta_{k}N-4\phi_{kk}\right),\\
n_{\pi,u,3}=\beta_{k}\left(10\beta_{k}^{2}N^{2}+2\beta_{k}\left(1-7N\right)\phi_{kk}+5\phi_{kk}^{2}\right),\\
n_{\pi,u,4}=10\beta_{k}^{3}N^{3}+2N\beta_{k}^{2}\left(2-9N\right)\phi_{kk}+\beta_{k}\left(9N-2\right)\phi_{kk}^{2}-2\phi_{kk}^{3}, \\
n_{\pi,u,5}=\beta_{k}N\left(5\beta_{k}^{2}N^{3}+2N\beta_{k}\left(1-5N\right)\phi_{kk}+\left(4N-1\right)\phi_{kk}^{2}\right) \textnormal{ and } \\
n_{\pi,u,6}=\beta_{k}N^{2}\left(\beta_{k}^{2}N^{3}-2\beta_{k}N^{2}\phi_{kk}+\phi_{kk}^{2}\right).
\end{array}
\end{equation}
The coefficients $d_{\pi,u,m}$ are given by \[\Scale[0.85]{
d_{\pi,u,0}=\beta_{k}^{3},}\]
\[\Scale[0.85]{d_{\pi,u,1}=\beta_{k}^{2}\left(\beta_{k}\left(7N-1\right)-4\phi_{kk}\right),}\]
\[\Scale[0.85]{d_{\pi,u,2}=\beta_{k}\left(\beta_{k}^{2}\left(21N^{2}-7N+1\right)+4\beta_{k}\left(1-5N\right)\phi_{kk}+5\phi_{kk}^{2}\right),}\]
\[\Scale[0.85]{d_{\pi,u,3}=N\beta_{k}^{3}\left(35N^{2}-20N+5\right)+\beta_{k}^{2}\left(-40N^{2}+15N-1\right)\phi_{kk}+\beta_{k}\left(15N-3\right)\phi_{kk}^{2}-2\phi_{kk}^{3},}\]
\[\Scale[0.85]{d_{\pi,u,4}=N^{2}\beta_{k}^{3}\left(35N^{2}-30N+10\right)+N\beta_{k}^{2}\left(-40N^{2}+21N-3\right)\phi_{kk}+5N\beta_{k}\left(3N-1\right)\phi_{kk}^{2}-2N\phi_{kk}^{3},}\]
\[\Scale[0.85]{d_{\pi,u,5}=\beta_{k}N^{2}\left(N\beta_{k}^{2}\left(21N^{2}-25N+10\right)+\beta_{k}\left(-20N^{2}+13N-3\right)\phi_{kk}+\left(5N-1\right)\phi_{kk}^{2}\right),}\]
\[\Scale[0.85]{d_{\pi,u,6}=\beta_{k}N^{3}\left(N\beta_{k}^{2}\left(7N^{2}-11N+5\right)+\beta_{k}\left(-4N^{2}+3N-1\right)\phi_{kk}+\phi_{kk}^{2}\right), \textnormal{ and }}\]
\begin{equation}\label{dpi_um_coefficients}
    \small{d_{\pi,u,7}=\beta_{k}^{3}\left(N-1\right)^{2}N^{5}}.
\end{equation}

Because the expressions determining $n_{\pi,u}\left(z_{k}^{*},\beta_{k},\phi_{kk},N\right)$ and $d_{\pi,u}\left(z_{k}^{*},\beta_{k},\phi_{kk},N\right)$ are complex, we focus on finding sufficient conditions for these expressions to have a specific sign for all $z_{k}^{*}$. We verify in the supplementary file \emph{Gumbel\_N.nb}
that $n_{\pi,u}$ and $d_{\pi,u}$ (see \eqref{eqn:npi_u_dpi_u}) are positive, if either of the two conditions below, (i-a) or (i-b), hold: 
\begin{itemize}
    \item[(i-a)] $\phi_{kk}\leq 0$, $N\geq 2$ and $\beta_{k}>0$.
    \item[(i-b)] $\phi_{kk}>0$, $N\geq 2$ and $\beta_{k}>g_{\pi,u}(N, \phi_{kk})$, where $g_{\pi,u}(N,\phi_{kk})$ is the largest real root of the third degree polynomial $n_{\pi,u,6}$ (viewed as a polynomial in $\beta_k$). 
\end{itemize}

    Using the quadratic formula, we verify that $n_{\pi,u,6}$ (see \eqref{npi_um_coefficients}) has three real roots and that $g_{\pi,u}(N, \phi_{kk})$ is linear in $\phi_{kk}$ and can thus be expressed as  $g_{\pi,u}(N, \phi_{kk})= g_{\pi,u}\left(N\right)\phi_{kk}$ where
    $$g_{\pi,u}\left(N\right):=\sqrt{\frac{N-1}{N^{3}}}+\frac{1}{N}.$$

We now show that the limits in  \eqref{limitspikuk0} hold. From the Proof of Proposition \ref{prop:dpikduk0}, we have that $\lim_{u_{k}^{0}\to\infty}z_{k}^{*}=-\infty$ and $\lim_{u_{k}^{0}\to-\infty}z_{k}^{*}=\infty$ (see the paragraph above \eqref{def:pk_gumbel_simplified}). Taking limits in \eqref{def:pik_gumbel_simplified} yields, 
\begin{equation*}
    \begin{split}
        \lim_{u_{k}^{0}\to-\infty}\pi_{k}^{*}&=\frac{1}{N-1}\beta_{k}-\frac{\phi_{kk}}{(N-1)N}, \textnormal{ and}\\
\lim_{u_{k}^{0}\to\infty}\pi_{k}^{*}&=0.
    \end{split}
\end{equation*}

Finally, we show that there exists $\epsilon>0$ such that for any $(\phi_{bs},\phi_{sb})\in B_\epsilon(0)$, the first part in Proposition \ref{prop:dpikduk0} holds. From \eqref{eqn:proof_xki_omega}, \eqref{eqn:relationship_xp_z} and \eqref{FOCs_z2}, $\partial \pi_k^*/\partial u_k^0$ is a rational function w.r.t.~$(\phi_{bs},\phi_{sb})$. Moreover, at  $(\phi_{bs},\phi_{sb})=(0,0)$, the partial derivative $\partial \pi_k^*/\partial u_k^0$ is given by \eqref{eqn:dpi_duk0}. Thus, $\partial \pi_k^*/\partial u_k^0$ is continuous w.r.t.~$(\phi_{bs},\phi_{sb})$  at  $(0,0)$. We conclude that there exists $\epsilon>0$ such that for any $(\phi_{bs},\phi_{sb})\in B_\epsilon(0)\subset \sR^2$, the first part of Proposition \ref{prop:dpikduk0} holds true. 
\end{proof}

\begin{proof}[{\bf Proof of Proposition \ref{prop:dCSkuk0}}]
Plugging \eqref{def:pk_gumbel_simplified} from the Proof of Proposition \ref{prop:dpkduk0} and \eqref{eqn:proof_xki_omega} into \eqref{eqn:equilibrium_CS}, we obtain
\begin{equation}\label{def:CSk_gumbel_simplified}
\begin{split}
     CS_{k}^{*}\Big |_{\varphi_1=0}=&\frac{-\beta_{k}^{2}\left(1+Ne^{z_{k}^{*}}\right)^{3}+\beta_{k}\phi_{kk}e^{z_{k}^{*}}\left(\left(2N-1\right)e^{z_{k}^{*}}+3\right)\left(1+Ne^{z_{k}^{*}}\right)-2e^{2z_{k}^{*}}\phi_{kk}^{2}}{\left(1+Ne^{z_{k}^{*}}\right)\left(\beta_{k}\left(1+\left(N-1\right)e^{z_{k}^{*}}\right)\left(1+Ne^{z_{k}^{*}}\right)-e^{z_{k}^{*}}\phi_{kk}\right)}\\
     &+\mu_{k}+\beta_{k}\left(\ln\left(N+1\right)+\gamma\right),\quad k\in \{b,s\}.
     \end{split}
\end{equation}
It follows that $\frac{\partial CS_{k}^{*}}{\partial u_{k}^{0}}\Big |_{\varphi_1=0}=\frac{\partial CS_{k}^{*}}{\partial z_{k}^{*}}\frac{\partial z_{k}^{*}}{\partial u_{k}^{0}}$, where $\frac{\partial z_{k}^{*}}{\partial u_{k}^{0}}$ is given by \eqref{eqn:dzk_uk0}. We show (see \emph{Gumbel\_N.nb}) that  
\begin{equation}\begin{split}\label{eqn:dCS_duk0}
      \frac{\partial CS_{k}^{*}}{\partial u_{k}^{0}}\Big|_{\varphi_1=0}&=\frac{n_{CS,u}\left(z_{k}^{*},\beta_{k},\phi_{kk},N\right)}{d_{CS,u}\left(z_{k}^{*},\beta_{k},\phi_{kk},N\right)},
\end{split}
\end{equation}
where $n_{CS,u}\left(z_{k}^{*},\beta_{k},\phi_{kk},N\right)$ and $d_{CS,u}\left(z_{k}^{*},\beta_{k},\phi_{kk},N\right)$ can be written as polynomials in $e^{z_{k}^{*}}$ in the following way, 
\begin{equation}\label{eqn:nCS_u_dCS_u}
\begin{split}
n_{CS,u}\left(z_{k}^{*},\beta_{k},\phi_{kk},N\right)&:= \sum_{m=1}^{5}n_{CS,u,m}e^{mz_{k}^{*}} ,\quad  \textnormal{ and }\\
d_{CS,u}\left(z_{k}^{*},\beta_{k},\phi_{kk},N\right)&:= \sum_{m=0}^{6}d_{CS,u,m}e^{mz_{k}^{*}}.
\end{split}
\end{equation}
Moreover, the coefficients $n_{CS,u,m}$ are as follows:
\begin{equation}\label{nCS_um_coefficients}
    \begin{array}{c}
n_{CS,u,1}=\beta_{k}^{2}\left(\beta_{k}-2\phi_{kk}\right),\\
n_{CS,u,2}=2\beta_{k}\left(2\beta_{k}^{2}N+\beta_{k}\left(1-4N\right)\phi_{kk}+2\phi_{kk}^{2}\right),\\
n_{CS,u,3}=6\beta_{k}^{3}N^{2}+\beta_{k}^{2}\left(-12N^{2}+5N-1\right)\phi_{kk}+\beta_{k}\left(8N-3\right)\phi_{kk}^{2}-2\phi_{kk}^{3},\\
n_{CS,u,4}=2\beta_{k}N\left(2\beta_{k}^{2}N^{2}+\beta_{k}\left(-4N^{2}+2N-1\right)\phi_{kk}+\left(2N-1\right)\phi_{kk}^{2}\right) \textnormal{ and } \\
n_{CS,u,5}=\beta_{k}N^{2}\left(\beta_{k}^{2}N^{2}+\beta_{k}\left(-2N^{2}+N-1\right)\phi_{kk}+\phi_{kk}^{2}\right).
\end{array}
\end{equation}
The coefficients $d_{CS,u,m} = d_{p,u,m}$ for all $m\in \{0,\dots, 6\}$ where $d_{p,u,m}$ is given by \eqref{dp_um_coefficients}. Because the expressions determining $n_{CS,u}\left(z_{k}^{*},\beta_{k},\phi_{kk},N\right)$ and $d_{CS,u}\left(z_{k}^{*},\beta_{k},\phi_{kk},N\right)$ are complex, we focus on finding sufficient conditions for these expressions to have a specific sign for all $z_{k}^{*}$. 

Case (i): $\frac{\partial CS_{k}^{*}}{\partial u_k^0}\Big|_{\varphi_{1}=0}>0$. 
We verify in the supplementary file \emph{Gumbel\_N.nb}
that $n_{CS,u}$ and $d_{CS,u}$ (see \eqref{eqn:nCS_u_dCS_u}) are positive, if either of the two conditions below, (i-a) or (i-b), hold: 
\begin{itemize}
    \item[(i-a)] $\phi_{kk}\leq 0$, $N\geq 2$ and $\beta_{k}>0$.
    \item[(i-b)] $\phi_{kk}>0$, $N\geq 2$ and $\beta_{k}>2 \phi_{kk}$. 
\end{itemize}

Case (ii): $\frac{\partial CS_{k}^{*}}{\partial u_k^0}\Big|_{\varphi_{1}=0}<0$. We verify in the supplementary file \emph{Gumbel\_N.nb} that $n_{CS,u}$ and $d_{CS,u}$ (see \eqref{eqn:nCS_u_dCS_u}) are negative and positive, respectively, if the condition below, (ii-a), holds:
\begin{itemize}
    \item[(ii-a)] $\phi_{kk}>0$, $N\geq 2$ and  $f\left(N\right)\phi_{kk}<\beta_{k}<f_{CS,u}\left(N,\phi_{kk}\right)$, where $f_{CS,u}\left(N,\phi_{kk}\right)$ is the unique real root of the third degree polynomial $n_{CS,u,3}$ (viewed as a polynomial in $\beta_k$). 
\end{itemize}
    
    We next verify that $n_{CS,u,3}$ indeed has a unique real root and that $f_{CS,u}\left(N,\phi_{kk}\right)$ is linear in $\phi_{kk}$ and can thus be expressed as $f_{CS,u}\left(N,\phi_{kk}\right)=f_{CS,u}\left(N\right)\phi_{kk}$. We note that $n_{CS,u,3}/(6N^2)$ has the standard form
\begin{equation}\label{eqn:normalized_n4p_CS}
    \frac{n_{CS,u,3}}{6N^2} = \beta_k^3+b_{2}\beta_k^2+b_{1}\beta_k+b_{0},
\end{equation}
where $b_0$, $b_1$ and $b_2$ depend on $N$ and $\phi_{kk}$. We first clarify why this polynomial has a unique real root using Cardano's condition. We define $ t_k := \frac{1}{3}(3b_1-b_2^2)$, $s_k :=\frac{1}{27}(2b_2^3-9b_2b_1+27b_0)$ and $\Delta_k := (s_k/2)^2+(t_k/3)^3$. Equivalently, 
\begin{equation}\label{eqn:cardano_nCSu3}
    \begin{split}
        t_k & =-\frac{\left(144N^{4}-264N^{3}+103N^{2}-10N+1\right)\phi_{kk}^{2}}{108N^{4}}, \\
        s_k &= -\frac{\left(1728N^{6}-4752N^{5}+4356N^{4}-1106N^{3}+192N^{2}-15N+1\right)\phi_{kk}^{3}}{2916N^{6}}, \textnormal{ and} \\
        \Delta_k &:= \frac{\left(4608N^{6}-11136N^{5}+6944N^{4}+408N^{3}-97N^{2}+18N-1\right)\phi_{kk}^{6}}{139968N^{8}}. 
    \end{split}
\end{equation}
For each $k\in\{b,s\}$, if $\phi_{kk}> 0$ and $N\geq 2$, then $\Delta_k>0$. It thus follows that $n_{CS,u,3}$ has a unique real root given by 
\begin{equation}\label{eqn:cardano_nCSu3_roots}
    \begin{split}
        \alpha: = \textnormal{Car}(s_k,\Delta_k) - \frac{b_2}{3},
    \end{split}
\end{equation}
where $b_2 = \left(-12N^{2}+5N-1\right)\phi_{kk}/(6N^2)$ and $\textnormal{Car}(\cdot,\cdot)$ is given
\begin{equation}\label{def:CardanoRoot_nCSu3}
    \textnormal{Car}(s_k,\Delta_k):=\left(-\frac{s_k}{2}+\sqrt{\Delta_k}\right)^{1/3} + \left(-\frac{s_k}{2}-\sqrt{\Delta_k}\right)^{1/3}.
\end{equation}
For $\phi_{kk}>0$, it is not difficult to see that $\alpha$ is linear in $\phi_{kk}$, so we can write  
\begin{equation}\label{def:fCSu}
    \begin{split}
 f_{CS,u}\left(N\right)&=\alpha/\phi_{kk},
    \end{split}
\end{equation}
where $\alpha$ is given by \eqref{eqn:cardano_nCSu3_roots}. 

Finally, we show that there exists $\epsilon>0$ such that for any $(\phi_{bs},\phi_{sb})\in B_\epsilon(0)$, (i) and (ii) in Proposition \ref{prop:dCSkuk0} hold. From \eqref{eqn:proof_xki_omega}, \eqref{eqn:relationship_xp_z} and \eqref{FOCs_z2}, $\partial CS_k^*/\partial u_k^0$ is a rational function w.r.t.~$(\phi_{bs},\phi_{sb})$. Moreover, at  $(\phi_{bs},\phi_{sb})=(0,0)$, the partial derivative $\partial CS_k^*/\partial u_k^0$ is given by \eqref{eqn:dCS_duk0}. Thus, $\partial CS_k^*/\partial u_k^0$ is continuous w.r.t.~$(\phi_{bs},\phi_{sb})$  at  $(0,0)$. We conclude that there exists $\epsilon>0$ such that for any $(\phi_{bs},\phi_{sb})\in B_\epsilon(0)\subset \sR^2$, cases (i) and (ii) in Proposition \ref{prop:dCSkuk0} hold true. 
\end{proof}

\begin{proof}[{\bf Proof of Proposition \ref{prop:sign_zkm}}] Suppose that $N\geq 2$ and for each $k\in\{b,s\}$, $(\phi_{kk},\beta_k)$ satisfies (\ref{eqn:condition_unique_mono}). Assume that $\varphi_1=(\phi_{bs},\phi_{sb})= 0$. From the proof of Proposition \ref{coro:mono_uniqueness}, $M_k^\text{C}$ does not depend on $z_j$ for $j\neq k$ (see (\ref{Mkm_phi10})). Moreover,  
\begin{equation}\label{eqn:Mk_ninfty_collusion}
  \lim_{z_{k}\to-\infty}M_{k}^\text{C}\left(z_{b},z_{s},0,\varphi_{2};\psi\right)=\infty,  
\end{equation}
and $M_{k}^\text{C}\left(z_{b},z_{s},0,\varphi_{2};\psi\right)$ is strictly decreasing in $z_k$ for all $z_k\in\mathbb{R}$. From \eqref{Mkm_phi10}, we compute 
\begin{equation}\label{Mk_phi100_collusion}
M_{k}^\text{C}\left(z_{b},z_{s},0,\varphi_{2};\psi\right)\Big|_{z_k =0}=\frac{2\phi_{kk}}{N+1}-\beta_{k}\left(1+N\right)-u_{k}^{0}.
\end{equation}
The only root of (\ref{Mk_phi100_collusion}) as a polynomial in $\beta_k$ is given by 
\begin{equation}\label{def:gammaC_proof}
    \gamma^\text{C}(N,\phi_{kk},u_k^0):=\frac{2\phi_{kk}-u_{k}^{0}\left(N+1\right)}{\left(N+1\right)^{2}}.
\end{equation}

Observation (i): The condition $\beta_k>\gamma^\text{C}(N,\phi_{kk},u_k^0)$ implies that (\ref{Mk_phi100_collusion}) is strictly negative. We combine this fact with (\ref{eqn:Mk_ninfty_collusion}) and the facts that $M_{k}^\text{C}\left(z_{b},z_{s},0,\varphi_{2};\psi\right)$ is strictly decreasing in $z_k$  $\forall z_k\in\sR$ and $z_k^\text{C}$ is the unique solution of $M_{k}^\text{C}\left(z_{b},z_{s},0,\varphi_{2};\psi\right)=0$ to conclude that $z_k^\text{C}<0$.

Observation (ii): If $\beta_k<\gamma^\text{C}(N,\phi_{kk},u_k^0)$, then (\ref{Mk_phi100_collusion}) is strictly positive and thus $z_k^\text{C}>0$.

The Proof of Proposition \ref{coro:mono_uniqueness} implies that there is $\epsilon>0$ and a unique continuous function $$\left(z_{b}^\text{C}\left(\cdot\right),z_{s}^\text{C}\left(\cdot\right)\right):B_{\epsilon}\left(0,0\right)\longrightarrow\mathbb{R}^{2}$$ such that for all $\varphi_{1}\in B_{\epsilon}\left(0,0\right)$, $
\mM^\text{C}\left(z_{b}^\text{C}\left(\varphi_{1}\right),z_{s}^\text{C}\left(\varphi_{1}\right),\varphi_{1},\varphi_{2};\psi\right)=0.$  By this continuity, for $\hat{\varepsilon}=|z_k^\text{C}(0,0)|/2$, there exists $\delta>0$ such that for all $\varphi_{1}\in B_{\min(\delta, \varepsilon)}\left(0,0\right)$,
\begin{equation*}
    z_{k}^\text{C}(0,0)-\frac{\left|z_{k}^\text{C}(0,0)\right|}{2}<z_{k}^\text{C}\left(\varphi_{1}\right)<z_{k}^\text{C}(0,0)+\frac{\left|z_{k}^\text{C}(0,0)\right|}{2}.
\end{equation*}
Thus, when $z_{k}^\text{C}(0,0)<0$, by observation (i) we obtain $z_{k}^\text{C}\left(\varphi_{1}\right)<\frac{z_{k}^\text{C}(0,0)}{2}<0$. When $z_{k}^\text{C}>0$, by observation (ii) we obtain $z_{k}^\text{C}\left(\varphi_{1}\right)>\frac{z_{k}^\text{C}(0,0)}{2}>0$. Concluding the proof of the Proposition. 

\end{proof}

\begin{proof}[{\bf Proof of Corollary \ref{coro:gammavsgammaC}}]
Note that from \eqref{eqn:gamma} and \eqref{def:gammaC_proof}, 
\begin{equation}\label{eqn:gammaCvsgamma}
\begin{array}{l}
\gamma\left(N,\phi_{kk},u_{k}^{0}\right)-\gamma^\text{C}\left(N,\phi_{kk},u_{k}^{0}\right)\\
=\frac{2\left(N-1\right)\phi_{kk}+\left(-N^{2}+N+2\right)u_{k}^{0}+\sqrt{N^{2}\left(N+1\right)^{2}\left(u_{k}^{0}\right)^{2}+4\left(N^{2}-1\right)\phi_{kk}^{2}-4\left(N-1\right)\left(N+1\right)^{2}u_{k}^{0}\phi_{kk}}}{2\left(N+1\right)^{2}}
\end{array}.
\end{equation}

From (\ref{sign_ofgamma}) and the Proof of Corollary \ref{coro:sign_zk}, if $\gamma\left(N,\phi_{kk},u_{k}^{0}\right)\geq0$ then, 
\begin{equation*}
    \textnormal{either } (\phi_{kk}<0 \textnormal{ and } u_{k}^{0}\leq\frac{2\phi_{kk}}{N+1}) \textnormal{ or } \phi_{kk}\geq0.
\end{equation*}

Suppose that $(\phi_{kk}<0 \textnormal{ and } u_{k}^{0}\leq\frac{2\phi_{kk}}{N+1})$ . By \eqref{eqn:gammaCvsgamma}, $\gamma\left(N,\phi_{kk},u_{k}^{0}\right)\geq \gamma^\text{C}\left(N,\phi_{kk},u_{k}^{0}\right)$. On the other hand,  if $\phi_{kk}\geq0$ then (\ref{eqn:gammaCvsgamma}) implies that $\gamma\left(N,\phi_{kk},u_{k}^{0}\right)\geq\gamma^\text{C}\left(N,\phi_{kk},u_{k}^{0}\right)$. 

\end{proof}

\begin{proof}[{\bf Proof of Corollary \ref{coro:sign_zkm}}]
First assume $\phi_{kk}\leq 0$ and $\beta_k>0$.  From the definition of $\gamma^\text{C}(N,\phi_{kk},u_k^0)$ in \eqref{def:gammaC_proof}, if $u_{k}^{0}\geq2\phi_{kk}/(N+1)$, then $\gamma^\text{C}(N,\phi_{kk},u_k^0)\leq 0 $. In this case, statement (ii) of Proposition \ref{prop:sign_zkm} is not feasible as $\beta_k>0$.

Now assume that $\phi_{kk}>0$ and $\beta_{k}>\frac{8\phi_{kk}}{27N}$. The unique $\tilde{u}_{k}^\text{C}$ such that  $\gamma^\text{C}(N,\phi_{kk},\tilde{u}_{k}^\text{C})= \frac{8\phi_{kk}}{27N}$ is given by 
\begin{equation*}
    \tilde{u}_{k}^{0}=-\frac{2\left(N\left(4N-19\right)+4\right)\phi_{kk}}{27N\left(N+1\right)}.
\end{equation*}
If $u_{k}^{0}\geq\tilde{u}_{k}^\text{C}$, then $ \gamma^\text{C}(N,\phi_{kk},\tilde{u}_{k}^0)\leq \frac{8\phi_{kk}}{27N}$. Thus, statement (ii) in Proposition \ref{prop:sign_zkm}  is not feasible as $\beta_k>\frac{8\phi_{kk}}{27N}$.

We finish the proof with a piecewise definition for $\tilde{u}_k^\text{C}$,
\begin{equation}
    \label{def:tilde_uk0m}
    \tilde{u}_{k}^\text{C}:= \begin{cases}
        \frac{2\phi_{kk}}{N+1}& \textnormal{if } \phi_{kk}\leq 0,\\
        -\frac{2\left(N\left(4N-19\right)+4\right)\phi_{kk}}{27N\left(N+1\right)} & \textnormal{if } \phi_{kk}>0.
    \end{cases}
\end{equation}
\end{proof}

\begin{proof}[{\bf Proof of Proposition \ref{prop:CNEvsCE}}] First, we set $\varphi_1=(\phi_{bs},\phi_{sb})=0$. From (\ref{Mk_phi10}) and (\ref{Mkm_phi10}), we obtain 
\begin{equation}\label{eqn:Mk_Mkm}
M_{k}\left(z_{b},z_{s},0,\varphi_{2};\psi\right)-M_{k}^\text{C}\left(z_{b},z_{s},0,\varphi_{2};\psi\right)=\frac{\beta_{k}\left(N-1\right)e^{z_{k}}\left(\beta_{k}\left(Ne^{z_{k}}+1\right)^{2}-e^{z_{k}}\phi_{kk}\right)}{\beta_{k}\left(\left(N-1\right)e^{z_{k}}+1\right)\left(Ne^{z_{k}}+1\right)-e^{z_{k}}\phi_{kk}}
\end{equation}

Step (i): If $\phi_{kk}\leq0$ or $(\phi_{kk}>0$ and $\beta_{k}>f\left(N\right)\phi_{kk})$, then \eqref{eqn:Mk_Mkm} is strictly positive for all $z_{k}\in\mathbb{R}$. The proofs of Propositions  \ref{prop:existence_gumbel} and \ref{coro:mono_uniqueness} imply that the functions $M_{k}\left(z_{b},z_{s},0,\varphi_{2};\psi\right)$ and $M_{k}^\text{C}\left(z_{b},z_{s},0,\varphi_{2};\psi\right)$  are independent of $z_l$ for $l\neq k$ and are strictly decreasing on $z_k$. It follows that $z_{k}^{*}>z_{k}^\text{C}$. 

Again, from the Proofs of Proposition \ref{prop:existence_gumbel} and \ref{coro:mono_uniqueness}, there is $\epsilon>0$ and unique continuous functions $\left(z_{b}^{*}\left(\cdot\right),z_{s}^{*}\left(\cdot\right)\right):B_{\epsilon}\left(0,0\right)\longrightarrow\mathbb{R}^{2}$ and $\left(z_{b}^\text{C}\left(\cdot\right),z_{s}^\text{C}\left(\cdot\right)\right):B_{\epsilon}\left(0,0\right)\longrightarrow\mathbb{R}^{2}$ such that for all $\varphi_{1}\in B_{\epsilon}\left(0,0\right)$, 
\begin{equation}
M\left(z_{b}^{*}\left(\varphi_{1}\right),z_{s}^{*}\left(\varphi_{1}\right),\varphi_{1},\varphi_{2};\psi\right)=0=M^\text{C}\left(z_{b}^\text{C}\left(\varphi_{1}\right),z_{s}^\text{C}\left(\varphi_{1}\right),\varphi_{1},\varphi_{2};\psi\right).    
\end{equation}
By continuity, there is $\hat{\epsilon}>0$ such that $z_{k}^{*}\left(\varphi_{1}\right)>z_{k}^\text{C}\left(\varphi_{1}\right)$ for all $\varphi_{1}\in B_{\hat{\epsilon}}\left(0,0\right)$.

Step (ii): By \eqref{eqn_pz} and  \eqref{eqn_pm},  $x_{k}^*=\omega\left(z_{k}^*\right)$ and $x_{k}^\text{C}=\omega\left(z_{k}^\text{C}\right)$ where $\omega\left(z\right)=\frac{1}{e^{-z}+N}$. Note that $\omega'\left(z\right)>0$ for all $z\in\mathbb{R}$. Then, $x_{k}^{*}=\omega\left(z_{k}^{*}\right)>\omega\left(z_{k}^\text{C}\right)=x_{k}^\text{C}$ for all $\varphi_{1}\in B_{\hat{\epsilon}}\left(0,0\right)$. 

Step (iii): From \eqref{def:zk}, the function $\boldsymbol{p}\left(\boldsymbol{z}\right)=\Phi\Omega\left(\boldsymbol{z}\right)-\boldsymbol{\beta}\boldsymbol{z}-\boldsymbol{u}_{0}$ determines the equilibrium prices $p_k^*$ and $p_k^\text{C}$. Note that 
\begin{equation}\label{eqn:pk_derivative}
\frac{\partial p_{k}}{\partial z_{k}}=\phi_{kk}\omega'\left(z_{k}\right)-\beta_{k}=\frac{\phi_{kk}e^{z_{k}}-\beta_{k}\left(Ne^{z_{k}}+1\right)^{2}}{\left(Ne^{z_{k}}+1\right)^{2}}.    
\end{equation}
 The function $z_k\mapsto \frac{e^{z_k}}{\left(Ne^{z_k}+1\right)^{2}}$
has a unique maximum over $\mathbb{R}$ at $z_k^0 = \log\left(\frac{1}{N}\right)$ and such maximum is given by $\frac{1}{4N}$. Then, (\ref{eqn:pk_derivative}) is strictly negative for all $z_k\in\mathbb{R}$ when either $ \phi_{kk}\leq 0$ or $( \phi_{kk}>0$ and $\beta_k>\frac{\phi_{kk}}{4N})$. Moreover, from (\ref{def:RNphi_kk_proof}), $f\left(N\right)\phi_{kk}>\frac{\phi_{kk}}{4N}$ for all $N\geq 2$ and $ \phi_{kk}>0$. Then, for $\varphi_{1}=0$, $p_{k}^{*}=p_{k}\left(z_{k}^{*}\right)<p_k\left(z_{k}^\text{C}\right)=p_{k}^\text{C}$. By step (i) and continuity, there is $\hat{\epsilon}>0$ such that $p_{k}^{*}\left(z_k^*(\varphi_{1})\right)<p_{k}^\text{C}\left(z_k^C(\varphi_{1})\right)$ for all $\varphi_{1}\in B_{\hat{\epsilon}}\left(0,0\right)$.
    
\end{proof}

\begin{proof}[{\bf Proof of Proposition \ref{prop:dpk_dN}}]
Recall from \eqref{eqn:relationship_xp_z} in the Proof of Proposition \ref{prop:FOC_gumbel}, $$\boldsymbol{p}^*=\Phi\Omega\left(\boldsymbol{z}^*\right)-\boldsymbol{\beta}\boldsymbol{z}^*-\boldsymbol{u}_{0},$$ where $\Omega(\vz^*) = (\omega(z_b^*), \omega(z_s^*))^T$, $\omega(\cdot)$ and $\vz^*$ are given by \eqref{eqn:proof_xki_omega} and \eqref{FOCs_z2}, respectively. We want to compute the following quantity when $\varphi_1=0$, 
\begin{equation}\label{eqn:dp_dN}
    \begin{split}
       \frac{\partial p_{k}^{*}}{\partial N}=\left(\phi_{kk}\omega^{\prime}\left(z_{k}^{*}\right)-\beta_{k}\right)\frac{\partial z_{k}^{*}}{\partial N}+\phi_{kk}\frac{\partial\omega\left(z_{k}^{*}\right)}{\partial N}+\phi_{kj}\left(\omega^{\prime}(z_{j}^{*})\frac{\partial z_{j}^{*}}{\partial N}+\frac{\partial\omega(z_{j}^{*})}{\partial N}\right)
    \end{split}, \  k,j\in\{b,s\}\ k\neq j.
\end{equation}
By \eqref{eqn:proof_xki_omega}, $\omega(z_k^*)= 1/(e^{-z_k^*}+N)$, then
\begin{equation}
    \begin{split}
\omega^{\prime}\left(z_{k}^{*}\right) &=\frac{e^{-z_{k}^{*}}}{\left(e^{-z_{k}^{*}}+N\right)^{2}}\ \textnormal{ and }\\
        \frac{\partial\omega\left(z_{k}^{*}\right)}{\partial N}&=\frac{-1}{\left(e^{-z_{k}^{*}}+N\right)^{2}}.
    \end{split}
\end{equation}

After differentiating both sides of the two equations in (\ref{FOCs_z2}) w.r.t.~$N$, we can write $\partial z_k^*/\partial N$ at $\varphi_1=(\phi_{bs},\phi_{sb})=0$,
\begin{equation}\label{eqn:dzdN_derivation}
    \left[\begin{array}{c}
\frac{\partial z_{s}^*}{\partial N}\\
\frac{\partial z_{b}^*}{\partial N}
\end{array}\right]_{\varphi_1=0}=\frac{1}{J_{N}}\left[\begin{array}{cc}
-d & b\\
c & -a
\end{array}\right]\left[\begin{array}{c}
\frac{\partial\left[h_{b}-\phi_{ss}d_{b}L_{b}\right]}{\partial N}\omega\left(z_{b}^*\right)+\left(2\phi_{bb}-\phi_{ss}d_{b}L_{b}+h_{b}\right)\frac{\partial \omega\left(z_{b}^*\right)}{\partial N}\\
\frac{\partial\left[h_{s}-\phi_{bb}d_{s}L_{s}\right]}{\partial N}\omega\left(z_{s}^*\right)+\left(2\phi_{ss}-\phi_{bb}d_{s}L_{s}+h_{s}\right)\frac{\partial \omega\left(z_{s}^*\right)}{\partial N}
\end{array}\right],
\end{equation}
where $J_N:= ad - bc$ and 
\begin{equation}\label{eqn:abcd}
    \begin{split}
a&=\omega\left(z_{b}^*\right)\left(\frac{\partial h_{b}}{\partial z_{s}}-\phi_{ss}\frac{\partial\left[d_{b}L_{b}\right]}{\partial z_{s}}\right),\\
b&=\omega\left(z_{b}^*\right)\left(\frac{\partial h_{b}}{\partial z_{b}}-\phi_{ss}\frac{\partial\left[d_{b}L_{b}\right]}{\partial z_{b}}\right)+\left(2\phi_{bb}-\phi_{ss}d_{b}L_{b}+h_{b}\right)\omega'\left(z_{b}^*\right)-\beta_b,\\
c&=\omega\left(z_{s}^*\right)\frac{\partial\left[h_{s}-\phi_{bb}d_{s}L_{s}\right]}{\partial z_{s}}+\left(2\phi_{ss}-\phi_{bb}d_{s}L_{s}+h_{s}\right)\omega'\left(z_{s}^*\right)-\beta_s, \textnormal{ and}\\
d&=\omega\left(z_{s}^*\right)\frac{\partial\left[h_{s}-\phi_{bb}d_{s}L_{s}\right]}{\partial z_{b}}.
    \end{split}
\end{equation}
Moreover, recall that $L_k$, $d_k$ and $h_k$ for $k\in\{b,s\}$ are given by (\ref{eqn_z_Lk}). After plugging (\ref{eqn:abcd}) into (\ref{eqn:dzdN_derivation}) and then plugging the resulting expression into (\ref{eqn:dp_dN}), we show (see \emph{Gumbel\_N.nb}) that  
\begin{equation}\begin{split}
\label{eqn:dpk_dN_proof}
    \frac{\partial p_{k}^{*}}{\partial N}\Big|_{\varphi_{1}=0}&=\frac{n_p\left(z_{k}^{*},\beta_{k},\phi_{kk},N\right)}{d\left(z_{k}^{*},\beta_{k},\phi_{kk},N\right)},
\end{split}
\end{equation}
where $n_p\left(z_{k}^{*},\beta_{k},\phi_{kk},N\right)$ and $d\left(z_{k}^{*},\beta_{k},\phi_{kk},N\right)$ can be written as polynomials in $e^{z_{k}^{*}}$ in the following way, 
\begin{equation}\label{eqn:np_d}
\begin{split}
n_p\left(z_{k}^{*},\beta_{k},\phi_{kk},N\right)&:= \sum_{m=2}^{6}n_{m,p}e^{mz_{k}^{*}},\quad  \textnormal{ and }\\
d\left(z_{k}^{*},\beta_{k},\phi_{kk},N\right)&:= \sum_{m=0}^{6}d_{m}e^{mz_{k}^{*}}.
\end{split}
\end{equation}
Moreover, the coefficients $n_{m,p}$ are as follows:
\begin{equation}\label{nmp_coefficients}
    \begin{array}{c}
n_{2,p}=\beta_{k}^{3}\left(\phi_{kk}-\beta_{k}\right),\\
n_{3,p}=-\beta_{k}^{2}\left(4\beta_{k}^{2}N+\beta_{k}\phi_{kk}\left(1-4N\right)+2\phi_{kk}^{2}\right),\\
n_{4,p}=\beta_{k}\left(-6\beta_{k}^{3}N^{2}+2\beta_{k}^{2}\phi_{kk}N\left(3N-1\right)+\beta_{k}\phi_{kk}^{2}\left(3-4N\right)+\phi_{kk}^{3}\right),\\
n_{5,p}=-\beta_{k}\left(4\beta_{k}^{3}N^{3}+\beta_{k}^{2}\phi_{kk}N^{2}\left(1-4N\right)+\beta_{k}\phi_{kk}^{2}N\left(2N-3\right)+\phi_{kk}^{3}\right), \textnormal{ and } \\
n_{6,p}=\beta_{k}^{3}N^{4}\left(\phi_{kk}-\beta_{k}\right).
\end{array}
\end{equation}
The coefficients $d_m$ are given by \[\Scale[0.8]{
d_{0}=\beta_{k}^{3},}\]
\[\Scale[0.8]{d_{1}=\beta_{k}^{2}\left(\beta_{k}\left(6N-1\right)-4\phi_{kk}\right),}\]
\[\Scale[0.8]{d_{2}=\beta_{k}\left(\beta_{k}^{2}\left(15N^{2}-6N+1\right)+4\beta_{k}\left(1-4N\right)\phi_{kk}+5\phi_{kk}^{2}\right),}\]
\[\Scale[0.8]{d_{3}=\beta_{k}^{3}N\left(20N^{2}-14N+4\right)+\beta_{k}^{2}\phi_{kk}\left(-24N^{2}+11N-1\right)+\beta_{k}\phi_{kk}^{2}\left(10N-3\right)-2\phi_{kk}^{3},}\]
\[\Scale[0.8]{d_{4}=\beta_{k}N\left(\beta_{k}^{2}N\left(15N^{2}-16N+6\right)+\beta_{k}\phi_{kk}\left(-16N^{2}+10N-2\right)+\left(5N-2\right)\phi_{kk}^{2}\right),}\]
\[\Scale[0.8]{d_{5}=\beta_{k}N^{2}\left(\beta_{k}^{2}N\left(6N^{2}-9N+4\right)+\beta_{k}\phi_{kk}\left(-4N^{2}+3N-1\right)+\phi_{kk}^{2}\right), \textnormal{ and }}\]
\begin{equation}\label{dm_coefficients}
    \small{d_{6}=\beta_{k}^{3}\left(N-1\right)^{2}N^{4}}.
\end{equation}

Because the expressions determining $n_p\left(z_{k}^{*},\beta_{k},\phi_{kk},N\right)$ and $d\left(z_{k}^{*},\beta_{k},\phi_{kk},N\right)$ are complex, we focus on finding sufficient conditions for these expressions to have a specific sign for all $z_{k}^{*}$. Note that not all the coefficients in $\left\{ d_{m}\right\} _{m=0}^{6}$ can be simultaneously negative because $d_{0}>0$. Thus, we focus on finding the regions for which $d_{m}>0$ for each $m=0,\dots,6$.

Case (i): $\frac{\partial p_{k}^{*}}{\partial N}\Big|_{\varphi_{1}=0}<0$. 
We verify in the supplementary file \emph{Gumbel\_N.nb}
that $n_p$ and $d$ (see \eqref{eqn:np_d}) are negative and positive, respectively, if either of the two conditions below, (i-a) or (i-b), hold: 
\begin{itemize}
    \item[(i-a)] $\phi_{kk}>0$, $N\geq 2$ and $\beta_{k}>\phi_{kk}$.
    \item[(i-b)] $\phi_{kk}\leq 0$, $N\geq 2$ and $\beta_{k}>g_p(N, \phi_{kk})$, where $g_{p}(N,\phi_{kk})$ is the largest real root of the third degree polynomial $n_{5,p}/\beta_k$ (viewed as a polynomial in $\beta_k$). 
\end{itemize}

    We next verify that $n_{5,p}/\beta_k$ has three real roots and that $g_p(N, \phi_{kk})$ is linear in $\phi_{kk}$ and can thus be expressed as  $g_p(N, \phi_{kk})= g_{p}\left(N\right)\phi_{kk}$.
    We note that $n_{5,p}/(\beta_k(4N^3))$ has the standard form  
\begin{equation}\label{eqn:normalized_n5p}
    \frac{n_{5,p}}{\beta_k(4N^3)} = \beta_k^3+c_{2}\beta_k^2+c_{1}\beta_k+c_{0},
\end{equation}
where $c_0$, $c_1$ and $c_2$ depend on $N$ and $\phi_{kk}$.
Following the proof of Proposition \ref{prop:existence_gumbel}, we use Cardano's formula to define
\begin{equation}\label{eqn:cardano_n5p}
    \begin{split}
        t_k &:= \frac{1}{3}(3 c_{1} - c_{2}^2)= -\frac{\left(16 N^2-32 N+37\right) \phi_{kk}^2}{48 N^2}, \\
        s_k &:= \frac{1}{27}(2 c_{2}^3 - 9 c_{2}c_{1} + 27 c_{0})= -\frac{\left(64 N^3-192 N^2+264 N-271\right) \phi_{kk}^3}{864 N^3}, \textnormal{ and} \\
        \Delta_k &:= (s_k/2)^2 + (t_k/3)^3 = -\frac{\left(64 N^4-96 N^3+52 N^2+108 N-211\right) \phi_{kk}^6}{27648 N^6}. 
    \end{split}
\end{equation}
For each $k\in\{b,s\}$, if $\phi_{kk}\neq 0$ and $N\geq 2$, then  $\Delta_k<0$. It thus follows that $n_{5,p}/\beta_k$ has three simple real roots given by 
\begin{equation}\label{eqn:cardano_n5p_roots}
    \begin{split}
        \alpha_{j}: = 2\sqrt{-\frac{t_k}{3}}\cos\left(\frac{\theta+2j\pi}{3}\right) - \frac{c_{2}}{3}, \ \textnormal{ for } j=0,1,2,\\
        0<\theta<\pi, \textnormal{ and } \cos(\theta) = \frac{-s_k/2}{\sqrt{-(t_k/3)^3}}.
    \end{split}
\end{equation}
From \eqref{eqn:cardano_n5p} and \eqref{eqn:cardano_n5p_roots}, $\theta = \theta(N)$ is a function of $N$ and is independent of $\phi_{kk}$. Moreover, for $\phi_{kk}<0 $, $\sqrt{-t_k/3}$ and $c_{2} = -\frac{(4 N-1) \phi_{kk}}{4 N}$ are linear functions of $\phi_{kk}$. Thus, for  $\phi_{kk}<0 $, the three simple real roots of $n_{5,p}/\beta_k$ can be written as 
\begin{equation}\label{eqn:cardano_n5p_roots2}
    \begin{split}
        \alpha_{j} = w_{j}(N) \phi_{kk}.
    \end{split}
\end{equation}
We can thus write $g_p(N, \phi_{kk})= g_{p}\left(N\right)\phi_{kk}$, where 
\begin{equation}\label{def:rp1}
    g_p(N) = \max_{j=0,1,2} \{w_j(N)\}.
\end{equation}

From \eqref{nmp_coefficients} and \eqref{eqn:normalized_n5p}, 
$$\lim_{N\to\infty} \frac{n_{5,p}}{\beta_k(4N^3)} = \beta_k^3-\phi_{kk}\beta_k^2.$$
Thus, for $\phi_{kk}<0$, the quantity $g_{p}\left(N\right)\phi_{kk}$ approaches $0$ as $N\to\infty$.

Case (ii): $\frac{\partial p_{k}^{*}}{\partial N}\Big|_{\varphi_{1}=0}>0$. We verify in the supplementary file \emph{Gumbel\_N.nb} that $n_p$ and $d$ (see \eqref{eqn:np_d}) are positive if either of the two conditions below, (ii-a) or (ii-b), hold:
\begin{itemize}
    \item[(ii-a)] $\phi_{kk}>0$, $N\geq 3$ and $f(3)\phi_{kk}<\beta_{k}<\frac{2}{3}\phi_{kk}$, where $f(3)$ is given by \eqref{def:RNphi_kk_proof}.
    \item[(ii-b)] $\phi_{kk}>0$, $N\geq 4$ and  $f\left(N\right)\phi_{kk}<\beta_{k}<f_{p}\left(N,\phi_{kk}\right)$, where $f_{p}\left(N,\phi_{kk}\right)$ is the unique real root of the third degree polynomial $n_{4,p}/\beta_k$ (viewed as a polynomial in $\beta_k$). 
\end{itemize}
    
    We next verify that $n_{4,p}/\beta_k$ indeed has a unique real root and that $f_{p}\left(N,\phi_{kk}\right)$ is linear in $\phi_{kk}$ and can thus be expressed as $f_{p}\left(N,\phi_{kk}\right)=f_{p}\left(N\right)\phi_{kk}$. We note that $n_{4,p}/(\beta_k(-6N^2))$ has the standard form
\begin{equation}\label{eqn:normalized_n4p}
    \frac{n_{4,p}}{\beta_k(-6N^2)} = \beta_k^3+b_{2}\beta_k^2+b_{1}\beta_k+b_{0},
\end{equation}
where $b_0$, $b_1$ and $b_2$ depend on $N$ and $\phi_{kk}$. We first clarify why this polynomial has a unique real root using Cardano's condition. We define
\begin{equation}\label{eqn:cardano_n4p}
    \begin{split}
        \tilde{t}_k &:= -\frac{\left(18 N^2-48 N+29\right) \phi_{kk}^2}{54 N^2}, \\
        \tilde{s}_k &:=  -\frac{\left(108 N^3-432 N^2+630 N-85\right) \phi_{kk}^3}{1458 N^3}, \textnormal{ and} \\
        \tilde{\Delta}_k &:= \frac{\left(72 N^4-168 N^3-88 N^2+556 N-171\right) \phi_{kk}^6}{34992 N^6}. 
    \end{split}
\end{equation}
For each $k\in\{b,s\}$, if $\phi_{kk}> 0$ and $N\geq 4$, then $\tilde{\Delta}_k>0$. It thus follows that $n_{4,p}/\beta_k$ has a unique real root given by 
\begin{equation}\label{eqn:cardano_n4p_roots}
    \begin{split}
        \tilde{\alpha}_{j}: = \textnormal{Car}(\tilde{s}_k,\tilde{\Delta}_k) - \frac{b_2}{3},
    \end{split}
\end{equation}
where $b_2 = (1- 3N)\phi_{kk}/(3N)$ and $\textnormal{Car}(\cdot,\cdot)$ is given
\begin{equation}\label{def:CardanoRoot}
    \textnormal{Car}(s_k,\Delta_k):=\left(-\frac{s_k}{2}+\sqrt{\Delta_k}\right)^{1/3} + \left(-\frac{s_k}{2}-\sqrt{\Delta_k}\right)^{1/3}.
\end{equation}
For $\phi_{kk}>0$, it is not difficult to see that $\tilde{\alpha}_j$ is linear in $\phi_{kk}$, so we can write  
\begin{equation}\label{eqn:cardano_n4p_roots2}
    \begin{split}
 f_{p}\left(N\right)&=\tilde{\alpha}_j/\phi_{kk},
    \end{split}
\end{equation}
where $\tilde{\alpha}_j$ is given by \eqref{eqn:cardano_n4p_roots}. Finally, for any $\phi_{kk}>0$, by \eqref{eqn:normalized_n4p}
\begin{equation*}
   \lim_{N\to\infty} \frac{n_{4,p}}{\beta_k(-6N^2)} = \beta_k^2(\beta_k-\phi_{kk}).
\end{equation*}
It thus follows that $f_{p}\left(N\right)\phi_{kk} $ converges to $\phi_{kk}$ as $N\to \infty$.

From (ii-a) and (ii-b), it follows that $\frac{\partial p_{k}^{*}}{\partial N}|_{\varphi_{1}=0}>0$ whenever $N\geq3$, $\phi_{kk}>0$ and $f\left(N\right)\phi_{kk}<\beta_{k}<f_{p}\left(N\right)\phi_{kk}$, where the definition of $f_p(N)$ is extended to include the case $N=3$ as follows,
\begin{equation}
    \label{def:rp2} f_{p}\left(N\right) := \begin{cases}
\frac{2}{3} & N=3,\\
\tilde{\alpha}_j/\phi_{kk} & N\geq4,
\end{cases}
\end{equation}
where $\tilde{\alpha}_j$ is given by \eqref{eqn:cardano_n4p_roots}.

Finally, we show that there exists $\epsilon>0$ such that for any $(\phi_{bs},\phi_{sb})\in B_\epsilon(0)$, (i) and (ii) in Proposition \ref{prop:dpk_dN} hold. From \eqref{eqn:dp_dN}, \eqref{eqn:dzdN_derivation} and \eqref{eqn:abcd}, $\partial p_k^*/\partial N$ is a rational function w.r.t.~$(\phi_{bs},\phi_{sb})$. Moreover, at  $(\phi_{bs},\phi_{sb})=(0,0)$, the partial derivative $\partial p_k^*/\partial N$ is given by \eqref{eqn:dpk_dN_proof}. Thus, $\partial p_k^*/\partial N$ is continuous w.r.t.~$(\phi_{bs},\phi_{sb})$  at  $(0,0)$. We conclude that there exists $\epsilon>0$ such that for any $(\phi_{bs},\phi_{sb})\in B_\epsilon(0)\subset \sR^2$, cases (i) and (ii) in Proposition \ref{prop:dpk_dN} hold true. 
\end{proof}

\begin{proof}[{\bf Proof of Proposition \ref{prop:dNxk_dN}}] Using the same ideas from the Proof of Proposition \ref{prop:dpk_dN}, we want to compute the following quantity when $\varphi_1 = 0$,
\begin{equation}
    \label{eqn:dNxdN_proof}
    \frac{\partial\left(Nx_{k}^{*}\right)}{\partial N}=x_{k}^{*}+N\frac{\partial x_{k}^{*}}{\partial N}.
\end{equation}
By \eqref{eqn:proof_xki_omega}, $x_k^*= \frac{1}{e^{-z_k^*}+N}$. Then, 
\begin{equation}\label{eqn:dx_dN}
    \begin{split}
        \frac{\partial x_{k}^{*}}{\partial N}&=\frac{-1}{\left(e^{-z_{k}^{*}}+N\right)^{2}}\left(1-e^{-z_{k}^{*}}\frac{\partial z_{k}^{*}}{\partial N}\right).
    \end{split}
\end{equation}
Using (\ref{eqn:dzdN_derivation}) and (\ref{eqn:dx_dN}), we show (see \emph{Gumbel\_N.nb}) that 
\begin{equation}
\label{eqn:dNxdN_proof2}
    \frac{\partial \left(Nx_{k}^{*}\right)}{\partial N}\Big|_{\varphi_{1}=0}=\frac{n_{Nx}\left(z_{k}^{*},\beta_{k},\phi_{kk},N\right)}{d_{Nx}\left(z_{k}^{*},\beta_{k},\phi_{kk},N\right)},
\end{equation}
where 
\begin{equation}\label{eqn:n_Nx_d}
\begin{split}
n_{Nx}\left(z_{k}^{*},\beta_{k},\phi_{kk},N\right)&:= \sum_{m=1}^{6}n_{m,Nx}e^{mz_{k}^{*}}\quad  \textnormal{ and }\\ 
d_{Nx}\left(z_{k}^{*},\beta_{k},\phi_{kk},N\right)&:= \sum_{m=0}^{7}d_{m,Nx}e^{mz_{k}^{*}}.
\end{split}
\end{equation}
Moreover, the coefficients $n_{m,Nx}$ are as follows: \[\Scale[0.8]{
n_{1,Nx}=\beta_{k}^{3},}\]
\[\Scale[0.8]{n_{2,Nx}=\beta_{k}^{2}\left(\beta_{k}\left(5N-1\right)-4\phi_{kk}\right),}\]
\[\Scale[0.8]{n_{3,Nx}=\beta_{k}\left(\beta_{k}^{2}\left(10N^{2}-4N+1\right)+2\beta_{k}\left(2-7N\right)\phi_{kk}+5\phi_{kk}^{2}\right),}\]
\[\Scale[0.8]{n_{4,Nx}=\beta_{k}^{3}N\left(10N^{2}-6N+3\right)+\beta_{k}^{2}\phi_{kk}\left(-18N^{2}+10N-1\right)+3\beta_{k}\phi_{kk}^{2}\left(3N-1\right)-2\phi_{kk}^{3},}\]
\[\Scale[0.8]{n_{5,Nx}=\beta_{k}N\left(\beta_{k}^{2}N\left(5N^{2}-4N+3\right)+2\beta_{k}\phi_{kk}\left(-5N^{2}+4N-1\right)+\left(4N-3\right)\phi_{kk}^{2}\right),}\]
\begin{equation}\label{n_Nx_coefficients}
\small{n_{6,Nx}=\beta_{k}^{2}N^{2}\left(\beta_{k}N\left(N^{2}-N+1\right)-\left(2N^{2}-2N+1\right)\phi_{kk}\right)}.
\end{equation}
The coefficients $d_{m,Nx}$ are given by \[\Scale[0.8]{d_{0,Nx}=\beta_{k}^{3},}\]
\[\Scale[0.8]{d_{1,Nx}=\beta_{k}^{2}\left(\beta_{k}\left(7N-1\right)-4\phi_{kk}\right),}\]
\[\Scale[0.8]{d_{2,Nx}=\beta_{k}\left(\beta_{k}^{2}\left(21N^{2}-7N+1\right)+4\beta_{k}\left(1-5N\right)\phi_{kk}+5\phi_{kk}^{2}\right),}\]
\[\Scale[0.8]{d_{3,Nx}=5\beta_{k}^{3}N\left(7N^{2}-4N+1\right)+\beta_{k}^{2}\phi_{kk}\left(-40N^{2}+15N-1\right)+3\beta_{k}\left(5N-1\right)\phi_{kk}^{2}-2\phi_{kk}^{3},}\]
\[\Scale[0.8]{d_{4,Nx}=N\left(5\beta_{k}^{3}N\left(7N^{2}-6N+2\right)+\beta_{k}^{2}\phi_{kk}\left(-40N^{2}+21N-3\right)+5\beta_{k}\left(3N-1\right)\phi_{kk}^{2}-2\phi_{kk}^{3}\right),}\]
\[\Scale[0.8]{d_{5,Nx}=\beta_{k}N^{2}\left(\beta_{k}^{2}N\left(21N^{2}-25N+10\right)+\beta_{k}\left(-20N^{2}+13N-3\right)\phi_{kk}+\left(5N-1\right)\phi_{kk}^{2}\right),}\]
\[\Scale[0.8]{d_{6,Nx}=\beta_{k}N^{3}\left(\beta_{k}^{2}N\left(7N^{2}-11N+5\right)+\beta_{k}\left(-4N^{2}+3N-1\right)\phi_{kk}+\phi_{kk}^{2}\right),}\]
\begin{equation}\label{d_Nx_coefficients}
\small{d_{7,Nx}=\beta_{k}^{3}\left(N-1\right)^{2}N^{5}}.
\end{equation}

Because the expressions determining $n_{Nx}\left(z_{k}^{*},\beta_{k},\phi_{kk},N\right)$ and $d_{Nx}\left(z_{k}^{*},\beta_{k},\phi_{kk},N\right)$ are complex, we focus instead on finding sufficient conditions for these expressions to have a specific sign for all $z_{k}^{*}$. Note that not all coefficients in $\left\{ d_{m,Nx}\right\} _{m=0}^{7}$ can be simultaneously negative, as $d_{0,Nx}>0$. Thus, we focus on finding the regions for which $d_{m,Nx}>0$ for each $m=0,\dots,7$. By the previous argument, we also focus on finding regions where the coefficients $n_{m,Nx}>0$ for all $m=1,\dots,6$. We verify in the supplementary file \emph{Gumbel\_N.nb} that the coefficients $n_{m,Nx}$ and $d_{m,Nx}$ in (\ref{n_Nx_coefficients}) and \eqref{d_Nx_coefficients}, respectively, are positive if either of the two conditions below, (a) or (b), hold:
\begin{itemize}
    \item[(a)] $\phi_{kk}\leq0$, $N\geq 2$ and $\beta_{k}>0$.
    \item[(b)]  $\phi_{kk}>0$, $N\geq 2$ and $\beta_{k}>g_x(N)\phi_{kk}$, where $g_x(N)\phi_{kk}$ is the unique non-zero real root of $n_{6,Nx}$ (viewed as a polynomial in $\beta_k$) and $g_x(N)$ is given by 
    \begin{equation}
        g_x(N):=\frac{\left(2N^{2}-2N+1\right)}{N\left(N^{2}-N+1\right)}.\label{def:gxN_proof}
    \end{equation}
\end{itemize}
We remark that finding the non-zero root of $n_{6,Nx}$ leads to a solution of a linear equation (see \eqref{n_Nx_coefficients}), so the uniqueness of the root and its linearity in $\phi_{kk}$ are obvious.
We also show (see \emph{Gumbel\_N.nb}) that if $N\geq2$, $\beta_{k}>0$ and $\phi_{kk}>0$, then $g_x(N)\phi_{kk}\geq f\left(N\right)\phi_{kk}$.

Finally, we show that there exists $\epsilon$ such that for any $(\phi_{bs},\phi_{sb})\in B_\epsilon(0)$, Proposition \ref{prop:dNxk_dN} holds. Note that from  \eqref{eqn:dzdN_derivation} and \eqref{eqn:abcd}, $\partial (Nx_k^*)/\partial N$ is a rational function w.r.t.~$(\phi_{bs},\phi_{sb})$. Moreover, at  $(\phi_{bs},\phi_{sb})=(0,0)$, $\partial (Nx_k^*)/\partial N$ is given by \eqref{eqn:dNxdN_proof2}. Thus, $\partial (Nx_k^*)/\partial N$ is continuous w.r.t.~$(\phi_{bs},\phi_{sb})$  at  $(0,0)$. We conclude that there exists $\epsilon>0$ such that for any $(\phi_{bs},\phi_{sb})\in B_\epsilon(0)\subset \sR^2$, $\partial (Nx_k^*)/\partial N>0$ provided that either (a) or (b) above hold.
\end{proof}

\begin{proof}[{\bf Proof of Proposition \ref{prop:dCSk_dN_positive}}] Using the same ideas from the Proof of Proposition \ref{prop:dpk_dN}, we want to compute the following quantity when $\varphi_1 = 0$,
\begin{equation}
    \label{eqn:dCSdN_proof}
    \frac{\partial CS_{k}^{*}}{\partial N}\Big|_{\varphi_{1}=0}=\beta_{k}\left(\frac{1}{N+1}+\frac{\partial z_{k}^{*}}{\partial N}\Big|_{\varphi_{1}=0}\right).
\end{equation}
Using equations \eqref{eqn:dp_dN}, (\ref{eqn:dzdN_derivation}) and (\ref{eqn:dx_dN}), we show (see \emph{Gumbel\_N.nb}) that
\begin{equation}\begin{split}
\label{eqn:dCSdN_proof2}
    \frac{\partial CS_{k}^{*}}{\partial N}\Big|_{\varphi_{1}=0}&=\frac{n_{CSk}\left(z_{k}^{*},\beta_{k},\phi_{kk},N\right)}{d_{CSk}\left(z_{k}^{*},\beta_{k},\phi_{kk},N\right)},
\end{split}
\end{equation}
where $n_{CSk}\left(z_{k}^{*},\beta_{k},\phi_{kk},N\right)$ and $d_{CSk}\left(z_{k}^{*},\beta_{k},\phi_{kk},N\right)$ can be written as polynomials in $e^{z_{k}^{*}}$ in the following way, 
\begin{equation}\label{eqn:n_CS_d}
\begin{split}
n_{CSk}\left(z_{k}^{*},\beta_{k},\phi_{kk},N\right)&:=\sum_{m=0}^{6}n_{m,CSk}e^{mz_{k}^{*}},\quad  \textnormal{ and }\\
d_{CSk}\left(z_{k}^{*},\beta_{k},\phi_{kk},N\right)&:=\sum_{m=0}^{6}d_{m,CSk}e^{mz_{k}^{*}}.
\end{split}
\end{equation}
Moreover, the coefficients $n_{m,CSk}$ and $d_{m,CSk}$ are polynomials on $\left\{ \phi_{kk},\beta_{k},N\right\}$ and are defined, respectively, as \[\Scale[0.8]{n_{0,CSk}=\beta_{k}^{4},}\]
\[\Scale[0.8]{n_{1,CSk}=\beta_{k}^{3}\left(\beta_{k}\left(6N-1\right)-4\phi_{kk}\right),}\]
\[\Scale[0.8]{n_{2,CSk}=\beta_{k}^{2}\left(\beta_{k}^{2}\left(15N^{2}-5N+2\right)+2\beta_{k}\left(1-9N\right)\phi_{kk}+5\phi_{kk}^{2}\right),}\]
\[\Scale[0.8]{n_{3,CSk}=\beta_{k}\left(\beta_{k}^{3}N\left(20N^{2}-10N+8\right)+\beta_{k}^{2}\phi_{kk}\left(-32N^{2}+6N+2\right)+\beta_{k}\phi_{kk}^{2}\left(14N+1\right)-2\phi_{kk}^{3}\right),}\]
\[\Scale[0.8]{n_{4,CSk}=\beta_{k}\left(\beta_{k}^{3}N^{2}\left(15N^{2}-10N+12\right)+\beta_{k}^{2}\phi_{kk}\left(-28N^{3}+6N^{2}+5N-1\right)+\beta_{k}\phi_{kk}^{2}\left(13N^{2}+2N-4\right)-2\left(N+1\right)\phi_{kk}^{3}\right),}\]
\[\Scale[0.8]{n_{5,CSk}=\beta_{k}^{2}N\left(\beta_{k}^{2}N^{2}\left(6N^{2}-5N+8\right)+\beta_{k}\phi_{kk}\left(-12N^{3}+2N^{2}+4N-2\right)+\phi_{kk}^{2}\left(4N^{2}+N-4\right)\right),}\]
\begin{equation}\label{n_CS_coefficients}
\small	{n_{6,CSk}=\beta_{k}^{3}N^{2}\left(\beta_{k}N^{2}\left(N^{2}-N+2\right)+\left(N-1-2N^{3}\right)\phi_{kk}\right)}.
\end{equation}
The coefficients $d_{m,CSk}$ are given by \[\Scale[0.8]{d_{0,CSk}=\beta_{k}^{3}\left(N+1\right),}\]
\[\Scale[0.8]{d_{1,CSk}=\beta_{k}^{2}\left(N+1\right)\left(\beta_{k}\left(6N-1\right)-4\phi_{kk}\right),}\]
\[\Scale[0.8]{d_{2,CSk}=\beta_{k}\left(N+1\right)\left(\beta_{k}^{2}\left(15N^{2}-6N+1\right)+4\beta_{k}\left(1-4N\right)\phi_{kk}+5\phi_{kk}^{2}\right),}\]
\[\Scale[0.8]{d_{3,CSk}=\left(N+1\right)\left(\beta_{k}^{3}N\left(20N^{2}-14N+4\right)+\beta_{k}^{2}\phi_{kk}\left(-24N^{2}+11N-1\right)+\beta_{k}\phi_{kk}^{2}\left(10N-3\right)-2\phi_{kk}^{3}\right),}\]
\[\Scale[0.8]{d_{4,CSk}=\beta_{k}N\left(N+1\right)\left(\beta_{k}^{2}N\left(15N^{2}-16N+6\right)+\beta_{k}\phi_{kk}\left(-16N^{2}+10N-2\right)+\left(5N-2\right)\phi_{kk}^{2}\right),}\]
\[\Scale[0.8]{d_{5,CSk}=\beta_{k}N^{2}\left(N+1\right)\left(\beta_{k}^{2}N\left(6N^{2}-9N+4\right)+\beta_{k}\phi_{kk}\left(-4N^{2}+3N-1\right)+\phi_{kk}^{2}\right),}\]
\begin{equation}\label{d_CS_coefficients}
    \small{d_{6,CSk}=\beta_{k}^{3}\left(N-1\right)^{2}N^{4}\left(N+1\right)}.
\end{equation}

Because the expressions determining $n_{CSk}\left(z_{k}^{*},\beta_{k},\phi_{kk},N\right)$ and $d_{CSk}\left(z_{k}^{*},\beta_{k},\phi_{kk},N\right)$ are complex, we focus instead on finding sufficient conditions for these expressions to have a specific sign for all $z_{k}^{*}$. Note that not all coefficients in $\left\{ d_{m,CSk}\right\}_{m=0}^{6}$ can be simultaneously negative, as $d_{0,CSk}>0$. Thus, we focus on finding the regions for which $d_{m,CSk}>0$ for each $m=0,\dots,6$. 

Case (i):  $\frac{\partial CS_{k}^{*}}{\partial N}\Big|_{\varphi_{1}=0}>0$. We verify in the supplementary file \emph{Gumbel\_N.nb} that $n_{m,CSk}>0$ and $d_{m,CSk}>0$ for all $m=0,\dots,6$ (see \eqref{n_CS_coefficients} and \eqref{d_CS_coefficients}), if either of the two conditions below, (a-i) or (b-i), hold: 
\begin{itemize}
    \item[(a-i)] $\phi_{kk}\leq0$, $N\geq 2$ and $\beta_{k}>0$.
    \item[(b-i)] $\phi_{kk}>0$, $N\geq2$ and $\beta_{k}>g_{CS}(N)\phi_{kk}$, 
    where $g_{CS}(N)\phi_{kk}$ is the unique non-zero real root of $n_{6,CSk}$ (viewed as a polynomial in $\beta_k$) and $g_{CS}(N)$ is given by
    \begin{equation}\label{eqn:gCSN_proof}
    g_{CS}(N):=\frac{2N^{3}-N+1}{N^{2}\left(N^{2}-N+2\right)}.
    \end{equation}
\end{itemize}
We remark that finding the non-zero root of $n_{6,CSk}$ leads to a solution of a linear equation (see \eqref{n_CS_coefficients}), so the uniqueness of the root and its linearity in $\phi_{kk}$ are obvious.

Case (ii): $\frac{\partial CS_{k}^{*}}{\partial N}\Big|_{\varphi_{1}=0}<0$. First, we want to show that for all $\phi_{kk}>0$,

$$n_{CSk}\left(z_{k}^{*},\beta_{k},\phi_{kk},N\right)=\sum_{m=0}^{6}n_{m,CSk}e^{mz_{k}^{*}}<0.$$
Using (\ref{def:RNphi_kk_proof}) and \eqref{eqn:gCSN_proof}, we show (see \emph{Gumbel\_N.nb}) that for all $N\geq2$ and $\phi_{kk}>0$, 
\begin{equation}\label{RNphivsr}
    f\left(N\right)\phi_{kk}<g_{CS}(N)\phi_{kk}.
\end{equation}
For $N\geq 7$, $\phi_{kk}>0$ and $\beta_k\in(f\left(N\right)\phi_{kk},g_{CS}(N)\phi_{kk})$, we show (see \emph{Gumbel\_N.nb}) that 
\begin{equation}
    \label{caseiiCSK}
    \begin{split}
    n_{0,CS{k}} &> 0 \textnormal{ for all } k=0,\cdots 5, \textnormal{ and} \\  
    n_{m,CS{6}} &< 0.
    \end{split}
\end{equation}
From \eqref{caseiiCSK}, if $N\geq 7$, $\phi_{kk}>0$ and $\beta_k\in(f\left(N\right)\phi_{kk},g_{CS}(N)\phi_{kk})$, then 
\begin{equation}
    \label{eqn:numeratorCSk}
    \begin{split}
&\sum_{m=0}^{6}n_{m,CSk}e^{mz_{k}^{*}}\\
&=\underbrace{n_{0,CS{k}}+n_{1,CS{k}}e^{z_{k}^{*}}+n_{2,CS{k}}e^{2z_{k}^{*}}+n_{3,CS{k}}e^{3z_{k}^{*}}+n_{4,CS{k}}e^{4z_{k}^{*}}+n_{5,CS{k}}e^{5z_{k}^{*}}}_{>0}+\underbrace{n_{6,CS{k}}e^{6z_{k}^{*}}}_{<0}.
 \end{split}
\end{equation}
From Proposition \ref{prop:sign_zk}, if $\beta_{k}<\gamma\left(N,\phi_{kk},u_{k}^{0}\right)$, then $z_{k}^{*}>0$. Moreover, by hypothesis, $z_{k}^{*}<\frac{1}{5}\ln 2$, then $e^{5z_{k}^{*}}<2$. It follows from \eqref{eqn:numeratorCSk} that
\begin{equation}
    \label{eqn:numeratorCSK2}
    \begin{split}
\sum_{m=0}^{6}n_{m,CSk}e^{mz_{k}^{*}}&<n_{0,CS_{k}}+2\left(n_{1,CS_{k}}+n_{2,CS_{k}}+n_{3,CS_{k}}+n_{4,CS_{k}}+n_{5,CS_{k}}\right)+n_{6,CS_{k}}\\
&=\beta_k\sum_{m=0}^{3}y_{m,k}\beta_k^m,
    \end{split}
\end{equation}
where $y_{k,0}=-4\left(N+2\right)\phi_{kk}^{3}$,
\[\Scale[0.92]{y_{k,1}=4\left(2N^{3}+7N^{2}+6N+1\right)\phi_{kk}^{2},}\]
\[\Scale[0.92]{y_{k,2}=-\left(2N^{5}+24N^{4}+51N^{3}+45N^{2}+18N+2\right)\phi_{kk},}\]
\begin{equation}\label{y_coefficients}
\small{y_{k,3}=\left(N^{6}+11N^{5}+22N^{4}+36N^{3}+34N^{2}+18N+3\right)}.
\end{equation}
We show (see \emph{Gumbel\_N.nb}) that for any $N\geq7$, $\phi_{kk}>0$, $z_{k}^{*}<\frac{1}{5}\ln 2$ and $$f\left(N\right)\phi_{kk}<\beta_{k}<\min\{f_{CS}(N,\phi_{kk}),\gamma(N,\phi_{kk},u_k^0)\},$$ the right-hand side of \eqref{eqn:numeratorCSK2} is negative and $d_{CSk}\left(z_{k}^{*},\beta_{k},\phi_{kk},N\right)>0$ for all $z_{k}^{*}$, where $f_{CS}(N,\phi_{kk})$ is the largest real root of the third degree polynomial $\sum_{m=0}^{3}y_{m,k}\beta_k^m$. 
Thus, $\frac{\partial CS_{k}^{*}}{\partial N}\Big|_{\varphi_{1}=0}<0$.

We next verify that $\sum_{m=0}^{3}y_{m,k}\beta_k^m$ has three real roots and the largest real root, $f_{CS}(N,\phi_{kk})$, is linear in $\phi_{kk}$ and can thus be expressed as $f_{CS}(N,\phi_{kk})=f_{CS}(N)\phi_{kk}$. We define
\begin{equation}\label{eqn:cardano_ym}
    \begin{split}
        \breve{t}_k &:= -\frac{(N+1)t\left(N\right)\phi_{kk}^{2}}{3\left(N^{6}+11N^{5}+22N^{4}+36N^{3}+34N^{2}+18N+3\right)^{2}}, \\
        \breve{s}_k &:= -\frac{2s\left(N\right)\phi_{kk}^{3}}{27\left(N^{6}+11N^{5}+22N^{4}+36N^{3}+34N^{2}+18N+3\right)^{3}}, \textnormal{ and} \\
        \breve{\Delta}_k &:=-\frac{4\delta\left(N\right)\phi_{kk}^{6}}{27\left(N^{6}+11N^{5}+22N^{4}+36N^{3}+34N^{2}+18N+3\right)^{4}}, 
    \end{split}
\end{equation}
where $t\left(N\right)=\sum_{j=0}^{9}a_{j}N^{j}$, $s\left(N\right)=\sum_{j=0}^{15}b_{j}N^{j}$, $\delta\left(N\right)=\sum_{j=0}^{16}c_{j}N^{j}$ and 
\[\Scale[0.67]{\left\{ a_{0},\dots,a_{9}\right\} =\left\{ -32,-328,-1124,-1516,-671,577,740,364,68,4\right\} ,}\]\vspace{-22pt}
\[\Scale[0.67]{\left\{ b_{0},\dots,b_{15}\right\} =\left\{ 872,10098,45378,102078,118692,39084,-73701,-98271,-22581,50841,61656,34542,11412,2214,216,8\right\} ,}\]
\[\Scale[0.67]{\left\{ c_{0},\dots,c_{16}\right\} =\left\{ -816,-9464,-42275,-92522,-97149,-4202,113985,130285,40882,-40699,-51820,-24692,-4259,1118,729,136,8\right\} .}\]
From \eqref{eqn:cardano_ym}, it follows that for any $N\geq7$, $\phi_{kk}>0$, then $\breve{\Delta}_k<0$. It thus follows that $\sum_{m=0}^{3}y_{m,k}\beta_k^m$ has three simple real roots given by 
\begin{equation}\label{eqn:cardano_y_m_roots}
    \begin{split}
        \breve{\alpha}_{j}: = 2\sqrt{-\frac{ \breve{t}_k}{3}}\cos\left(\frac{\theta+2j\pi}{3}\right) - \frac{y_{2,k}}{3y_{3,k}}, \ \textnormal{ for } j=0,1,2,\\
        0<\theta<\pi, \textnormal{ and } \cos(\theta) = \frac{-\breve{s}_k/2}{\sqrt{-(\breve{t}_k/3)^3}}.
    \end{split}
\end{equation}
From \eqref{y_coefficients} and \eqref{eqn:cardano_y_m_roots},  for $\phi_{kk}<0 $, $\theta = \theta(N)$ is a function of $N$ and is independent of $\phi_{kk}$. Moreover, $\sqrt{-\frac{ \breve{t}_k}{3}}$ and $\frac{y_{2,k}}{3y_{3,k}}$ are linear in $\phi_{kk}$. Thus, for  $\phi_{kk}<0 $, the three simple real roots of $\sum_{m=0}^{3}y_{m,k}\beta_k^m$ can be written as 
\begin{equation}\label{eqn:cardano_y_m_roots2}
    \begin{split}
        \breve{\alpha_{j}} = w_{j}(N) \phi_{kk}.
    \end{split}
\end{equation}
We can thus write $f_{CS}(N, \phi_{kk})= f_{CS}\left(N\right)\phi_{kk}$, where 
\begin{equation}\label{def:f_CS}
    f_{CS}(N) = \max_{j=0,1,2} \{w_j(N)\}.
\end{equation}

Finally, we show that there exists $\epsilon>0$ such that for any $(\phi_{bs},\phi_{sb})\in B_{\epsilon}(0)$, (i) and (ii) in Proposition \ref{prop:dCSk_dN_positive} hold. From  \eqref{eqn:dzdN_derivation} and \eqref{eqn:abcd}, $\partial (CS_k^*)/\partial N$ is a rational function w.r.t.~$(\phi_{bs},\phi_{sb})$. Moreover, at  $(\phi_{bs},\phi_{sb})=(0,0)$, $\partial (CS_k^*)/\partial N$ is given by \eqref{eqn:dCSdN_proof2}. Thus, $\partial (CS_k^*)/\partial N$ is continuous w.r.t.~$(\phi_{bs},\phi_{sb})$  at  $(0,0)$. We conclude that there exists $\epsilon>0$ such that for any $(\phi_{bs},\phi_{sb})\in B_\epsilon(0)\subset \sR^2$, (i) and (ii) in Proposition \ref{prop:dCSk_dN_positive} hold true. 

\end{proof}

\begin{proof}[{\bf Proof of Proposition \ref{prop:dpik_dN_negative}}] We want to compute the following quantity when $\varphi_1 = 0$,
\begin{equation}
    \label{eqn:dpikdN_proof}
   \frac{\partial\pi_{k}^{*}}{\partial N}=\frac{\partial p_{k}^{*}}{\partial N}x_{k}^{*}+p_{k}^{*}\frac{\partial x_{k}^{*}}{\partial N}.
\end{equation}
Using equations \eqref{eqn:dp_dN}, (\ref{eqn:dzdN_derivation}) and (\ref{eqn:dx_dN}), we show (see \emph{Gumbel\_N.nb}) that
\begin{equation}\begin{split}
\label{eqn:dpikdN_proof2}
   \frac{\partial\pi_{k}^{*}}{\partial N}\Big|_{\varphi_{1}=0}&=\frac{n_{\pi k}\left(z_{k}^{*},\beta_{k},\phi_{kk},N,u_k^0\right)}{d_{\pi k}\left(z_{k}^{*},\beta_{k},\phi_{kk},N\right)},
\end{split}
\end{equation}
where $n_{\pi k}\left(z_{k}^{*},\beta_{k},\phi_{kk},N,u_k^0\right)$ and $d_{\pi k}\left(z_{k}^{*},\beta_{k},\phi_{kk},N\right)$ can be written as polynomials in $e^{z_{k}^{*}}$ in the following way, 
\begin{equation}\label{eqn:n_pik_d}
\begin{split}
n_{\pi k}\left(z_{k}^{*},\beta_{k},\phi_{kk},N,u_k^0\right)&:= \sum_{m=2}^{7}n_{m,\pi k}e^{mz_{k}^{*}},\quad  \textnormal{ and }\\
d_{\pi k}\left(z_{k}^{*},\beta_{k},\phi_{kk},N\right)&:= \sum_{m=0}^{7}d_{m,\pi k}e^{mz_{k}^{*}}.
\end{split}
\end{equation}
Moreover, the coefficients $n_{m,\pi k}$ are as follows: \[\Scale[0.8]{n_{2,\pi k}=\beta_{k}^{3}\left(u_{k}^{0}+\beta_{k}z_{k}^{*}\right),}\]
\[\Scale[0.8]{n_{3,\pi k}=\beta_{k}^{2}\left(\beta_{k}^{2}\left(\left(5N-2\right)z_{k}^{*}-1\right)+\beta_{k}\left(\left(5N-2\right)u_{k}^{0}-2z_{k}^{*}\phi_{kk}\right)-2u_{k}^{0}\phi_{kk}\right),}\]
\[\Scale[0.8]{n_{4,\pi k}=\beta_{k}\left[\beta_{k}^{3}\left(10N^{2}z_{k}^{*}-2N\left(4z_{k}^{*}+2\right)+z_{k}^{*}\right)+\beta_{k}^{2}\left(\left(10N^{2}-8N+1\right)u_{k}^{0}+\left(z_{k}^{*}+1-6Nz_{k}^{*}\right)\phi_{kk}\right)\right]}\]
\[\Scale[0.8]{+\beta_{k}\left[\beta_{k}\left(u_{k}^{0}\left(1-6N\right)\phi_{kk}+z_{k}^{*}\phi_{kk}^{2}\right)+u_{k}^{0}\phi_{kk}^{2}\right],}\]
\[\Scale[0.8]{n_{5,\pi k}=\beta_{k}\left[\beta_{k}^{3}N\left(10N^{2}z_{k}^{*}-12Nz_{k}^{*}-6N+3z_{k}^{*}\right)+\beta_{k}^{2}\left(N\left(10N^{2}-12N+3\right)u_{k}^{0}+\left(2Nz_{k}^{*}+4N-1\right)\phi_{kk}-6N^{2}z_{k}^{*}\phi_{kk}\right)\right],}\]
\[\Scale[0.8]{+\beta_{k}\left[\beta_{k}\phi_{kk}\left(N\left(2-6N\right)u_{k}^{0}+\left(Nz_{k}^{*}+z_{k}^{*}+2\right)\phi_{kk}\right)+\left(N+1\right)u_{k}^{0}\phi_{kk}^{2}\right],}\]
\[\Scale[0.8]{n_{6,\pi k}=\beta_{k}\left[\beta_{k}^{3}N^{2}\left(5N^{2}z_{k}^{*}-2N\left(4z_{k}^{*}+2\right)+3z_{k}^{*}\right)+\beta_{k}^{2}\left(N^{2}\left(5N^{2}-8N+3\right)u_{k}^{0}+\left(N^{2}z_{k}^{*}-2N^{3}z_{k}^{*}+5N^{2}-2N\right)\phi_{kk}\right)\right],}\]
\[\Scale[0.8]{+\beta_{k}\left[\beta_{k}\phi_{kk}N\left(-2N^{2}u_{k}^{0}+Nu_{k}^{0}+\left(z_{k}^{*}+2\right)\phi_{kk}\right)+\left(Nu_{k}^{0}-2\phi_{kk}\right)\phi_{kk}^{2}\right],}\]
\begin{equation}\label{n_pik_coefficients}
\small{n_{7,\pi k}=\beta_{k}^{3}N^{2}\left[\beta_{k}N\left(N^{2}z_{k}^{*}-2Nz_{k}^{*}+z_{k}^{*}-N\right)+Nu_{k}^{0}+2N\phi_{kk}-\phi_{kk}+N^{3}u_{k}^{0}-2N^{2}u_{k}^{0}\right]}.
\end{equation}
The coefficients $d_{m,\pi k}$ are given by \[\Scale[0.8]{d_{0,\pi k}=\beta_{k}^{3},}\]
\[\Scale[0.8]{d_{1,\pi k}=\beta_{k}^{2}\left(\beta_{k}\left(7N-1\right)-4\phi_{kk}\right),}\]
\[\Scale[0.8]{d_{2,\pi k}=\beta_{k}\left(\beta_{k}^{2}\left(21N^{2}-7N+1\right)+4\beta_{k}\left(1-5N\right)\phi_{kk}+5\phi_{kk}^{2}\right),}\]
\[\Scale[0.8]{d_{3,\pi k}=\beta_{k}^{3}N\left(35N^{2}-20N+5\right)+\beta_{k}^{2}\left(-40N^{2}+15N-1\right)\phi_{kk}+\beta_{k}\left(15N-3\right)\phi_{kk}^{2}-2\phi_{kk}^{3},}\]
\[\Scale[0.8]{d_{4,\pi k}=N\left(\beta_{k}^{3}N\left(35N^{2}-30N+10\right)+\beta_{k}^{2}\left(-40N^{2}+21N-3\right)\phi_{kk}+\beta_{k}\left(15N-5\right)\phi_{kk}^{2}-2\phi_{kk}^{3}\right),}\]
\[\Scale[0.8]{d_{5,\pi k}=\beta_{k}N^{2}\left(\beta_{k}^{2}N\left(21N^{2}-25N+10\right)+\beta_{k}\left(-20N^{2}+13N-3\right)\phi_{kk}+\left(5N-1\right)\phi_{kk}^{2}\right),}\]
\[\Scale[0.8]{d_{6,\pi k}=\beta_{k}N^{3}\left(\beta_{k}^{2}N\left(7N^{2}-11N+5\right)+\beta_{k}\left(-4N^{2}+3N-1\right)\phi_{kk}+\phi_{kk}^{2}\right),}\]
\begin{equation}\label{d_pi_coefficients}
    \small{d_{7,\pi k}=\beta_{k}^{3}\left(N-1\right)^{2}N^{5}}.
\end{equation}
Because the expressions determining $n_{\pi k}\left(z_{k}^{*},\beta_{k},\phi_{kk},N,u_k^0\right)$ and $d_{\pi k}\left(z_{k}^{*},\beta_{k},\phi_{kk},N\right)$ are complex, we focus on finding sufficient conditions for these expressions to have a specific sign for all $z_{k}^{*}$.

(i) Case $\frac{\partial \pi_{k}^{*}}{\partial N}\Big|_{\varphi_{1}=0}<0:$ We verify in the supplementary file \emph{Gumbel\_N.nb} that $n_{\pi k}$ and $d_{\pi k}$ (see \eqref{eqn:n_pik_d}) are negative and positive, respectively, if either of the five conditions below, (a-i)-(a-v), hold: 

\begin{itemize}
    \item[(a-i)] $\phi_{kk}<0$, $N\geq2$, $0<\beta_{k}<f_{\pi}\left(N,\phi_{kk}\right)$ and $z_{k}^{*}<r_{\pi,z,1}\left(N,\phi_{kk},u_{k}^{0},\beta_k\right)$ where 
\begin{equation}
    \label{def:rpi1z}
r_{\pi,z,1}\left(N,\phi_{kk},u_{k}^{0},\beta_k\right):=\frac{\sum_{l=0}^{3}n_{\pi z,l}\beta_{k}^{l}}{\beta_{k}^{2}\left(N^{2}-2N^{3}\right)\phi_{kk}+\beta_{k}^{3}\left(5N^{4}-8N^{3}+3N^{2}\right)+\beta_{k}N\phi_{kk}^2}
\end{equation}
with coefficients given by 
\begin{equation}
    \begin{array}{c}
n_{\pi z,0}=\phi_{kk}^{2}\left(-Nu_{k}^{0}+2\phi_{kk}\right),\\
n_{\pi z,1}=N\phi_{kk}\left(2N^{2}u_{k}^{0}-Nu_{k}^{0}-2\phi_{kk}\right),\\
n_{\pi z,2}=N\left(-5N^{3}u_{k}^{0}+8N^{2}u_{k}^{0}-3Nu_{k}^{0}-5N\phi_{kk}+2\phi_{kk}\right),\\
n_{\pi z,3}=4N^{3}.
\end{array}
\end{equation}
Moreover, $f_{\pi}\left(N,\phi_{kk}\right)$ is the largest real root of the third degree polynomial
\begin{equation}\label{def:fpiNphikk}
    4 \beta_{k}^3 N^3+\beta_{k}^2 \left(2  -5 N \right)N\phi_{kk}-2 \beta_{k} N \phi_{kk}^2+2 \phi_{kk}^3 = 0. 
\end{equation}
We clarify below that the above polynomial has three real roots and that $f_{\pi}\left(N,\phi_{kk}\right)$ is linear in $\phi_{kk}$ and can thus be expressed as $f_{\pi}\left(N,\phi_{kk}\right)=f_{\pi}\left(N\right)\phi_{kk}$.

\item[(a-ii)] $\phi_{kk}<0$, $N\geq2$, $\beta_{k}\geq f_{\pi}\left(N\right)\phi_{kk}$ and $z_{k}^{*}<-\frac{u_{k}^{0}}{\beta_{k}}$.

\item[(a-iii)] $\phi_{kk}=0$, $N\geq2$, $\beta_{k}>0$ and $z_{k}^{*}<-\frac{u_{k}^{0}}{\beta_{k}}$.

\item[(a-iv)] $\phi_{kk}>0$, $N\geq2$, $f\left(N\right)\phi_{kk}<\beta_{k}\leq h_{\pi}\left(N\right)\phi_{kk}$ and $z_{k}^{*}<r_{\pi,z,2}\left(N,\phi_{kk},u_{k}^{0},\beta_k\right)$ where 
\begin{align}
h_{\pi}\left(N\right)&:=\frac{(2N-1)}{N^{2}}, \textnormal{ and} \label{def:rpi2}\\
r_{\pi,z,2}\left(N,\phi_{kk},u_{k}^{0},\beta_k\right)&:=\frac{-N^{3}u_{k}^{0}+\beta_{k}N^{2}+2N^{2}u_{k}^{0}-Nu_{k}^{0}-2N\phi_{kk}+\phi_{kk}}{\beta_{k}\left(N^{3}-2N^{2}+N\right)}. \label{def:rpiz2}
\end{align}

\item[(a-v)] $\phi_{kk}>0$, $N\geq2$, and $\beta_{k}>h_{\pi}\left(N\right)\phi_{kk}$ and $z_{k}^{*}<-\frac{u_{k}^{0}}{\beta_{k}}$. 
\end{itemize}

For completeness, we quickly verify that as stated in (a-i), \eqref{def:fpiNphikk} has three real roots and that $f_{\pi}\left(N,\phi_{kk}\right)$ is linear in $\phi_{kk}$. We use Cardano's condition, for $\phi_{kk}<0$, $\hat{\Delta}_k:=(\hat{s}_k/2)^2 + (\hat{t}_k/3)^3<0$, where 
\begin{equation}\label{eqn:cardano_rpi1}
    \begin{split}
        \hat{t}_k &:= -\frac{\left(49 N^2-20 N+4\right) \phi_{kk}^2}{48 N^4}, \textnormal{ and} \\
        \hat{s}_k &:= \frac{(N+2) \left(127 N^2-32 N+4\right) \phi_{kk}^3}{864 N^6}. \\
    \end{split}
\end{equation}
We note that for $\phi_{kk}<0 $, $f_{\pi}\left(N\right)$ satisfies
\begin{equation}\label{def:rpi1}\begin{split}
     f_{\pi}\left(N\right)&: = \max_{j=0,1,2} \left\{\frac{\hat{\alpha}_j}{\phi_{kk}}\right\}, 
\end{split}
\end{equation}
where $\hat{\alpha}_j$ is given by \eqref{eqn:cardano_n5p_roots}, after replacing $(t_k,s_k)$ by $(\hat{t}_k,\hat{s}_k)$.

The five conditions (a-i)-(a-v) can be aggregated into the following single condition: $n_{\pi k}$ and $d_{\pi k}$ are negative and positive, respectively, if 
$\left(\beta_{k},\phi_{kk},N\right)$ satisfy (\ref{condition_existence_beta}) and $z_{k}^{*}<g_{\pi,z}\left(N,\phi_{kk},u_{k}^{0},\beta_{k}\right)$, where 
\begin{equation}
    \label{def:R1piz}
g_{\pi,z}\left(N,\phi_{kk},u_{k}^{0},\beta_{k}\right):=\begin{cases}
r_{\pi,z,1}\left(N,\phi_{kk},u_{k}^{0},\beta_{k}\right) &\textnormal{if } \phi_{kk}<0 \textnormal{ and } 0<\beta_{k}<f_{\pi}\left(N\right)\phi_{kk},\\
r_{\pi,z,2}\left(N,\phi_{kk},u_{k}^{0},\beta_{k}\right) &\textnormal{if } \phi_{kk}>0\textnormal{ and }f\left(N\right)\phi_{kk}<\beta_{k}\leq h_{\pi}\left(N\right)\phi_{kk},\\
-u_{k}^{0}/\beta_{k} &\textnormal{if } \phi_{kk}=0\textnormal{ or }(\phi_{kk}<0\textnormal{ and }\beta_{k}\geq f_{\pi}\left(N\right)\phi_{kk}),\\ 
-u_{k}^{0}/\beta_{k} & \textnormal{if } (\phi_{kk}>0\textnormal{ and }\beta_{k}>h_{\pi}\left(N\right)\phi_{kk}) 
\end{cases}
\end{equation}
and the quantities $r_{\pi,z,1}\left(N,\phi_{kk},u_{k}^{0},\beta_{k}\right)$, $f_{\pi}\left(N\right)$, $h_{\pi}\left(N\right)$ 
and $r_{\pi,z,2}\left(N,\phi_{kk},u_{k}^{0},\beta_{k}\right)$ are given by \eqref{def:rpi1z}, \eqref{def:rpi1}, \eqref{def:rpi2} and \eqref{def:rpiz2}, respectively.

(ii) Case $\frac{\partial\pi_{k}^{*}}{\partial N}|_{\varphi_{1}=0}>0.$  We verify in the supplementary file \emph{Gumbel\_N.nb} that  $n_{\pi k}$ and $d_{\pi k}$ (see \eqref{eqn:n_pik_d}) are positive if either of the two conditions below, (b-i) or (b-ii), hold:
\begin{itemize}
    \item[(b-i)] $\phi_{kk}\leq0$, $N\geq2$, $\beta_{k}>0$ and $z_{k}^{*}>f_{\pi,z}\left(N,\phi_{kk},u_{k}^{0},\beta_{k}\right)$, where
    \begin{equation}
   \label{def:R2piz} f_{\pi,z}\left(N,\phi_{kk},u_{k}^{0},\beta_{k}\right):=r_{\pi,z,2}\left(N,\phi_{kk},u_{k}^{0},\beta_{k}\right),
    \end{equation} 
and $r_{\pi,z,2}\left(N,\phi_{kk},u_{k}^{0},\beta_{k}\right)$ is given by \eqref{def:rpiz2}.
\item[(b-ii)] $\phi_{kk}>0$, $N\geq2$, $\beta_{k}>g_{\pi}\left(N,\phi_{kk}\right)$ and $z_{k}^{*}>f_{\pi,z}\left(N,\phi_{kk},u_{k}^{0},\beta_{k}\right)$, where $g_{\pi}\left(N,\phi_{kk}\right)$ is the unique real root of the third degree polynomial 
\begin{equation}
    \label{def:polynomialgpi}
    \small{\beta_{k}^3 N^3\left(N^2-1\right)+\beta_{k}^2 N\left(-7 N^3 +10 N^2 -5 N +1 \right)\phi_{kk}
    +\beta_{k}N \left(6 N^2 -7 N +3  \right)\phi_{kk}^2-(2 N^2 -2 N +1)\phi_{kk}^3}.
\end{equation}
\end{itemize}
We next verify that the above polynomial indeed has a unique real root and that $g_{\pi}\left(N,\phi_{kk}\right)$ is linear in $\phi_{kk}$ and can thus be expressed as $g_{\pi}\left(N,\phi_{kk}\right)=g_{\pi}\left(N\right)\phi_{kk}$. We use Cardano's condition, $\overline{\Delta}_k:=(\overline{s}_k/2)^2 + (\overline{t}_k/3)^3>0$, where 
\[\Scale[0.95]{ \overline{t}_k := -\frac{\left(31 N^6-119 N^5+179 N^4-135 N^3+54 N^2-10 N+1\right) \phi_{kk}^2}{3 (N-1)^2 N^4 (N+1)^2}, \textnormal{ and}}\]
\[\Scale[0.95]{ \overline{s}_k := -\frac{\left(362 N^9-2013 N^8+4878 N^7-6728 N^6+5781 N^5-3186 N^4+1117 N^3-237 N^2+30 N-2\right) \phi_{kk}^3}{27 (N-1)^3 N^6 (N+1)^3}}.\]
Next, we express this solution. For $\phi_{kk}>0 $, $g_{\pi}\left(N\right)$ satisfies 
\begin{equation}\label{def:rpi}\begin{split}
     g_{\pi}\left(N\right)&: = \frac{\textnormal{Car}(\overline{s}_k,\overline{\Delta}_k) - \frac{\overline{b}_2}{3}}{\phi_{kk}}, 
\end{split}
\end{equation}
where $\overline{b}_2 = -\frac{\left(7 N^3-10 N^2+5 N-1\right) \phi_{kk}}{(N-1) N^2 (N+1)}$ and $\textnormal{Car}(\cdot,\cdot)$ is given by \eqref{def:CardanoRoot}.

Finally, we show that there exists $\epsilon>0$ such that for any $(\phi_{bs},\phi_{sb})\in B_{\epsilon}(0)$, (i) and (ii) in Proposition \ref{prop:dpik_dN_negative} hold. From  \eqref{eqn:dzdN_derivation} and \eqref{eqn:abcd}, $\partial \pi_k^*/\partial N$ is a rational function w.r.t.~$(\phi_{bs},\phi_{sb})$. Moreover, at  $(\phi_{bs},\phi_{sb})=(0,0)$, the derivative $\partial \pi_k^*/\partial N$ is given by \eqref{eqn:dpikdN_proof2}. Thus, $\partial \pi_k^*/\partial N$ is continuous w.r.t.~$(\phi_{bs},\phi_{sb})$  at $(0,0)$. Therefore, there exists $\epsilon>0$ such that for any $(\phi_{bs},\phi_{sb})\in B_\epsilon(0)$, cases (i) and (ii) in Proposition \ref{prop:dpik_dN_negative} hold true. 

\end{proof}

\newpage
\clearpage
\bibliographystyle{apalike}
\bibliography{ref}


\end{document}